\newcolumntype{C}[1]{>{\centering\arraybackslash}p{#1}}
\theoremstyle{plain}
\newtheorem{theorem}{Theorem}[section]
\newtheorem{lemma}[theorem]{Lemma}
\newtheorem{proposition}[theorem]{Proposition}
\newtheorem{corollary}[theorem]{Corollary} 
\theoremstyle{plain}
\newtheorem{definition}{Definition}[section] 
\newtheorem{example}[definition]{Example}
\newtheorem{remark}[definition]{Remark}
\theoremstyle{plain}
\newcounter{relctr} 
\everydisplay\expandafter{\the\everydisplay\setcounter{relctr}{0}} 
\newcommand{\prob}[2][]{\text{Pr}\ifthenelse{\not\equal{}{#1}}{_{#1}}{}\!\left[{\def\givenn{\middle|}#2}\right]}
\newcommand{\expect}[2][]{\mathbb{E}\ifthenelse{\not\equal{}{#1}}{_{#1}}{}\!\left[{\def\givenn{\middle|}#2}\right]}
\newcommand{\indicator}[2][]{\mathbbm{1}\ifthenelse{\not\equal{}{#1}}{_{#1}}{}\!\left[{\def\givenn{\middle|}#2}\right]}
\newcommand{\squishlist}{
\begin{list}{{{\small{$\bullet$}}}}
{\setlength{\itemsep}{3pt}      \setlength{\parsep}{1pt}
\setlength{\topsep}{1pt}       \setlength{\partopsep}{0pt}
\setlength{\leftmargin}{1em} \setlength{\labelwidth}{1em}
\setlength{\labelsep}{0.5em} } }
\newcommand{\squishend}{  \end{list}}
\newcommand{\prior}{\lambda}
\newcommand{\stateNum}{m}
\newcommand{\action}{a}
\newcommand{\senderU}{u}
\newcommand{\receiverU}{v}
\newcommand{\state}{i}
\newcommand{\signal}{\sigma}
\newcommand{\signalProb}{\pi}
\newcommand{\posterior}{\mu}
\newcommand{\noiseScale}{\beta}
\newcommand{\noise}{\varepsilon}
\newcommand{\scaledNoiseDiff}{\delta}
\newcommand{\noiseDualVar}{\alpha}
\newcommand{\distDualVar}{\eta}
\newcommand{\approxDualVar}{\tau}
\newcommand{\argmax}{\mathop{\mathrm{arg\,max}}}
\newcommand{\Payoff}[2][]{\text{\bf Payoff}\ifthenelse{\not\equal{}{#1}}{_{#1}}{}\!\left[{\def\givenn{\middle|}#2}\right]}
\newcommand{\zerobf}{\mathbf{0}}
\newcommand{\convexcombin}{f}
\newcommand{\convexcombinbf}{\mathbf{\convexcombin}}
\newcommand{\tangentIntersect}{\kappa}
\newcommand{\censorshipsignalInverse}{{\scaledNoiseDiff}\doubleprimed}
\newcommand{\NOinforSignal}{\scaledNoiseDiff^{\cc{avg}}}
\newcommand{\Quant}{W}
\newcommand{\highstates}{\mathcal{H}}
\newcommand{\lowstates}{\mathcal{L}}
\newcommand{\randomstate}{\theta}
\newcommand{\stochastic}{\Delta}
\newcommand{\signalscheme}{\pi}
\newcommand{\signalspace}{\Sigma}
\newcommand{\betaed}{^{(\noiseScale)}}
\newcommand{\primed}{^\dagger}
\newcommand{\doubleprimed}{^{\ddagger}}
\newcommand{\censorshipprob}{p\primed}
\newcommand{\censorshipstate}{i\primed}
\newcommand{\censorshipsignal}{\delta\primed}
\newcommand{\optscheme}{\signalscheme^*}
\newcommand{\xhdr}[1]{\vspace{-5pt}\paragraph*{
\bf {#1.}}}
\newcommand{\cc}[1]{\ensuremath{\mathsf{#1}}} 
\newcommand{\cA}{\mathcal{A}}
\newcommand{\cS}{\mathcal{S}}
\newcommand{\R}{\mathbb{R}}
\newcommand{\N}{\mathbb{N}}
\newcommand{\naturals}{\N}
\newcommand{\supp}{\cc{supp}}
\newcommand{\wt}[1]{{\color{red}(WT: #1)}}
\newcommand{\wt}[1]{}
\newcommand{\ignore}[1]{}
\newcommand{\wtr}[1]{{\color{black}  #1}}
\newcommand{\wtrevision}[1]{{\color{black}  #1}}
\newcommand{\cjr}[1]{{\color{black}  #1}}
\newcommand{\yfr}[1]{{\color{black}  #1}}
\newif\ifEC
\newif\ifArxiv
\title{Rationality-Robust Information Design: \\
Bayesian Persuasion under Quantal Response}
\author{
Yiding Feng\thanks{University of Chicago, 
\texttt{yidingfeng@uchicago.edu}} \and 
Chien-Ju Ho\thanks{Washington University in St. Louis, \texttt{chienju.ho@wustl.edu}} \and
Wei Tang\thanks{Columbia University, \texttt{wt2359@columbia.edu}}
}
\begin{document}

\maketitle

\begin{abstract}

Classic mechanism/information design imposes the assumption that agents are \emph{fully rational}, meaning each of them always selects the action that maximizes her expected utility. Yet many empirical evidence suggests that human decisions may deviate from this full rationality assumption. In this work, we attempt to relax the full rationality assumption with \emph{bounded rationality}. Specifically, we formulate the bounded rationality of an agent by adopting the quantal response model~\citep{MP-95}. 

We develop a theory of rationality-robust information design in the canonical setting of Bayesian persuasion~\citep{KG-11} 
with binary receiver action. We first identify conditions under which the optimal signaling scheme structure for a fully rational receiver remains optimal or approximately optimal for a boundedly rational receiver. In practice, it might be costly for the designer to estimate the degree of the receiver’s bounded rationality level. 
Motivated by this practical consideration, 
we then study the existence and construction of \emph{robust} signaling schemes when there is uncertainty about 
the receiver's bounded rationality level.\footnote{\footnotesize{\emph{A preliminary conference version of this work has appeared in the Proceeding of the ACM-SIAM Symposium on Discrete Algorithm (SODA 2024).}}}

\end{abstract}

\newpage

\section{Introduction}
\label{sec:intro}

In modern computer science, 
an important branch 
of research studies 
computation that involves 
multiple parties.
One fundamental question raises 
in this area:
how to ensure that different parties 
do their computation correctly.
In the field of mechanism design
(which studies protocols 
for strategic agents),
literature imposes the rationality assumption
on the participated agents
-- each agent (a.k.a., party) acts (a.k.a., does computation) in a way 
to {perfectly} maximize their own utility.
In contrast, the system design literature 
favors fault tolerant systems 
\citep{cri-91,lap-92,KK-20},
where the central protocol allows 
faults within some of the parties.
Motivated by the fault tolerance idea 
from the system design literature,
it is interesting to study 
whether and how 
the economic lessons 
derived under the full rationality assumption 
can be extended to more practical scenarios
where agents might make mistakes and thus are 
\emph{boundedly rational}.

In this paper, we tackle this research direction
on relaxing rationality assumption by studying 
a canonical economic model --
persuasion
--
in information design.\footnote{Information design is a field 
closely related to mechanism design. 
Mechanism design builds the rule of the game while 
holding the information structure (i.e., how information is transferred across agents and nature) fixed. 
In contrast, information design builds information structure
while holding the rule of the game fixed.}
In Bayesian persuasion \citep{KG-11}, 
there is a sender and a receiver.
Both players have their own utility functions
which depend on a state drawn from a common prior 
as well as an action selected by the receiver.
Once the state is realized,
the sender 
observes the realized state, while the receiver 
only shares a common prior about the state with the sender.
The sender can commit to an information structure
(a.k.a., signaling scheme) which 
(possibly randomly) maps the realized state to a signal
sent to the receiver.
Given the observed signal, the receiver forms 
a posterior belief about the state and then selects
her action (which impacts both her and the sender's utilities).
We say the receiver is fully rational if she always 
(correctly) selects the best action which maximizes
her expected utility given her posterior belief 
about the state.

To capture the possible mistakes which the receiver can make
in practice, we relax the full rationality assumption 
with the bounded rationality modeled by
\emph{quantal response} \citep[cf.][]{MP-95}.\footnote{The quantal response model is also known
as multinomial logit model \citep[cf.][]{TV-04} and conditional choice probability \citep{rus-87}.
There are other
models which relax the rationality assumption. See related works for more discussions.}
\wtrevision{
To provide informal intuitions, for a fully rational receiver,
she takes an action that maximizes her expected utility. 
When there is no ties in action utility, this action choice is
deterministic. 
On the other hand, the quantal response
accounts for the inherent randomness (and error-proneness) in human decision
making and models the human's decision as a probabilistic process.
Specifically, in quantal response, for each action the receiver can take, a noise is added
into the receiver's utility for taking this action. The receiver then takes an
action that maximizes this noisy version of the utility. This
noise captures several realistic aspects of human decision
making, e.g., when there are additional inherent characteristics in the receiver's utility estimation that we cannot model,
or when receiver is drawn from a population and individual
differences need to be accounted for.}
\wtrevision{Shifting to the above boundedly rational behavior,
two natural questions that our work tries to answer are:}
\begin{quote}
    \emph{Does the structure of optimal signaling scheme 
    for a fully rational receiver 
    preserve or approximately preserve 
    when the receiver is boundedly rational?} 
    
    \emph{Can the sender design robust signaling scheme 
    when he has uncertainty of 
    the receiver's boundedly rational behavior?}
\end{quote}
To answer the above questions, we focus on 
Bayesian persuasion with binary receiver action. Though binary receiver action seems a little restrictive at first glance, it is a canonical persuasion model studied extensively 
in both theoretical computer science and economics literature 
\citep[see, e.g.,][]{KMZL-17,BB-17,GS-19,X-20,BTXZ-21}.
This model, serving as a fundamental cornerstone,
has a wide range of applications in practice, 
including but not limited to  
product advertising, targeting in sponsored search, recommendation letter, and short video recommendation. See \Cref{sec:motivating example} 
for 
detailed descriptions of these examples.
Our results provide both affirmative and negative answers to the above questions, 
\wtrevision{and we underscore that the binary-action setting is sufficiently intricate and challenging enough to establish our main results within our rationality-robust framework.}
At a high-level, a critical condition influencing our findings is the sender's utility structure, specifically whether it is dependent on the state.
\footnote{As we elaborate later, 
for a fully rational receiver, the structure of optimal signaling scheme is well-characterized for the binary-action setting, 
while for the multi-action setting, characterizing a succinct structure that is amenable to theoretical analysis still remains as open question \citep[cf.][]{DX-16,BCGZ-22}. Thus, studying multi-action setting is beyond the focus on this work, and we leave it as an interesting future direction.
}

\subsection{Main Results and Techniques}

Based on the practical applications,
problems in Bayesian persuasion can be further partitioned 
into \emph{state independent sender utility (SISU)} environments
where the sender's utility 
does not depend on the realized state;
and \emph{state dependent sender utility (SDSU)} environments
where the sender's utility depends on both the realized state 
as well as the receiver's action. 
For example, as illustrated in \Cref{sec:motivating example},
the aforementioned 
product advertising and recommendation letter example
fall into SISU environments as 
the seller/advisor only cares whether the buyer 
buys the product/recruiter hires the student, 
while 
short video recommendation 
and targeting in sponsored search example 
fall into SDSU environments
as the platform's/search engine's
revenue also depends on video content/impression attribute. 

\xhdr{Revisiting censorship and direct signaling schemes}
When the receiver is fully rational, the optimal signaling schemes
admit the same structure for both SISU environments 
and SDSU environments.
In a nutshell, the optimal signaling scheme
partitions all states into two subsets -- high states and low states;\footnote{Rigorously speaking, there might exists a threshold state such that
a certain fraction of it belongs to high states and the remaining fraction of it belongs to low states.}
and pools all high states into a single signal.
On the other side, 
the signaling structure for low states can be arbitrary and 
does not affect the optimality of the signaling scheme.
Two representative
subclasses of signaling schemes have been studied extensively
in the literature --
\emph{direct signaling schemes} 
and \emph{censorship signaling schemes}.
Both of them pool all high states, but have different signaling structures for low states. 
Specifically, direct signaling schemes pool all low states,
while censorship signaling schemes reveal 
every low state truthfully.
To persuade a fully rational receiver,
the sender is indifferent between the optimal direct signaling scheme
and the optimal censorship signaling scheme, since both of them 
maximize the sender's expected utility 
over all signaling schemes.

\xhdr{The separation of optimal signaling schemes in 
SISU and SDSU environments}
As the first part of our main contributions,
for a boundedly rational receiver,
we show that 
in SISU environments, censorship signaling schemes 
remain optimal, while direct signaling schemes 
are sub-optimal;
and both of them become
sub-optimal in SDSU environments. 
Nonetheless, we also provide the tight approximation bounds
of censorship and direct signaling schemes
in SDSU environments.
Our results (summarized in \Cref{tabel:summary})
suggest that the structure of optimal signaling schemes 
for a fully rational receiver 
is partially preserved (i.e., censorship remains optimal)
in SISU environments,
and 
approximately preserved (i.e., up to an $\Theta(m)$-approximation factor) in SDSU environments, where $m$ denotes the number of the states.
Moreover, to persuade a boundedly rational receiver,
the sender prefers censorship than 
direct signaling schemes.

\captionsetup[table]{skip=10pt}

\begin{table}[!ht]
\fontsize{10}{18}\selectfont
\centering
\caption{Approximation ratio of censorship/direct signaling schemes under bounded rationality.
The number of states is $m$.}
\label{tabel:summary}
\begin{tabular}{|c|c|c|}
\cline{2-3}
\multicolumn{1}{c|}{} & censorship signaling schemes & direct signaling schemes
\\
\hline
SISU & $\begin{array}{cc}
     1 \;\;
     {\scriptsize \text{[\Cref{thm:SISU opt}]}}
\end{array}$
&
$\begin{array}{cc}
     \Theta(m) \;\;
     {\scriptsize \text{[\Cref{thm:SISU direct}, \Cref{coro:SDSU censorship m approx}]}}
\end{array}$
\\
\hline
SDSU 
&
\multicolumn{2}{c|}{
$\begin{array}{cc}
     \Theta(m) \;\;
     {\scriptsize \text{[\Cref{prop: SDSU censorship approx LB}, \Cref{coro:SDSU censorship m approx}]}}
\end{array}$
}
\\
\hline
\end{tabular}
\end{table}

In more detail, in SISU environments, we
show that for any boundedly rational 
receiver, 
censorship 
is optimal among all signaling schemes
(\Cref{thm:SISU opt}) and
direct signaling scheme is $\Omega(m)$-approximation
where $m$ is the number of states (\Cref{thm:SISU direct}).
We further provide structural characterizations on 
how to determine the high/low states partition in 
the optimal censorship.
In particular, 
for a receiver with any bounded rationality level, 
including a fully rational receiver,
the subset of high states is {\em nested} 
(i.e., increasing) with respect to the bounded rationality level.
Namely, the optimal signaling scheme reveals 
less information for a more rational receiver.

To show the optimality of censorship signaling schemes 
for a boundedly rational receiver in SISU environments,
we first introduce a linear program~\ref{eq:opt lp},
in which the constraints regulate the set of all feasible signaling schemes, and the objective function 
computes the expected sender utility of a given 
signaling scheme.
This linear program~\ref{eq:opt lp} is inspired by
a connection between
our problem and public Bayesian persuasion for a continuum population
of fully rational receivers with a specific utility structure. 
Given the linear program~\ref{eq:opt lp} and its dual program, 
we characterize the censorship structure in the
optimal signaling scheme by constructing a dual assignment explicitly
and then invoke the strong duality of linear programs.

In SDSU environments, the optimal signaling scheme
no longer admits the censorship nor direct structure.
We start by providing a SDSU example (\Cref{example:SDSU lower bound example})
and showing that the approximation of every 
censorship (resp.\ direct)
signaling scheme 
is $\Omega(m)$
(\Cref{prop: SDSU censorship approx LB}).
En route to proving this lower bound,  
we present a stronger result, namely that any signaling scheme must be
an $\Omega(\sfrac{m}{L})$-approximation where $L$ is
the maximum number of signals
induced by a state in this signaling scheme 
(\Cref{thm:signal size lower bound}).
Next, we provide the matching upper bound
that for any problem instance with $m$ states,
there exists a censorship (resp.\ direct) signaling scheme that is 
an $O(m)$-approximation to the optimal signaling scheme 
(\Cref{coro:SDSU censorship m approx}).

The key step in 
establishing the $O(m)$-approximation upper bounds
for censorship (resp.\ direct) signaling schemes 
(\Cref{coro:SDSU censorship m approx}) is that 
we characterize a 4-approximation
signaling scheme that uses $O(m)$ signals and 
has the following two structural properties (\Cref{thm:SDSU 4 approx}): 
(i) every signal is used to pool at most two states;
(ii) every pair of states is pooled at most one signal.
Intuitively, property~(i)
says that,
in the signaling scheme that we characterize,
a signal either fully reveals
the state or randomizes receiver's uncertainty only on two states, 
and property~(ii) says that there is no need for the sender to 
pool a pair of states at multiple signals in order to have 4-approximation.
We then leverage the structure
of this 4-approximation signaling scheme 
to show the existence of $O(m)$-approximation
censorship (resp.\ direct) signaling schemes.
To prove this technical lemma (\Cref{thm:SDSU 4 approx}),
we build a connection between 
the signaling schemes satisfying properties~(i) (ii)
with fractional solutions in the {\em generalized assignment problem}
\citep{ST-93}.
In particular, 
focusing on signaling schemes that
have properties~(i) (ii), we introduce a linear program
which shares the same format 
as the linear program relaxation of the generalized assignment problem.
For generalized assignment problem,
\cite{ST-93} show that the optimal integral solution 
is a $2$-approximation to the optimal fractional solution.
We argue that the optimal integral solution 
can be converted into a feasible signaling scheme
that has properties~(i) (ii),
uses $O(m)$ signals,
and suffers an additional 
two factor loss in its payoff.

\xhdr{Rationality-robust 
information design}
As the second part of our main contributions,
we introduce \emph{rationality-robust 
information design}
--
a framework
in which a signaling scheme is
designed for a receiver whose bounded rationality 
level is unknown. 
In our previous discussions, designing optimal signaling schemes
in both SISU and SDSU environments 
is rationality-oriented
-- the sender needs to know exactly 
the receiver’s bounded rationality level.
In practice, the sender may not be able to have 
(or require significant cost to learn) such perfect
knowledge. 
Motivated by this concern, 
the goal of rationality-robust information design
is to identify robust
signaling schemes -- ones with good (multiplicative) approximation to the optimal
signaling scheme that is tailored to any possible
bounded rationality level of the receiver.
Similar to our results above, 
we observe that obtaining rationality-robust 
signaling scheme is much {more tractable} 
in SISU environments
than SDSU environments.

In SISU environments, 
by leveraging the structural property 
we mentioned before 
(i.e., the optimal (censorship) signaling scheme
reveals less information
for a more rational
receiver), 
we show that the optimal censorship
for a fully rational receiver achieves a $2$ 
rationality-robust approximation when the sender 
has no knowledge of the receiver's bounded rationality level 
(\Cref{thm:2 robust approx SISU upperbound}).
We also provide an
example to show the tightness of the result 
(\Cref{prop:2 robust approx SISU lower bound}).
Our result suggests that, {\em up to a two factor, 
the knowledge of the receiver's bounded rationality level are
unimportant in SISU environments.}
For the comparison, we also show that 
the optimal direct signaling scheme
for a fully rational receiver 
achieves unbounded rationality-robust approximation 
(\Cref{prop:unbounded robust direct approx SISU lower bound}).
This repeats the takeaway mentioned above -- 
the sender prefers censorship 
than direct signaling schemes in SISU environments
under bounded rationality.

In contrast, in SDSU environments, 
{we show that there exists no signaling
scheme with bounded rationality-robust approximation ratio, 
when the sender has no knowledge of the receiver’s bounded rationality level
(\Cref{thm:imposs robust approx SDSU}), 
and this result holds even if the state space is binary.}
Our result suggests that, {\em there exists a tradeoff between the knowledge of the receiver’s rationality level and the achievable rationality-robustness in SDSU environments.}
To show this impossibility result, we construct 
a binary-state problem instance and 
a set of carefully chosen possible bounded rationality levels. 
The key to our approach is by introducing a 
factor-revealing program to lower bound the optimal 
rationality-robust approximation ratio.
By analyzing its dual program, we show 
that the rationality-robust approximation ratio of any signaling scheme is unbounded.
This impossibility result indicates that there 
exists a tradeoff between the knowledge of the receiver's rationality level and
the achievable rationality-robustness. 
Though it appears challenging to obtain a bounded-factor 
rationality-robust approximation for arbitrary set of rationality levels,
and general problem instances in SDSU environments,
we obtain a preliminary positive result 
under a 
boundedness condition on the receiver's rationality level.
In particular, 
when the state space is binary,
under a {reasonable} 
multiplicative boundedness condition 
(i.e., learning the receiver’s bounded rationality level up to
a multiplicative error) 
on the receiver's bounded rationality levels,
we show that 
the sender is able to design a signaling scheme
whose rationality-robust approximation ratio 
depends linearly on the multiplicative error 
(\Cref{thm:SDSU binary robust}). 




\ifArxiv
\subsection{Related Work}
\label{subsec:related}

In this section, we discuss the works that are closely related to our work, and we discuss further related work in \Cref{apx:further related work}. 

There has been a growing interest in understanding how to design 
robust signaling schemes in the face of uncertain receiver behavior.
Our work contributes to this line of research by studying robust signaling schemes
when the receiver's bounded rationality level is unknown. 
The approach we take is similar to the approach often used in 
prior-independent mechanism design, examining 
the {\em approximation ratios} of the designed mechanisms.
Our work differs from previous works that either focus on the regret minimization \cite{BTXZ-21,CL-23} or minimax approach \cite{DP-22,K-22,HW-21}.
Notably, \citet{BTXZ-21} present a 
negative result saying that there exists no nontrivial bound of the additive regret if the sender 
has no knowledge about the receiver's utilities. 
While this result shares a similar message to our impossibility result in \Cref{sec:robust}, there are notable differences between the two studies that preclude direct comparison:
(i) our impossibility result 
is under SDSU setting, whereas theirs is under SISU setting (for which we have a positive result);
(ii) the adversary in our setting is limited to 
choosing the receiver's behavior (i.e., the rationality level) 
in the quantal response model, 
whereas theirs considers a worst-case adversary.
Recent work by \citet{CL-23}
also examines the design of robust signaling schemes for non-best-responding receiver, but with a focus on the regret minimization approach.
Our persuasion setting with fully rational receiver
can also be viewed as a Stackelberg game where 
the sender moves first by committing to a signaling scheme, 
and the receiver takes an action that 
best responds to sender's signaling scheme.
With bounded rationality, receiver in our setting is not best-responding to sender's signaling scheme.
This shares similarity to recent work by \cite{GHWX-23} who study 
Stackelberg games with suboptimal follower response. 
However, their work adopts a worst-case perspective and considers the worst possible follower
behavior up to some plausible ranges, while our work adopts a model-based approach and the follower is responding with following a quantal response model.


\fi

\section{Preliminaries}
\label{sec:prelim}



\subsection{Model and Problem Definition}
\label{sec:model}
In this paper, we study the  
persuasion problem for a receiver with bounded 
rationality.
There are two players, a {sender} and 
a {receiver}. 
There is an unknown state $\randomstate$
drawn from a finite set $[m] \triangleq \{1, 2,\dots, m\}$
according to a prior distribution $\prior \in \stochastic([m])$,
which is common knowledge among both players.
Throughout the paper, we use $\randomstate$ to denote 
the state as a random variable, and $i, j, k\in[m]$
as its possible realization.
We use $\prior_i$ to denote the probability that 
the realized state is $i\in[m]$, i.e., $\prior_i \triangleq\prob{\randomstate = i}$.
The receiver has a binary action set $\cA = \{0, 1\}$.
Given a realized state $i\in[m]$, by taking action $\action\in\cA$, 
the utility of the receiver is $\receiverU_i(\action)$
and the utility of the sender is $\senderU_i(\action)$.
Following the standard convention \citep[e.g.,][]{AIL-20,AC-16b,BTXZ-21,LI-19}, 
Throughout this paper, 
\wtr{
we focus on the setting where
$\senderU_i(1) \geq \senderU_i(0)$ for all $i\in[m]$,
and normalize $\senderU_i(0) \equiv 0$
and denote $\senderU_i \triangleq \senderU_i(1)$.
\footnote{
A discussion of how our results could be extended without this assumption
is provided in \Cref{sec:assump favorable action}.
}
}

The objective of the sender is to maximize
his expected utility.
Before the state $\randomstate$ is realized, the sender commits 
to a signal space $\signalspace$ and 
a signaling scheme $\signalscheme:[m] \rightarrow \stochastic(\signalspace)$,
a mapping from the realized state  
into probability distributions over signals.
We use $\signalscheme_i(\signal)$ to denote the probability that 
signal $\signal\in\signalspace$ is realized when the realized state is state $i$.
Upon seeing signal $\sigma$, the receiver 
performs a Bayesian update and infers 
a posterior belief over the state.
In particular, 
the posterior probability of state $i$ given realized signal $\signal$
is $\posterior_{\state}(\sigma)\triangleq \frac{
\prior_i\signalscheme_i(\signal)}{\sum_{j\in[m]}\prior_j \signalscheme_j(\signal)}$.

In this paper, we assume that the receiver is boundedly rational
by modeling her as a (logit) quantal response player
\citep{MP-95}. 
Specifically, instead of taking the action that maximizes the expected 
utility, 
a quantal player randomly selects an action with probability proportional
to the expected utility.
In our model, given posterior belief $\posterior\in\stochastic([m])$
and its induced expected utility 
$\receiverU(\action\mid \posterior) 
\triangleq \sum_{\state\in[\stateNum]}
\posterior_\state\, \receiverU_\state(\action)$
for action $\action\in\cA$, 
the receiver takes action 1
with probability\footnote{One explanation 
of this quantal response behavior is 
that the receiver
faces a random shock when she is making
the decision. See \Cref{sec:reformulate} for more details.}
\begin{align*}
    \frac{\exp(\noiseScale\cdot \receiverU(1\mid\posterior))}{
    \exp(\noiseScale\cdot \receiverU(1\mid\posterior)) + 
    \exp(\noiseScale\cdot \receiverU(0\mid\posterior))}
    = 
    \frac{1}{ 1+
    \exp(\noiseScale\cdot \left(\receiverU(0\mid\posterior) - \receiverU(1\mid\posterior)\right))}
\end{align*}
Here $\noiseScale\in[0,\infty)$ is the bounded rationality level.
When the bounded rationality level $\noiseScale$ equals zero, the receiver takes each action uniformly at random
regardless of her posterior belief.
When the bounded rationality level $\noiseScale$ equals infinite, 
our model recovers the classic Bayesian persuasion for 
a (fully) rational receiver who takes the action which maximizes her expected utility.

Let function\footnote{We note that many results in 
\Cref{sec:state independent}, \Cref{sec:state dependent} hold
for general function $\Quant$. 
See \Cref{sec:general W curve} for detailed discussions.} $\Quant\betaed(x) \triangleq 
\sfrac{1}{(1+\exp(\noiseScale x))}$.
When the bounded rationality level
$\noiseScale$ is clear from the context,
we simplify  $\Quant\betaed$
with $\Quant$.
Given any posterior belief $\posterior$, 
we have $\receiverU(0\mid\posterior) - \receiverU(1\mid\posterior)
= \sum_{i\in[m]}\posterior_i\,
    \receiverU_i$, 
where 
$\receiverU_i\triangleq \receiverU_i(0) - 
\receiverU_i(1)$
represents how much the 
receiver prefers action $0$ over action $1$ given state $i$.
Without loss of generality, we assume $\{\receiverU_i\}$
is strictly increasing in $i$.
With the above definitions, we can rewrite
the probability that the receiver takes action 1 
as 
$
    \Quant\left
    (
    \sum_{i\in[m]}\posterior_i\,
    \receiverU_i\right)$.
Intuitively speaking, since the probability that receiver
takes action 1 only depends on the expected utility difference $\sum_{i\in[m]}\posterior_i\,
    \receiverU_i$,
it is without loss of generality to restrict to signaling scheme 
where each signal $\scaledNoiseDiff$ represents its 
induced expected utility difference, i.e., 
$\scaledNoiseDiff\equiv \sum_{i\in[m]}\posterior_i(\scaledNoiseDiff)\,
    \receiverU_i$.
\cjr{Recall that $\posterior_i(\scaledNoiseDiff)$ is the posterior
probability of state $i$ given realized signal $\scaledNoiseDiff$}.
We formalize this idea by
writing
our problem  
as the following linear program~\ref{eq:opt lp}
(and its dual program~\ref{eq:opt lp dual}) 
with 
variables $\{\signalscheme_i(\scaledNoiseDiff)\}_{\scaledNoiseDiff \in \R, i\in[m]}$.\footnote{While 
we allow signaling schemes to have continuous signal space
$\signalspace$, i.e., $\signalscheme_i(\cdot)$ can be interpreted as 
a probability density function over $\signalspace$,
all signaling schemes (as well as the optimal signaling schemes)
considered in this paper have finite signal space.
Therefore, we abuse the notation and 
use $\signalscheme_i(\cdot)$ as the probability mass function
when it is clear from the context.}
See \Cref{prop:opt lp}
and its proof in \Cref{apx:prelim proof}.
\begin{align}
    \label{eq:opt lp}
    \arraycolsep=5.4pt\def\arraystretch{1}
    \tag{$\mathcal{P}_\texttt{OPT-Primal}$}
    &\begin{array}{llll}
     \max\limits_{\boldsymbol\signalscheme\geq \zerobf} ~ &
     \displaystyle\sum\nolimits_{i\in[m]} \prior_i \senderU_i 
     \displaystyle\int_{-\infty}^{\infty} \signalscheme_i(\scaledNoiseDiff)
     \Quant(\scaledNoiseDiff)\,d\scaledNoiseDiff 
     \quad& \text{s.t.} &
     \vspace{1mm}
     \\
       & 
       \displaystyle\sum\nolimits_{i\in[m]}  \prior_i 
      \left(\receiverU_i - \scaledNoiseDiff \right)
       \signalscheme_i(\scaledNoiseDiff)
       = 0
       & \scaledNoiseDiff\in(-\infty,\infty) 
       \quad&  
       \langle\noiseDualVar(\scaledNoiseDiff)\rangle
     \vspace{1mm}
       \\
       &
       \displaystyle\int_{-\infty}^{\infty} \signalscheme_i(\scaledNoiseDiff)d\scaledNoiseDiff = 1
       &
       i\in[m] &
       \langle\distDualVar(i)\rangle
     \vspace{1mm}
     \\
    \end{array}
    \vspace{2mm}
    \\
    \label{eq:opt lp dual}
    \tag{$\mathcal{P}_\texttt{OPT-Dual}$}
    &\begin{array}{llll}
     \min\limits_{\boldsymbol\noiseDualVar,
     \boldsymbol\distDualVar} ~ &
     \displaystyle\sum\nolimits_{i\in[m]} \distDualVar(i)
     & \text{s.t.} &
     \vspace{1mm}
     \\
       & 
      \prior_i 
      \left(\receiverU_i - \scaledNoiseDiff \right)
      \noiseDualVar(\scaledNoiseDiff) 
      +
      \distDualVar(i)
      \geq 
      \prior_i \senderU_i \Quant(\scaledNoiseDiff)
      &
      \scaledNoiseDiff\in(-\infty,\infty),
      i\in[m] 
      \quad
      &
      \langle
      \signalscheme_i(\scaledNoiseDiff)
      \rangle
     \\
    \end{array}
\end{align}

\begin{restatable}{proposition}{optimallp}
\label{prop:opt lp}
For every feasible solution
$\{\signalscheme_i(\scaledNoiseDiff)\}$
in program~\ref{eq:opt lp},
there exists a signaling scheme
where for each state $i\in[m]$,
the boundedly rational receiver 
takes action 1 with probability 
$\int_{-\infty}^{\infty}
\signalscheme_i(\scaledNoiseDiff)\Quant(\scaledNoiseDiff)\,d\scaledNoiseDiff$.
Furthermore, the sender's optimal expected utility 
(in the optimal signaling scheme)
is equal to the optimal objective value of program~\ref{eq:opt lp}.
\end{restatable}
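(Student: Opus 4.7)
The plan is to interpret each feasible variable $\signalscheme_i(\scaledNoiseDiff)$ as the conditional probability (or density) that, when the state is $i$, the sender sends a signal whose induced posterior expected utility difference equals $\scaledNoiseDiff$. I would first establish the ``reverse'' direction: given any feasible $\{\signalscheme_i(\scaledNoiseDiff)\}$, construct a signaling scheme with signal space $\signalspace = \R$ that maps state $i$ to signal $\scaledNoiseDiff$ with (density) $\signalscheme_i(\scaledNoiseDiff)$. The second constraint family $\int \signalscheme_i(\scaledNoiseDiff)\,d\scaledNoiseDiff = 1$ guarantees this is a valid conditional distribution. Upon receiving $\scaledNoiseDiff$, the receiver's posterior is $\posterior_i(\scaledNoiseDiff) = \prior_i\signalscheme_i(\scaledNoiseDiff)/\sum_j \prior_j\signalscheme_j(\scaledNoiseDiff)$, and the first constraint family $\sum_i \prior_i(\receiverU_i-\scaledNoiseDiff)\signalscheme_i(\scaledNoiseDiff)=0$ is, after dividing through by $\sum_j \prior_j\signalscheme_j(\scaledNoiseDiff)$, exactly the self-consistency identity $\sum_i \posterior_i(\scaledNoiseDiff)\receiverU_i = \scaledNoiseDiff$. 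Hence the label $\scaledNoiseDiff$ truly captures the induced utility difference, the receiver takes action 1 with probability $\Quant(\scaledNoiseDiff)$, and the conditional probability of action 1 given state $i$ is $\int \signalscheme_i(\scaledNoiseDiff)\Quant(\scaledNoiseDiff)\,d\scaledNoiseDiff$. The sender's resulting expected utility equals $\sum_i \prior_i \senderU_i \int \signalscheme_i(\scaledNoiseDiff)\Quant(\scaledNoiseDiff)\,d\scaledNoiseDiff$, which is exactly the LP objective.

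For the equality of the optima I also need the converse direction: any signaling scheme $\signalscheme^{\mathrm{orig}}:[m]\to\stochastic(\Sigma^{\mathrm{orig}})$ induces a feasible LP solution achieving the same sender utility. For each original signal $\signal$ compute $\scaledNoiseDiff(\signal) = \sum_i \posterior_i(\signal)\receiverU_i$; since the receiver's action probability depends on her posterior only through this scalar, signals sharing a common $\scaledNoiseDiff$ are behaviorally indistinguishable and may be pooled. I would define $\signalscheme_i(\scaledNoiseDiff) := \sum_{\signal:\scaledNoiseDiff(\signal)=\scaledNoiseDiff}\signalscheme^{\mathrm{orig}}_i(\signal)$, with a pushforward measure in the continuous case. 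The first constraint family is then precisely Bayesian plausibility at each pooled $\scaledNoiseDiff$, the second is inherited from $\signalscheme^{\mathrm{orig}}$, and the induced action-1 probabilities and sender utility are unchanged by pooling.

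The main obstacle is measure-theoretic bookkeeping rather than combinatorial substance: program~\ref{eq:opt lp} is written with a continuous index $\scaledNoiseDiff\in\R$, while the footnote points out that all schemes of interest have finite signal support. I would handle this by first proving the correspondence for finite-support schemes, where the integrals collapse to sums and pooling is elementary, and then noting that restricting to such schemes is without loss for the binary-action problem, since posteriors may always be grouped by their induced $\scaledNoiseDiff$ without changing either feasibility or objective. This sidesteps the need for a general pushforward construction while still delivering the equality of optima claimed in the proposition.
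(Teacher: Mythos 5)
Your proposal is correct and takes essentially the same route as the paper's proof: the forward direction reads the LP variables directly as a signaling scheme and uses the first constraint to identify the signal label $\scaledNoiseDiff$ with the induced posterior utility difference, and the converse pools signals from an arbitrary (in particular, optimal) scheme by the value $\scaledNoiseDiff(\signal) = \sum_i \posterior_i(\signal)\receiverU_i$, additively aggregating $\signalscheme_i^{\mathrm{orig}}(\signal)$ over signals sharing a common $\scaledNoiseDiff$. The paper likewise sidesteps the continuous-vs-finite bookkeeping by observing (in a footnote) that all schemes of interest have finite signal support, which is exactly the reduction you propose.
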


The first constraint in the program 
\ref{eq:opt lp} ensures that whenever a signal $\scaledNoiseDiff$
is realized, the probability for the receiver 
for taking action $1$ is exactly $\Quant(\scaledNoiseDiff)$.
Due to \Cref{prop:opt lp},
in the remaining of the paper, 
we  
describe 
signaling schemes by $\{\signalscheme_i(\scaledNoiseDiff)\}$
as the feasible solutions 
in program~\ref{eq:opt lp},
{
and $\{\signalscheme_i(\sigma)\}$
as the original definition interchangeably.}
We use $\Payoff[\noiseScale]{\signalscheme}$
to denote the expected sender utility 
of signaling scheme $\signalscheme$
(i.e., the objective value for feasible solution $\signalscheme$
in program~\ref{eq:opt lp})
for a receiver with bounded rationality level $\noiseScale$.
We drop subscript $\noiseScale$ in $\Payoff[\noiseScale]{\cdot}$
when it is clear from the context.

Our persuasion problem for the
boundedly rationally 
receiver is equivalent 
to a public persuasion problem 
for a continuum population of rational receivers
with a specific utility structure,
and thus program~\ref{eq:opt lp}
can be reinterpreted as the program 
for this public persuasion problem. 
See \Cref{sec:reformulate}
for more details.

\subsection{Optimal Signaling Schemes for A Fully Rational Receiver}
\label{sec:simple mechanisms}

Here we introduce two subclasses of signaling schemes
--
\emph{censorship signaling schemes}
and \emph{direct signaling schemes}, 
that will be discussed throughout this paper.
Briefly speaking, 
a censorship (resp.\ direct) signaling scheme
partitions the state space $[m]$ into 
three disjoint subsets:\footnote{Namely, $\highstates\cup\{\censorshipstate\}\cup\lowstates = [m]$,
$\highstates\cap \lowstates = \emptyset$,
$\censorshipstate\not\in \highstates$,
and $\censorshipstate\not\in\lowstates$.}
high states $\highstates$,
threshold state $\{\censorshipstate\}$, 
and low states $\lowstates$,
and 
specifies
a threshold state probability $\censorshipprob\in[0,1]$.
It pools
all states in $\highstates$ as well as 
a $(\censorshipprob)$-fraction of the threshold state $\censorshipstate$
into a pooling signal~$\censorshipsignal$,
and fully reveals other states
(resp.\ pools all other states into another pooling signal $\scaledNoiseDiff\doubleprimed$).
See \Cref{def:censorship} and \Cref{def:direct scheme}
for the formal definitions.\footnote{Our 
definition 
is equivalent to 
censorship signaling schemes
for persuasion problem with
continuous state space 
\citep[see, e.g.,][]{DM-19}
by considering the quantile space 
of state space $[m]$.}

\begin{definition}
\label{def:censorship}
A \emph{censorship signaling scheme},
parameterized by a
state space partition $\highstates\sqcup\{\censorshipstate\}\sqcup\lowstates$,
and 
threshold state probability $\censorshipprob$
admits the form as follows
\begin{align*}
    i\in\highstates:&\qquad
    \signalscheme_i(\scaledNoiseDiff) =
    \indicator{\scaledNoiseDiff = \censorshipsignal}
    \\
    &\qquad
    \signalscheme_{\censorshipstate}(\scaledNoiseDiff)
    =
    \censorshipprob
    \cdot 
    \indicator{\scaledNoiseDiff=\censorshipsignal}
    +
    (1-\censorshipprob)
    \cdot 
    \indicator{\scaledNoiseDiff=\receiverU_{\censorshipstate}}
    \\
    i\in\lowstates:&\qquad
    \signalscheme_i(\scaledNoiseDiff) =
    \indicator{\scaledNoiseDiff = \receiverU_i}
\end{align*}
where $\censorshipsignal=
\frac{
\censorshipprob\prior_{\censorshipstate} 
\receiverU_{\censorshipstate}
+\sum_{i\in\highstates}
\prior_i \receiverU_i
}
{
\censorshipprob\prior_{\censorshipstate} 
+ \sum_{i\in\highstates}
\prior_i
}$ is the pooling signal.
\end{definition}

\begin{definition}
\label{def:direct scheme}
A \emph{direct signaling scheme},
parameterized by a
state space partition $\highstates\sqcup\{\censorshipstate\}\sqcup\lowstates$,
and 
threshold state probability $\censorshipprob$
admits the form as follows
\begin{align*}
    i\in\highstates:&\qquad
    \signalscheme_i(\scaledNoiseDiff) =
    \indicator{\scaledNoiseDiff = \censorshipsignal}
    \\
    &\qquad
    \signalscheme_{\censorshipstate}(\scaledNoiseDiff)
    =
    \censorshipprob
    \cdot 
    \indicator{\scaledNoiseDiff=\censorshipsignal}
    +
    (1-\censorshipprob)
    \cdot 
    \indicator{\scaledNoiseDiff=\scaledNoiseDiff\doubleprimed}
    \\
    i\in\lowstates:&\qquad
    \signalscheme_i(\scaledNoiseDiff) =
    \indicator{\scaledNoiseDiff = \scaledNoiseDiff\doubleprimed}
\end{align*}
where $\censorshipsignal=
\frac{
\censorshipprob\prior_{\censorshipstate} 
\receiverU_{\censorshipstate}
+\sum_{i\in\highstates}
\prior_i \receiverU_i
}
{
\censorshipprob\prior_{\censorshipstate} 
+ \sum_{i\in\highstates}
\prior_i
}$ 
and $\scaledNoiseDiff\doubleprimed=
\frac{
(1-\censorshipprob)\prior_{\censorshipstate} 
\receiverU_{\censorshipstate}
+\sum_{i\in\lowstates}
\prior_i \receiverU_i
}
{
(1-\censorshipprob)\prior_{\censorshipstate} 
+ \sum_{i\in\lowstates}
\prior_i
}$ 
are the pooling signals.
\end{definition}
The only difference between censorship signaling schemes 
and direct signaling schemes is the signaling structure
for the $(1-\censorshipprob)$-fraction of the threshold state $\censorshipstate$ and every state in $\lowstates$ --
censorship signaling schemes fully reveal them,
while direct signaling schemes pools them all together. 
As a sanity check, note that 
censorship (resp.\ direct) signaling schemes are indeed the feasible solutions
of program~\ref{eq:opt lp}.
We also highlight two standard 
censorship signaling schemes:
the \emph{full-information revealing signaling scheme}
which reveals all states separately,
and the \emph{no-information revealing signaling scheme}
which pools all states at a single signal.

When the receiver is fully rational
(i.e.,
the bounded rationality level $\noiseScale = \infty$), there exists a censorship (resp.\ direct) signaling scheme that is indeed optimal.
\begin{restatable}{lemma}{optfullyrational}
[See for example \citealp{RSV-17}]
\label{lem:opt fully rational}
For a fully rational receiver 
(i.e., 
with bounded rationality level $\noiseScale = \infty$),
\cjr{it is optimal for the sender to adopt a censorship
(resp.\ direct) signaling scheme such that}
\begin{itemize}
    
    \item[(i)] threshold state $\censorshipstate
    =
    \argmax\limits_{i\in[m]} \left\{\frac{\receiverU_i}{\senderU_i}:
    \sum_{j: \frac{\receiverU_j}{\senderU_j}< \frac{\receiverU_i}{\senderU_i}} 
    \prior_j\receiverU_j \leq 0
    \right\}$;
    
    \item[(ii)] high states $\highstates = \left\{i\in[m]:
    \frac{\receiverU_i}{\senderU_i} < 
    \frac{\receiverU_{\censorshipstate}}{
    \senderU_{\censorshipstate}}\right\}$,
    and low states $\lowstates = \left\{i\in[m]:
    \frac{\receiverU_i}{\senderU_i} >
    \frac{\receiverU_{\censorshipstate}}{
    \senderU_{\censorshipstate}}\right\}$;
    
    \item[(iii)] threshold state probability 
    $\censorshipprob = \max\left\{p\in[0, 1]:
    p\prior_{\censorshipstate}\receiverU_{\censorshipstate}
    +
    \sum_{i\in\highstates}
    \prior_i\receiverU_i \leq 0
    \right\}$.
\end{itemize}
\end{restatable}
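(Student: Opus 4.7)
For a fully rational receiver, $\noiseScale = \infty$, so $\Quant^{(\infty)}(\scaledNoiseDiff) = \indicator{\scaledNoiseDiff < 0}$ (breaking ties in favor of action $1$). Hence under any posterior $\posteriorVec$, the receiver takes action $1$ exactly when $\sum_i \posterior_i \receiverU_i \leq 0$. By a standard revelation/splitting argument, it is without loss of generality to restrict attention to signaling schemes with two signals $\scaledNoiseDiff^{\mathtt{1}}, \scaledNoiseDiff^{\mathtt{0}}$ interpreted as recommendations of action $1$ and $0$ respectively; letting $q_i \in [0,1]$ be the probability that state $i$ maps to $\scaledNoiseDiff^{\mathtt{1}}$, the sender's problem reduces to the LP
\begin{equation*}
\max_{q \in [0,1]^m} \sum\nolimits_{i\in[m]} \prior_i \senderU_i q_i \qquad \text{s.t.} \qquad \sum\nolimits_{i\in[m]} \prior_i \receiverU_i q_i \leq 0.
\end{equation*}
The constraint is the obedience condition that, conditional on seeing $\scaledNoiseDiff^{\mathtt{1}}$, the receiver indeed prefers action $1$.

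The plan is to solve this LP by a fractional-knapsack argument. First I would observe that any state with $\receiverU_i \leq 0$ is costless for the constraint and strictly beneficial for the objective, so must have $q_i = 1$ at optimum. For states with $\receiverU_i > 0$, one unit of constraint budget $-\prior_i \receiverU_i$ yields $\senderU_i / \receiverU_i$ units of objective, so the LP optimum sorts these states in increasing order of $\receiverU_i/\senderU_i$ and sets $q_i = 1$ greedily until the constraint tightens, with at most one fractional state at the boundary. This fractional state is exactly $\censorshipstate$ as defined in (i): the largest ratio $\receiverU_i/\senderU_i$ such that filling everything strictly below still keeps the LHS $\leq 0$. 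The states with strictly smaller ratio are then $\highstates$ in (ii), with $q_i = 1$; the states with strictly larger ratio constitute $\lowstates$ with $q_i = 0$; and $\censorshipprob$ in (iii) is the maximum fractional level that still satisfies obedience with equality.

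It remains to verify that both the censorship and direct signaling schemes of \Cref{def:censorship,def:direct scheme} with these parameters implement this optimal $q$. In both schemes, the pooling signal $\censorshipsignal$ is by definition the posterior expectation of $\receiverU$ conditional on the $\scaledNoiseDiff^{\mathtt{1}}$ recommendation, which equals $\tfrac{1}{\censorshipprob \prior_\censorshipstate + \sum_{i \in \highstates} \prior_i}\bigl(\censorshipprob\prior_\censorshipstate\receiverU_\censorshipstate + \sum_{i\in\highstates} \prior_i \receiverU_i\bigr) \leq 0$ by (iii), so the receiver obediently takes action $1$ there. The only difference between the two schemes is the treatment of the remaining mass: censorship fully reveals each low state $i$ at signal $\receiverU_i > 0$, whereas direct pools them at one signal $\scaledNoiseDiff\doubleprimed$. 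In both cases the receiver takes action $0$ (in the direct case because $\scaledNoiseDiff\doubleprimed > 0$, being a convex combination of positive $\receiverU_i$'s), so the contribution to the sender's payoff is $0$ in both schemes, matching the LP optimum $\sum_{i \in \highstates} \prior_i \senderU_i + \censorshipprob \prior_\censorshipstate \senderU_\censorshipstate$.

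\textbf{Expected main obstacle.} The reduction to two signals is the conceptually delicate step; I would justify it by noting that any signaling scheme induces a distribution over posteriors, each of which deterministically triggers action $0$ or $1$ under full rationality, and that merging all action-$1$-triggering signals into a single signal preserves obedience (by convexity of the action-$1$ region $\{\posteriorVec : \sum_i \posterior_i \receiverU_i \leq 0\}$) and preserves the sender's payoff. A minor technical nuisance is tie-breaking when multiple states share the ratio $\receiverU_i/\senderU_i$; the characterization in (ii) uses strict inequalities and lumps all ties at the threshold into the fractional state $\censorshipstate$, which is one canonical LP optimum and suffices for the lemma.
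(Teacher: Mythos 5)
The paper does not prove this lemma: it is stated with a citation to \cite{RSV-17} and treated as a known fact, so there is no in-paper proof to compare against. Your argument is nonetheless correct and is the standard one for the binary-action, fully-rational persuasion problem. The reduction to two "recommendation" signals via convexity of the action-$1$ region $\{\vmu : \sum_i \posterior_i \receiverU_i \le 0\}$ is exactly the revelation-principle step, and the resulting LP in $q\in[0,1]^m$ with a single obedience constraint is a fractional knapsack whose greedy solution (sort by $\receiverU_i/\senderU_i$, fill until the constraint binds, one fractional coordinate at the boundary) matches (i)--(iii) of the lemma. The final observation — that censorship and direct schemes differ only in how they dispose of action-$0$ mass, which contributes nothing to the sender since $\senderU_i(0)\equiv 0$ — correctly explains why both achieve the LP optimum. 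It is worth noting that this elementary knapsack route is deliberately \emph{not} the machinery the paper builds for the boundedly rational case: there $\Quant$ is strictly between $0$ and $1$, so the two-signal reduction fails and the paper instead works directly with the infinite-dimensional program~\ref{eq:opt lp} and an explicit dual certificate (cf.\ the proof of \Cref{thm:SISU opt}). Your proof is the right one for the $\noiseScale=\infty$ endpoint precisely because the indicator structure of $\Quant^{(\infty)}$ collapses the problem to a finite LP.

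Two small slips, neither fatal. First, you write $\Quant^{(\infty)}(\scaledNoiseDiff)=\indicator{\scaledNoiseDiff<0}$ and then parenthetically say ties break toward action $1$; those two statements contradict each other (the sender-favorable convention used by the paper and needed for (iii) to give a well-defined $\censorshipprob$ is $\indicator{\scaledNoiseDiff\le 0}$), but your subsequent working uses $\le 0$ throughout, so only the displayed indicator is wrong. Second, the claim that $\scaledNoiseDiff\doubleprimed>0$ ``being a convex combination of positive $\receiverU_i$'s'' needs the observation you only implicitly make: whenever any state has $\receiverU_i\ge 0$, the threshold state satisfies $\receiverU_{\censorshipstate}/\senderU_{\censorshipstate}\ge 0$, so every $i\in\lowstates$ has $\receiverU_i>0$, and the $(1-\censorshipprob)$-mass of $\censorshipstate$ (if positive) has $\receiverU_{\censorshipstate}\ge 0$ — and the borderline case $\receiverU_{\censorshipstate}=0$ forces $\censorshipprob=1$ by (iii), so it never contributes to $\scaledNoiseDiff\doubleprimed$. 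Spelling this out would tighten the argument, but the conclusion is right.
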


In fact, to achieve the optimality 
for a fully rational receiver, 
it only requires that all 
 states in $\highstates$
together with $(\censorshipprob)$-fraction of threshold state $\censorshipstate$
are pooled
into signal $\censorshipsignal$
where the assignments of $\highstates,\censorshipstate,\censorshipprob$
and $\censorshipsignal$
are defined in \Cref{lem:opt fully rational}
\citep{RSV-17}.
In other words, no restrictions on 
the signaling structure on the remaining $(1 - \censorshipprob)$-fraction of threshold state $\censorshipstate$
and other states in $\lowstates$
are required.
In this sense, the optimal censorship and 
optimal direct signaling scheme can be thought as two extreme
cases on the signaling structure for those states,
i.e., fully revealing them or pooling them all together,
and both achieve the optimality over all signaling schemes.
Therefore, for a fully rational receiver, 
the sender is indifferent between the optimal censorship 
and the optimal
direct signaling schemes.
However,
as we shown in the later sections,
there exists a 
separation between 
these two types of signaling schemes
when the receiver is boundedly rational.

\section{State Independent Sender Utility (SISU) Environments}
\label{sec:state independent}
\newcommand{\reals}{\R}
\newcommand{\tangentmapping}{\Psi}
\newcommand{\probmapping}{\Phi}
\newcommand{\positivestateindex}{i\doubleprimed}

\newcommand{\cs}{{\small \texttt{complementary-slackness}}}
\newcommand{\dfone}{{\small \texttt{dual-feasibility-1}}}
\newcommand{\dftwo}{{\small \texttt{dual-feasibility-2}}}
\newcommand{\dfthree}{{\small \texttt{dual-feasibility-3}}}
\newcommand{\pf}{{\small \texttt{primal-feasibility}}}

In this section, we 
consider the 
\emph{state independent sender utility (SISU)} environments
where the sender's utility $\{\senderU_i\}_{i\in[m]}$
is independent of the realized state.
Namely, we assume $\senderU_i \equiv 1$
for every state $i\in[m]$.
Furthermore, for ease of presentation, this section assumes
$\receiverU_1 < 0$ and $\receiverU_m > 0$.\footnote{
If $\receiverU_i \leq 0, \forall i\in[m]$,
we can introduce one dummy state $m + 1$ such that 
$\receiverU_{m+1} = 1$ and $\prior_{m + 1} = 0$.
Similarly, we can add one dummy state if 
 $\receiverU_i \geq 0, \forall i\in[m]$.
Thus, $\receiverU_1 < 0$ and 
$\receiverU_m > 0$ is without loss of generality.}

Recall that for a fully rational receiver, 
\Cref{lem:opt fully rational} shows the optimality 
of both censorship signaling schemes and direct signaling schemes.
However, when the receiver is boundedly rational, 
there exists a separation between these two subclasses of signaling schemes.
As the main result of this section,
\Cref{thm:opt state independent}
in \Cref{sec:SISU censorship}
shows that in SISU environments,
for a boundedly rational
receiver, 
it is optimal for the sender to adopt a censorship signaling scheme.
In contrast, \Cref{thm:SISU direct} in
\Cref{sec:SISU direct} shows that 
there exists a SISU problem instance,
where any direct signaling scheme is $\Omega(m)$-approximation. 


\subsection{Censorship as Optimal Signaling Schemes}
In this subsection,
we 
show that in SISU environments,
for a receiver with bounded rationality level $\noiseScale$, 
it is optimal for the sender to adopt a censorship signaling scheme.
Our result recovers 
the optimal censorship signaling scheme 
of a fully rational receiver 
(\Cref{lem:opt fully rational}).
In other words, the optimality of 
the censorship signaling schemes
is preserved
even when the receiver is boundedly rational
in SISU environments. 
\label{sec:SISU censorship}
\begin{theorem}
\label{thm:SISU opt}
\label{thm:opt state independent}
In SISU environments,
for a boundedly rational receiver 
with any bounded rationality level $\noiseScale$,
there exists a censorship signaling scheme $\optscheme$
that is the optimal signaling scheme.
Specifically,
\begin{enumerate}
    \item [(i)]
    the threshold state $\censorshipstate$
     and 
     the threshold state probability 
    $\censorshipprob$,
    together with an auxiliary 
    variable 
    $ 
    \censorshipsignalInverse$,
    solve the following 
    feasibility program~\ref{eq:SISU opt condition}:
    \begin{align}
        \label{eq:SISU opt condition}
        \tag{$\mathcal{P}_{\texttt{SISU-OPT}}$}
        \begin{split}
    (1 - \censorshipprob)
    (\censorshipsignalInverse - 
    \receiverU_{\censorshipstate}) &= 0
    \qquad\qquad(\text{complementary-slackness})
    \\
    \left({\Quant(\censorshipsignalInverse)  - 
    \Quant(\censorshipsignal)}\right)
    -
    \Quant'(\censorshipsignal) ({
    \censorshipsignalInverse
    - \censorshipsignal}) &= 0
    \qquad\qquad (\text{dual-feasibility-1})
    \\
\max_{i\in[m]}\{
    \receiverU_i:\receiverU_i \leq \censorshipsignalInverse\}
    &=
    \receiverU_{\censorshipstate}
    \qquad\,\,\,\,\,\,\,\,(\text{dual-feasibility-2})
    \\
            \censorshipsignal \leq 0,~
            \censorshipsignalInverse &\geq 0
           \qquad\qquad (\text{dual-feasibility-3})
           \\
           0\leq \censorshipprob&\leq 1
           \qquad\qquad(\text{primal-feasibility})
        \end{split}
    \end{align}
where $\censorshipsignal = 
    \frac{\sum_{i:i < \censorshipstate} \prior_i\receiverU_i + \censorshipprob \prior_{\censorshipstate}\receiverU_{\censorshipstate}}{\sum_{i:i < \censorshipstate} \prior_i + \censorshipprob \prior_{\censorshipstate}}$ is the pooling signal;
    \item [(ii)] high states $\highstates = 
    \left\{i\in[m]:i < \censorshipstate\right\}$, 
    and low states $\lowstates = \left\{i\in[m]: i > \censorshipstate\right\}$.
    \end{enumerate}
\end{theorem}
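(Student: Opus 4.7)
The plan is to invoke LP strong duality between \ref{eq:opt lp} and \ref{eq:opt lp dual} by exhibiting a primal-feasible censorship scheme $\optscheme$ of the form prescribed in the theorem and a dual-feasible pair $(\noiseDualVar,\distDualVar)$ with matching objective value. Primal feasibility of $\optscheme$ follows by inspection: the definition of $\censorshipsignal$ makes the martingale constraint at $\scaledNoiseDiff=\censorshipsignal$ hold, the constraints at the fully-revealing signals $\scaledNoiseDiff=\receiverU_i$ (for $i\in\lowstates$, and for $i=\censorshipstate$ when $\censorshipprob<1$) hold trivially, and the probability constraints hold because $\censorshipprob\in[0,1]$.

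For the dual, let $T(x)\triangleq\Quant(\censorshipsignal)+\Quant'(\censorshipsignal)(x-\censorshipsignal)$ denote the tangent line to $\Quant$ at $\censorshipsignal$, and set
\begin{align*}
\distDualVar(i)=\begin{cases}\prior_i\,T(\receiverU_i),& i\in\highstates\cup\{\censorshipstate\},\\ \prior_i\,\Quant(\receiverU_i),& i\in\lowstates,\end{cases}
\end{align*}
with $\noiseDualVar(\delta)$ chosen, for each $\delta$, as the slope of any line through $(\delta,\Quant(\delta))$ that leaves every point $(\receiverU_j,\distDualVar(j)/\prior_j)$ weakly above it. Complementary slackness at every primal-used signal is then forced by construction; the sole consistency requirement is that the two candidate values for $\distDualVar(\censorshipstate)$ agree when $\censorshipprob<1$, which reduces to $T(\receiverU_{\censorshipstate})=\Quant(\receiverU_{\censorshipstate})$, equivalently $\censorshipsignalInverse=\receiverU_{\censorshipstate}$ in view of dual-feasibility-1. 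This is exactly what the complementary-slackness equation of \ref{eq:SISU opt condition} enforces, with $\censorshipsignalInverse$ remaining free in the $\censorshipprob=1$ branch subject only to the remaining dual-feasibility conditions.

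The main obstacle is dual feasibility at an arbitrary $\delta$, which (once $\distDualVar$ is fixed) is equivalent to $\Quant(\delta)$ lying weakly below the lower boundary of the convex hull of $\{(\receiverU_i,\distDualVar(i)/\prior_i)\}_{i=1}^m$ on $[\receiverU_1,\receiverU_m]$; outside this range, the constraint is vacuous because $\noiseDualVar(\delta)$ can be taken of arbitrarily large magnitude. The key structural fact is that $\Quant(x)=1/(1+e^{\noiseScale x})$ is S-shaped, concave on $(-\infty,0]$ and convex on $[0,\infty)$ with inflection at $0$. By dual-feasibility-3 ($\censorshipsignal\leq 0$), concavity of $\Quant$ on the left places $T$ weakly above $\Quant$ on $(-\infty,\censorshipsignal]$, and dual-feasibility-1 identifies $\censorshipsignalInverse\geq 0$ as the unique second crossing of $T$ with $\Quant$ on the convex side, so $T\geq\Quant$ on $(-\infty,\censorshipsignalInverse]$. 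Dual-feasibility-2 then forces $\receiverU_i\leq\censorshipsignalInverse$ for $i\in\highstates\cup\{\censorshipstate\}$ and $\receiverU_i>\censorshipsignalInverse$ for $i\in\lowstates$, so the $\highstates\cup\{\censorshipstate\}$-points all sit on $T$ while the $\lowstates$-points sit strictly above $T$. Consequently the lower hull equals $T$ on $[\receiverU_1,\receiverU_{\censorshipstate}]$ (dominating $\Quant$ by the previous step) and equals the piecewise-linear interpolation of points $(\receiverU_i,\Quant(\receiverU_i))$ for $i\geq\censorshipstate$ on $[\receiverU_{\censorshipstate},\receiverU_m]$ (dominating $\Quant$ by convexity on $[0,\infty)$). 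Combining primal feasibility, dual feasibility, and complementary slackness, LP strong duality yields the optimality of $\optscheme$.
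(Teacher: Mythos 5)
Your argument matches the paper's in strategy and in the dual certificate: LP strong duality, with the tangent line $T$ at $\censorshipsignal$ supplying the $\distDualVar$ assignments and dual feasibility recast as $\Quant$ lying weakly below the lower convex hull of the points $\{(\receiverU_i,\distDualVar(i)/\prior_i)\}_i$. There is one gap in the verification of hull-dominance, on the joining segment over $[\receiverU_{\censorshipstate},\receiverU_{\censorshipstate+1}]$. Two issues: first, the hull node at $\receiverU_{\censorshipstate}$ is $T(\receiverU_{\censorshipstate})=\distDualVar(\censorshipstate)/\prior_{\censorshipstate}$, not $\Quant(\receiverU_{\censorshipstate})$; these coincide only when $\censorshipprob<1$, since only then does the complementary-slackness equation in $\mathcal{P}_{\texttt{SISU-OPT}}$ force $\censorshipsignalInverse=\receiverU_{\censorshipstate}$. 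Second, and more importantly, ``dominating $\Quant$ by convexity on $[0,\infty)$'' does not cover this segment, because $\receiverU_{\censorshipstate}$ can be negative, so the segment may straddle the inflection point of $\Quant$. The repair uses facts you have already established: the joining segment lies weakly above $T$ throughout (they meet at $\receiverU_{\censorshipstate}$, and its right endpoint satisfies $\Quant(\receiverU_{\censorshipstate+1})\geq T(\receiverU_{\censorshipstate+1})$ because $\receiverU_{\censorshipstate+1}>\censorshipsignalInverse$); hence on $[\receiverU_{\censorshipstate},\censorshipsignalInverse]$ it dominates $\Quant$ via your claim that $T\geq\Quant$ on $(-\infty,\censorshipsignalInverse]$, and on $[\censorshipsignalInverse,\receiverU_{\censorshipstate+1}]$ it sits weakly above the chord of $\Quant$ (same right endpoint, and a weakly higher value at $\censorshipsignalInverse$ since $T(\censorshipsignalInverse)=\Quant(\censorshipsignalInverse)$), which dominates $\Quant$ there by convexity. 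With that patch your argument is complete and parallels the paper's, which performs the same check through an explicit four-case verification of a piecewise-secant $\noiseDualVar$.
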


In below, we first provide intuitions behind the constraints 
in the feasibility program \ref{eq:SISU opt condition}, 
and
the properties as well as
the implications of the above characterized
optimal censorship signaling scheme. 
Then, we provide 
the high-level proof idea of \Cref{thm:SISU opt}.

\begin{figure}[ht]
    \centering
    \begin{tikzpicture}[scale=0.7, transform shape]
\begin{axis}[
axis line style=gray,
axis lines=middle,
xtick style={draw=none},
ytick style={draw=none},
xticklabels=\empty,
yticklabels=\empty,
xmin=-9,xmax=12,ymin=-0.15,ymax=1.3,
width=0.9\textwidth,
height=0.5\textwidth,
samples=50]

\addplot[domain=-9:12, gray!40!white, line width=2.mm] (x, {1/(1+e^(0.7*x))});

\addplot[mark=*,only marks, fill=white] coordinates {(6.34907,0.0116084)} node[above, pos=1]{};
\addplot[gray, thick] coordinates {(6.34907,0.01)
(6.34907,-0.01)};
\addplot[] coordinates {(6.34907,0.)} node[below, pos=1]{\Large$\censorshipsignalInverse$};

\addplot[mark=*,only marks, fill=white] coordinates {(-2.34798,0.838022)} node[above, pos=1]{};
\addplot[gray, thick] coordinates {(-2.34798,0.01)
(-2.34798,-0.01)};
\addplot[] coordinates {(-2.34798,0.)} node[below, pos=1]{\Large$\censorshipsignal$};

\addplot[gray, dotted] coordinates {(-2.34798,0.838022)
(-2.34798,0.)};

\addplot[gray, thick] coordinates {(5,0.01)
(5,-0.01)};
\addplot[] coordinates {(5,-0.035)} node[below, pos=1]{\Large$\receiverU_{\censorshipstate}$};


\addplot[gray, thick] coordinates {(11,0.01)
(11,-0.01)};
\addplot[] coordinates {(11,-0.035)} node[below, pos=1]{\Large$\receiverU_{\censorshipstate+1}$};

\addplot[domain=-6:6.34907, black,line width=0.5mm] (x, {-0.09502 * x + 0.614897});
\end{axis}

\end{tikzpicture}
    \caption{Graphical illustration for 
    constraints \dfone,
    \dftwo\ and \dfthree\ in
    feasibility program~\ref{eq:SISU opt condition}.
    The gray solid curve is function $\Quant(\cdot)$.
    Fix an arbitrary $\censorshipsignalInverse\in[0,\infty)$.
    Constraint \dfone\ uniquely pins down
    $\censorshipsignal\in(-\infty, 0]$ (and thus
    constraint \dfthree\ is satisfied as well) 
    such that the black solid line through
    point $(\censorshipsignal,\Quant(\censorshipsignal))$
    and point 
    $(\censorshipsignalInverse,\Quant(\censorshipsignalInverse))$
    is tangent to curve $\Quant(\cdot)$
    at point $(\censorshipsignal,\Quant(\censorshipsignal))$.
    Constraint \dftwo\ uniquely pins down
    $\censorshipstate \triangleq 
    \argmax_{i}\{\receiverU_i:
    \receiverU_i \leq \censorshipsignalInverse\}$.
    The tangent line and the curve $\Quant(\scaledNoiseDiff), \forall \scaledNoiseDiff \ge \censorshipsignalInverse$
    forms a upper convex envelop for function $\Quant$.
    \label{fig:SISU opt condition}
    \vspace{-6pt}
    }
\end{figure}
\xhdr{Graphical interpretation 
of optimal signaling scheme}
To 
develop intuition for optimal censorship,
we start with
constraints \dfone\
and \dfthree.
Recall that $\Quant(x) = \sfrac{1}
{(1+\exp(\noiseScale x))}$ is 
concave in $(-\infty, 0]$
and convex in $[0, \infty)$.
Constraint \dfone\ 
has the following graphical interpretation:
the line through
point $(\censorshipsignal,\Quant(\censorshipsignal))$
and point $(\censorshipsignalInverse,
\Quant(\censorshipsignalInverse))$
is tangent to curve $\Quant(\cdot)$
at point $(\censorshipsignal,\Quant(\censorshipsignal))$.
Notably, 
for every $\censorshipsignalInverse \geq 0$,
there exists a unique $\censorshipsignal\leq 0$
which satisfies \dfone.
In particular, the mapping from 
$\censorshipsignalInverse \in[ 0,\infty)$
to $\censorshipsignal\in
(-\infty, 0]$ satisfying \dfone\
is monotone decreasing and is a bijection
(See \Cref{fig:SISU opt condition}
for illustration).
Constraint~\dftwo\ 
means that threshold state $\censorshipstate$  
is the largest state index such that $\receiverU_i \leq \censorshipsignalInverse$,
i.e., $\censorshipstate = \argmax_{i}\{
\receiverU_i:\receiverU_i\leq \censorshipsignalInverse\}$.\footnote{Recall 
we assume $\receiverU_1 < 0$ without loss of generality,
and thus $\censorshipstate = \argmax_{i}\{
\receiverU_i:\receiverU_i\leq \censorshipsignalInverse\}$
is well-defined for $\censorshipsignalInverse \geq 0$.}
Hence, 
starting with an arbitrary $\censorshipsignalInverse \geq 0$,
constraints \dfone\ and \dftwo\ 
pin down a unique tuple $(\censorshipsignal,\censorshipsignalInverse,
\censorshipstate,\censorshipprob)$:
constraint \dfone\ pins down
a unique $\censorshipsignal \leq 0$,
constraint \dftwo\ pins down
a unique $\censorshipstate$,
and then $\censorshipprob$
is uniquely determined as well by the 
relation between $\censorshipsignal, \censorshipstate,
\censorshipprob$.

Essentially, the tangent line segment
from point $(\censorshipsignal,\Quant(\censorshipsignal))$
to point $(\censorshipsignalInverse,
\Quant(\censorshipsignalInverse))$
and the part of curve $\Quant(\scaledNoiseDiff), 
\forall \scaledNoiseDiff \ge \censorshipsignalInverse$
form a {\em upper convex envelop}
for function $\Quant(\cdot)$. 
It is easy to see that there exist infinitely many 
such upper convex envelops for function $\Quant(\cdot)$.
However, the optimal censorship is the unique one 
that the corresponding envelop ensures  \pf\ 
and satisfies \cs\ 
for the assignment on
the threshold state $\censorshipstate$.

\newcommand{\optschemeHat}{\hat{\signalscheme}^*}
\newcommand{\censorshipprobHat}{\hat{p}\primed}
\newcommand{\censorshipsignalHat}{\hat{\scaledNoiseDiff}\primed}
\newcommand{\censorshipstateHat}{\hat{i}\primed}

\xhdr{Less rational, more information revealing}
In SISU environments,
for both fully rational receiver
and boundedly rational receiver,
it is optimal for the sender to adopt
censorship signaling schemes
(\Cref{lem:opt fully rational}, 
\Cref{thm:SISU opt}).\footnote{Recall we assume $\{\receiverU_i\}$ is weakly increasing and $\senderU_i \equiv 1$ for every state $i\in[m]$. 
    Thus, $i < \censorshipstate$ 
    if and only if $\sfrac{\receiverU_i}{\senderU_i} \leq 
    \sfrac{\receiverU_{\censorshipstate}}{\senderU_{\censorshipstate}}$ and the construction of optimal censorship in \Cref{thm:SISU opt} recovers the construction in \Cref{lem:opt fully rational} for a fully rational receiver in the SISU environments.}
    \footnote{
    \ifEC 
    The feasibility program~\ref{eq:SISU opt condition}
    also recovers the structure of the optimal censorship
    for a fully rational receiver
    in SISU environments. See \Cref{apx:feasibility for fully rational} for detailed discussions.
    \else
    This 
    interpretation of feasibility program~\ref{eq:SISU opt condition}
    recovers the structure of the optimal censorship
    for a fully rational receiver
    in SISU environments.
    For a fully rational receiver 
    (whose bounded rationality level $\noiseScale=\infty$),
    function $\Quant(\cdot)$ becomes $\Quant(x) = \indicator{x \leq 0}$.
    In this case, 
    there is no longer a bijection between 
    $\censorshipsignalInverse\in[0,\infty)$
    and $\censorshipsignal\in(-\infty,0]$
    satisfying constraint \dfone.
    Instead, the feasible solutions of constraint
    \dfone\ admit one of the two forms:
    either (i) $\{\censorshipsignalInverse \in [0,\infty), \censorshipsignal = 0\}$;
    or (ii) $\{\censorshipsignalInverse = \infty, 
    \censorshipsignal\in(-\infty, 0]\}$.
    Note that (i) $\{\censorshipsignalInverse \in [0,\infty), \censorshipsignal = 0\}$
    corresponds to instances where $\sum_{i\in[m]}\prior_i\receiverU_i < 0$ and 
    thus the optimal censorship 
    in \Cref{lem:opt fully rational}
    selects the threshold state and 
    the threshold state probability such that 
    $\censorshipsignal=0$, i.e.,
    the fully rational receiver is indifferent between
    action 0 and action 1 
    when the pooling
    signal $\censorshipsignal$ is realized.
    On the other side, 
    (ii) $\{\censorshipsignalInverse = \infty, 
    \censorshipsignal\in(-\infty, 0]\}$
    corresponds to instances where $\sum_{i\in[m]}\prior_i\receiverU_i \geq 0$ and 
    thus the optimal censorship
    in \Cref{lem:opt fully rational}
    sets the threshold state $\censorshipstate = \argmax_{i}\{
    \receiverU_i:\receiverU_i\leq \censorshipsignalInverse\} = m$,
    i.e.,
    pools all state together
    and reveals no information.
    \fi}
However, 
in the optimal censorship for different 
rationality levels, 
the threshold state $\censorshipstate$
and the threshold state probability $\censorshipprob$
may not be the same.
For example,
consider an instance where 
$\sum_{i\in[m]}\prior_i\receiverU_i < 0$.
\Cref{lem:opt fully rational} suggests that 
the optimal censorship $\optschemeHat$
for a fully rational
receiver
selects the threshold state $\censorshipstateHat$
and threshold state probability $\censorshipprobHat$
such that the pooling signal $\censorshipsignalHat = 0$.
In contrast, \Cref{thm:SISU opt}
suggests that 
the optimal censorship $\optscheme$
selects the threshold state $\censorshipstate$
and threshold state probability $\censorshipprob$
such that the pooling signal $\censorshipsignal \leq 0$.
Thus, in this instance, 
the optimal censorship $\optschemeHat$ 
for a fully rational receiver 
pools more states than 
the optimal censorship $\optscheme$
for a boundedly rational receiver,
i.e., $\hat{\highstates}\supseteq \highstates$.
Here we generalize this observation and show
the monotonicity 
of threshold state $\censorshipstate$
and threshold state probability $\censorshipprob$
with respect to the rationality level $\noiseScale$.
Its proof is based on the analysis for 
the feasibility program~\ref{eq:SISU opt condition},
which we defer to \Cref{apx:SISU proof SISU threshold state monotonicity}.

\begin{restatable}{proposition}{sisustatemonotone}
\label{prop:SISU threshold state monotonicity}
In SISU environments, 
let $\optscheme$ (resp.\ $\optschemeHat$) be the optimal censorship
for a boundedly rational receiver with 
boundedly rational level $\noiseScale$
(resp.\ $\hat\noiseScale$).
If $\noiseScale \leq \hat\noiseScale$,
then
the threshold state $\censorshipstate$
in $\optscheme$
is weakly smaller than the threshold state $\censorshipstateHat$
in $\optschemeHat$,
i.e., $\censorshipstate \leq \censorshipstateHat$;
and 
threshold state probability $\censorshipprob \leq \censorshipprobHat$.
\end{restatable}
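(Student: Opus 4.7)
The plan is to leverage the unique characterization of $(\censorshipstate,\censorshipprob,\censorshipsignal,\censorshipsignalInverse)$ given by \Cref{thm:SISU opt} together with a one-dimensional monotonicity property of the tangent bijection implicit in \dfone. By the graphical interpretation of optimal signaling scheme, constraints \dfone\ and \dfthree\ define, for each $\noiseScale\in(0,\infty)$, a strictly decreasing continuous bijection $\Psi_{\noiseScale}:(-\infty,0]\to[0,\infty)$ sending $\censorshipsignal$ to the unique $\censorshipsignalInverse$ where the tangent to $\Quant^{(\noiseScale)}$ at $\censorshipsignal$ meets the curve again. Using $\Quant^{(\noiseScale)}(x)=\phi(\noiseScale x)$ with $\phi(t)=1/(1+e^t)$, the tangent equation \dfone\ is equivalent to $(\noiseScale\censorshipsignal,\noiseScale\censorshipsignalInverse)$ satisfying the $\noiseScale=1$ tangent relation, yielding the scaling identity $\Psi_{\noiseScale}(\censorshipsignal)=\Psi_{1}(\noiseScale\censorshipsignal)/\noiseScale$. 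The key lemma I would prove is: \emph{for each fixed $\censorshipsignal<0$, the map $\noiseScale\mapsto\Psi_{\noiseScale}(\censorshipsignal)$ is strictly increasing}. Via the scaling this reduces to showing that $\Psi_{1}(a)/|a|$ is strictly increasing in $|a|$ on $(-\infty,0)$, which I would establish by implicit differentiation of $\phi(b)-\phi(a)=\phi'(a)(b-a)$ and a sign check using concavity/convexity of $\phi$ (intuitively, $\Psi_{1}(a)\sim -2a$ near $0$ while $\Psi_{1}(a)\sim e^{|a|}$ as $a\to-\infty$).

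Armed with this lemma, I argue by contradiction. Suppose $\noiseScale<\hat\noiseScale$ but $(\censorshipstate,\censorshipprob)>(\censorshipstateHat,\censorshipprobHat)$ lexicographically. The pooling signal $\censorshipsignal(k,p)=(\sum_{i<k}\prior_i\receiverU_i+p\prior_k\receiverU_k)/(\sum_{i<k}\prior_i+p\prior_k)$ is strictly lex-increasing in $(k,p)$, since $\{\receiverU_i\}$ is increasing and enlarging the pool assigns more weight to higher-indexed states and therefore strictly raises the weighted average. Hence $\censorshipsignal>\censorshipsignalHat$. In Case~(A), if $\censorshipstate>\censorshipstateHat$, then \dftwo\ applied to both solutions gives
\[
\Psi_{\hat\noiseScale}(\censorshipsignalHat)<\receiverU_{\censorshipstateHat+1}\le\receiverU_{\censorshipstate}\le\censorshipsignalInverse.
\]
In Case~(B), if $\censorshipstate=\censorshipstateHat$ and $\censorshipprob>\censorshipprobHat$, then $\censorshipprobHat<1$, so \cs\ forces $\Psi_{\hat\noiseScale}(\censorshipsignalHat)=\receiverU_{\censorshipstateHat}=\receiverU_{\censorshipstate}\le\censorshipsignalInverse$. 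Either way $\censorshipsignalInverse\ge\Psi_{\hat\noiseScale}(\censorshipsignalHat)$. On the other hand, chaining the strict decrease of $\Psi_{\hat\noiseScale}$ in its argument with the monotonicity lemma in $\noiseScale$,
\[
\censorshipsignalInverse=\Psi_{\noiseScale}(\censorshipsignal)<\Psi_{\hat\noiseScale}(\censorshipsignal)<\Psi_{\hat\noiseScale}(\censorshipsignalHat),
\]
which contradicts $\censorshipsignalInverse\ge\Psi_{\hat\noiseScale}(\censorshipsignalHat)$. The case $\noiseScale=\hat\noiseScale$ reduces to uniqueness of the solution to \ref{eq:SISU opt condition}.

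The main obstacle is the tangent-map monotonicity lemma asserting that $\Psi_{1}(a)/|a|$ strictly increases in $|a|$; all other pieces are straightforward algebraic consequences of \dfone--\cs\ and \pf. Edge cases are routine: \dfthree\ precludes $\censorshipsignal>0$; if $\censorshipstateHat=m$ then only Case~(B) can apply because $\censorshipstate>m$ is impossible; and if $\censorshipsignal=0$ then $\censorshipsignalInverse=\Psi_{\noiseScale}(0)=0$ while $\censorshipsignalHat<0$ yields $\Psi_{\hat\noiseScale}(\censorshipsignalHat)>0$, immediately delivering the required contradiction.
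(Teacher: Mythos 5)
Your strategy is sound and closely mirrors the paper's, but the critical step is left unproven, and that step is precisely what the paper's proof establishes. Both arguments ultimately hinge on the same monotonicity fact about the tangent bijection $\censorshipsignal \leftrightarrow \censorshipsignalInverse$ defined by \dfone: the paper (Appendix~\ref{apx:SISU proof SISU threshold state monotonicity}) fixes the nonnegative endpoint $\scaledNoiseDiff \ge 0$ and shows $\tangentIntersect_{\noiseScale}(\scaledNoiseDiff) \le \tangentIntersect_{\hat\noiseScale}(\scaledNoiseDiff)$ for $\noiseScale \le \hat\noiseScale$ by implicit differentiation of $f(x,\noiseScale) \triangleq \Quant'(x)(x-\scaledNoiseDiff) - \Quant(x) + \Quant(\scaledNoiseDiff)$ and a sign check on $\partial f/\partial x$ and $\partial f/\partial\noiseScale$; you fix the nonpositive endpoint $\censorshipsignal < 0$ and need $\Psi_{\noiseScale}(\censorshipsignal)$ increasing in $\noiseScale$. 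These two statements are equivalent (one is obtained from the other by inverting a strictly decreasing bijection), so you have not bypassed the paper's computation — you have rephrased it and then deferred it. The scaling identity $\Psi_{\noiseScale}(\censorshipsignal) = \Psi_1(\noiseScale\censorshipsignal)/\noiseScale$ is a genuinely nice observation not in the paper, and it cleanly reduces the two-parameter monotonicity to the one-variable claim that $\Psi_1(a)/|a|$ is increasing in $|a|$; but this one-variable claim is still a nontrivial inequality (after differentiating the tangent equation one is left comparing $\Quant''(a)\,a\,(b-a)$ with $b\,(\Quant'(b)-\Quant'(a))$, both positive, which requires further work), and you only say you "would establish" it.

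Apart from that gap, the rest of the argument is correct and is a legitimate alternative to the paper's. Where the paper derives monotonicity of the explicit quantity $p_i$ from equation~\eqref{eq: pooled prob defn} and reads off the conclusion, you argue by contradiction using \dftwo\ and \cs\ to obtain $\censorshipsignalInverse \ge \hat{\censorshipsignalInverse}$ on one hand and the chain $\censorshipsignalInverse = \Psi_{\noiseScale}(\censorshipsignal) < \Psi_{\hat\noiseScale}(\censorshipsignal) \le \Psi_{\hat\noiseScale}(\censorshipsignalHat) = \hat{\censorshipsignalInverse}$ on the other; this chaining is valid, and even works when $\censorshipsignal = \censorshipsignalHat$, so the (weak rather than strict) monotonicity of $\censorshipsignal$ in $(\censorshipstate,\censorshipprob)$ at the boundary between consecutive threshold states causes no harm (and cannot arise anyway, by the uniqueness in Lemma~\ref{lem:existence SISU opt condition}). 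To turn the proposal into a complete proof you must actually carry out the monotonicity lemma — either via your scaling reduction or, more directly, via the paper's implicit-differentiation computation of $\partial f/\partial x$ and $\partial f/\partial\noiseScale$.
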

One concrete insight behind the above result is that:
The optimal (censorship) signaling scheme requires 
the sender to reveal more information 
(i.e., Blackwell ordering, \citealp{bla-53}) for a less rational receiver. 
This insight 
can be also developed from the curvature of the function $W(\cdot)$. 
When the receiver is less rational, i.e., the rationality 
level $\noiseScale$ becomes smaller, the curve $W(\cdot)$ 
becomes flatter. 
Hence, the tangent point $(\censorshipsignal,
W(\censorshipsignal))$ is farther away from the point
$(\censorshipsignalInverse,
W(\censorshipsignalInverse))$, 
and the pooling probability $\censorshipprob$
has to be smaller to make the pooling 
signal $\censorshipsignal$ relatively smaller. 
Thus, the threshold state and the threshold state probability
must decrease in order to satisfy the feasibility 
constraint in the program ~\ref{eq:SISU opt condition}, 
which leads to more information revealing.

\xhdr{Proof overview of \Cref{thm:SISU opt}}
Now we first provide a proof overview for \Cref{thm:SISU opt}, and in the sequel, we present the detailed proof.
At the heart of proof of \Cref{thm:SISU opt},
we use the strong duality between
the primal program~\ref{eq:opt lp}
and its dual program~\ref{eq:opt lp dual}.
Specifically, given 
a feasible solution in 
feasibility program~\ref{eq:SISU opt condition},
we explicitly construct a feasible primal assignment
in \ref{eq:opt lp}
and a feasible dual assignment 
in \ref{eq:opt lp dual}
and show the complementary slackness holds.\footnote{Here 
the name of each constraint in \ref{eq:SISU opt condition}
indicates its usage in the assignment construction.}
In the formal proof,
for each possible tuple 
$(\censorshipsignal,\censorshipsignalInverse,
\censorshipstate,\censorshipprob)$
described in the graphical interpretation 
(i.e., satisfying constraints
\dfone, \dftwo, \dfthree),
we can construct a feasible assignment 
for dual program~\ref{eq:opt lp dual}.
Notably, each feasible solution to the
dual program~\ref{eq:opt lp dual} forms a upper
convex envelop for the function $W(\cdot)$.
To finish the proof with strong duality, 
we require such tuple to additionally 
satisfy constraint \pf\
to ensure the feasibility of 
the constructed primal assignment,
and constraint \cs\ 
to ensure the complementary slackness 
of the constructed assignment on
the threshold state $\censorshipstate$.
The existence and uniqueness of such tuple 
is shown in \Cref{lem:existence SISU opt condition},
its proof is in \Cref{apx:SISU proof opt condition existence}.

\begin{restatable}{lemma}{sisuoptcondition}
\label{lem:existence SISU opt condition}
There exists a unique feasible solution 
in program~\ref{eq:SISU opt condition}.
\end{restatable}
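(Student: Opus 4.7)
The plan is to parametrize candidate feasible solutions of \ref{eq:SISU opt condition} by the total pooled mass $t = \sum_{j<\censorshipstate}\prior_j + \censorshipprob\,\prior_{\censorshipstate} \in (0,1]$ and reduce existence and uniqueness to a single-crossing argument on an explicit scalar function. First I would rigorously justify the bijection already sketched in the graphical interpretation: constraints \dfone{} and \dfthree{} determine a strictly decreasing continuous bijection $T:(-\infty,0]\to[0,\infty)$, $\censorshipsignal\mapsto\censorshipsignalInverse$, with $T(0)=0$ and $T(\censorshipsignal)\to+\infty$ as $\censorshipsignal\to-\infty$. This follows from the concavity of $\Quant$ on $(-\infty,0]$ and convexity on $[0,\infty)$: the tangent to $\Quant$ at any $\censorshipsignal\le 0$ lies strictly above $\Quant$ immediately to the right of $\censorshipsignal$ but eventually strictly below it as $\delta\to\infty$, so by continuity it meets $\Quant$ at a unique point $\censorshipsignalInverse\ge 0$, and elementary differentiation shows the inverse correspondence is also well-defined and monotone.

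Given $t\in(0,1]$, define $\censorshipstate(t) = \min\{k:\sum_{j\le k}\prior_j \ge t\}$, $\censorshipprob(t) = (t-\sum_{j<\censorshipstate(t)}\prior_j)/\prior_{\censorshipstate(t)} \in (0,1]$, and let $\censorshipsignal(t)$ denote the pooling signal from \Cref{thm:SISU opt}(i). Because $\{\receiverU_i\}$ is strictly increasing, $\censorshipsignal(t)$ is continuous and non-decreasing in $t$ (strictly for $t>\prior_1$ and equal to $\receiverU_1$ on $(0,\prior_1]$). Let $t_{\max} = \sup\{t\in(0,1]:\censorshipsignal(t)\le 0\}$, and for $t\le t_{\max}$ set $\censorshipsignalInverse(t) = T(\censorshipsignal(t))$, which is then continuous and non-increasing. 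Define the excess function $h(t) = \censorshipsignalInverse(t) - \receiverU_{\censorshipstate(t)}$. Within each segment $I_k = (\sum_{j<k}\prior_j,\sum_{j\le k}\prior_j]$, $\receiverU_{\censorshipstate(t)}=\receiverU_k$ is constant and $\censorshipsignalInverse(t)$ is non-increasing, so $h$ is non-increasing on $I_k$; at each transition $t_k = \sum_{j\le k}\prior_j$ into $I_{k+1}$, $\censorshipsignalInverse$ stays continuous but $\receiverU_{\censorshipstate}$ jumps from $\receiverU_k$ to $\receiverU_{k+1}$, so $h$ drops discontinuously by $\receiverU_{k+1}-\receiverU_k>0$.

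For existence I would apply an intermediate-value argument to $h$. At $t\to 0^+$, $\censorshipsignal(0^+)=\receiverU_1<0$ gives $h(0^+) = T(\receiverU_1) - \receiverU_1 > 0$. At the other end: if $t_{\max}<1$, then $\censorshipsignal(t_{\max})=0$ and $\censorshipsignalInverse(t_{\max})=0$, while $\receiverU_{\censorshipstate(t_{\max})}>0$ (the weighted average $\censorshipsignal(t_{\max})$ of increasing $\receiverU_j$'s vanishes only once the pool reaches a state with $\receiverU>0$), so $h(t_{\max})<0$. Hence $h$ must transition from positive to non-positive, in one of two ways: (a) continuously inside some open segment $I_k^{\circ}$ at $h(t^*)=0$, yielding $\censorshipsignalInverse(t^*)=\receiverU_{\censorshipstate(t^*)}$ and $\censorshipprob(t^*)\in(0,1)$ satisfying \cs{} and \dftwo{}; or (b) at a segment boundary $t_k$, where $h(t_k^-)\ge 0 > h(t_k^+)$ forces $\receiverU_k\le\censorshipsignalInverse(t_k)<\receiverU_{k+1}$ with $\censorshipprob(t_k)=1$, again satisfying \cs{} and \dftwo{}. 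If instead $h(t)>0$ throughout $(0,1]$, the solution is the no-information-revealing scheme $(\censorshipstate,\censorshipprob)=(m,1)$, with \dftwo{} trivial under the convention $\receiverU_{m+1}=+\infty$.

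Uniqueness is immediate from the strict-monotone structure of $h$: within each $I_k$ with $k\ge 2$, both $T$ and $\censorshipsignal(t)$ are strictly monotone, so $h$ is strictly decreasing and has at most one zero there; the strict downward jumps at each $t_k$ prevent $h$ from re-entering positivity. The main obstacle I anticipate is the tidy handling of edge cases---the trivial first segment $I_1$ on which $\censorshipsignal(t)\equiv\receiverU_1$ and $h$ is constant (so any candidate with $\censorshipstate=1$ must be checked at $t=\prior_1$), the dichotomy $t_{\max}<1$ vs.\ $t_{\max}=1$, and the potential coincidence of a continuous zero of $h$ with a segment endpoint (where the two natural representations $(k,1)$ and $(k+1,0^+)$ yield the same signaling scheme). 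Each such case reduces to an elementary verification from the strict monotonicity of $T$ and $\censorshipsignal(t)$.
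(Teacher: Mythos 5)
Your proof is correct and recasts the paper's argument in a more continuous---and, I think, more systematic---language, though the underlying mechanism is the same. The paper's proof works discretely: for each state $i$ with $\receiverU_i\ge 0$ it defines a critical probability $p_i$ equal to the value of $\censorshipprob$ that places the pooling signal exactly at $\tangentIntersect(\receiverU_i)$, where $\tangentIntersect$ is the inverse of your map $T$; its key step is the lemma that once $p_i<0$ it remains negative for all $j>i$, after which it selects $\censorshipstate$ as the largest index with $p_i\ge 0$ and sets $\censorshipprob=\min\{p_{\censorshipstate},1\}$. Your parametrization by pooled mass $t$ collapses that family $\{p_i\}$ into the single function $h(t)$: the paper's sign-monotonicity of $p_i$ is exactly your observation that $h$ is non-increasing with strict downward jumps at segment boundaries, an interior zero of $h$ in $I_k$ is the case $p_k\in[0,1]$, and a jump crossing at $t_k$ is the case $p_k>1$. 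What the continuous IVT framing buys is that existence and uniqueness fall out together from a single monotone crossing, and the boundary regimes (all-pooling when $h>0$ throughout, all-revealing when $h$ drops below zero already at $t_1$) are handled uniformly instead of by separate arguments.

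One detail to tighten when you write this out: in the degenerate case $h(t_k^+)=0$, i.e.\ $\censorshipsignalInverse(t_k)=\receiverU_{k+1}$, the tuples $(\censorshipstate,\censorshipprob)=(k,1)$ and $(k+1,0)$ do encode the same signaling scheme, but they are \emph{not} both feasible for \ref{eq:SISU opt condition}: with $(k,1)$, constraint \dftwo{} forces $\receiverU_{\censorshipstate}$ to equal the largest $\receiverU_i\le\censorshipsignalInverse$, which here is $\receiverU_{k+1}\ne\receiverU_k$. Only $(k+1,0)$ is feasible, and since your parametrization over $t\in(0,1]$ always gives $\censorshipprob(t)>0$, that tuple is reached only as the limit $t\to t_k^+$. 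Uniqueness is still fine---exactly one tuple survives---but you should verify the limit tuple explicitly rather than dismissing the boundary by noting that the two representations give the same scheme.
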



\begin{proof}[Proof of \Cref{thm:SISU opt}]\renewcommand{\qedsymbol}{}
We prove the optimality of the signaling scheme $\optscheme$ defined in \Cref{thm:SISU opt}
by constructing a feasible dual solution to 
the dual program \ref{eq:opt lp dual} 
that satisfies the complementary slackness. 
Let $(\censorshipsignal,\censorshipsignalInverse,
\censorshipstate,\censorshipprob)$
be the unique feasible solution to program~\ref{eq:SISU opt condition}.

\paragraph{Assignment construction.}
To facilitate the analysis, we explicitly write out the 
optimal signaling scheme $\optscheme$ as follows,
\begin{align*}
    i\in[\censorshipstate - 1]: 
     & \qquad\optscheme_i(\censorshipsignal) = 1; 
    \\ 
    & \qquad\optscheme_{\censorshipstate}(\censorshipsignal) 
    = \censorshipprob, 
    \qquad
    \optscheme_{\censorshipstate}\left(\receiverU_{\censorshipstate}\right) 
    = 1 - \censorshipprob; 
    \\ 
    i\in[\censorshipstate + 1:m]: 
     & \qquad\optscheme_i\left(\receiverU_i\right) = 1~
\intertext{
Due to constraint \pf\ in program~\ref{eq:SISU opt condition},
signaling scheme $\optscheme$ is feasible. 
Now, consider the following dual assignment 
$\{\noiseDualVar(\scaledNoiseDiff), \distDualVar(i)\}$
of 
program~\ref{eq:opt lp dual},}
    \scaledNoiseDiff\in(\infty,\censorshipsignalInverse]:
    &\qquad
    \noiseDualVar(\scaledNoiseDiff) = 
    -\Quant'(\censorshipsignal)
    \\
    \scaledNoiseDiff\in
    [\censorshipsignalInverse,
    \receiverU_{\censorshipstate + 1}]:
    &\qquad
    \noiseDualVar(\scaledNoiseDiff) = 
    -\frac{
    \Quant(\scaledNoiseDiff) - 
    \Quant(\receiverU_{\censorshipstate + 1})
    }{
    \scaledNoiseDiff - 
    \receiverU_{\censorshipstate + 1}
    }
    \\
    i\in[\censorshipstate+1: m - 1],~
    \scaledNoiseDiff\in
    [\receiverU_{i}, \receiverU_{i+1}]:
    &\qquad
    \noiseDualVar(\scaledNoiseDiff) = 
    -\frac{
    \Quant(\scaledNoiseDiff) - 
    \Quant(\receiverU_{i})
    }{
    \scaledNoiseDiff - 
    \receiverU_{i}
    }
    \\
    \scaledNoiseDiff\in
    [
    \receiverU_m, \infty):
    &\qquad
    \noiseDualVar(\scaledNoiseDiff) = 
    0
    \\
    i\in[\censorshipstate]:
    &\qquad 
    \distDualVar(i) = 
    \prior_i\left(
    \Quant(\censorshipsignal) + 
    (\censorshipsignal - \receiverU_i)
    \noiseDualVar(\censorshipsignal)
    \right)
    \\
    i\in[\censorshipstate+1:m]:
    &\qquad 
    \distDualVar(i) = 
    \prior_i\Quant(\receiverU_i)
\end{align*}

\paragraph{Complementary slackness.} 
We now argue the complementary slackness 
of the constructed assignment.
Namely, for each state $i$
and $\scaledNoiseDiff\in(-\infty,\infty)$
such that $\optscheme_i(\scaledNoiseDiff) > 0$,
its corresponding dual constraint holds with equality,
i.e., 
\begin{align}
      \label{eq:SISU cs}
      \Quant(\scaledNoiseDiff) + 
      ( \scaledNoiseDiff - \receiverU_i)
      \noiseDualVar(\scaledNoiseDiff) &= 
      \frac{\distDualVar(i)}{\prior_i}
\end{align}
We verify this for each state $i\in[m]$ separately.
\begin{itemize}[label={-}]
    \item Fix an arbitrary
    state $i\in[\censorshipstate - 1]$,
    note that $\optscheme_i(\scaledNoiseDiff) > 0$
    for $\scaledNoiseDiff = \censorshipsignal$ only.
    Here equality~\eqref{eq:SISU cs} holds 
    by construction straightforwardly.
    \item Now consider threshold state $\censorshipstate$, 
    the same argument holds for 
    equality~\eqref{eq:SISU cs}
    with $\optscheme_{\censorshipstate}
    (\censorshipsignal) > 0$.
    It is remaining to verify 
    equality~\eqref{eq:SISU cs} 
    associated with
    $\optscheme_{\censorshipstate}
    (\receiverU_{\censorshipstate})
    = 1 - \censorshipprob > 0$.
    When $\censorshipprob < 1$, 
    constraint~\cs\ in program~\ref{eq:SISU opt condition}
    ensures that $\censorshipsignalInverse =
    \receiverU_{\censorshipstate}$.
    Thus,
    \begin{align*}
      \Quant(\receiverU_{\censorshipstate}) + 
      ( \receiverU_{\censorshipstate} -
      \receiverU_{\censorshipstate})
      \noiseDualVar(\receiverU_{\censorshipstate})
      \overset{(a)}{=}
      \Quant(\censorshipsignalInverse)
      \overset{(b)}{=}
      \Quant(\censorshipsignal) +
      (\censorshipsignal - \censorshipsignalInverse)
      \noiseDualVar(\censorshipsignal)
      \overset{(c)}{=}
      \Quant(\censorshipsignal) +
      (\censorshipsignal - \receiverU_{\censorshipstate})
      \noiseDualVar(\censorshipsignal)
      \overset{(d)}{=}
      \frac{\distDualVar(i)}{\prior_i}
    \end{align*}
where equalities~(a), (c) hold since $\receiverU_{\censorshipstate} = \censorshipsignalInverse$;
equality~(b) holds 
due to constraint \dfone\ in program~\ref{eq:SISU opt condition}
and the construction of $\noiseDualVar(\censorshipsignal)$;
and 
equality~(d) holds since 
equality~\eqref{eq:SISU cs} holds 
for $\optscheme_{\censorshipstate}(\censorshipsignal)$
shown above.
\item Fix an arbitrary state $i\in[\censorshipstate + 1:m]$,
note that $\optscheme_i(\scaledNoiseDiff) > 0$
for $\scaledNoiseDiff = \receiverU_i$ only.
Here equality~\eqref{eq:SISU cs} holds 
by construction straightforwardly.
\end{itemize}

\paragraph{Dual feasibility.}
To verify whether the dual constraints
associated with $\optscheme_i(\scaledNoiseDiff)$
for state $i \in [\censorshipstate]$ hold,
note that 
\begin{align*}
    \frac{\distDualVar(i)}{\prior_i}
      \overset{(a)}{=}
      \Quant(\censorshipsignal) +
      (\censorshipsignal - \receiverU_i)
      \noiseDualVar(\censorshipsignal)
      \overset{(b)}{=}
      \Quant(\censorshipsignal) -
      (\censorshipsignal - \receiverU_i)
      \Quant'(\censorshipsignal)
\end{align*}
where equality~(a) holds by the complementary slackness
of $\optscheme_i(\censorshipsignal)$ verified above;
and equality~(b) holds by the construction of 
$\noiseDualVar(\censorshipsignal)$.
Thus, 
we can rewrite those dual constraints
associated with $\optscheme_i(\scaledNoiseDiff)$
for state $i \in [\censorshipstate]$ 
as
\begin{align}
    \label{eq:SISU dual feasibility pooling state}
    \Quant(\scaledNoiseDiff) + 
      ( \scaledNoiseDiff - \receiverU_i)
      \noiseDualVar(\scaledNoiseDiff) \leq 
      \Quant(\censorshipsignal) -
      (\censorshipsignal - \receiverU_i)
      \Quant'(\censorshipsignal) 
\end{align}
To verify whether the dual constraints
associated with $\optscheme_i(\scaledNoiseDiff)$
for state $i \in [\censorshipstate+1:m]$ hold,  
by the complementary slackness
of $\optscheme_i(\censorshipsignal)$ verified above,
we can rewrite the dual constraints for
state $i \in [\censorshipstate+1:m]$
as
\begin{align}
    \label{eq:SISU dual feasibility revealing state}
    \Quant(\scaledNoiseDiff) + 
      ( \scaledNoiseDiff - \receiverU_i)
      \noiseDualVar(\scaledNoiseDiff) \leq 
      \Quant(\receiverU_i)
\end{align}
We verify both inequality~\eqref{eq:SISU dual feasibility pooling state}
and inequality \eqref{eq:SISU dual feasibility revealing state}
for different values of $\scaledNoiseDiff$ 
in four cases separately: 
$\scaledNoiseDiff \in 
(-\infty,\censorshipsignalInverse]$;
$\scaledNoiseDiff\in 
[\censorshipsignalInverse, \receiverU_{\censorshipstate + 1}]$;
$\scaledNoiseDiff\in[\receiverU_j, \receiverU_{j + 1}]$
for some $j\in[\censorshipstate + 1: m - 1]$;
and $\scaledNoiseDiff\in[\receiverU_m, \infty)$.
The argument mainly uses the curvature of 
function $\Quant(\cdot)$ 
and the constraints in feasibility program~\ref{eq:SISU opt condition},
and the feasibility of the constructed dual assignment
is summarized as follows:
\begin{restatable}{lemma}{sisudualfeasibility}
\label{lem: sisu dual feasibility}
The constructed dual assignment is a feasible 
solution to the dual program \ref{eq:opt lp dual}.
\end{restatable}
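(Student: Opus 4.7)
The plan is to recast dual feasibility as a geometric comparison between a family of lines and an upper envelope of $\Quant$. Dividing the dual constraint by $\prior_i$ and using the already-verified complementary slackness to evaluate $\distDualVar(i)/\prior_i$, the inequality becomes $\ell_\scaledNoiseDiff(\receiverU_i) \leq E(\receiverU_i)$ for every $\scaledNoiseDiff \in \R$ and $i \in [m]$, where $\ell_\scaledNoiseDiff(y) \triangleq \Quant(\scaledNoiseDiff) + (\scaledNoiseDiff-y)\noiseDualVar(\scaledNoiseDiff)$ is the line through $(\scaledNoiseDiff,\Quant(\scaledNoiseDiff))$ with slope $-\noiseDualVar(\scaledNoiseDiff)$, and the target value $E(\receiverU_i)$ equals $T(\receiverU_i) \triangleq \Quant(\censorshipsignal)+(\receiverU_i-\censorshipsignal)\Quant'(\censorshipsignal)$ when $i\leq\censorshipstate$ (i.e., the tangent to $\Quant$ at $\censorshipsignal$) and otherwise equals $\Quant(\receiverU_i)$. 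The piecewise construction of $\noiseDualVar$ then makes $\ell_\scaledNoiseDiff$ explicit in each region: for $\scaledNoiseDiff\leq\censorshipsignalInverse$ it is a line parallel to $T$ through $(\scaledNoiseDiff,\Quant(\scaledNoiseDiff))$; for $\scaledNoiseDiff$ in a later piece delimited by two consecutive anchors in $\{\censorshipsignalInverse,\receiverU_{\censorshipstate+1},\ldots,\receiverU_m\}$ it is exactly the chord joining $(\scaledNoiseDiff,\Quant(\scaledNoiseDiff))$ to the piece's right anchor; and for $\scaledNoiseDiff\geq\receiverU_m$ it is horizontal at height $\Quant(\scaledNoiseDiff)$.

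As a warm-up I would record that constraint \dfone\ forces $T(\censorshipsignal)=\Quant(\censorshipsignal)$ and $T(\censorshipsignalInverse)=\Quant(\censorshipsignalInverse)$, so combined with concavity of $\Quant$ on $(-\infty,0]$ and convexity on $[0,\infty)$ this yields the two global envelope comparisons $T\geq\Quant$ on $(-\infty,\censorshipsignalInverse]$ and $T\leq\Quant$ on $[\censorshipsignalInverse,\infty)$. With these in hand, a case split on which piece contains $\scaledNoiseDiff$ and on whether $i\leq\censorshipstate$ or $i>\censorshipstate$ handles the ``aligned'' cases directly: when $\scaledNoiseDiff\leq\censorshipsignalInverse$ and $i\leq\censorshipstate$, a substitution reduces the claim to $\Quant(\scaledNoiseDiff)\leq T(\scaledNoiseDiff)$; when $\scaledNoiseDiff$ lies in a chord-piece and $i>\censorshipstate$, the claim reduces to the elementary fact that a chord of a convex function lies above $\Quant$ between its endpoints and below $\Quant$ outside them (valid since all anchors lie in $[0,\infty)$ where $\Quant$ is convex); and when $\scaledNoiseDiff\geq\receiverU_m$ so $\noiseDualVar\equiv 0$, the claim reduces to monotonicity of $\Quant$ combined with $E(\receiverU_i)\geq\Quant(\receiverU_i)$.

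The main obstacle I anticipate is the two ``mixed'' regimes where $\scaledNoiseDiff$ and $\receiverU_i$ lie on opposite sides of $\censorshipsignalInverse$, since there $\ell_\scaledNoiseDiff$ and the target bound are two distinct lines that must be compared globally. Constraint \dftwo\ is what makes this enumeration exhaustive: it guarantees that no $\receiverU_i$ with $i>\censorshipstate$ lies in the gap $(\receiverU_{\censorshipstate},\censorshipsignalInverse]$, so the mixed subcases neatly cover the remaining possibilities. For the first mixed regime, $\scaledNoiseDiff$ in some chord-piece to the right of $\censorshipsignalInverse$ with $i\leq\censorshipstate$, I would anchor both $\ell_\scaledNoiseDiff$ and $T$ at the common point $\censorshipsignalInverse$: the chord's value at $\censorshipsignalInverse$ is at most $\Quant(\censorshipsignalInverse)=T(\censorshipsignalInverse)$ because a chord of a convex function extended backward past its left endpoint lies below the function, while the chord's slope $-\noiseDualVar(\scaledNoiseDiff)$ is at least $\Quant'(\censorshipsignal)$ (the slope of $T$) because $\Quant'$ is monotone non-decreasing across the inflection at $0$; extrapolating leftward to $\receiverU_i\leq\censorshipsignalInverse$ then yields $\ell_\scaledNoiseDiff(\receiverU_i)\leq T(\receiverU_i)$. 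For the symmetric mixed regime, $\scaledNoiseDiff\leq\censorshipsignalInverse$ with $i>\censorshipstate$, the desired inequality rearranges to $\Quant(\receiverU_i)-T(\receiverU_i)\geq\Quant(\scaledNoiseDiff)-T(\scaledNoiseDiff)$, which is immediate because the two global envelope comparisons make the left side non-negative and the right side non-positive.
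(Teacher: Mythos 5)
Your proposal is essentially correct and gives a cleaner, more unified reorganization of the paper's argument. Where the paper checks inequalities \eqref{eq:SISU dual feasibility pooling state} and \eqref{eq:SISU dual feasibility revealing state} by re-deriving envelope comparisons in each of several cases, you front-load the two global facts $T\geq\Quant$ on $(-\infty,\censorshipsignalInverse]$ and $T\leq\Quant$ on $[\censorshipsignalInverse,\infty)$ as a lemma (your ``warm-up''), after which the aligned cases and the second mixed case collapse to one-line comparisons. The decomposition of the target into $E(\receiverU_i)=T(\receiverU_i)$ for $i\leq\censorshipstate$ and $E(\receiverU_i)=\Quant(\receiverU_i)$ for $i>\censorshipstate$ is exactly what the paper's complementary-slackness rewriting produces, so the two arguments are recognizably the same geometry; yours just names the envelope once instead of rebuilding it per case.

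There is one genuine gap, in your first mixed regime. You justify ``chord slope $\geq\Quant'(\censorshipsignal)$'' by asserting that ``$\Quant'$ is monotone non-decreasing across the inflection at $0$.'' That is false as stated: $\Quant'$ is strictly \emph{decreasing} on $(-\infty,0]$ and strictly increasing on $[0,\infty)$, so it attains its global minimum at $0$ and is not monotone on any interval straddling $0$. Since $\censorshipsignal\leq 0\leq\censorshipsignalInverse$, comparing $\Quant'(\censorshipsignal)$ to a secant slope anchored in $[\censorshipsignalInverse,\infty)$ genuinely needs the constraint \dfone\ tying $\censorshipsignal$ and $\censorshipsignalInverse$ together; without it one can choose $\censorshipsignal,\censorshipsignalInverse$ with $|\censorshipsignal|>\censorshipsignalInverse$ and the inequality reverses. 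The fix is available from tools you already have: by your warm-up, $\Quant-T$ is convex on $[0,\infty)$, nonpositive on $[0,\censorshipsignalInverse]$, and vanishes at $\censorshipsignalInverse$, so its one-sided derivative there is nonnegative, giving $\Quant'(\censorshipsignalInverse)\geq\Quant'(\censorshipsignal)$; combine this with the convexity bound that the chord slope is $\geq\Quant'(\censorshipsignalInverse)$ (since both chord endpoints lie in $[\censorshipsignalInverse,\infty)$). Alternatively, observe directly that $\Quant'(\censorshipsignal)$ equals the secant slope over $[\censorshipsignal,\censorshipsignalInverse]$ by \dfone, and chain secant-slope monotonicity from there. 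Either patch closes the argument; the rest of your proposal holds up as written.
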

In the main text below, 
we present the analysis for state $i\in[\censorshipstate]$
for the first two cases,
together with a graphical illustration
of our argument (see \Cref{fig:SISU dual feasibility}).
The analysis of the third case is similar to the second case,
and the fourth case is trivial.
Therefore, we defer the later two cases and 
the analysis for the state $i\in [\censorshipstate+1, m]$ 
to \Cref{apx:SISU proof SISU dual feasibility}.

\begin{itemize}[label={-}]
    \item Fix an arbitrary $\scaledNoiseDiff \in 
    (-\infty,\censorshipsignalInverse]$.
    We illustrate this case in \Cref{fig:case a}.
    Note that
    \begin{align*}
        \Quant(\scaledNoiseDiff) + 
      ( \scaledNoiseDiff - \receiverU_i)
      \noiseDualVar(\scaledNoiseDiff)
      &\overset{(a)}{=}
      \Quant(\scaledNoiseDiff) -
      (\scaledNoiseDiff - \receiverU_i)
      \Quant'(\censorshipsignal) 
      \\
      &=
       \Quant(\censorshipsignal) -
      (\censorshipsignal - \receiverU_i)
      \Quant'(\censorshipsignal) 
      +
      \left(\Quant(\scaledNoiseDiff) - 
      \Quant(\censorshipsignal)\right)
      -
      (\scaledNoiseDiff - \censorshipsignal)
      \Quant'(\censorshipsignal)
    \end{align*}
    where equality~(a) holds due to 
    the construction of $\noiseDualVar(\scaledNoiseDiff)$.
    Hence, to show inequality~\eqref{eq:SISU dual feasibility pooling state} in this case,
    it is sufficient to argue that 
    \begin{align}
    \label{eq:SISU dual feasibility pooling state case 1}
        (\censorshipsignal - \scaledNoiseDiff)
        \Quant'(\censorshipsignal) \leq 
        \Quant(\censorshipsignal) - 
        \Quant(\scaledNoiseDiff)
    \end{align}
    which is true due to the curvature
    of function $\Quant(\cdot)$.
    Specifically,
    if $\scaledNoiseDiff\in(-\infty,0]$,
    inequality~\eqref{eq:SISU dual feasibility pooling state case 1}
    holds since function $\Quant(\cdot)$
    is concave in $(-\infty, 0]$;
    if $\scaledNoiseDiff\in[0, \censorshipsignalInverse]$,
    inequality~\eqref{eq:SISU dual feasibility pooling state case 1}
    holds since
    \begin{align*}
        \Quant'(\censorshipsignal)
        \overset{(a)}{=}
        \frac{\Quant(\censorshipsignalInverse) 
        -
        \Quant(\censorshipsignal)}
        {\censorshipsignalInverse - \censorshipsignal}
        \overset{(b)}{\geq}
        \frac{\Quant(\scaledNoiseDiff) 
        -
        \Quant(\censorshipsignal)}
        {\scaledNoiseDiff - \censorshipsignal}
    \end{align*}
    where equality~(a) holds due to 
    constraint~\dfone\ in program~\ref{eq:SISU opt condition};
    and inequality~(b) holds due to 
    the convexity of function $\Quant(\cdot)$
    on $[0,\infty)$.
    \item Fix an arbitrary $\scaledNoiseDiff\in 
    [\censorshipsignalInverse, \receiverU_{\censorshipstate + 1}]$.
    We illustrate this case in \Cref{fig:case b}.
    By construction, $\noiseDualVar(\scaledNoiseDiff) 
    = -
    (\Quant(\scaledNoiseDiff) - 
    \Quant(\receiverU_{\censorshipstate + 1}))/(
    \scaledNoiseDiff - 
    \receiverU_{\censorshipstate + 1})$.
    After rearranging the terms,
    inequality~\eqref{eq:SISU dual feasibility pooling state}
    becomes\footnote{Here we use the fact 
    that $\scaledNoiseDiff - \receiverU_i \geq 0$
    for every state $i\in[\censorshipstate]$, 
    since
    $\scaledNoiseDiff \geq 
    \censorshipsignalInverse\geq 
    \receiverU_{\censorshipstate} \geq 
    \receiverU_i$ 
    where the second inequality holds 
    due to 
    constraint~{\small \dftwo}\ 
    in program~\ref{eq:SISU opt condition}.}
    \begin{align}
        \label{eq:SISU dual feasibility pooling state case 2}
        -\frac{
        \Quant(\scaledNoiseDiff) - \Quant(\receiverU_{\censorshipstate+1})}
        {\scaledNoiseDiff - 
        \receiverU_{\censorshipstate+1}}
        \leq 
        - \frac{ 
        \Quant(\scaledNoiseDiff) - 
        \left(\Quant(\censorshipsignal)
        -\Quant'(\censorshipsignal) 
        (\censorshipsignal - \receiverU_i) \right)}
        {\scaledNoiseDiff - \receiverU_i}
    \end{align}
    Here we argue that it is sufficient to show 
    inequality~\eqref{eq:SISU dual feasibility pooling state case 2} holds when we replace $\receiverU_i$
    with $\censorshipsignalInverse \geq \receiverU_i$.
    To see this, note that
    the right-hand side of inequality~\eqref{eq:SISU dual feasibility pooling state case 2}
    is monotone decreasing as a function of 
    $\receiverU_i$.
    In particular, consider function
    $f(x) \triangleq
    - ({ 
    \Quant(\scaledNoiseDiff) - 
    (\Quant(\censorshipsignal)
    -\Quant'(\censorshipsignal) 
    (\censorshipsignal - x) )})/(
    {\scaledNoiseDiff - x})$,
    and compute its derivative
    $
    f'(x) =
    -
    \frac{
    \Quant(\scaledNoiseDiff) - 
    \Quant(\censorshipsignal)
    -
    \Quant'(\censorshipsignal)
    (\scaledNoiseDiff - \censorshipsignal)
    }
    {(\scaledNoiseDiff - x)^2}
    \leq 0
    $
    where the last inequality holds since 
    $\Quant'(\censorshipsignal)
    (\scaledNoiseDiff -
    \censorshipsignal)
    \leq \Quant(\scaledNoiseDiff) - 
    \Quant(\censorshipsignal)$
    if
    $\scaledNoiseDiff \geq \censorshipsignalInverse$,
    which is implied by 
    constraint~\dfone\ 
    and the convexity of function $\Quant(\cdot)$
    on $[0,\infty)$.
    Hence,
    \begin{align*}
    - \frac{ 
    \Quant(\scaledNoiseDiff) - 
    \left(\Quant(\censorshipsignal)
    -\Quant'(\censorshipsignal) 
    (\censorshipsignal - \receiverU_i) \right)}
    {\scaledNoiseDiff - \receiverU_i}
    &\overset{(a)}{\geq} 
    f(\censorshipsignalInverse)
    \overset{}{=}
        - \frac{ 
    \Quant(\scaledNoiseDiff) - 
    \left(\Quant(\censorshipsignal)
    -\Quant'(\censorshipsignal) 
    (\censorshipsignal - \censorshipsignalInverse) \right)}
    {\scaledNoiseDiff - \censorshipsignalInverse}
    \\
    &\overset{(b)}{=}
     -\frac{
    \Quant(\scaledNoiseDiff) - \Quant(\censorshipsignalInverse)}
    {\scaledNoiseDiff - 
    \censorshipsignalInverse}
    \overset{(c)}{\geq }
    -\frac{
    \Quant(\scaledNoiseDiff) - \Quant(\receiverU_{\censorshipstate+1})}
    {\scaledNoiseDiff - 
    \receiverU_{\censorshipstate+1}}
    \end{align*}
    where inequality~(a) holds due to the 
    monotonicity of function $f(\cdot)$;
    equality~(b) holds 
    due to constraint~\dfone\ in program~\ref{eq:SISU opt condition};
    and inequality~(c)
    holds
    due to the convexity of function $\Quant(\cdot)$
    on $[0,\infty)$. \hfill \ensuremath{\Box}
\end{itemize}
\end{proof}

\subsection{Approximation Lower Bounds of Direct Signaling Schemes}
\label{sec:SISU direct}
When the receiver is fully rational, 
the optimality of direct signaling schemes follows
from the standard revelation principle \citep{KG-11}. 
However, when the receiver is boundedly rational,
this standard revelation principle fails. 
In this subsection, we
provide an approximation lower bound 
for direct signaling schemes in SISU environment
for a boundedly rational receiver. 
The proof of \Cref{thm:SISU direct} is straightforward and thus we defer it to 
\Cref{apx:SISU direct proof}. 
\begin{restatable}{theorem}{sisudirectopt}
\label{thm:SISU direct}
In SISU environments, 
there exists a problem instance (\Cref{example:SISU direct lower bound example})
such that for any direct signaling scheme $\signalscheme$,
it is $\Omega(m)$-approximation
to the optimal signaling scheme.
\end{restatable}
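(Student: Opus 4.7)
}
The plan is two-pronged: exhibit a signaling scheme witnessing that the sender's optimal expected utility is at least $mK$, then show every direct signaling scheme has payoff at most $O(K)$. Together these give the claimed $\Omega(m)$ gap.

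For the optimum's lower bound, I would observe that the prior $\prior_i = K(\exp(\noiseScale i)+1)$ is calibrated precisely so that $\prior_i\,\Quant(\receiverU_i) = K$ for every state $i \in [m]$. Therefore the full-information revealing scheme (a censorship scheme by \Cref{def:censorship} that reveals each state separately) attains sender payoff $\sum_{i \in [m]} \prior_i\,\Quant(\receiverU_i) = mK$, so $\Payoff[\noiseScale]{\optscheme} \ge mK$.

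For any direct signaling scheme, \Cref{def:direct scheme} says the scheme induces at most two distinct signals, partitioning the prior mass into at most two pools. It suffices to show each pool contributes $O(K)$. Represent a pool by fractional weights $\{f_i\}_{i \in [m]}$, with prior mass $P = \sum_i f_i\prior_i$ and induced signal $\scaledNoiseDiff = \tfrac{1}{P}\sum_i f_i \prior_i\, i$; let $M = \max\{i : f_i > 0\}$. Two instance features drive the bound. (a) Because $\noiseScale \ge e^m$, the prior grows super-geometrically, giving $\sum_{i \le M} \prior_i \le O(K\exp(\noiseScale M))$. (b) Whenever the topmost state dominates the pool's mass, i.e., $f_M\prior_M \gtrsim \sum_{i<M} f_i\prior_i$, the weighted mean satisfies $M - \scaledNoiseDiff \le M \sum_{i<M}\prior_i/(f_M\prior_M) = o(1)$, and hence $\Quant(\scaledNoiseDiff) \le (1+o(1))\exp(-\noiseScale M)$. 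Multiplying (a) and (b) gives $P\,\Quant(\scaledNoiseDiff) \le O(K\exp(\noiseScale M))\cdot\exp(-\noiseScale M) = O(K)$. In the remaining case, where state $M$ does \emph{not} dominate, the pool is effectively supported on $\{1,\ldots,M-1\}$ up to a multiplicative error of $\exp(-\noiseScale)$; I would iterate the argument with $M-1$ in place of $M$, and since $M \le m$ the recursion terminates in at most $m$ steps.

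Summing the $O(K)$ bound across the at most two pools of any direct scheme yields total payoff $O(K)$. Compared with $\Payoff[\noiseScale]{\optscheme} \ge mK$, the approximation ratio is $\Omega(m)$. The main technical obstacle is handling the degenerate case in which $f_M$ is so small that state $M$ does not dominate its own pool's signal; the hypothesis $\noiseScale \ge e^m$, which forces the prior ratio $\prior_M/\prior_{M-1} \approx \exp(\noiseScale)$ to be astronomical, is exactly what makes either the recursive reduction or a sharper one-shot bound go through cleanly, since it guarantees that either the top state dwarfs all lower states in weight or else the top state's fractional contribution is negligible and can be dropped.
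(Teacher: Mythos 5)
Your two-pronged plan --- lower-bound the optimum by $mK$ via full revelation, then show each pooling signal of a direct scheme contributes $O(K)$ --- has the right shape and matches the paper's, and the lower bound is exactly right since $\prior_i\Quant(\receiverU_i) = K$ for every $i$. The per-pool upper bound, however, has two genuine gaps, even though the $O(K)$ conclusion is in fact true. First, the dominance condition $f_M\prior_M \gtrsim \sum_{i<M}f_i\prior_i$ (constant factor) does not give $M - \scaledNoiseDiff = o(1)$: from $M - \scaledNoiseDiff = \sum_{i<M}f_i\prior_i(M-i)/P \le M\sum_{i<M}f_i\prior_i/(f_M\prior_M)$ this is only $O(M)$, and to drive $M - \scaledNoiseDiff$ below $1/\noiseScale$ (which is what you need for $\Quant(\scaledNoiseDiff) \le (1+o(1))\exp(-\noiseScale M)$) you must strengthen the dominance to $f_M\prior_M \gtrsim \noiseScale m\sum_{i<M}f_i\prior_i$. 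Second, in the non-dominant case the claim that the pool is ``effectively supported on $\{1,\ldots,M-1\}$ up to a multiplicative error of $\exp(-\noiseScale)$'' conflates the fractional \emph{weight} $f_M$ (which is indeed tiny) with the \emph{mass} $f_M\prior_M$, which can be a $\Theta(1)$-fraction of $P$. You cannot drop state $M$ and recurse: its mass pushes $\scaledNoiseDiff$ up by a constant, which multiplies $\Quant(\scaledNoiseDiff)$ by $\exp(-\Omega(\noiseScale))$, and it is precisely this cancellation --- not vanishing mass --- that keeps the pool's contribution at $O(K)$. The proposal never exhibits it; moreover an $m$-step recursion, even if justified, would only certify $O(mK)$ and fail to separate from the optimum.

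The paper sidesteps both problems. Since $\receiverU_i = i > 0$, every pooled signal lies in $[0,\infty)$ where $\Quant$ is convex, so splitting the direct scheme's two signals into a refined three-signal scheme --- pool $\highstates$, fully reveal $\censorshipstate$, pool $\lowstates$ --- can only raise the sender's payoff. After this refinement every pool's top state has weight one, hence carries a $1 - O(m\exp(-\noiseScale))$ fraction of its pool's mass, so the pooled signal lands within $1/\noiseScale$ of the top state and the $O(K)$ per-signal bound drops out in a line. If you prefer to work with fractional pools directly, the repair is to set $c \triangleq \censorshipprob\prior_{\censorshipstate}/\prior_{\max\highstates}$, note that the pool's contribution is governed (up to constants) by $(1+c)\exp(-\noiseScale c/(1+c))$, and verify this is $O(1)$ uniformly over $c \in [0, \exp(\noiseScale)]$ --- a single-variable case analysis that makes the cancellation explicit and replaces your dichotomy.
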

In \Cref{coro:SDSU censorship m approx}, we also give an $O(m)$-approximation upper bound for direct signaling schemes, which shows the tightness of our result.

\section{State Dependent Sender Utility (SDSU) Environments}
\label{sec:state dependent}

\newcommand{\optsignalspace}{\signalspace^*}
\newcommand{\ked}{^{(k)}}
\newcommand{\instance}{\mathcal{I}}
\newcommand{\oneed}{^{(1)}}
\newcommand{\twoed}{^{(2)}}
\newcommand{\aggregateprob}{p_{ij}}
\newcommand{\aggregateproba}{\aggregateprob\oneed}
\newcommand{\aggregateprobb}{\aggregateprob\twoed}

\newcommand{\optschemeSISUnew}{\signalscheme^*}
\newcommand{\optsignalspaceSISUnew}{\widetilde{\signalspace}^*}
\newcommand{\optschemeSISUcensor}{\widetilde{\signalscheme}^*}

In this section,
we consider the 
\emph{state dependent sender utility (SDSU)} environments
where the sender's utility $\{\senderU_i\}_{i\in[m]}$
depends on both the realized state
as well as the action of the receiver.\footnote{Recall that 
the sender's utility is zero as long as 
the receiver takes action 0, and 
$\senderU_i\geq 0$ denotes the sender's utility 
for realized state $i$ and receiver taking action 1.}
Recall that or a fully rational receiver, \Cref{lem:opt fully rational} shows the optimality of both censorship signaling
schemes and direct signaling schemes. However, in SDSU environments,
both censorship signaling schemes 
and direct signaling schemes are sub-optimal
for a boundedly rational receiver.
As the main result of this section,
we first show that both censorship and direct signaling schemes
are 
$\Omega(m)$-approximation 
(\Cref{prop: SDSU censorship approx LB}), 
and then we provide matching 
approximation upper bounds of censorship and direct signaling schemes 
(\Cref{coro:SDSU censorship m approx}).




\subsection{Approximation Lower Bounds of Censorship and Direct Signaling Schemes}
\label{sec:SDSU simple approx lower bound}

\newcommand{\decomposesignal}{\scaledNoiseDiff}
\newcommand{\lowerbounddecomposescheme}{\signalscheme^{(i, \decomposesignal)}}

\newcommand{\lastNOinforSignal}{\bar{\scaledNoiseDiff}^{\texttt{avg}}}
\newcommand{\lowerbounddecomposeschemeavg}{\signalscheme^{(i, \NOinforSignal)}}
\newcommand{\lowerbounddecomposeschemeavglast}{\signalscheme^{(m-1, \lastNOinforSignal)}}
\newcommand{\lowerbounddecomposeschemeell}{\signalscheme^{(i, \scaledNoiseDiff_\ell)}}

\newcommand{\partitionSet}{\cS}

In this subsection, we provide approximation lower bounds
for censorship and direct
signaling schemes. 
In fact, we present a stronger result that
quantifies
the optimal payoff loss of a signaling scheme 
via its maximum number $L$ of signals induced by each state.
\begin{theorem}
\label{thm:monotone partition lower bound}
\label{thm:signal size lower bound}
In SDSU environments, 
there exists a problem instance (\Cref{example:SDSU lower bound example})
such that for any signaling scheme $\signalscheme$
with signal space $\signalspace$,
it is $\Omega(\sfrac{m}{L})$-approximation
to the optimal signaling scheme, where $L \triangleq \max_{i\in[m]}
|\{\scaledNoiseDiff\in \signalspace:\signalscheme_i(\scaledNoiseDiff) > 0\}|$
denotes the maximum number of signals induced by a state
in this signaling scheme $\signalscheme$.
\end{theorem}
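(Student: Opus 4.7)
The plan is to prove this stronger quantitative bound by analyzing the hard instance from \Cref{example:SDSU lower bound example} through the lens of the feasibility constraints in program~\ref{eq:opt lp}. The instance is constructed so that the sender's payoff comes primarily from a large collection of ``valuable'' states (those with sizable $\prior_i\senderU_i$), each of which can only be persuaded to action~$1$ when pooled with a distinguished ``bait'' state whose $\receiverU_i$ is strongly negative. I first would lower bound $\Payoff{\optscheme}$ by exhibiting an explicit fine-grained signaling scheme that uses one dedicated pooling signal for each valuable state, achieving payoff of order $\Omega(1)$, and verify that this benchmark is genuinely unachievable without many distinct signals per bait state.

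For the upper bound on a scheme $\signalscheme$ whose maximum number of signals per state is $L$, my plan is a two-class decomposition of the signal space $\signalspace$. The first class consists of signals whose support includes the bait state; by hypothesis, there are at most $L$ such signals. For each bait-mixed signal $\scaledNoiseDiff$, the feasibility equation
\[
\sum\nolimits_{i\in[m]} \prior_i(\receiverU_i - \scaledNoiseDiff)\signalscheme_i(\scaledNoiseDiff) = 0
\]
pins down a linear relation between the bait mass $\prior_1 \signalscheme_1(\scaledNoiseDiff)$ and the total valuable-state mass placed in that signal, so that the total valuable mass across the at most $L$ bait-mixed signals is $O(L)$ times a per-signal budget determined by the bait supply~$\prior_1$. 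The second class consists of signals with no bait mass; the same feasibility equation then forces $\scaledNoiseDiff$ to be at least the smallest positive $\receiverU_i$ appearing in the signal's support, and hence $\Quant(\scaledNoiseDiff)$ decays exponentially in $\noiseScale$ (which in the construction is taken large, mirroring \Cref{example:SISU direct lower bound example}), making the second-class contribution negligible compared to $\Payoff{\optscheme}$.

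Combining the two, the total payoff of $\signalscheme$ is at most $O(L/m)\cdot\Payoff{\optscheme}$, which is exactly the claimed $\Omega(m/L)$ approximation ratio. The main obstacle I anticipate lies in the first-class analysis: a single bait-mixed signal may simultaneously pool the bait state with many distinct valuable states, so the per-signal mass bound depends both on $\scaledNoiseDiff$ and on which valuable states participate in the pool. I expect to need a convexity-style observation---essentially that consolidating valuable mass into fewer signals is no worse for the sender than spreading it across many, given a fixed bait budget---to reduce the worst case to the intuitive picture of at most $L$ bait-mixed signals each carrying a roughly equal share of the bait mass, at which point the counting bound $\le L$ delivers the $O(L/m)$ factor directly.
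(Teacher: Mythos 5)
Your high-level plan has the right shape — decompose the payoff of a scheme $\signalscheme$ across the at most $L$ signals that the ``special'' state participates in, and show each such signal contributes only an $O(1/m)$-fraction of the optimum — but it rests on a misreading of \Cref{example:SDSU lower bound example} and leaves the real technical step unaddressed.

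\textbf{Misreading of the instance.} In \Cref{example:SDSU lower bound example} the sender utility is $\senderU_i = \indicator{i=m}$, so there is exactly \emph{one} valuable state (state $m$), not a large collection of them, and its prior mass is $\prior_m = K_2$. There is no state with negative $\receiverU_i$; all receiver preferences satisfy $\receiverU_i = i \ge 1 > 0$. States $1,\dots,m-1$ play the role of ``dilution'' states, with exponentially graded priors $\prior_i \propto (m-i-1/\noiseScale)\noiseScale\exp(\noiseScale i)$, and pooling the lone valuable state $m$ with one of them lowers the posterior mean to make $\Quant$ nonnegligible. Your ``many valuable states, one strongly negative bait'' dual picture is not what the example is. One concrete casualty of this is your proposed two-class decomposition: in the actual instance, signals not containing state $m$ contribute \emph{exactly zero} sender utility (since $\senderU_i=0$ for $i<m$), so there is no ``second class'' to control; the entire analysis must happen on the at most $L' \le L$ signals that carry state $m$.

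\textbf{The gap in the per-signal bound.} Your key claim — that the total valuable mass over the at most $L$ bait-mixed signals is $O(L)$ times a \emph{per-signal} budget — is exactly the part that is not automatic, and your appeal to a ``convexity-style observation'' does not deliver it. The reason each state-$m$ signal gives only $O(K_1K_2)$ payoff is a delicate cancellation: at a signal $\delta \approx i + 1/\noiseScale$, the feasibility constraint lets state $m$ place mass $\signalscheme_m(\delta) \approx K_1\exp(\noiseScale i)$ (growing in $i$), while $\Quant(\delta) \approx e^{-1}\exp(-\noiseScale i)$ (shrinking in $i$), so the product is $\Theta(K_1K_2)$ regardless of $i$. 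Establishing this (and ruling out gains from placing $\delta$ far from every integer) is precisely the content of the range-split bounds on $\Payoff{\lowerbounddecomposescheme}$ in \Cref{lem:single signal payoff lowerbound large} and \Cref{lem:single signal payoff lowerbound small}, which crucially use the specific functional forms $\prior_i \propto \noiseScale\exp(\noiseScale i)$ and $\Quant(x) = 1/(1+\exp(\noiseScale x))$ together with the scaling assumption $\noiseScale/\log\noiseScale \ge 2m$. A generic ``consolidating valuable mass into fewer signals is no worse'' argument does not substitute for these estimates; indeed, without the exponential grading, a single signal \emph{can} absorb all the dilution budget and achieve a constant fraction of the optimum, which would collapse the whole lower bound. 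You also need the intermediate decomposition $\Payoff{\signalscheme} \le \sum_{\ell} \sum_{i\in[m-1]} \Payoff{\lowerbounddecomposeschemeell}$ across pairs $(i,\delta_\ell)$ — a pairwise-censorship relaxation, not a two-class split of $\signalspace$ — before the per-term lemmas can be applied.

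In short: the skeleton (budget on signals of the critical state, uniform per-signal bound, ratio $\Omega(m/L)$) matches the paper, but the instance description is inverted, the proposed decomposition does not match the actual structure of the example, and the ``per-signal budget is uniform'' step is asserted rather than derived.
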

The above result implies approximation
lower bounds for censorship and direct signaling schemes.\footnote{
Though it is not our focus, another broader 
class of signaling schemes that 
are studied in the literature is the
monotone partitional signaling scheme \citep{K-18,DM-19,C-19}.
Both censorship signaling schemes and direct
signaling schemes are also monotone partitional signaling schemes.
A notably fact about 
monotone partitional signaling schemes is that 
each state can only induce at most $3$ signals.
Thus,
\Cref{thm:monotone partition lower bound} also implies that 
there exists a problem instance (\Cref{example:SDSU lower bound example})
such that any monotone partitional signaling scheme
is an $\Omega(m)$-approximate.}
\begin{proposition}
\label{prop: SDSU censorship approx LB}
In SDSU environments, there exists a problem instance (\Cref{example:SDSU lower bound example})
such that any censorship and any direct
signaling scheme is 
$\Omega(m)$-approximation to the optimal signaling scheme.
\end{proposition}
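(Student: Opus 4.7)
The plan is to derive Proposition \ref{prop: SDSU censorship approx LB} as an immediate corollary of the stronger Theorem \ref{thm:signal size lower bound}. The key observation is that both censorship and direct signaling schemes use at most two distinct signals at any given state, so the quantity $L = \max_{i\in[m]}|\{\scaledNoiseDiff \in \signalspace : \signalscheme_i(\scaledNoiseDiff) > 0\}|$ is bounded by a constant (in fact, by $2$), and therefore $\Omega(\sfrac{m}{L}) = \Omega(m)$.

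First I would inspect \Cref{def:censorship} directly. For any censorship signaling scheme parameterized by $(\highstates, \censorshipstate, \lowstates, \censorshipprob)$: each high state $i \in \highstates$ deterministically sends the pooling signal $\censorshipsignal$; each low state $i \in \lowstates$ deterministically sends $\receiverU_i$; and the threshold state $\censorshipstate$ randomizes only between the two signals $\censorshipsignal$ and $\receiverU_{\censorshipstate}$. Consequently every state's support has cardinality at most $2$, so $L \leq 2$. An entirely analogous inspection of \Cref{def:direct scheme} shows that for a direct signaling scheme, high states send $\censorshipsignal$, low states send $\scaledNoiseDiff\doubleprimed$, and only the threshold state $\censorshipstate$ can split between the two signals $\{\censorshipsignal, \scaledNoiseDiff\doubleprimed\}$; again $L \leq 2$.

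With these structural bounds in hand, I would instantiate \Cref{thm:signal size lower bound} on the problem instance from \Cref{example:SDSU lower bound example}. Plugging $L \leq 2$ into the $\Omega(\sfrac{m}{L})$ bound yields $\Omega(\sfrac{m}{2}) = \Omega(m)$, which is exactly the claim of the proposition for both classes of signaling schemes simultaneously. Since \Cref{thm:signal size lower bound} is taken as given at this point in the paper, no additional quantitative calculation is needed here.

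There is essentially no obstacle in this deduction; all the technical weight sits in \Cref{thm:signal size lower bound} itself (constructing \Cref{example:SDSU lower bound example} and carefully bounding the payoff achievable by any signaling scheme with support size $L$). The role of this proposition is just to make explicit the corollary that two of the most natural and widely studied subclasses of signaling schemes suffer the maximal $\Omega(m)$ multiplicative loss, which justifies the separation claim emphasized in the introduction.
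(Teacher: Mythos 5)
Your proposal is correct and matches the paper's proof exactly: both observe that censorship and direct schemes have per-state support size at most $2$, so $L \le 2$, and then invoke \Cref{thm:signal size lower bound} on \Cref{example:SDSU lower bound example} to get the $\Omega(m)$ bound. The only difference is that you spell out the per-state support-size check more explicitly than the paper does.
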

\begin{proof}
The above results follow from
the definition of censorship/direct signaling schemes which 
have at most $2$ signals induced from each state, 
namely, $L \le 2$, thus implying the results. 
\end{proof}
Note that the $\Omega(m)$-approximation lower bound for 
censorship signaling schemes
in SDSU environments
(\Cref{prop: SDSU censorship approx LB})
stands in contrast to
the optimality of censorship signaling schemes 
in SISU environments (\Cref{thm:SISU opt}).

\xhdr{Proof outline of \Cref{thm:monotone partition lower bound}}
In the remainder of this subsection we outline the proof of 
\Cref{thm:monotone partition lower bound} in three steps.
All missing proofs from this subsection can be found in
\Cref{apx:SDSU proof}.

\paragraph{Step 1- constructing problem instance and lower bounding
the optimal payoff.}
We first construct a problem instance 
(\Cref{example:SDSU lower bound example}) with $m$ states and 
a carefully chosen bounded rationality level $\noiseScale$
that has the following properties: 
(i) the sender can only obtain utility from state $m$, i.e., $\senderU_i > 0$
only when $i = m$;
(ii) the prior probability for each 
state $i \in [m-1]$ is exponentially increasing 
with respect to the state.  
With the above two properties, we are able to 
lower bound the optimal expected sender utility by $\Omega(K_1K_2m)$
where $K_1, K_2$ are problem-specific normalization terms 
(\Cref{lem:SDSU lower bound example opt}). 
\begin{example}
\label{example:SDSU lower bound example}
Given an arbitrary $m\in \naturals_+$,
consider a problem instance as follows:
There are $m$ states.
The receiver has 
bounded rationality level
$\noiseScale$ such that $\sfrac{\noiseScale}{\log(\noiseScale)} \geq 2m$.
The sender utility $\{\senderU_i\}$,
the receiver utility difference 
$\{\receiverU_i\}$ are 
$ 
\senderU_i = \indicator{i = m},
\receiverU_i = i, \forall i\in[m]$.
Let $K_1 \triangleq \sfrac{1}{\sum_{i\in[m-1]} 
\exp(\noiseScale i)}$.
The prior $\{\prior_i\}$
over state space $[m]$
is 
$
\prior_i 
=
K_1 K_2
\left(m - i - \frac{1}{\noiseScale}\right)
\noiseScale\exp(\noiseScale i), \forall i\in[m - 1]; 
\prior_m =
K_2$
where $K_2$ is the normalization term such that $\sum_{i\in[m]}\prior_i = 1$.
\end{example}
\begin{lemma}
\label{lem:SDSU lower bound example opt}
In \Cref{example:SDSU lower bound example},
the optimal expected sender utility $\Payoff{\optscheme} 
\geq \Omega(K_1 K_2 m)$.
\end{lemma}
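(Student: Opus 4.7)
The plan is to lower-bound $\Payoff{\optscheme}$ by exhibiting an explicit feasible signaling scheme $\signalscheme$ whose payoff is $\Omega(K_1 K_2 m)$. Because $\senderU_i = \indicator{i = m}$, sender utility is earned only when state $m$ occurs and the receiver plays action $1$; but the fully revealing signal for state $m$ induces receiver utility difference $\receiverU_m = m$ and hence quantal probability $\Quant(m)$, which is exponentially small in $\noiseScale m$. The sender must therefore pool small fractions of state $m$ with each low state $i < m$ to keep the posterior expected utility difference close to $i$ (so $\Quant$ is not too small), while still transferring nontrivial mass from state $m$ through each such pooled signal.

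I would use the signaling scheme with signals $\sigma_1, \ldots, \sigma_{m-1}, \sigma_m$ defined by $\signalscheme_i(\sigma_i) = 1$ for $i \in [m-1]$, $\signalscheme_m(\sigma_i) = q_i$ with the choice $q_i := \prior_i / (K_2 \noiseScale (m - i))$, and $\signalscheme_m(\sigma_m) = 1 - \sum_{i \in [m-1]} q_i$ (fully revealing state $m$). Plugging in the prior formula gives $q_i = K_1 \exp(\noiseScale i)\bigl(1 - 1/(\noiseScale(m-i))\bigr)$, so $\sum_{i\in[m-1]} q_i \le K_1 \sum_{i\in[m-1]} \exp(\noiseScale i) = 1$ by definition of $K_1$, confirming feasibility. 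A direct computation yields
\[
\delta_i \;=\; \frac{i\prior_i + m\, q_i\prior_m}{\prior_i + q_i\prior_m} \;=\; i + \frac{m - i}{\noiseScale(m-i) + 1} \;\le\; i + \tfrac{1}{\noiseScale}.
\]
Since $\Quant(x) = 1/(1+\exp(\noiseScale x)) \ge \tfrac{1}{2}\exp(-\noiseScale x)$ for $x \ge 0$, it follows that $\Quant(\delta_i) \ge \tfrac{1}{2e}\exp(-\noiseScale i)$.

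Next I would sum the contributions. The contribution of signal $\sigma_i$ to the expected sender payoff is $q_i \prior_m \senderU_m \Quant(\delta_i) = q_i K_2 \Quant(\delta_i)$, because only state $m$ yields sender utility under $\sigma_i$. Substituting the expressions above,
\[
\Payoff{\signalscheme} \;\ge\; \sum_{i\in[m-1]} q_i K_2 \Quant(\delta_i) \;\ge\; \frac{K_1 K_2}{2e} \sum_{i \in [m-1]} \Bigl(1 - \tfrac{1}{\noiseScale(m-i)}\Bigr) \;=\; \Omega(K_1 K_2 m),
\]
where the last step uses $\noiseScale/\log\noiseScale \ge 2m$ (which in particular gives $\noiseScale \ge 2$) so that each summand is at least $1/2$.

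The hard part will be identifying the right calibration of the pooling fractions $q_i$, since the trade-off is narrow: too little pooling sacrifices the $q_i K_2$ factor, while too much raises $\delta_i$ above $i$ enough to wipe out $\Quant(\delta_i)$ exponentially. The prior $\prior_i$ in \Cref{example:SDSU lower bound example} appears engineered precisely so that the symmetric choice $q_i \prior_m = \prior_i/(\noiseScale(m-i))$ makes each of the $m-1$ low states contribute an order of $K_1 K_2$ to the payoff, while staying within the unit probability budget on state $m$; balancing these two constraints simultaneously is what drives the entire argument.
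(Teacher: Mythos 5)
Your proposal is correct and takes essentially the same approach as the paper: exhibit an explicit feasible scheme that pools a small mass of state $m$ with each low state $i$, calibrated so the pooling signal lands at roughly $i + 1/\noiseScale$, and then sum the $m-1$ contributions of order $K_1 K_2$ each. The paper's version chooses $\signalscheme_m(i+1/\noiseScale) = K_1\exp(\noiseScale i)$ so that the pooling signal lands \emph{exactly} at $i+1/\noiseScale$ and state $m$'s mass is fully exhausted, whereas you leave a small residual on the fully revealing signal and get $\delta_i$ slightly below $i+1/\noiseScale$; this is a cosmetic difference that does not change the argument.
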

\vspace{-4pt}
\paragraph{Step 2- upper bounding the payoff
via censorship signaling schemes.}
In this step, we show that for any signaling scheme,
we can upper bound 
expected sender utility in \Cref{example:SDSU lower bound example}
via the utility from a set of censorship signaling schemes. 
In particular,
for each state $i \in [m-1]$,
given any possible pooling signal $\scaledNoiseDiff\in[\receiverU_i, \receiverU_m]$, 
we define following censorship signaling scheme
where state $i$ and state $m$ are pooled 
on signal $\scaledNoiseDiff$,
and other states are fully revealed. 
Let $\NOinforSignal \triangleq 
\sfrac{(\prior_i i + \prior_m m)}{(\prior_i + \prior_m)}$
be the pooling signal which state $i$ and state $m$
are fully pooled together.
We consider following censorship signaling scheme $\lowerbounddecomposescheme$:
if $\decomposesignal \leq \NOinforSignal$,
signaling scheme $\lowerbounddecomposescheme$ admits the form as follows
\begin{align*}
    \lowerbounddecomposescheme_i(\decomposesignal) = 1;
    \quad
    \lowerbounddecomposescheme_m(\decomposesignal) = 
    \frac{\prior_i}{\prior_m}\frac{
    \decomposesignal - i}{m - \decomposesignal};
    \quad
    \lowerbounddecomposescheme_m(m) = 1 - 
    \frac{\prior_i}{\prior_m}\frac{
    \decomposesignal - i}{m - \decomposesignal};
    \quad
    \lowerbounddecomposescheme_j(j) = 1
    ~ \forall j\neq i, m
\end{align*}
and if $\decomposesignal \geq \NOinforSignal$,
signaling scheme $\lowerbounddecomposescheme$ admits the form as follows
\begin{align*}
    \lowerbounddecomposescheme_i(i) = 1
    -
    \frac{\prior_m}{\prior_i}
    \frac{m - \decomposesignal}{\decomposesignal - i};
    \quad
    \lowerbounddecomposescheme_i(\decomposesignal) = 
    \frac{\prior_m}{\prior_i}
    \frac{m - \decomposesignal}{\decomposesignal - i};
    \quad
    \lowerbounddecomposescheme_m(\decomposesignal) = 1;
    \quad
    \lowerbounddecomposescheme_j(j) = 1 
    \forall j \neq i, m
\end{align*}
Fix any signaling scheme $\signalscheme$ where 
the signals induced by state $m$ are
$\{\scaledNoiseDiff_\ell\}_{\ell\in L}$. By definition, 
\begin{align*}
    \Payoff{\signalscheme} \leq 
    \sum_{\ell\in[L]}
    \sum_{i\in[m - 1]}
    \Payoff{\lowerbounddecomposeschemeell}~.
\end{align*}
Now it remains to upper bound 
$\Payoff{\lowerbounddecomposeschemeell}$ for each state $i\in[m-1]$
and each $\ell\in[L]$.



\paragraph{Step 3- upper bounding $\Payoff{\signalscheme^{(i, \scaledNoiseDiff)}}$.}
In this step, we upper bound 
the expected sender utility under 
the signaling scheme $\signalscheme^{(i, \scaledNoiseDiff)}$. 
We below provide two characterizations on the upper bound
of the expected sender utility 
$\Payoff{\signalscheme^{(i, \scaledNoiseDiff)}}$
(\Cref{lem:single signal payoff lowerbound large}), 
depending on the value of pooling signal $\scaledNoiseDiff$.
The proof of this lemma is deferred to 
\Cref{apx:SDSU proof single signal payoff lowerbound large}.
\ifEC
\begin{restatable}{lemma}{singlesignalpayofflowerboundlarge}
\label{lem:single signal payoff lowerbound large}
In \Cref{example:SDSU lower bound example},
for any state $i\in[m - 1]$,
the expected sender utility  $\Payoff{\lowerbounddecomposescheme} 
= O({K_1K_ 2})$ for any $\decomposesignal\in
[i, i + \sfrac{m\log(\noiseScale)}{\noiseScale}]$;
and $\Payoff{\lowerbounddecomposescheme} 
= o(\sfrac{K_1K_ 2}{m})$ for any $\decomposesignal\in
[i + \sfrac{m\log(\noiseScale)}{\noiseScale}, m]$.
\end{restatable}
\else
\begin{restatable}{lemma}{singlesignalpayofflowerboundlarge}
\label{lem:single signal payoff lowerbound large}
In \Cref{example:SDSU lower bound example},
for any state $i\in[m - 1]$
and $\decomposesignal\in
[i, i + \sfrac{m\log(\noiseScale)}{\noiseScale}]$,
the expected sender utility $\Payoff{\lowerbounddecomposescheme} 
= O({K_1K_ 2})$.
\end{restatable}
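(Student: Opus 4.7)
The plan is to compute the sender's expected payoff under $\signalscheme^{(i,\decomposesignal)}$ directly and show it is $O(K_1K_2)$. Since $\senderU_j=\indicator{j=m}$ in \Cref{example:SDSU lower bound example}, only state $m$ contributes, and under $\signalscheme^{(i,\decomposesignal)}$ state $m$ induces only two signals, $\decomposesignal$ and $m$. So the payoff decomposes as
\[
\Payoff{\signalscheme^{(i,\decomposesignal)}}
\;=\; \prior_m\,\signalscheme^{(i,\decomposesignal)}_m(\decomposesignal)\,\Quant(\decomposesignal)\;+\;\prior_m\,\signalscheme^{(i,\decomposesignal)}_m(m)\,\Quant(m),
\]
and I would bound each of the two summands separately by $O(K_1K_2)$.

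For the first summand, the key observation is that in both branches defining $\signalscheme^{(i,\decomposesignal)}$, the mass that state $m$ places on signal $\decomposesignal$ admits the uniform upper bound $\prior_m\,\signalscheme^{(i,\decomposesignal)}_m(\decomposesignal)\le \prior_i(\decomposesignal-i)/(m-\decomposesignal)$: equality holds in the branch $\decomposesignal\le\NOinforSignal$ by construction, while in the branch $\decomposesignal\ge\NOinforSignal$, the Bayes-plausibility identity $\NOinforSignal=(\prior_i i+\prior_m m)/(\prior_i+\prior_m)$ translates directly into the equivalence $\decomposesignal\ge\NOinforSignal\iff \prior_i(\decomposesignal-i)/(m-\decomposesignal)\ge \prior_m$. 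Substituting the definition of $\prior_i$ from \Cref{example:SDSU lower bound example}, using $\Quant(\decomposesignal)\le \exp(-\noiseScale\decomposesignal)$, and changing variables to $t\triangleq \noiseScale(\decomposesignal-i)\in[0,m\log\noiseScale]$, I would rewrite the first summand as at most
\[
K_1K_2\cdot\frac{m-i-1/\noiseScale}{m-\decomposesignal}\cdot t\exp(-t).
\]
From the hypothesis $\noiseScale/\log\noiseScale\ge 2m$ I have $\decomposesignal-i\le m\log\noiseScale/\noiseScale\le 1/2\le (m-i)/2$, which forces $(m-i-1/\noiseScale)/(m-\decomposesignal)\le 2$; combined with the elementary bound $\max_{t\ge 0}te^{-t}=1/e$, this yields the desired $O(K_1K_2)$.

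For the second summand I would use $\Quant(m)\le \exp(-\noiseScale m)$ and $\prior_m=K_2$ to obtain $\prior_m\Quant(m)\le K_2\exp(-\noiseScale m)$; comparing with the standard lower bound $K_1\ge \exp(-\noiseScale(m-1))/(m-1)$ for the geometric-series normalizer gives $\prior_m\Quant(m)\le (m-1)\exp(-\noiseScale)\cdot K_1K_2$, which is $o(K_1K_2)$ under the assumption on $\noiseScale$. The main subtle step I expect is establishing the case-free bound $\prior_m\,\signalscheme^{(i,\decomposesignal)}_m(\decomposesignal)\le \prior_i(\decomposesignal-i)/(m-\decomposesignal)$, which is precisely what lets us avoid redoing the whole calculation in both branches of the scheme; once that is in hand, the rest is a routine estimate of a $te^{-t}$-type term, together with a tail estimate for $\Quant(m)$.
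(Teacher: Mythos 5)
Your proposal is correct and follows essentially the same route as the paper: decompose the payoff into the $\decomposesignal$-signal term and the $m$-signal term, bound the former by plugging in the prior and rewriting it as a $t e^{-t}$-type expression with the ratio $(m-i-1/\noiseScale)/(m-\decomposesignal)=O(1)$, and bound the latter by $K_2\exp(-\noiseScale m)=o(K_1K_2)$. The one place you are slightly more careful than the paper is making the case-free bound $\prior_m\,\signalscheme^{(i,\decomposesignal)}_m(\decomposesignal)\le \prior_i(\decomposesignal-i)/(m-\decomposesignal)$ explicit for the $\decomposesignal\ge\NOinforSignal$ branch, where the paper writes only the $\decomposesignal\le\NOinforSignal$ formula; your observation is correct and closes that small gap without changing the argument.
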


\begin{restatable}{lemma}{singlesignalpayofflowerboundsmall}
\label{lem:single signal payoff lowerbound small}
In \Cref{example:SDSU lower bound example},
for any state $i\in[m - 1]$
and $\decomposesignal\in
[i + \sfrac{m\log(\noiseScale)}{\noiseScale}, m]$,
the expected sender utility $\Payoff{\lowerbounddecomposescheme} 
= o(\sfrac{K_1K_ 2}{m})$.
\end{restatable}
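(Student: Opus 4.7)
The plan is to decompose $\Payoff{\lowerbounddecomposescheme} = \prior_m\lowerbounddecomposescheme_m(\decomposesignal)W(\decomposesignal) + \prior_m\lowerbounddecomposescheme_m(m)W(m)$. The second summand is at most $K_2\exp(-\noiseScale m)$, which is easily $o(K_1K_2/m)$ under $\noiseScale \geq 2m\log\noiseScale$. For the main summand, I use the martingale identity $\prior_m\lowerbounddecomposescheme_m(\decomposesignal)(m-\decomposesignal) = \prior_i\lowerbounddecomposescheme_i(\decomposesignal)(\decomposesignal - i)$ together with the trivial bound $\lowerbounddecomposescheme_i(\decomposesignal)\leq 1$ to obtain the uniform upper bound $\prior_i(\decomposesignal - i)W(\decomposesignal)/(m-\decomposesignal)$. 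Substituting the explicit formula for $\prior_i$, using $W(\decomposesignal)\leq\exp(-\noiseScale\decomposesignal)$, and setting $t := \decomposesignal - i$, this becomes $K_1K_2(m-i-1/\noiseScale)\noiseScale\cdot\phi(t)$, where $\phi(t) := t\exp(-\noiseScale t)/(m-i-t)$.

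The core analysis is to bound $\phi(t)$ over $t \in [m\log\noiseScale/\noiseScale,\, m-i]$. I split on whether $t \leq (m-i)/2$. In the first sub-range, $m-i-t \geq (m-i)/2$, and since $s \mapsto s\exp(-\noiseScale s)$ is decreasing on $[1/\noiseScale,\infty)$, for $t \geq m\log\noiseScale/\noiseScale$ we have $t\exp(-\noiseScale t) \leq (m\log\noiseScale/\noiseScale)\,\noiseScale^{-m}$. Substituting gives the clean uniform bound $2K_1K_2 m\log\noiseScale/\noiseScale^m$; its ratio to $K_1K_2/m$ equals $2m^2\log\noiseScale/\noiseScale^m$, and since $\noiseScale \geq 2m\log\noiseScale$ forces $\noiseScale^m$ to grow super-polynomially in $m$, this ratio tends to zero. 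For the second sub-range $t > (m-i)/2$, a short calculation using the feasibility constraint $\lowerbounddecomposescheme_m(\decomposesignal)\leq 1$ together with the hypothesis shows that whenever $i \leq m-2$ the case-B boundary satisfies $\NOinforSignal \geq m - 1/\noiseScale$, so the only relevant $\decomposesignal$ lie at the two potential extrema of $\phi$, namely $t = (m-i)/2$ (giving first summand at most $K_1K_2 m\noiseScale\exp(-\noiseScale(m-i)/2)$) and $t = \NOinforSignal - i$ (giving first summand equal to $K_2\exp(-\noiseScale\NOinforSignal) \leq eK_2\exp(-\noiseScale m)$). Under the hypothesis both ratios to $K_1K_2/m$ are of order $m^2\exp(-\Theta(\noiseScale))$ and tend to zero. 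The residual case $i = m-1$ is immediate: $t > 1/2$ forces $\decomposesignal > m-1/2$, and the crude bound $\prior_m W(\decomposesignal) \leq K_2\exp(-\noiseScale(m-1/2))$ already suffices.

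The main obstacle is the second sub-range for small $i$: neither of the two natural upper bounds $\prior_m W(\decomposesignal)$ (which discards the exponential smallness of $\prior_i$) nor $\prior_i(\decomposesignal - i)W(\decomposesignal)/(m-\decomposesignal)$ (whose denominator can approach zero) is individually sharp enough when $\decomposesignal$ is in the interior of the range. The resolution uses the feasibility of $\lowerbounddecomposescheme$ together with the growth hypothesis $\noiseScale/\log\noiseScale \geq 2m$ to squeeze $\phi$'s potential maxima into regimes where either the state's prior is exponentially tiny or $W(\decomposesignal)$ is exponentially tiny, thereby making all cases uniformly $o(K_1K_2/m)$.
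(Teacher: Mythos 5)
Your proof is correct and, while it arrives at the same central quantity as the paper, it organizes the analysis differently in a way that is worth recording. The paper also decomposes the payoff, bounds the $W(m)$ term trivially, and reduces the main term to analyzing $\prior_i(\delta-i)W(\delta)/(m-\delta)$, but its split is at the \emph{fixed} point $\delta=m-\sfrac{1}{2}$: for $\delta\leq m-\sfrac{1}{2}$ it computes directly using $\exp(\beta(i-\delta))\leq\beta^{-m}$, and for $\delta\in[m-\sfrac{1}{2},\tilde\delta^{\texttt{avg}}]$ it introduces the auxiliary signaling scheme pooling states $m-1$ and $m$ at $\bar\delta^{\texttt{avg}}$, shows $\Payoff{\pi^{(i,\delta)}}$ is dominated by an $i$-independent envelope $f(\delta)=K_1K_2m^2\beta\exp(\beta(m-1-\delta))/(m-\delta)$, and argues $f$ is decreasing on the interval by checking the sign of $f'$ at the right endpoint. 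You instead split at the $i$-dependent midpoint $t=(m-i)/2$, invoke the Bayesian-plausibility (martingale) identity $\prior_m\pi_m^{(i,\delta)}(\delta)(m-\delta)=\prior_i\pi_i^{(i,\delta)}(\delta)(\delta-i)$ explicitly, and analyze $\phi(t)=t\exp(-\beta t)/(m-i-t)$ directly without introducing $\bar\delta^{\texttt{avg}}$ or the uniform envelope. This avoids the paper's slightly awkward ``$\Payoff{\pi^{(i,\delta)}}\leq\Payoff{\pi^{(m-1,\bar\delta^{\texttt{avg}})}}$'' comparison and the translation to a $i$-free $f$, at the modest cost of doing a per-$i$ extremum analysis of $\phi$ and handling $i=m-1$ separately.

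Two places should be made explicit to close the argument. First, your claim that on $t\in[(m-i)/2,\,\tilde\delta^{\texttt{avg}}-i]$ it suffices to check the endpoints requires the sign analysis $\phi'(t)\propto (m-i)-\beta t(m-i-t)$: with $a:=m-i\geq 2$ and $\beta$ large this is negative at $t=a/2$ and positive near $t=a$, so $\phi$ is decreasing then increasing and the maximum on any sub-interval of $[a/2,a)$ is attained at an endpoint. Second, the bound $\tilde\delta^{\texttt{avg}}\geq m-\sfrac{1}{\beta}$ for $i\leq m-2$ follows from $m-\tilde\delta^{\texttt{avg}}\leq(\prior_i/\prior_m)(m-i)\leq K_1(m-i)^2\beta\exp(\beta i)\leq m^2\beta\exp(-\beta)$ and $\exp(\beta)\geq m^2\beta^2$ under $\beta\geq 2m\log\beta$; you labeled this ``a short calculation'' and should spell it out. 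With those two points filled in, your route is complete and slightly more self-contained than the paper's.
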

\fi

With the above two characterizations on 
$\Payoff{\lowerbounddecomposescheme}$, we are now ready to prove 
\Cref{thm:monotone partition lower bound}.
\begin{proof}
[Proof of \Cref{thm:monotone partition lower bound}]
Let $L' \triangleq|\{\scaledNoiseDiff\in\signalspace:\signalscheme_m(\scaledNoiseDiff) > 0\}|$ 
be the number of signals induced by state $m$,
and denote these $L'$ signals as $\{\scaledNoiseDiff_{\ell}\}_{\ell\in[L']}$.
For each $\ell\in[L']$,
since $\sfrac{\noiseScale}{\log(\noiseScale)}\geq 2m$,
there exists an most one state $j\in[m- 1]$
such that $\scaledNoiseDiff_\ell \in [j, j + \sfrac{m\log(\noiseScale)}{\noiseScale}]$.
Invoking \Cref{lem:single signal payoff lowerbound large},
we know that $\sum_{i\in[m - 1]}\Payoff{\lowerbounddecomposeschemeell}
= O(K_1 K_2)$.
Thus, invoking \Cref{lem:SDSU lower bound example opt},
we have
\begin{align*}
    \frac{\Payoff{\optscheme}}{\Payoff{\signalscheme}}
    = 
    \frac{\Omega(K_1 K_2 m)}{L' \cdot O(K_1 K_2)}
    =
    \Omega\left(\frac{m}{L'}\right),
\end{align*}
which concludes the proof for \Cref{thm:monotone partition lower bound}.
\end{proof}

\subsection{Approximation Upper Bounds of Censorship and Direct Signaling Schemes}
\label{sec:SDSU simple approx}

\newcommand{\ied}{^{(i)}}
\newcommand{\jed}{^{(j)}}
\newcommand{\totalsenderU}{U}
\newcommand{\xbf}{\mathbf{x}}
\newcommand{\edgealloc}{x}
\newcommand{\edgealloci}{\edgealloc_{ij}}
\newcommand{\edgereward}{\omega}
\newcommand{\edgerewardi}{\edgereward_{ij}}
\newcommand{\edgeusage}{c}
\newcommand{\edgeusagei}{\edgeusage_{ij}\ied}
\newcommand{\edgeusagej}{\edgeusage_{ij}\jed}
\newcommand{\Edge}{E}

In this subsection, 
we discuss the approximation upper bounds  
of censorship and direct 
signaling schemes.
The approximation upper bounds we provide here 
are indeed tight according to the lower bounds
we established in \Cref{sec:SDSU simple approx lower bound}. 

\begin{theorem}
\label{coro:SDSU censorship m approx}
In SDSU environments, for a boundedly rational receiver,
there exists a 
censorship/direct signaling scheme that is an
$O(m)$-approximation to the optimal signaling scheme.
\end{theorem}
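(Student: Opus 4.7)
The plan is to bootstrap the $O(m)$-approximation from the structural $4$-approximation provided by \Cref{thm:SDSU 4 approx}, which guarantees a signaling scheme $\hat{\signalscheme}$ that (i) uses $O(m)$ signals, (ii) pools at most two states per signal, and (iii) pools each pair of states at no more than one signal. Writing $V(\signal) \triangleq \sum_{i\in[m]} \prior_i \senderU_i \hat{\signalscheme}_i(\signal) \Quant(\signal)$ for the payoff contribution of a single signal, one has $\sum_\signal V(\signal) = \Payoff{\hat{\signalscheme}} \geq \Payoff{\optscheme}/4$. Since $\hat{\signalscheme}$ carries at most $O(m)$ signals, pigeonhole produces a single signal $\signal^*$ with $V(\signal^*) \geq \Omega(\Payoff{\optscheme}/m)$.

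The bulk of the proof then consists of constructing a censorship scheme whose total payoff dominates $V(\signal^*)$. If $\signal^*$ reveals a single state $i^*$, then $V(\signal^*) \leq \prior_{i^*}\senderU_{i^*}\Quant(\receiverU_{i^*})$ and the full-information revealing censorship scheme already achieves payoff $\sum_{j\in[m]} \prior_j \senderU_j \Quant(\receiverU_j) \geq V(\signal^*)$. In the more interesting case where $\signal^*$ pools two distinct states $i^*$ and $j^*$, I would assume without loss of generality that $\hat{\signalscheme}_{i^*}(\signal^*) \geq \hat{\signalscheme}_{j^*}(\signal^*)$ and define a censorship scheme by $\highstates = \{i^*\}$, threshold state $\censorshipstate = j^*$, threshold probability $\censorshipprob = \hat{\signalscheme}_{j^*}(\signal^*)/\hat{\signalscheme}_{i^*}(\signal^*) \in [0,1]$, and $\lowstates = [m]\setminus\{i^*,j^*\}$. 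Substituting into the formula for the pooling signal in \Cref{def:censorship} and clearing denominators gives $\censorshipsignal = \signal^*$; consequently the contribution of the pool equals
\begin{align*}
(\prior_{i^*}\senderU_{i^*} + \censorshipprob\prior_{j^*}\senderU_{j^*})\Quant(\signal^*) \;=\; \frac{V(\signal^*)}{\hat{\signalscheme}_{i^*}(\signal^*)} \;\geq\; V(\signal^*).
\end{align*}
Because $\Quant \geq 0$ and every $\senderU_i \geq 0$, the additional payoff from revealing the states in $\lowstates$ and the $(1-\censorshipprob)$-fraction of $\censorshipstate$ is non-negative, so the censorship scheme inherits an $O(m)$-approximation guarantee.

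For direct signaling schemes I would reuse the same $(\highstates,\censorshipstate,\censorshipprob)$ but pool everything outside the main pool (the states in $\lowstates$ together with the $(1-\censorshipprob)$-fraction of $\censorshipstate$) into a second signal rather than revealing them individually. The composition and payoff of the main pool are unchanged, and the low pool adds non-negative payoff, so the direct scheme is likewise an $O(m)$-approximation.

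The real technical obstacle is the preliminary lemma \Cref{thm:SDSU 4 approx}, whose generalized assignment reduction does the heavy lifting. Given that lemma, the only step here requiring care is identifying the right censorship parameters from the fractional signal $\signal^*$ and verifying the algebraic identity $\censorshipsignal = \signal^*$; the WLOG ordering is precisely what keeps $\censorshipprob \in [0,1]$, and the matching of signals follows from specializing the first primal constraint of~\ref{eq:opt lp} to~$\signal^*$.
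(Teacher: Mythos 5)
Your proof is correct, and it takes a somewhat different — and arguably more self-contained — route than the paper's. Both proofs start from the $4$-approximation lemma (\Cref{thm:SDSU 4 approx}) and use pigeonhole over the $O(m)$ signals (equivalently, pairs of states, by property (ii)) to find a single signal/pair carrying an $\Omega(1/m)$ fraction of the payoff. The paper then invokes \Cref{lem:SDSU binary opt} (the primal-dual characterization that the optimal scheme for any binary-state instance is a censorship scheme), builds a two-state instance on the extracted pair $(i^*, j^*)$, solves it optimally, and extends the optimal binary censorship to the full instance by revealing the other states; verifying that this extension recovers $\totalsenderU_{i^*j^*}$ takes a small normalization argument because the binary instance uses renormalized priors. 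You instead choose the censorship parameters $\highstates=\{i^*\}$, $\censorshipstate=j^*$, $\censorshipprob=\hat{\signalscheme}_{j^*}(\signal^*)/\hat{\signalscheme}_{i^*}(\signal^*)$ so that the pooling signal exactly reproduces $\signal^*$ (a clean consequence of the Bayes-plausibility constraint in \ref{eq:opt lp}), and the pool contribution is then the explicit rescaling $V(\signal^*)/\hat{\signalscheme}_{i^*}(\signal^*)\geq V(\signal^*)$. This sidesteps \Cref{lem:SDSU binary opt} entirely and makes the normalization transparent; the paper's route is heavier but delivers \Cref{lem:SDSU binary opt} as a reusable result (which it later needs in \Cref{sec:SDSU robust}). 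For the direct-scheme case your argument matches the paper's appendix construction. The only things worth tightening are cosmetic: state explicitly that the $O(m)$ in the pigeonhole is the $2m$ signal bound from \Cref{thm:SDSU 4 approx}, and that the ``WLOG'' ordering $\hat{\signalscheme}_{i^*}(\signal^*)\geq\hat{\signalscheme}_{j^*}(\signal^*)$ is what keeps $\censorshipprob\in[0,1]$ (you do say the latter).
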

We would like to highlight that 
designing censorship or direct signaling scheme with $O(m)$-approximation is not in-hindsight straightforward.
For example, even for a
fully rational receiver, 
the approximation of the full/no-information revealing 
or the better of the two could be unbounded.
To establish \Cref{coro:SDSU censorship m approx}, 
we start with characterizing a 4-approximation 
signaling scheme that has desired structure properties 
-- 
the sender's signal either reveals the true state, 
or randomizes the receiver's uncertainty on
only two states,
then we utilize the structure of this 4-approximation 
signaling scheme to show the existence 
of $O(m)$-approximation censorship/direct signaling schemes.

\begin{lemma}
\label{thm:SDSU 4 approx}
In SDSU environments, for a boundedly rational receiver,
there exists a $4$-approximation signaling scheme
using at most $2m$ signals, and it 
has the following two properties:
\begin{enumerate}
    \item[(i)] each signal $\signal\in\optsignalspace$
is induced by at most two states,
i.e., $|\supp(\posterior(\signal))|\leq 2$;
    \item[(ii)] each pair of states $(i, j)$
is pooled at most one signal,
i.e., $|\{\signal\in \optsignalspace:
\supp(\posterior(\signal)) = \{i,j\}\}| \leq 1$.
\end{enumerate}
Furthermore, at most $m$ signals are induced 
by two distinct states,
i.e., $|\{\signal:|\supp(\posterior(\signal))| = 2\}| \leq m$.\footnote{Recall property~(i) 
requires that for every $\signal$, $|\supp(\posterior(\signal))|\leq 2$.}
\end{lemma}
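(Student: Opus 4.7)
The plan is to connect signaling schemes satisfying properties~(i) and~(ii) with fractional solutions in the Generalized Assignment Problem, and then to apply the integrality gap of $2$ for the GAP LP relaxation~\citep{ST-93}. Two factor-two losses will compose to give the claimed $4$-approximation. As the first step I would show that restricting to property~(i) costs nothing in sender utility. Given any signal $\scaledNoiseDiff$ (in the reformulated program~\ref{eq:opt lp}) whose support has more than two states, I would partition its support into those states with $\receiverU_i<\scaledNoiseDiff$ and those with $\receiverU_i>\scaledNoiseDiff$; since the first constraint of \ref{eq:opt lp} gives $\sum_i \prior_i\signalscheme_i(\scaledNoiseDiff)(\receiverU_i-\scaledNoiseDiff)=0$, the signed contributions balance, so I can greedily pair low-state mass with high-state mass to rewrite $\signalscheme_i(\scaledNoiseDiff)$ as a sum of two-state pooling signals each carrying the same value $\scaledNoiseDiff$. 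Since the receiver's response only depends on $\scaledNoiseDiff$ through $\Quant(\scaledNoiseDiff)$, sender utility is preserved exactly, and property~(i) is achieved.

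Within schemes satisfying property~(i), I would then set up a linear program whose variables $\edgealloci$ are indexed by unordered pairs $(i,j)$ together with a representative pooling value $\scaledNoiseDiff \in [\receiverU_i, \receiverU_j]$, alongside single-state fully-revealing signals for each state. The $m$ equality constraints $\sum_{(j,\scaledNoiseDiff)} \edgealloci = \prior_i$ enforce mass conservation at each state, and the objective expresses $\Payoff{\signalscheme}$. This is exactly the format of the GAP fractional relaxation, with states playing the role of unit-capacity ``machines'' and pair-signals playing the role of ``items''. Invoking the rounding of~\citet{ST-93} converts the fractional optimum into an integral one of value at least half, in which each state-pair carries at most one pooling signal, thus realizing property~(ii). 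Converting this integral LP solution back into a bona fide signaling scheme (repairing any mild mass mismatch the rounding introduces so that the first constraint of \ref{eq:opt lp} still holds) costs one more factor of two; the two losses compose to yield the $4$-approximation.

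The bound ``at most $m$ signals with two-state support'' would then follow from a basic-feasible-solution argument: the LP has exactly $m$ equality constraints (one mass-conservation equality per state), so any extreme-point optimum has at most $m$ nonzero $\edgealloci$ variables, i.e., at most $m$ distinct pair-signals are used; together with the at most $m$ single-state signals, the total signal count is at most $2m$. The main technical obstacle is to reconcile three demands on the LP simultaneously: (a) it must upper bound $\Payoff{\optscheme}$ so that the final scheme approximates the true optimum and not merely the LP, which is where the property-(i) reduction is used; (b) its rounding must preserve property~(ii) and produce a scheme whose signals are jointly consistent with the $\{\scaledNoiseDiff\}$-reparameterization; and (c) the a priori continuous range of $\scaledNoiseDiff$ must be made finite, likely by restricting to a discrete set of representative $\scaledNoiseDiff$ per pair (e.g., pinned down by the concavification/upper-envelope of $\Quant$ on each interval $[\receiverU_i,\receiverU_j]$). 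Carrying out this conversion step carefully, so that the two factor-two losses truly compose multiplicatively, is the delicate part of the argument.
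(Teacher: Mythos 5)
Your high-level plan matches the paper's: reduce to property~(i), connect to the generalized assignment problem, and compose two factor-two losses to get a $4$-approximation. The property-(i) reduction via balancing positive and negative contributions of $\receiverU_i-\scaledNoiseDiff$ (the paper does this through \Cref{lem:multi label signal to binary label signal random variable}) is fine. However, the LP you propose is not a GAP instance. You write $m$ equality constraints of the form $\sum_{(j,\scaledNoiseDiff)} x_{(i,j,\scaledNoiseDiff)} = \prior_i$ in which every pair-variable appears in exactly two constraints (one for $i$, one for $j$). That is a bipartite flow/matching relaxation, not GAP, and \citet{ST-93} does not apply to it. Moreover, if your LP and basic-feasible-solution argument worked as stated, you would obtain a lossless $m$-signal scheme with properties~(i) and~(ii), i.e., a $1$-approximation rather than a $4$-approximation --- a sign that the discretization of the per-pair pooling value $\scaledNoiseDiff$ cannot be folded into the LP the way you describe. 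The optimal per-pair $\scaledNoiseDiff$ depends on the fractional masses carried by that pair, so you cannot ``pin down'' $\scaledNoiseDiff$ by concavification and keep a linear program at the same time.

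The paper resolves this by inserting a step you skip: starting from any property-(i) scheme, it re-optimizes within each pair $(i,j)$ using the binary-state censorship characterization (\Cref{lem:SDSU binary opt}), proving that an \emph{optimal} scheme exists with properties~(i) and~(ii) and $O(m^2)$ signals (\Cref{thm:SDSU opt}). Only after fixing such an optimum $\optscheme$, with one signal $\signal_{ij}$ per pair and fixed masses $\optscheme_i(\signal_{ij})$, does the paper set up the GAP LP \ref{eq:SDSU opt matching lp}. That LP is genuinely GAP-shaped because of a crucial asymmetry: it only ranges over pairs $(i,j)$ with $\optscheme_i(\signal_{ij})\geq\optscheme_j(\signal_{ij})$, uses a single variable $\edgealloci$ per pair, and has one budget constraint $\sum_j\edgealloci\leq 1$ on the ``item'' side and one $\sum_i\frac{\optscheme_j(\signal_{ij})}{\optscheme_i(\signal_{ij})}\edgealloci\leq 1$ on the ``bin'' side with costs $\leq 1$. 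The integral optimum of this LP then has at most one bin per item ($\leq m$ nonzeros), and the construction halves the assignment to restore feasibility of the signaling scheme (the ``repair'' you anticipated), giving the second factor of~$2$. Your proposal is aimed in the right direction but does not supply the per-pair re-optimization step, which is exactly what makes the pooling signal well defined per pair and the subsequent LP actually a GAP instance.
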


We now provide intuitions of the two properties 
of the signal scheme characterized 
in \Cref{thm:SDSU 4 approx}.
Property~(i)
ensures that whenever the receiver sees a signal,
she can infer that the realized state must 
be one of two particular states. 
From a practical perspective, this property is  
beneficial to a boundedly rational receiver as it 
makes the receiver's state inference easier.
From the sender's perspective, 
property~(ii) ensures that, 
for any pair of states, the sender only needs to design at most one pooling signal.
We provide a proof overview of \Cref{thm:SDSU 4 approx} 
in the end of this subsection and defer its formal proof
to \Cref{apx:SDSU proof 4 approx}.

With the results in \Cref{thm:SDSU 4 approx}, 
we are now ready to 
prove the \Cref{coro:SDSU censorship m approx}. 
\vspace{-6pt}
\begin{proof}[Proof 
of \Cref{coro:SDSU censorship m approx}]
We first prove that there always exists a 
censorship signaling scheme that is $O(m)$-approximation.
Let 
$\signalscheme\primed$
with signal space $\signalspace\primed$
be the signaling scheme
stated in
\Cref{thm:SDSU 4 approx}.
We denote $\totalsenderU_{ij}$
as the expected sender utility
induced by each pair of state $(i, j)$,
i.e., $\totalsenderU_{ij} \triangleq \sum_{\signal:
\signalscheme\primed_i(\signal)> 0
\land
\signalscheme\primed_j(\signal)> 0}
(\prior_i\senderU_i\signalscheme\primed_i(\signal)
+
\prior_j\senderU_j\signalscheme\primed_j(\signal))\Quant(\signal)$.
Let $(i^*, j^*) = \argmax_{(i, j)}
\totalsenderU_{ij}$.
Note that by definition, 
and the property~(i) of signaling scheme $\signalscheme\primed$
we have $\Payoff{\signalscheme\primed} \leq m\cdot \totalsenderU_{i^*j^*}$.

Consider a binary-state instance $\instance = (\hat m,
\{\hat\prior_k\}\{\hat\receiverU_k\}, \{\hat\senderU_k\})$
induced by 
pair of states $(i^*, j^*)$,
i.e.,
\begin{align*}
    \hat m \gets 2,\qquad
    \hat \receiverU_1 &\gets \receiverU_{i^*},\qquad
    \hat \receiverU_2 \gets \receiverU_{j^*},\qquad
    \hat \senderU_1 \gets \senderU_{i^*},\qquad
    \hat \senderU_2 \gets \senderU_{j^*},
    \\
    &\hat\prior_1\gets
    \frac{\prior_{i^*}}
    {\prior_{i^*}+\prior_{j^*}},
    \qquad
    \hat\prior_2 \gets
    \frac{\prior_{j^*}}
    {\prior_{i^*}+\prior_{j^*}}
\end{align*}
It can be shown that 
the optimal signaling scheme for this binary-state instance
is a censorship signaling scheme 
(see \Cref{lem:SDSU binary opt} 
and its proof in \Cref{apx:SDSU proof binary opt}).
Let $\signalscheme\doubleprimed$
be the signaling scheme 
which coincides with the optimal signaling scheme 
for this binary-state instance,
and reveals all other states.
By construction, $\signalscheme\doubleprimed$
is again a censorship, and the expected sender utility
\begin{align*}
     m\cdot \Payoff{\signalscheme\doubleprimed}
    \overset{(a)}{\geq}
     m\cdot \totalsenderU_{i^*j^*}
    \overset{(b)}{\geq}
    \Payoff{\signalscheme\primed}
    \overset{(c)}{\geq}
    \frac{1}{4} \cdot \Payoff{\optscheme}
\end{align*}
where $\optscheme$ is the optimal signaling scheme,
(a) holds due to the construction of 
$\signalscheme\doubleprimed$;
(b) holds due to the definition of 
$(i^*,j^*)$;
and
(c) holds since $\signalscheme\primed$
is a 4-approximation to 
the signaling scheme $\optscheme$.

The proof of the $O(m)$-approximation for 
direct signaling scheme follows the similar argument
which utilizes the 
structure of the signaling scheme $\signalscheme\primed$, 
and thus is deferred to 
\Cref{apx:proof SDSU censorship m approx}.
\end{proof}

Before finishing this subsection, 
we provide a proof overview for \Cref{thm:SDSU 4 approx}, and we defer the formal proof to 
\Cref{apx:SDSU proof 4 approx}.
At a high-level, our proof consists of two main steps. 
In the first step, we show that 
within the subclass of signaling schemes satisfying properties (i) (ii) in \Cref{thm:SDSU 4 approx},
there 
exists a signaling scheme $\optscheme$
using at most $O(m^2)$ signals and 
achieving the optimality over all signaling schemes. 
In the second step,
we discuss how to construct the signaling scheme 
stated in \Cref{thm:SDSU 4 approx}
based on the optimal signaling scheme $\optscheme$
identified in the first step.
Specifically,
we establish a connection to 
the fractional generalized assignment problem
\citep{ST-93}.
In particular, by leveraging those two properties (i) (ii),
we construct a linear program~\ref{eq:SDSU opt matching lp} based on the optimal signaling scheme identified 
in the first step.
This linear program upperbounds
the optimal expected sender utility
and has the same formulation as the 
fractional generalized assignment problem. 
\cite{ST-93}
show that the optimal integral solution
of program~\ref{eq:SDSU opt matching lp}
(which has at most $m$ non-zero entries)
is a 2-approximation 
to the optimal fractional solution (which may have at most $m^2$ non-zero entries). 
With this result, we then convert
this optimal integral solution 
to a signaling scheme stated in \Cref{thm:SDSU 4 approx},
which has at most $2m$
signals,
and is a 2-approximation 
to the objective value of the optimal integral solution.

\section{Rationality-Robust Information Design}
\label{sec:robust}

\newcommand{\doublestarred}{^{**}}
\newcommand{\rationalityClass}{\mathcal{B}}
\newcommand{\robustRatio}{\Gamma}
\newcommand{\optschemebeta}{\texttt{OPT}(\noiseScale)}
\newcommand{\optschemeInfty}{\hat{\signalscheme}^*}

\newcommand{\censorshipprobbeta}{p^{\dagger}}
\newcommand{\censorshipstatebeta}{i^{\dagger}}
\newcommand{\censorshipsignalbeta}{\scaledNoiseDiff^{\dagger}}

\newcommand{\censorshipprobInfty}{\hat{p}^{\dagger}}
\newcommand{\censorshipstateInfty}{\hat{i}^{\dagger}}
\newcommand{\censorshipsignalInfty}{\hat{\scaledNoiseDiff}^{\dagger}}

\newcommand{\noiseScaleLB}{{\noiseScale_0}}
\newcommand{\noiseScaleUB}{{\overline{\noiseScale}}}

\newcommand{\optschemebetaell}{\texttt{OPT}(\noiseScale_\ell)}
\newcommand{\pen}{\psi}

\newcommand{\schemebinary}{\widetilde{\signalscheme}^*}
\newcommand{\belowpoolingsignal}{\widetilde{\delta}}

\newcommand{\aggregateprobaq}{q_{ij}^{(1)}}
\newcommand{\aggregateprobbq}{q_{ij}^{(2)}}

\newcommand{\adjustedConst}{K}

In practice, the sender may not be able to
have 
(or require significant cost to learn)
the perfect
knowledge of a receiver's 
bounded rationality level.
Motivated by this concern, we introduce 
\emph{rationality-robust information design},
in which a signaling scheme (a.k.a., information structure) is designed for a receiver
whose bounded rationality level is unknown.
The goal is to identify \emph{robust}
signaling schemes
--
ones with good (multiplicative) 
rationality-robust 
approximation to the optimal signaling scheme
that is tailored to the receiver's bounded
rationality level.

\begin{definition}
Fixing any problem instance $\instance 
= (m, \{\prior_i\}, \{\receiverU_i\}, \{\senderU_i\})$,  
the \emph{rationality-robust approximation ratio}
$\robustRatio(\signalscheme,\rationalityClass)$
of a given signaling scheme $\signalscheme$
and a set of possible bounded rationality levels
$\rationalityClass\subseteq[0,\infty)$
is 
\begin{align*}
    \robustRatio(\signalscheme, \rationalityClass)
    \triangleq \max_{\noiseScale\in \rationalityClass} \frac{\Payoff[\noiseScale]{\texttt{OPT}(\noiseScale)} }{\Payoff[\noiseScale]{\signalscheme}}
\end{align*}
where $\texttt{OPT}(\noiseScale)$
is the optimal signaling scheme\footnote{
Here we write the optimal signaling scheme
with bounded rationality level $\noiseScale$ as
$\texttt{OPT}(\noiseScale)$, instead of $\optscheme$ in previous sections,
to emphasize its dependency on 
the rationality level $\noiseScale$.
} for a 
receiver with bounded rationality level $\noiseScale$
(characterized in \Cref{lem:opt fully rational},
\Cref{thm:SISU opt}, \Cref{thm:SDSU opt});
and $\Payoff[\noiseScale]{\texttt{OPT}(\noiseScale)}$
(resp.\ 
$\Payoff[\noiseScale]{\signalscheme}$)
is the expected sender utility of signaling scheme 
$\texttt{OPT}(\noiseScale)$
(resp.\ $\signalscheme$)
for 
bounded rationality level $\noiseScale$.
\end{definition}
In the above definition, the rationality-robust
approximation ratio is defined in worst-case over
the set $\rationalityClass$ of possible bounded rationality levels.
Ideally, one would like to have a signaling scheme 
that is approximately optimal under 
any bounded rationality level,
i.e., $\rationalityClass = [0,\infty)$.
This is the scenario illustrated in \Cref{sec:SISU robust},
in which we show that 
in SISU environments, 
the optimal censorship 
signaling scheme for a fully rational receiver 
can achieve 
$2$ rationality-robust approximation
for any receiver's bounded rationality level 
(\Cref{thm:2 robust approx SISU upperbound}).
This suggests that, up to a two factor,
the knowledge of the bounded rationality level 
are unimportant in SISU environments;
and directly optimizing under fully rational receiver 
model is robust enough.
In contrast, as we show in \Cref{sec:SDSU robust},
there exists no signaling scheme with 
bounded rationality-robust approximation ratio in SDSU environments,
when the sender has no knowledge of the 
receiver's bounded rationality 
level (\Cref{thm:imposs robust approx SDSU}).
This impossibility result indicates that (a) there 
exists a tradeoff between
the knowledge of the receiver's rationality level 
and the achievable rationality-robustness;
and (b) even if the adversary is restricted to pick receiver's behavior in the quantal response model, 
designing robust signaling scheme 
still requires additional knowledge.
Finally, we show a preliminary positive result in SDSU environments:
for problem instances with binary state,
when the actual rationality robust level is sufficiently large,
learning the bounded rationality level up 
to a multiplicative error 
enables the sender to design signaling schemes
with good rationality-robust approximation guarantee.

\subsection{Rationality-Robust Signaling Schemes in SISU Environments}
\label{sec:SISU robust}
In SISU environments, we show that for any problem instance, 
the optimal censorship signaling scheme 
(defined in \Cref{lem:opt fully rational})
for a fully rational receiver 
achieves 
a 
$2$ rationality-robust approximation 
when the sender has no knowledge of 
the receiver's bounded rationality level.
We also provide an example to show the tightness of the result.

\begin{restatable}{theorem}{robustapproxSISUupperbound}
\label{thm:2 robust approx SISU upperbound}
In SISU environments, 
for any problem instance, 
the optimal censorship signaling scheme $\optschemeInfty$ 
(defined in \Cref{lem:opt fully rational})
for a fully rational receiver 
has rationality-robust approximation ratio
$\robustRatio(\optschemeInfty, [0,\infty)) \leq 2$.
\end{restatable}
To understand the intuition behind the above theorem,
recall that the structural property of optimal censorship signaling scheme
established in \Cref{prop:SISU threshold state monotonicity}:
The optimal censorship for a less rational receiver 
requires the sender to reveal more information. 
As a result, the optimal censorship
for receiver with 
$\noiseScale = \infty$ reveals least information and 
pools most states compared to 
other optimal censorship for
receiver with $\noiseScale < \infty$. 
Meanwhile, the pooling signal $\hat\scaledNoiseDiff\primed 
 \equiv 0$ 
in
$\optschemeInfty$ 
ensures that the utility
contributed from those pooled states is at least 
half of the utility
contributed from those states in the optimal censorship
with less rational receiver.
\begin{proof}[Proof of \Cref{thm:2 robust approx SISU upperbound}]
Fix any bounded rationality level $\noiseScale\in\rationalityClass$.
For signaling scheme $\optschemeInfty$, 
\begin{align*}
    \Payoff[\noiseScale]{\optschemeInfty}
    &= 
     \sum_{i\in[\censorshipstateInfty-1]} \prior_i \Quant(0) + \prior_{\censorshipstateInfty} \left(\censorshipprobInfty \Quant(0) + \left(1 - \censorshipprobInfty\right)
    \Quant\left(\receiverU_{\censorshipstateInfty}\right)\right) + 
    \sum_{i\in [\censorshipstateInfty + 1:m]}\prior_i \Quant(\receiverU_i)
    \\
    &\geq
    \sum_{i\in[\censorshipstateInfty-1]} \prior_i \Quant(0) 
    + 
    \sum_{i\in [\censorshipstateInfty:m]}\prior_i \Quant(\receiverU_i)
\end{align*}
where $\censorshipstateInfty, \censorshipprobInfty$ 
is the threshold state, the threshold state probability of $\optschemeInfty$.\footnote{Here we use the superscript $\hat{}$
\ 
to denote the concepts in signaling scheme $\optschemeInfty$.}
Moreover, by \Cref{thm:SISU opt},
the optimal expected sender utility
under the bounded rationality level $\noiseScale$ is 
\begin{align*}
    \Payoff[\noiseScale]{\optschemebeta}
    &= \sum_{i\in[\censorshipstatebeta - 1]} \prior_i \Quant(\censorshipsignalbeta) + \prior_{\censorshipstatebeta} \left(\censorshipprobbeta \Quant(\censorshipsignalbeta) + \left(1 - \censorshipprobbeta\right) \Quant(\receiverU_{\censorshipstatebeta})\right) + 
    \sum_{i\in[\censorshipstatebeta + 1:m]} \prior_i\Quant(\receiverU_i)
    \\
    &\leq \sum_{i\in[\censorshipstatebeta]} \prior_i \Quant(\censorshipsignalbeta) + 
    \sum_{i\in[\censorshipstatebeta + 1:m]} \prior_i\Quant(\receiverU_i)
\end{align*}
Recall \Cref{prop:SISU threshold state monotonicity}
implies that $\censorshipstateInfty \ge \censorshipstatebeta$.
Hence,
\begin{align*}
    \frac{\Payoff[\noiseScale]{\optschemebeta}}{\Payoff[\noiseScale]{\optschemeInfty}}
    \leq
    \max\left\{
    \max_{i\in[\censorshipstatebeta]}
    \frac{\prior_i \Quant(\censorshipsignalbeta)}{\prior_i\Quant(0)},
    \max_{i\in[\censorshipstatebeta + 1:\censorshipstateInfty - 1]}
    \frac{\prior_i \Quant(\receiverU_i)}{\prior_i\Quant(0)},
    \max_{i\in[\censorshipstateInfty:m]}
    \frac{\prior_i \Quant(\receiverU_i)}{\prior_i\Quant(\receiverU_i)}
    \right\}
    =
    \frac{ \Quant(\censorshipsignalbeta)}{\Quant(0)}
    \overset{(a)}{\leq} 2
\end{align*}
where 
inequality (a) holds since $\Quant(0) = \sfrac{1}{2} \geq 
\sfrac{\Quant(\scaledNoiseDiff)}{2}$ for all $\scaledNoiseDiff\in(-\infty,\infty)$.
\end{proof}

The below result (its proof is deferred to 
\Cref{apx:proof 2 robust approx SISU lower bound}) 
shows the tightness 
of the robust-rationality approximation ratio established in 
\Cref{thm:2 robust approx SISU upperbound}.

\begin{proposition}
\label{prop:2 robust approx SISU lower bound}
In SISU environments, 
for any $\varepsilon > 0$,
there exists a problem instance 
such that the optimal censorship $\optschemeInfty$ 
(defined in \Cref{lem:opt fully rational})
for a fully rational receiver 
has rationality-robust approximation ratio
$\robustRatio(\optschemeInfty, [0,\infty)) \geq 2 - \varepsilon$.
\end{proposition}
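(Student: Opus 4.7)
The plan is to exhibit the simplest possible SISU instance at which the upper bound argument in \Cref{thm:2 robust approx SISU upperbound} is asymptotically tight. Specifically, I would take two states with uniform prior $\prior_1 = \prior_2 = \tfrac{1}{2}$, receiver utility differences $\receiverU_1 = -1$ and $\receiverU_2 = 1$, and (since we are in SISU) $\senderU_1 = \senderU_2 = 1$. Notably, a single instance will suffice independently of $\varepsilon$: the role of $\varepsilon$ will only be to pick a sufficiently large rationality level $\noiseScale \in [0,\infty)$ at which to evaluate the ratio, and then appeal to the $\max$ in the definition of $\robustRatio$.

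The first step is to identify $\optschemeInfty$. Because $\sum_i \prior_i \receiverU_i = 0$, \Cref{lem:opt fully rational} yields that $\optschemeInfty$ fully pools both states ($\censorshipprobInfty = 1$) with pooling signal $\censorshipsignalInfty = 0$. Hence, for every $\noiseScale \in [0,\infty)$, the denominator of the robust ratio is pinned down uniformly:
\[
\Payoff[\noiseScale]{\optschemeInfty} \;=\; \Quant(0) \;=\; \tfrac{1}{2}.
\]

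For the numerator, I would consider, for each $\noiseScale$, the one-parameter family of censorship signaling schemes $\signalscheme_t$ (indexed by a small $t > 0$) that pools state $1$ together with a fraction $\censorshipprob = (1-t)/(1+t)$ of state $2$ at the pooling signal $\censorshipsignal = -t$, while revealing the residual fraction of state $2$. A short direct computation shows that the pooled mass is $1/(1+t)$, the revealed mass is $t/(1+t)$, and therefore
\[
\Payoff[\noiseScale]{\signalscheme_t} \;=\; \frac{\Quant(-t)}{1+t} \;+\; \frac{t\,\Quant(1)}{1+t}.
\]
Since $\optschemebeta$ is optimal over all signaling schemes, $\Payoff[\noiseScale]{\optschemebeta} \geq \Payoff[\noiseScale]{\signalscheme_t}$, so the robust ratio at $\noiseScale$ is at least $2\,\Payoff[\noiseScale]{\signalscheme_t}$.

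The key step, and the one that really drives the argument, is calibrating $t$ against $\noiseScale$. The difficulty is that small $t$ keeps almost all mass on the pooling signal but leaves $\Quant(-t)$ close to $\tfrac{1}{2}$, while large $t$ drives $\Quant(-t) \to 1$ but leaks too much probability to the revealed signal. The calibration $t = \log(\noiseScale)/\noiseScale$ resolves this tension: we obtain $\Quant(-t) = \noiseScale/(\noiseScale+1) \to 1$, the revealed mass $t/(1+t) \to 0$, and $\Quant(1) = 1/(1+e^{\noiseScale})$ is negligible. Thus $\Payoff[\noiseScale]{\signalscheme_t} \to 1$ as $\noiseScale \to \infty$, so the robust ratio satisfies $\robustRatio(\optschemeInfty,[0,\infty)) \geq 2\,\Payoff[\noiseScale]{\signalscheme_t} \geq 2 - O(\log \noiseScale / \noiseScale)$. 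Choosing $\noiseScale$ large enough as a function of $\varepsilon$ then yields $\robustRatio \geq 2 - \varepsilon$, as required.
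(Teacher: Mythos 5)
Your proof is correct, and it takes a genuinely different route from the paper's. The paper uses a single rationality level $\noiseScale = 1$ and constructs a skewed binary instance whose parameters depend on $\varepsilon$ (specifically $\prior_2 = \frac{\varepsilon}{4-\varepsilon}$ and $\receiverU_1 = \log\frac{\varepsilon}{4-\varepsilon}$), so that the full-information revealing scheme already captures roughly $1 - \frac{\varepsilon}{2}$ against the pooled benchmark of $\frac{1}{2}$. You instead fix a single symmetric instance once and for all, and drive the ratio toward $2$ by varying $\noiseScale$: your calibration $t = \log(\noiseScale)/\noiseScale$ keeps the pooled mass near $1$ while pushing the pooling signal $-t$ just far enough into the concave region that $\Quant(-t) = \noiseScale/(\noiseScale+1) \to 1$. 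I verified the arithmetic: $\censorshipprob = \frac{1-t}{1+t}$, pooled mass $\frac{1}{1+t}$, revealed mass $\frac{t}{1+t}$, and the resulting payoff $\frac{1}{1+t}\cdot\frac{\noiseScale}{\noiseScale+1} + \frac{t}{1+t}\cdot\frac{1}{1+e^{\noiseScale}} \to 1$ all check out. Your approach is arguably cleaner conceptually (a single universal worst-case instance, with $\varepsilon$ absorbed into the choice of $\noiseScale$), whereas the paper's has the mild aesthetic advantage that the tightness already manifests at a fixed, moderate rationality level. Both are elementary and fully correct.
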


\newcommand{\optdirectInfy}{\widetilde{\signalscheme}^*}

We conclude this subsection by noting that
the robust signaling scheme $\optschemeInfty$ 
used in \Cref{thm:2 robust approx SISU upperbound}
is the optimal censorship
for a fully rational receiver.
However, as we show in  
\Cref{prop:unbounded robust direct approx SISU lower bound}
below
(its proof is straightforward and is deferred to 
\Cref{apx:unbounded robust proof direct approx SISU lower bound}),
the optimal direct 
signaling scheme $\optdirectInfy$ 
(defined in \Cref{lem:opt fully rational})
for a fully rational receiver cannot achieve any meaningful
rationality-robust approximation guarantee. 
This again mirrors the analogous separation results
on the censorship and direct signaling schemes we show in 
previous sections.
\begin{proposition}
\label{prop:unbounded robust direct approx SISU lower bound}
In SISU environments, 
there exists a problem instance 
such that the optimal direct signaling scheme 
$\optdirectInfy$ 
for a fully rational receiver 
has rationality-robust approximation ratio
$\robustRatio(\optdirectInfy, [0,\infty)) = \infty$.
\end{proposition}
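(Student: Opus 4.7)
The plan is to construct a one-parameter family of SISU instances and show that, by simultaneously shrinking a prior weight $\epsilon$ and enlarging the top utility value $N$, the rationality-robust ratio of $\optdirectInfy$ diverges. The driving intuition is that any direct signaling scheme must compress all low states into a single pooled signal $\scaledNoiseDiff\doubleprimed$, so when the low states span a wide utility range, $\scaledNoiseDiff\doubleprimed$ gets pushed far to the right and $\Quant(\scaledNoiseDiff\doubleprimed)$ is exponentially small for any fixed $\noiseScale>0$. In contrast, a censorship-style scheme (in particular, full-information revealing) can still extract substantial payoff by truthfully revealing the moderate low states. This mirrors the separation between censorship and direct schemes already seen in \Cref{sec:SISU direct}.

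Concretely, I take $m=4$ with $\receiverU=(-1,2,3,N)$, $\senderU_i\equiv 1$, and $(\prior_1,\prior_2,\prior_3,\prior_4)=(\epsilon,\epsilon,(1-2\epsilon)/2,(1-2\epsilon)/2)$. The first task is to apply \Cref{lem:opt fully rational} to identify $\optdirectInfy$ on this instance. Checking $i=3$ in the threshold condition gives $\sum_{j:\receiverU_j<3}\prior_j\receiverU_j=-\epsilon+2\epsilon=\epsilon>0$, which rules out $i=3$ and $i=4$ as threshold candidates, so $\censorshipstate=2$, $\highstates=\{1\}$, $\lowstates=\{3,4\}$; and solving $p\prior_2\receiverU_2+\prior_1\receiverU_1\le 0$ pins down $\censorshipprob=1/2$. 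With these parameters, $\optdirectInfy$ uses exactly two signals: $\scaledNoiseDiff\primed=0$ of total probability $3\epsilon/2$, and $\scaledNoiseDiff\doubleprimed=(\epsilon+(1-2\epsilon)(3+N)/2)/(1-3\epsilon/2)\ge N/4$ for small $\epsilon$, with total probability $1-3\epsilon/2$.

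Having pinned down the structure, for any fixed $\noiseScale>0$ I bound
\[
\Payoff[\noiseScale]{\optdirectInfy} \;=\; \tfrac{3\epsilon}{2}\,\Quant(0) + (1-\tfrac{3\epsilon}{2})\,\Quant(\scaledNoiseDiff\doubleprimed) \;\le\; \tfrac{3\epsilon}{4} + \exp(-\noiseScale N/4),
\]
so choosing $N \ge (4/\noiseScale)\log(1/\epsilon)$ yields $\Payoff[\noiseScale]{\optdirectInfy}=O(\epsilon)$. On the other hand, the full-information revealing scheme is always feasible, so
\[
\Payoff[\noiseScale]{\optschemebeta} \;\ge\; \prior_3\,\Quant(\receiverU_3) \;=\; \tfrac{1-2\epsilon}{2}\cdot\tfrac{1}{1+\exp(3\noiseScale)} \;=\; \Omega(1),
\]
and therefore the ratio at this $\noiseScale$ is $\Omega(1/\epsilon)$. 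Letting $\epsilon\to 0$ (with $N$ scaling as above) drives the rationality-robust approximation ratio $\robustRatio(\optdirectInfy,[0,\infty))$ to infinity, which is the desired conclusion.

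The only delicate step is confirming that \Cref{lem:opt fully rational} really yields the parameters above, in particular $\censorshipstate=2$ (rather than $3$ or $4$) and $\censorshipprob=1/2$; these reduce to short arithmetic checks on the specific $\{\receiverU_i\}$ and $\{\prior_i\}$. After that, both the upper bound on $\Payoff[\noiseScale]{\optdirectInfy}$ and the lower bound on $\Payoff[\noiseScale]{\optschemebeta}$ are direct calculations using the explicit logistic form of $\Quant$, together with the contrast between pooling (which kills payoff exponentially in $N$) and revealing (which retains a constant payoff from the moderate state $i=3$).
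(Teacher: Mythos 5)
Your proof is correct and uses essentially the same strategy as the paper's: pin down $\optdirectInfy$ via \Cref{lem:opt fully rational}, upper-bound its payoff, lower-bound the optimum by the full-information revealing payoff, and take a parameter limit under which the ratio diverges. The concrete difference is in how $\Quant$ at the low-state pooled signal $\scaledNoiseDiff\doubleprimed$ is driven to zero: the paper keeps a three-state instance with $\receiverU=(-0.01,0.01,3)$ fixed and sends $\noiseScale\to\infty$ (after $\varepsilon\to 0$), ultimately appealing to numerical verification, whereas you introduce a fourth state with a large utility $N$ coupled to $\epsilon$, keep $\noiseScale$ fixed, and obtain the divergence from explicit inequalities --- a cleaner route that avoids the paper's ``numerically, it can be verified'' step. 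One shared subtlety worth flagging: strictly speaking, both arguments produce a \emph{sequence} of instances along which $\robustRatio(\optdirectInfy,[0,\infty))\to\infty$, rather than a single instance with literally infinite robust ratio. For any fixed $\epsilon>0$ the ratio is a continuous function of $\noiseScale$ that equals $1$ at $\noiseScale=0$ and tends to $1$ as $\noiseScale\to\infty$ (since $\optdirectInfy$ is optimal for the fully rational receiver and the fully-rational optimum is strictly positive in both your construction and the paper's), so the supremum over $\noiseScale$ is finite. This imprecision is inherited from the paper's statement and proof and is not a defect unique to your argument.
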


\subsection{Rationality-Robust Signaling Schemes in SDSU Environments}
\label{sec:SDSU robust}
Unlike SISU environments where
the knowledge of the rationality level 
is unimportant up to a two factor (\Cref{thm:2 robust approx SISU upperbound}),
in this subsection, we first present the following negative result 
that without the knowledge of the rationality level,
there exists no signaling scheme with 
bounded rationality-robust approximation ratio,
even if the state space is binary (\Cref{thm:imposs robust approx SDSU}).
Nonetheless, we also provide a positive result 
for binary-state problem instances
under a reasonable condition 
of receiver's bounded rationality levels (\Cref{thm:SDSU binary robust}).

\begin{restatable}{theorem}{impossrobustapproxSDSU}
\label{thm:imposs robust approx SDSU}
In SDSU environments, 
there exists a problem instance (\Cref{example: SDSU rationality impossibility})
with binary state
such that
for any signaling scheme $\signalscheme$
and any $\noiseScaleLB \geq 0$,
the rationality-robust approximation ratio 
with respect to $\rationalityClass=[\noiseScaleLB,\infty)$
is unbounded,
i.e., 
$\robustRatio(\signalscheme, [\noiseScaleLB,\infty)) = \infty$.
\end{restatable}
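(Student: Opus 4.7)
The plan is to prove impossibility by constructing a binary-state SDSU instance together with a sequence of rationality levels for which no single signaling scheme can simultaneously approximate the $\noiseScale$-tailored optimum. I would take an instance with $\receiverU_1 < 0 < \receiverU_2$ and $\senderU_1,\senderU_2 > 0$ chosen (together with the prior) so that the $\noiseScale$-optimal pooling signal $\scaledNoiseDiff^*(\noiseScale)$ drifts meaningfully as $\noiseScale$ ranges over $[\noiseScaleLB,\infty)$ while $\Payoff[\noiseScale]{\texttt{OPT}(\noiseScale)}$ stays bounded below. By \Cref{prop:opt lp} and the binary-state structure (cf.\ the reduction used in the proof of \Cref{coro:SDSU censorship m approx}), every candidate signaling scheme $\signalscheme$ is equivalent to one supported on at most two signals $\scaledNoiseDiff_L \le 0 \le \scaledNoiseDiff_H$, giving a low-dimensional parametric family to search over.

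Next, for any threshold $\noiseScaleLB$, I would pick a finite grid $\noiseScaleLB \le \noiseScale_1 < \noiseScale_2 < \cdots < \noiseScale_L$ and set up the factor-revealing program
\begin{align*}
\min_{\signalscheme,\,\robustRatio} ~ \robustRatio \qquad \text{s.t.} \qquad \robustRatio \cdot \Payoff[\noiseScale_\ell]{\signalscheme} ~\ge~ \Payoff[\noiseScale_\ell]{\texttt{OPT}(\noiseScale_\ell)}, \quad \forall \ell \in [L],
\end{align*}
whose optimal value lower-bounds $\robustRatio(\signalscheme', [\noiseScaleLB,\infty))$ for every feasible $\signalscheme'$. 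Because $\Payoff[\noiseScale_\ell]{\signalscheme}$ is a weighted sum of $\Quant^{(\noiseScale_\ell)}(\scaledNoiseDiff)$ values over the (at most two) signals of $\signalscheme$, and the right-hand side is a closed-form constant obtained from the binary-state optimum, this becomes a tractable LP in the scheme's parameters for each fixed $\robustRatio$. Its dual carries one nonnegative multiplier $\mu_\ell$ per grid point and a handful of per-signal constraints.

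The core of the argument is to exhibit an explicit dual-feasible $(\mu_\ell)$ whose objective diverges as $L \to \infty$. The lever is the tangency condition analogous to \dfone\ in \ref{eq:SISU opt condition}: different $\noiseScale$'s pick out different tangent points on the curve $\Quant^{(\noiseScale)}$, so by the curvature of $\Quant^{(\noiseScale)}$ any two-signal scheme can be near-optimal for only a bounded number of grid points. For the remaining grid points the ratio $\Payoff[\noiseScale_\ell]{\texttt{OPT}(\noiseScale_\ell)}/\Payoff[\noiseScale_\ell]{\signalscheme}$ grows with the mismatch between the scheme's signal values and the $\noiseScale_\ell$-tangency point; loading the dual weight on those ``bad'' indices forces $\robustRatio$ to grow without bound. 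Sending $L \to \infty$ then gives $\robustRatio(\signalscheme,[\noiseScaleLB,\infty))=\infty$ for every $\signalscheme$.

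The main obstacle I anticipate is the quantitative estimate that turns the qualitative signal-mismatch into divergence: I need $\Payoff[\noiseScale]{\texttt{OPT}(\noiseScale)}$ to stay uniformly bounded away from zero for $\noiseScale \ge \noiseScaleLB$, and I need a mismatched scheme to pay a multiplicative penalty that grows fast enough as $\noiseScale$ moves away from the scheme's ``native'' rationality level. The concrete choice of $\{\receiverU_i,\senderU_i,\prior_i\}$ and the grid $\{\noiseScale_\ell\}$ must be engineered so that the asymptotics of $\Quant^{(\noiseScale)}$ align cleanly with an explicit dual certificate — this is where the real calculation lies, and also why a carefully chosen instance (rather than a purely abstract argument) is essential.
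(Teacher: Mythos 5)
Your high-level strategy --- a binary instance, a finite grid of rationality levels, a factor-revealing LP, and a dual certificate whose objective blows up as the grid is refined --- is exactly the approach the paper takes. However, the concrete realization in your proposal has two flaws that would prevent it from closing, and one of them is fatal.

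\emph{The instance you propose cannot witness impossibility.} You specify $\receiverU_1 < 0 < \receiverU_2$ together with $\senderU_1, \senderU_2 > 0$, and you ask for $\Payoff[\noiseScale]{\texttt{OPT}(\noiseScale)}$ to ``stay uniformly bounded away from zero.'' This combination is self-defeating: with $\receiverU_1 < 0$ we have $\Quant^{(\noiseScale)}(\receiverU_1) \ge \tfrac{1}{2}$ for every $\noiseScale$, so the full-information revealing scheme $\signalscheme^{\text{FI}}$ already guarantees $\Payoff[\noiseScale]{\signalscheme^{\text{FI}}} \ge \prior_1\senderU_1 \Quant^{(\noiseScale)}(\receiverU_1) \ge \prior_1\senderU_1/2 > 0$ uniformly, while $\Payoff[\noiseScale]{\texttt{OPT}(\noiseScale)}$ is trivially upper-bounded by $\prior_1\senderU_1 + \prior_2\senderU_2$. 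The ratio is then uniformly bounded, so $\robustRatio(\signalscheme^{\text{FI}}, [\noiseScaleLB,\infty)) < \infty$, contradicting the theorem. The paper instead chooses $0 < \receiverU_1 < \receiverU_2$ with $\senderU_1 = 0$ (Example~\ref{example: SDSU rationality impossibility}): the receiver prefers action $0$ in both states, the sender earns only from state $2$, and consequently $\Payoff[\noiseScale]{\texttt{OPT}(\noiseScale)} = \Theta(1/(\noiseScale\exp(\noiseScale)))$ decays to zero. The impossibility comes precisely from the fact that extracting even a constant fraction of this decaying optimum requires pooling near $\receiverU_1 + \Theta(1/\noiseScale)$, a target that drifts with $\noiseScale$; a fixed scheme, when evaluated under a much larger $\noiseScale'$, loses an extra exponential factor. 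The quantity you want to stabilize is the one that must vanish.

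\emph{The reduction to two-signal schemes is unjustified.} You appeal to the binary-state censorship result (\Cref{lem:SDSU binary opt}), but that lemma optimizes against a single fixed $\noiseScale$. A scheme that must be simultaneously competitive against $L$ different rationality levels lives in the intersection of $L$ linear constraints plus Bayes-plausibility, and Carath\'eodory-type arguments only give a support bound that grows with $L$, not a constant bound of two. The paper's factor-revealing primal keeps $\{\signalscheme_i(\scaledNoiseDiff)\}$ as a distribution over the full continuum of signals and never needs such a reduction; the dual analysis works uniformly over $\scaledNoiseDiff$. Finally, the geometric spacing $\noiseScale_\ell = L^\ell$ is not a cosmetic choice --- it is what makes each grid point's ``sweet spot'' sufficiently separated that any fixed scheme can be loaded with nontrivial dual mass at all but $O(1)$ indices, which is how the dual objective is driven down to $O(1/L)$ and hence $\robustRatio$ up to $\Omega(L)$.
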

\xhdr{Proof overview of \Cref{thm:imposs robust approx SDSU}}
The formal proof of \Cref{thm:imposs robust approx SDSU} is 
deferred to \Cref{apx:imposs robust proof CHECK}. 
Here we sketch the high-level idea 
behind the proof.
Our proof proceeds with two steps as follows.
In the first step, we construct a binary-state problem instance 
in \Cref{example: SDSU rationality impossibility}.
We further provide a 
finite set $\rationalityClass\triangleq 
\{\noiseScale_\ell\}_{\ell\in[L]}$
where\footnote{Here $L$ is a sufficiently large constant, which goes to infinite in the end of the analysis.}
$\noiseScale_\ell \triangleq L^\ell$.
Recall that when the state space is binary,
the optimal signaling scheme
is a censorship signaling scheme (\Cref{lem:SDSU binary opt}).
The construction in \Cref{example: SDSU rationality impossibility}
ensures that
the contribution in the optimal (censorship) 
signaling scheme 
mainly comes from the pooling signal $\censorshipsignal$.
Similar to the analysis in \Cref{thm:monotone partition lower bound},
the value of $\censorshipsignal$ 
is quite sensitive to the rationality level $\noiseScale_\ell$.
As a consequence,
for any $\noiseScale_1, \noiseScale_2\in\rationalityClass$
such that $\noiseScale_1 \not=\noiseScale_2$,
it satisfies that 
$\Payoff[\noiseScale_2]{\texttt{OPT}(\noiseScale_1)} \ll 
\Payoff[\noiseScale_2]{\texttt{OPT}(\noiseScale_2)}$, 
which says that the optimal signaling scheme under a specific rationality level must have a very bad performance 
if sender implements such optimal signaling scheme 
with a receiver who has a different bounded rationality level.
\vspace{-6pt}
\begin{example}
\label{example: SDSU rationality impossibility}
Consider the following problem instance
with binary state (i.e., $m = 2$),
\begin{align*}
    \prior_1 =  \frac{1}{2}, \qquad 
    \prior_2 =  \frac{1}{2}, \qquad 
    \receiverU_1 = 1, \qquad
    \receiverU_2 = 2, \qquad
    \senderU_1 = 0, \qquad
    \senderU_2 = 1.
\end{align*} 
\end{example}
In the second step, given the above constructed 
binary-state problem instance
and the set $\rationalityClass$ of rationality levels,
we introduce the following \emph{factor-revealing program}
to lower bound 
the optimal rationality-robust approximation ratio,
i.e., $\min_{\signalscheme}\robustRatio(\signalscheme,\rationalityClass)$.
\begin{align*}
    &\begin{array}{lll}
     \min\limits_{\boldsymbol{\signalscheme} \geq \zerobf, \robustRatio \geq 0} & \robustRatio
       & \text{s.t.} 
     \\
       & 
       \prior_1\signalscheme_1(\scaledNoiseDiff)\cdot \left(\scaledNoiseDiff - \receiverU_1\right)
        +\prior_2 \signalscheme_2(\scaledNoiseDiff)\cdot \left(\scaledNoiseDiff - \receiverU_2\right) \ge 0
       & \scaledNoiseDiff\in(-\infty, \infty)
     \\
       & 
       \displaystyle\int_{-\infty}^{\infty} \signalscheme_i(\scaledNoiseDiff)d\scaledNoiseDiff = 1
        &  i\in[2]
     \\
       & 
       \signalscheme_i(\scaledNoiseDiff) \ge 0
         & \scaledNoiseDiff\in(-\infty, \infty), ~  i\in[2]
     \\
       & 
    \Payoff[\noiseScale_\ell]{\signalscheme}
        \ge \frac{1}{\robustRatio} \frac{1}{\noiseScale_\ell \exp(\noiseScale_\ell)}, 
          &  \ell\in[L]
    \end{array}
\end{align*}
In this program, the variables $\boldsymbol\signalscheme$
can be interpreted as a signaling scheme,
and $\robustRatio$ can be interpreted as its 
rationality-robust approximation ratio.
In particular, the last constraint requires 
the expected sender utility of signaling scheme $\signalscheme$
for a receiver with bounded rationality level $\noiseScale_\ell$
is at least a $\robustRatio$-approximation to 
$\sfrac{1}{\noiseScale_\ell\exp(\noiseScale_\ell)}$,
which, as we show in the proof, is a lower bound 
of the optimal expected sender utility $\Payoff[\noiseScale_\ell]{\optschemebetaell}$.
Notably, this program
is essentially a linear program.
Hence, by 
explicitly constructing a dual assignment in 
its dual program and then invoking the weak duality,
we can lower bound its optimal objective value by $\Omega(L)$.
Finally, setting $L$ to be infinite finishes the proof of \Cref{thm:imposs robust approx SDSU}.

\xhdr{Positive result for binary-state instances in SDSU environments}
\Cref{thm:imposs robust approx SDSU}
highlights the importance of the knowledge of the receiver's bounded rationality level in SDSU environments.
Namely, even there are {\em only} two states, if the sender
does not have any knowledge about receiver's bounded rationality level, 
then it is impossible to hope for a robust signaling scheme 
that would have bounded rationality-robust approximation ratio.

In \Cref{apx:SDSU binary robust proof}, we 
present a positive result for problem instances 
with binary states 
(see \Cref{thm:SDSU binary robust} and its the proof in \Cref{apx:SDSU binary robust proof}), which shows that 
if the sender learns
the receiver's bounded rationality level up 
to a multiplicative error $K \ge 1$, i.e., $\rationalityClass = [\noiseScaleLB,K\noiseScaleLB]$,
and $\noiseScaleLB$ is larger than an instance-dependent bound,\footnote{Recall our 
negative result (\Cref{thm:imposs robust approx SDSU})
shows that 
there exists no signaling scheme with finite rationality-robust approximation ratio with respect to $\rationalityClass = [\noiseScaleLB,\infty)$ for any $\noiseScaleLB\geq 0$.
It remains as an open question whether 
similar rationality-robust signaling schemes 
exists for small $\noiseScaleLB$.
}
then there exists (censorship) signaling schemes
whose rationality-robust approximation ratio 
depends linearly on
multiplicative error $K$.

\section{Conclusions and Future Important Directions}
\label{subsec:future}
In this work, we develop a theory of rationality-robust information design in the canonical setting of Bayesian persuasion with binary receiver action. 
We first identify conditions under which the optimal signaling scheme structure for a fully rational receiver remains optimal or approximately optimal for a boundedly rational receiver. 
We then study the existence and construction of robust signaling schemes when there is uncertainty about the receiver's bounded rationality level.
Below we highlight the following natural and important directions 
of future research.

\yfr{The most general direction from this paper is to develop a theory 
of information design or mechanism design 
for agents with bounded rationality. 
Most existing results on this direction
restrict attention to specific problems
(see \Cref{apx:further related work} for more details).
An interesting question is whether there exist
conditions under which the optimal/approximately optimal 
results for fully rational agents extend
to boundedly rational agents under a broad class of 
information/mechanism design problems.
For agents with bounded rationality, 
the standard revelation principle fails,
and it is no longer without loss of generality
to impose incentive compatibility. 
In this sense, the bounded rationality 
also provides a motivation and new perspective on 
the recent literature on non-truthful mechanism design
\citep[e.g.,][]{FH-18,CTW-19,DFLSV-20,AKSW-22}.}

The bounded rationality specifies 
how agents select their actions.
\wtrevision{
Therefore, similar to our findings, mechanisms that are equivalent under fully rationality
(e.g., second-price auction and English auction)
may lead to different outcomes under bounded rationality. 
Exploring our first question in mechanism design context may provide an alternative justification on 
practical preference of certain mechanisms format \citep[cf.][]{AL-20}.
For information design problems, action sets of agents are 
given exogenously.
In contrast, for mechanism design problems, 
action sets for agents are usually designed endogenously.
Thus, it is also interesting 
to systematically develop theory to understand how to 
design action sets (a.k.a., mechanism formats) and preference over 
classic format.
}

For our Bayesian persuasion problem, 
there are also several interesting open questions.
In SISU environments, 
a natural question is whether there exists 
a signaling scheme that can beat 
the 2 rationality-robust approximation ratio
achieved by the optimal censorship for 
the fully rational receiver. 
In SDSU environments, one immediate question is whether 
there exists a robust signaling scheme for problem instances with multiple states under a reasonable boundedness condition 
on the receiver's bounded rationality levels.
More importantly, what is the fine-grained tradeoff 
between the knowledge on receiver's behavior 
and the achievable
rationality-robustness of signaling schemes?
Conceptually, these questions
share similar flavor with
the prior-independent mechanism design framework
\citep[e.g.,][]{DRY-15,FILS-15,AB-20,HJL-20}.

Finally, another direction of interest is to characterize the computational complexity of computing an optimal (or approximately optimal) signaling scheme in different environments. Note that when the receiver is fully rational, the optimal signaling scheme can be computed in polynomial time. 
When the receiver is boundedly rational, in \Cref{apx:complexity}, we present some preliminary results on characterizing the complexity of computing the (approximately) optimal signaling scheme in both SISU/SDSU environments. 
Whether our results can be strengthened is an interesting and important future direction. 

\section*{Acknowledgement}
We thank the anonymous reviewers for the helpful comments. 
This work is supported in part by the Office of Naval Research under Grant N00014-20-1-2240 and a J.P. Morgan Faculty Research Award.

\bibliography{mybib}

\appendix

\section{Further Related Work}
\label{apx:further related work}

Our work 
on relaxing rationality assumption in information design 
is conceptually similar to
a large literature in mechanism design without/relaxing
rationality assumption. 
For example, \citet{BMSW-18} 
and follow-up works 
\citet{DSS-19a,DSS-19b}
study revenue-maximization 
for a single buyer who uses no-regret algorithm in a repeated game with a seller.
\citet{CHJ-20}
study a repeated Stackelberg game 
where both players use no-regret learning algorithm.
Behavioral mechanism design \citep{EG-15}
study how departures from standard economic models of 
agent behavior affect mechanism design.
\citet{CGMP-18} study the revenue-maximization when the buyer's 
behavioral model is beyond expected utility theory
and characterize mechanism that is robust to the buyer's risk attitude.
Other related works in mechanism design include
\cite{FHH-13,CDKS-22,DP-12}.

Our work relates to a rich literature on information design. 
Since the seminal work \citep{KG-11}
that setup the Bayesian persuasion problem that studies the game 
on strategic communication between a sender and a receiver, 
the framework has inspired an active line of
research in information design games 
\citep[e.g., see the surveys by][]{D-17, Kamenica-19,BM-19}.
In addition to applications mentioned in introduction,
Bayesian persuasion has also been studied 
in other different applications like 
online ad auction 
\citep{EFGPT-14,CDHW-20, AB-19, BHMSW-22}, 
recommendation \cite{MSSW-22,FTX-22},
and voting \citep{AC-16,AC-16b}. 
Our work extends this line of research by
relaxing the standard rationality assumption.  
In particular, we consider 
a boundedly rational receiver by modeling 
her as a (logit) quantal response player, while 
standard framework usually assumes that the receiver
is fully rational, i.e., an expected utility maximizer. 
Relaxing rationality assumptions has been studied
in other information design literature. 
For example, \citet{CZ-22} study how receiver's 
mistakes in probabilistic inference impact optimal persuasion, 
\citet{AIL-20} study a persuasion problem where 
the receiver's utility may be nonlinear in her belief, 
and \citet{TH-21,YTNH-23} run behavioral experiments and relax 
the Bayesian rational assumption in a simple persuasion setting.
\citet{CCMG-20,CMCG-21} also relax traditional assumptions 
in an online setting.
Our work also conceptually relates to 
recent papers that focus on settings 
where the receiver has limited attention to 
process and utilize the information \citep{LMW-20,BS-20}.
Since it has been shown that 
the optimal stochastic response of a
rationally inattentive receiver takes a ``logit'' form~\citep{MM-15},
similar to our results in \Cref{sec:state independent}, 
\citet{BS-20} show that the optimal information policy 
in SISU environments for inattentive receivers has a censorship structure. 
Our work differs from their work as 
we consider a more general sender payoff structure
while the payoff in \cite{BS-20} depends 
linearly on the state. 
Moreover, 
in addition to characterizing the optimal information policy, we 
also study the design of approximately optimal and
rationality-robust information polices 
for boundedly rational receivers 
in both SISU and SDSU environments.
We also mention that
our persuasion problem for the boundedly rational receiver 
is equivalent to a public persuasion
problem \citep{DX-17,X-20} for a continuum population 
of {\em rational} receivers with a specific utility structure
(see \Cref{sec:reformulate} for more detailed discussions).

Our work has utilized and compared with 
censorship and direct signaling schemes.
As a general class of signaling schemes, 
censorship has been studied in the recent literature. 
\citet{KMZ-2022}
consider the setting where the sender's utility depends only on the expected state.
They show that a censorship is optimal 
if and only if the sender’s marginal utility is quasi-concave.
\citet{K-18} and \citet{AC-16} provide sufficient conditions
for the optimality of censorship in different contexts.
Our paper departs from these works by not only considering
the optimality of censorship signaling schemes 
under a different context
(i.e., with boundedly rational receiver), but also
studying its approximation guarantees when it is not optimal.
Direct signaling scheme has also been studied in 
persuasion setting with binary action
\citep{DX-17,BB-17,X-20,FTX-22}.
On the other hand, signaling schemes 
like censorship in finite state space
use at most $m$ signals, 
and direct signaling schemes use at most $2$ signals, 
\citet{GHHS-22} analyze optimal persuasion 
subject to limited signals constraint.
However, neither of the two specific classes of problems 
they consider 
-- symmetric instances and independent instances -- 
is applicable to our problem, and thus cannot inform
any approximation guarantees in our setting.
Other related works on persuasion with 
limited communication constraint include 
\cite{DKQ-16, LT-19, AT-19}.

\section{Motivating Example and Extensions}
\label{apx:discussion}

\subsection{Motivating Examples in \Cref{sec:intro}}
\label{sec:motivating example}
In the example of product advertising \citep{EFGPT-14,AB-19}, 
a grocery store (i.e., sender), 
who observes the true product quality (i.e., state)
with exogenous prices,
performs advertising (i.e., signaling scheme) 
to a consumer (i.e., receiver)
who makes binary purchasing decisions.
In recommendation letter \citep{D-17},
an advisor, who observes the 
true ability of students,
writes recommendation letter 
to a recruiter
who makes binary hiring decision.
In short video recommendation \citep{MSSW-22,FTX-22}, 
a short video platform (e.g., TikTok, Reels)
who observes the content of short videos,
makes recommendation to a user
who decide to either watch or skip the video.
In targeting in sponsored search \citep{BBX-18,BHMSW-22}, 
a search engine (e.g., Bing, Google)
who observes the attribute of an impression (i.e., 
a match between advertiser and user),
does targeting to an advertiser 
who decides bid or not bid for this impression.

In the examples of product advertising and recommendation letter, when the grocery store only cares about whether the buyer buys the product,
and the advisor only cares whether the student is hired, the sender's utility is independent of the realized state. Under these scenarios, both examples can be formulated as SISU environments in our work.
In the examples of short video recommendation and 
targeting in sponsored search,
the sender's utility could depend on the realized state.
For example, the short videos could
be sponsored by some companies, and these sponsored videos might bring different revenue to the platform if the user 
chooses to watch the videos.
Similarly, in targeting of sponsored search, 
different impressions could lead to different click-through rates,
the revenue will be generated to the the search engine
if the displayed advertising is clicked. 
Under these scenarios, both examples can be formulated as SDSU environments in our work.

\subsection{Extensions on without Assuming $\senderU_i(1) \geq \senderU_i(0)$}
\label{sec:assump favorable action}
By the definition of SISU environments,
we would like to first note that this assumption (i.e., $\senderU_i(1) \geq \senderU_i(0)$ for all states $i\in[m]$) trivially holds in SISU environments. 
In SDSU environments, all our lower bound results (including the impossibility result in rationality-robust information design) also hold without this assumption. 
The preliminary positive result \Cref{thm:SDSU binary robust} also holds via a similar duality argument.   
It would be an interesting future direction to explore whether 
\Cref{coro:SDSU censorship m approx} still holds without this assumption.

\subsection{Extensions to General Quantal Response Curve $\Quant$}
\label{sec:general W curve}
Our results in \Cref{sec:state independent}
and \Cref{sec:state dependent} on characterizing the
optimal signaling schemes can be readily extended
to a more general quantal response behavior $\Quant$. 
For example, the characterization on the optimality 
of censorship signaling scheme for SISU environments
(i.e., \Cref{thm:SISU opt}), the structure characterization
of the optimal signaling scheme for SDSU environments
(i.e., \Cref{thm:SDSU opt})
hold as long as 
the function $\Quant$ is S-shaped.


\subsection{Reinterpretation via Public Persuasion}
\label{sec:reformulate}
One explanation of a quantal response receiver is that 
she faces a 
action-specific random shock 
$\{\noise(\action)\}_{\action\in\cA}$ 
when she is making the decision
\citep[see][]{rus-87,MP-95}.
In particular, given posterior belief $\posterior\in\stochastic([m])$,
the receiver takes the best action $\action^*$
which maximizes her expected utility (after 
the normalization by 
the bounded rationality level $\noiseScale$) plus the action-specific random shock,
i.e.,
    $\action^* = \displaystyle \argmax\nolimits_{\action\in\cA}
    \noiseScale\cdot \receiverU(\action\mid\posterior)
    +
    \noise(\action)$.
Under the standard assumption that 
the action-specific random shock 
$\{\noise(\action)\}_{\action\in\cA}$ 
is drawn i.i.d.\ from 
the Type I extreme value distribution,\footnote{The 
cumulative function of the Type I extreme value distribution
is $G(\noise) = \exp(-\exp(-\noise))$.}
the probability that action $\action\in\cA$ is selected
over the randomness of 
$\{\noise(\action)\}_{\action\in\cA}$ 
is exactly 
${\exp(\noiseScale\cdot \receiverU(\action\mid\posterior))}
/{\left(
    \exp(\noiseScale\cdot \receiverU(0\mid\posterior)) + 
    \exp(\noiseScale\cdot \receiverU(1\mid\posterior))
    \right)}$.

Recall $\receiverU_i\triangleq \receiverU_i(0) - 
\receiverU_i(1)$.
Let 
$\scaledNoiseDiff\triangleq 
\sfrac{(\noise(1) -\noise(0))}{\noiseScale}$.
By definition, the cumulative function and density function 
of random variable $\scaledNoiseDiff$
is $F(\scaledNoiseDiff) \triangleq
\sfrac{\exp(\noiseScale\scaledNoiseDiff)}{
(1 + 
\exp(\noiseScale\scaledNoiseDiff))}$
and 
$f(\scaledNoiseDiff) \triangleq
\sfrac{\noiseScale\exp(\noiseScale\scaledNoiseDiff)}{
(1 + 
\exp(\noiseScale\scaledNoiseDiff))^2
}$,
respectively.
Given posterior belief $\posterior$,
the receiver with the action-specific random shock 
$\{\noise(\action)\}_{\action\in\cA}$ 
takes action 1 if and only if 
    $\sum_{i\in[m]} \posterior_i\cdot \left(
    \scaledNoiseDiff - 
    \receiverU_i 
    \right) \geq 0$.
Thus, our problem can 
be interpreted as the public persuasion problem 
\citep{X-20,DX-17}
for a continuum population of rational receivers.
Specifically, 
$f(\scaledNoiseDiff)$ fraction of receiver population 
is associated with type $\scaledNoiseDiff\in(-\infty,\infty)$,
who has 
utility $\scaledNoiseDiff - \receiverU_i$
for action 1 and utility zero for action 0
for each state $i\in[m]$.

The linear program~\ref{eq:opt lp}
can then be interpreted for 
the aforementioned public persuasion problem.
Specifically, variables $\{\signalscheme_i(\scaledNoiseDiff)\}$
specify a public signaling scheme where
each variable $\signalscheme_i(\scaledNoiseDiff)$
corresponds to the probability that 
receivers with type greater or equal to $\scaledNoiseDiff$
take action 1 
while receivers with type less than $\scaledNoiseDiff$
take action 0. The first (resp.\ second) constraint
in \ref{eq:opt lp}
guarantees the persuasiveness (resp.\ feasibility)
of the public signaling scheme.

\ifArxiv
\else
\section{Related Work}
\label{subsec:related}

\section{Future Important Directions}
\label{subsec:future}

\fi

\section{Omitted Proofs in Section \ref{sec:prelim}}
\label{apx:prelim proof}

In this section, we 
present the omitted proof
of \Cref{prop:opt lp}
in \Cref{sec:prelim}.
\optimallp*
\begin{proof}
Fix an arbitrary feasible solution 
$\{\signalscheme_i(\scaledNoiseDiff)\}$
in program~\ref{eq:opt lp}.
We construct a signaling scheme $\signalscheme\primed$ 
as follows.\footnote{Here we use the superscript $\dagger$
to denote the constructed signaling scheme.}
Let the signal space $\signalspace\primed \gets  
\{\scaledNoiseDiff:\exists i\in[m],
\signalscheme_i(\scaledNoiseDiff) > 0\}$.
For each realized state $i$, let $\signalscheme\primed_i(\scaledNoiseDiff)
\gets \signalscheme_i(\scaledNoiseDiff)$ 
for each $\scaledNoiseDiff\in \signalspace\primed$.
Due to the second constraint 
in program~\ref{eq:opt lp},
the constructed signaling scheme 
$\signalscheme\primed$ is valid.
When signal $\scaledNoiseDiff\in\signalspace\primed$
is realized, the posterior belief 
$\posterior_i\primed(\scaledNoiseDiff)$
equals 
$\frac{
\prior_i\signalscheme_i(\scaledNoiseDiff)}
{\sum_{j\in[m]}\prior_j \signalscheme_j(\scaledNoiseDiff)}$.
Due to the first constraint 
in program~\ref{eq:opt lp},
we know that 
$\sum_{i\in[m]}\posterior_i\primed(\scaledNoiseDiff)\cdot 
(\receiverU_i - \scaledNoiseDiff) = 0$,
which implies $\scaledNoiseDiff = 
\sum_{i\in[m]}
\posterior_i\primed(\scaledNoiseDiff)\cdot 
\receiverU_i$.
Thus,
given realized signal $\scaledNoiseDiff$,
the receiver takes action 1 with probability 
$
\Quant(\scaledNoiseDiff)$.

So far, we have shown that the sender's optimal expected utility
is weakly higher than the optimal objective value of program~\ref{eq:opt lp}.
We finish our proof by converting 
the optimal signaling scheme $\signalscheme^*$
(with signal space $\signalspace^*$)
into a feasible solution 
$\{\signalscheme_i\doubleprimed(\scaledNoiseDiff)\}$
of program~\ref{eq:opt lp}
whose objective value equals to the sender's optimal expected utility 
in $\signalscheme^*$.\footnote{Here we use 
the superscript $\ddagger$
to denote the constructed feasible solution.}
The construction works as follow.
First, we initialize $\signalscheme\doubleprimed_i(\scaledNoiseDiff)\gets 0$
for all $i\in[m]$, $\scaledNoiseDiff\in(-\infty,\infty)$.
Next, we enumerate each signal $\signal \in \signalspace^*$,
let $\posterior^*(\signal)$ be the induced posterior belief and 
$\scaledNoiseDiff \triangleq 
\sum_{i\in[m]}\posterior^*_i(\signal)\receiverU_i$.
By definition, given posterior belief $\posterior^*(\signal)$,
the boundedly rational receiver takes action 1 
with probability $\Quant(\scaledNoiseDiff)$.
We update $\signalscheme\doubleprimed_i(\scaledNoiseDiff) \gets 
\signalscheme\doubleprimed_i(\scaledNoiseDiff) + \signalscheme^*_i(\signal)
$
for every state $i\in[m]$.
After enumerating every signal $\signal \in\signalspace^*$,
it can be verified that the constructed solution 
$\signalscheme\doubleprimed$ is feasible 
and its objective value equals the sender's expected utility.
\end{proof}

\section{Omitted Proofs in Section \ref{sec:state independent}}
\label{apx:SISU proof}

In this section,
we present the omitted proofs in \Cref{sec:state independent}.

\ifEC
\subsection{Interpreting 
Program~\ref{eq:SISU opt condition}
for Optimal Censorship of Fully Rational Receiver.}
\label{apx:feasibility for fully rational}
The feasibility program~\ref{eq:SISU opt condition}
recovers the structure of the optimal censorship
for a fully rational receiver
in SISU environments.
For a fully rational receiver 
(whose bounded rationality level $\noiseScale=\infty$),
function $\Quant(\cdot)$ becomes $\Quant(x) = \indicator{x \leq 0}$.
In this case, 
there is no longer a bijection between 
$\censorshipsignalInverse\in[0,\infty)$
and $\censorshipsignal\in(-\infty,0]$
satisfying constraint \dfone.
Instead, the feasible solutions of constraint
\dfone\ admit one of the two forms:
either (i) $\{\censorshipsignalInverse \in [0,\infty), \censorshipsignal = 0\}$;
or (ii) $\{\censorshipsignalInverse = \infty, 
\censorshipsignal\in(-\infty, 0]\}$.
Note that (i) $\{\censorshipsignalInverse \in [0,\infty), \censorshipsignal = 0\}$
corresponds to instances where $\sum_{i\in[m]}\prior_i\receiverU_i < 0$ and 
thus the optimal censorship 
in \Cref{lem:opt fully rational}
selects the threshold state and 
the threshold state probability such that 
$\censorshipsignal=0$, i.e.,
the fully rational receiver is indifferent between
action 0 and action 1 
when the pooling
signal $\censorshipsignal$ is realized.
On the other side, 
(ii) $\{\censorshipsignalInverse = \infty, 
\censorshipsignal\in(-\infty, 0]\}$
corresponds to instances where $\sum_{i\in[m]}\prior_i\receiverU_i \geq 0$ and 
thus the optimal censorship
in \Cref{lem:opt fully rational}
sets the threshold state $\censorshipstate = \argmax_{i}\{
\receiverU_i:\receiverU_i\leq \censorshipsignalInverse\} = m$,
i.e.,
pools all state together
and reveals no information.
\fi

\subsection{Omitted Proof of 
\texorpdfstring{\Cref{prop:SISU threshold state monotonicity}}{}}
\label{apx:SISU proof SISU threshold state monotonicity}
Below we present the omitted proof of 
\Cref{prop:SISU threshold state monotonicity}.
To simplify the analysis, we first introduce the following definition.
For any $\scaledNoiseDiff\in[0, +\infty)$, we define
$\tangentIntersect(\scaledNoiseDiff)\in(- \infty, 0]$ such that 
\begin{align}
    \label{eq: tangent line}
    W'(\tangentIntersect(\scaledNoiseDiff)) 
    = \frac{ W(\scaledNoiseDiff)  - W(\tangentIntersect(\scaledNoiseDiff))}{\scaledNoiseDiff - \tangentIntersect(\scaledNoiseDiff)}~.
\end{align}
Clearly, we have $\censorshipsignal = \tangentIntersect(\censorshipsignalInverse)$
where $\censorshipsignal$ and $\censorshipsignalInverse$ are defined
in \Cref{thm:opt state independent}.
By the curvature of the function $W$, namely, $W$ is 
concave over $(-\infty, 0]$ and convex over $[0, +\infty)$, 
we have the following property about 
$\tangentIntersect(\cdot)$:
\begin{lemma}
\label{lem:kappa property}
$\tangentIntersect(\cdot)$ is a bijection
function from $[0, +\infty)$ to $ (- \infty, 0]$,
i.e, for any $\scaledNoiseDiff \in[0, +\infty)$, 
there exists a unique $\tangentIntersect(\scaledNoiseDiff) \in (- \infty, 0]$ 
that \eqref{eq: tangent line} holds.
Moreover, $\tangentIntersect(\cdot)$ is decreasing 
as $\scaledNoiseDiff \in[0, +\infty)$ increases. 
\end{lemma}
\begin{proof}
Recall that $W(x) = \frac{1}{1+\exp(\noiseScale x)}$,
and $W'(x)
= -\noiseScale \cdot \frac{\exp(\noiseScale x)}{(1+\exp(\noiseScale x))^2}$,
from \eqref{eq: tangent line}, for a fixed $\scaledNoiseDiff \ge 0$,
$\tangentIntersect(\scaledNoiseDiff)$
is the root of the following function
\begin{align*}
    f(x, \scaledNoiseDiff) \triangleq 
    -\noiseScale \cdot \frac{\exp(\noiseScale x)}{(1+\exp(\noiseScale x))^2} \cdot(x - \scaledNoiseDiff) - \frac{1}{1+\exp(\noiseScale x)} + \frac{1}{1+\exp(\noiseScale \scaledNoiseDiff)}~.
\end{align*}
Inspecting its first-order partial derivatives, we can see that
$\frac{f(x, \scaledNoiseDiff)}{\partial x}
> 0, \forall x < 0; 
\left.\frac{f(x, \scaledNoiseDiff)}{\partial \scaledNoiseDiff}\right|_{x = \tangentIntersect(\scaledNoiseDiff)} > 0$.
As a consequence, given $\scaledNoiseDiff_2 > \scaledNoiseDiff_1 > 0$,
we have
$
0 = f\left(\tangentIntersect(\scaledNoiseDiff_1), \scaledNoiseDiff_1\right)
< 
f\left(\tangentIntersect(\scaledNoiseDiff_1), \scaledNoiseDiff_2\right)$.
From $f\left(\tangentIntersect(\scaledNoiseDiff_2), 
\scaledNoiseDiff_2\right) = 0$, we know $\tangentIntersect(\scaledNoiseDiff_2) < \tangentIntersect(\scaledNoiseDiff_1)$, which proves the statement.
\end{proof}
With the above definition \eqref{eq: tangent line},
we also define
\begin{align}
    \label{eq: pooled prob defn}
    p_i \triangleq
    -\frac{\sum_{j:j<i} \prior_j  \cdot (\receiverU_j - \tangentIntersect(\receiverU_i) )}{\prior_i 
    \cdot( \receiverU_i  - \tangentIntersect(\receiverU_i))}~.
\end{align}

\begin{lemma}
\label{lem: unique valid p_i}
For any state $i\in[m]$ with $\receiverU_i \ge  0$, 
if the corresponding $p_i < 0$, then it must have
$p_j < 0, \forall j > i$.
\end{lemma}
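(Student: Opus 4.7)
The plan is to reduce the claim to a sign comparison on the numerator of the expression defining $p_k$ in \eqref{eq: pooled prob defn}, then exploit the strict monotonicity of $\tangentIntersect(\cdot)$ established in the preceding lemma. First, since the states are labeled so that $\{\receiverU_k\}$ is strictly increasing, any $j>i$ with $\receiverU_i\ge 0$ satisfies $\receiverU_j>\receiverU_i\ge 0$, and the preceding lemma gives $\tangentIntersect(\receiverU_j)<\tangentIntersect(\receiverU_i)\le 0$. In particular both denominators $\prior_i(\receiverU_i-\tangentIntersect(\receiverU_i))$ and $\prior_j(\receiverU_j-\tangentIntersect(\receiverU_j))$ are strictly positive, so in the definition \eqref{eq: pooled prob defn} the sign of $p_i$ (resp.\ $p_j$) is the opposite of the sign of the corresponding numerator sum.

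Therefore $p_i<0$ is equivalent to $\sum_{k<i}\prior_k(\receiverU_k-\tangentIntersect(\receiverU_i))>0$, and proving $p_j<0$ reduces to establishing $\sum_{k<j}\prior_k(\receiverU_k-\tangentIntersect(\receiverU_j))>0$. I would split this latter sum at index $i$ and handle the two pieces separately. For the head $k<i$: since $\tangentIntersect(\receiverU_j)<\tangentIntersect(\receiverU_i)$, each term strictly satisfies $\receiverU_k-\tangentIntersect(\receiverU_j)>\receiverU_k-\tangentIntersect(\receiverU_i)$, so multiplying by $\prior_k\ge 0$ and summing yields a quantity strictly larger than $\sum_{k<i}\prior_k(\receiverU_k-\tangentIntersect(\receiverU_i))$, which is positive by hypothesis. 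For the tail $i\le k<j$: every $\receiverU_k\ge\receiverU_i\ge 0>\tangentIntersect(\receiverU_j)$, so each summand is nonnegative. Adding the two pieces delivers the desired strict positivity, and hence $p_j<0$.

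The only subtlety worth flagging is the sign convention in \eqref{eq: pooled prob defn} (the leading minus sign means $p_k<0$ corresponds to a \emph{positive} numerator) and the correct direction of monotonicity of $\tangentIntersect(\cdot)$; once these are pinned down the argument is essentially a two-line sign chase, so I do not anticipate a genuine obstacle.
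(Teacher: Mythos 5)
Your argument is correct and follows essentially the same route as the paper's proof: split the numerator sum for $p_j$ at index $i$, use the decreasing monotonicity of $\tangentIntersect(\cdot)$ to show the head $\sum_{k<i}\prior_k(\receiverU_k-\tangentIntersect(\receiverU_j))$ is strictly positive, and use $\receiverU_k\ge 0>\tangentIntersect(\receiverU_j)$ to show the tail is nonnegative. You are somewhat more explicit than the paper about why the head term is positive (the paper asserts the resulting inequality but leaves the monotonicity step implicit), which is a minor improvement in exposition rather than a different proof.
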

\begin{proof}
Consider following two states $i_1, i_2$ where 
$i_1 < i_2, \receiverU_{i_1} \ge 0$
\begin{align*}
    p_{i_1} = -\frac{\sum_{j: j < i_1} \prior_j  (\receiverU_j - \tangentIntersect(\receiverU_{i_1}))}{\prior_{i_1} \cdot(\receiverU_{i_1} - \tangentIntersect(\receiverU_{i_1}))}, \quad
    p_{i_2} = -\frac{\sum_{j: j < i_2} \prior_j (\receiverU_j - \tangentIntersect(\receiverU_{i_2}))}{\prior_{i_2} \cdot(\receiverU_{i_2} - \tangentIntersect(\receiverU_{i_2}))}~.
\end{align*}
Suppose $p_{i_1} < 0$. 
Since $\receiverU_{i_1} - \tangentIntersect(\receiverU_{i_1}) > 0$, it must imply 
that $\sum_{j: j < i_1} \prior_j  (\receiverU_j - \tangentIntersect(\receiverU_{i_1})) > 0$.
Observe that 
\begin{align*}
    \sum_{j: j < i_2} \prior_j (\receiverU_j - \tangentIntersect(\receiverU_{i_2}))
    = \sum_{j: j < i_1} \prior_j (\receiverU_j - \tangentIntersect(\receiverU_{i_2}))
    + \sum_{j: i_1 \le j < i_2} \prior_j (\receiverU_j - \tangentIntersect(\receiverU_{i_2}))
    > \sum_{j: i_1 \le j < i_2} \prior_j (\receiverU_j - \tangentIntersect(\receiverU_{i_2})) > 0~,
\end{align*}
where the last inequality follows from 
the fact that $\receiverU_j > 0, \forall i_1 \le j < i_2$
and $\tangentIntersect(\receiverU_{i_2}) < 0$.
Hence, with the fact that $\receiverU_{i_2} - \tangentIntersect(\receiverU_{i_2}) > 0$, 
one must have $p_{i_2} < 0$.
\end{proof}

We are now ready to prove \Cref{prop:SISU threshold state monotonicity}. 
\sisustatemonotone*
\begin{proof}
We begin the analysis 
with showing the following observation: 
Fix a $\scaledNoiseDiff \in [0, +\infty)$, 
let $\tangentIntersect(\scaledNoiseDiff)$ (resp. $\hat{\tangentIntersect}(\scaledNoiseDiff)$)
be the value that satisfies \eqref{eq: tangent line}
for the bounded rationality level $\noiseScale$ 
(resp. $\hat{\noiseScale}$).
Then we have 
\begin{align}
    \tangentIntersect(\scaledNoiseDiff)
    \le 
    \hat{\tangentIntersect}(\scaledNoiseDiff)
    < 0, 
    \quad
    \text{ if }\noiseScale \le \hat{\noiseScale}~.
    \label{eq: larger beta larger tangent}
\end{align}
To see this, 
recall that $W(x) = \frac{1}{1+\exp(\noiseScale x)}$,
and $\frac{\partial W(x)}{\partial x}
= -\noiseScale \cdot \frac{\exp(\noiseScale x)}{(1+\exp(\noiseScale x))^2}$,
from \eqref{eq: tangent line}, 
$\tangentIntersect(\scaledNoiseDiff)$
is the root of the following function
\begin{align*}
    f(x, \noiseScale) \triangleq 
    -\noiseScale \cdot \frac{\exp(\noiseScale x)}{(1+\exp(\noiseScale x))^2} \cdot(x - \scaledNoiseDiff) - \frac{1}{1+\exp(\noiseScale x)} + \frac{1}{1+\exp(\noiseScale \scaledNoiseDiff)}~.
\end{align*}
Inspecting its first-order partial derivatives, we can see that
$\frac{f(x, \noiseScale)}{\partial x}
\ge 0, \forall x\le 0;
\left.\frac{f(x, \noiseScale)}{\partial \noiseScale}\right|_{x = \tangentIntersect(\scaledNoiseDiff)} \le 0$.
As a consequence, given $\hat{\noiseScale} \ge \noiseScale$,
we have
$f\left(\tangentIntersect(\scaledNoiseDiff), \hat{\noiseScale}\right)
\le 
f\left(\tangentIntersect(\scaledNoiseDiff), \noiseScale\right)
= 0$.
From $f\left(\hat{\tangentIntersect}(\scaledNoiseDiff), 
\hat{\noiseScale}\right) = 0$, we know $\hat{\tangentIntersect}(\scaledNoiseDiff) \ge \tangentIntersect(\scaledNoiseDiff)$.

Now given a state $i$ where $\receiverU_i \ge 0$,
consider the 
bounded rationality level $\noiseScale, \hat{\noiseScale}$,
from \eqref{eq: larger beta larger tangent},
we have
$\tangentIntersect(\receiverU_i)
\le 
\hat{\tangentIntersect}(\receiverU_i)$, 
implying
\begin{align}
    & \sum_{j: j < i} \prior_j \cdot \left(\receiverU_j - \tangentIntersect(\receiverU_i)\right)
    \ge
    \sum_{j: j < i} \prior_j \cdot \left(\receiverU_j - \hat{\tangentIntersect}(\receiverU_i)\right) 
    \text{ and }
    0 > 
    - \frac{1}{\prior_i\cdot(\receiverU_i - \tangentIntersect(\receiverU_i))} 
    \ge - \frac{1}{\prior_i\cdot(\receiverU_i - \hat{\tangentIntersect}(\receiverU_i))}; \nonumber \\
    \Rightarrow ~
    &
    p_i
    = -\frac{\sum_{j: j < i} \prior_j \cdot \left(\receiverU_j - \tangentIntersect(\receiverU_i)\right)}{\prior_i\cdot(\receiverU_i - \tangentIntersect(\receiverU_i))}
    \le 
    -\frac{\sum_{j: j < i} \prior_j \cdot \left(\receiverU_j - \hat{\tangentIntersect}(\receiverU_i)\right)}{\prior_i\cdot(\receiverU_i - \hat{\tangentIntersect}(\receiverU_i))}
    = \hat{p}_i~.
    \label{ineq: p_i monotoncity}
\end{align}
The above inequality ensures that 
the threshold state $\censorshipstateHat$ for a larger
bounded rationality level $\hat{\noiseScale}$
is no smaller than the threshold state $\censorshipstate$ for a smaller
bounded rationality level $\noiseScale$. 
If $\censorshipstateHat = \censorshipstate$, one still has
$\censorshipprob = \min\{p_{\censorshipstate}, 1\}
\le \censorshipprobHat = \min\{p_{\censorshipstateHat}, 1\}$.
\end{proof}

\ifEC
\else
\fi

\subsection{Omitted Proof of 
\texorpdfstring{\Cref{lem:existence SISU opt condition}}{}}
\label{apx:SISU proof opt condition existence}
Here we present the omitted proof of 
\Cref{lem:existence SISU opt condition}.
With the two properties (see \Cref{lem:kappa property} and \Cref{lem: unique valid p_i}) 
for $\tangentIntersect(\cdot)$ and $p_i$
we established in \Cref{apx:SISU proof SISU threshold state monotonicity}, we 
prove \Cref{lem:existence SISU opt condition} as follows:
\sisuoptcondition*
\begin{proof}
When the set $\{i\in[m]: \receiverU_i \ge 0\}$ is not empty,
then from the definition \eqref{eq: tangent line}
and the definition \eqref{eq: pooled prob defn}, we know
that every feasible solution
to the program \ref{eq:SISU opt condition} 
must be that  
the threshold state 
$\censorshipstate = \argmax_{i\in[m]: \receiverU_i \ge 0} \{p_i: p_i\ge0\}$ 
and $\censorshipprob = \min\{p_{\censorshipstate}, 1\}$. 
From \Cref{lem: unique valid p_i}, 
we know that such threshold state
and the pooling probability is unique. 
On the other hand,
if the set $\{i\in[m]: \receiverU_i \ge 0\}$
is empty, then \Cref{thm:opt state independent}, 
together with the definition \eqref{eq: tangent line}
and the definition \eqref{eq: pooled prob defn},
say that 
$\censorshipstate 
= \argmax_{i\in[m]: \receiverU_i \le 0} \{\receiverU_i\}$
and $\censorshipprob = 1$, 
which also guarantees uniqueness of the feasible 
solution to the program \ref{eq:SISU opt condition}. 
\end{proof}

\subsection{Omitted Proof of \texorpdfstring{\Cref{lem: sisu dual feasibility}}{}}
\label{apx:SISU proof SISU dual feasibility}
Here we present the omitted proof of \Cref{lem: sisu dual feasibility}.
\begin{figure}[H]
\centering
\subfloat[]{
\begin{tikzpicture}[scale=0.55, transform shape]
\begin{axis}[
axis line style=gray,
axis lines=middle,
xtick style={draw=none},
ytick style={draw=none},
xticklabels=\empty,
yticklabels=\empty,
xmin=-9,xmax=9,ymin=-0.15,ymax=1.3,
width=0.9\textwidth,
height=0.5\textwidth,
samples=50]

\addplot[domain=-9:9, gray!40!white, line width=2.mm] (x, {1/(1+e^(0.7*x))});

\addplot[mark=*,only marks, fill=white] coordinates {(6.34907,0.0116084)} node[above, pos=1]{};
\addplot[gray, thick] coordinates {(6.34907,0.01)
(6.34907,-0.01)};
\addplot[] coordinates {(6.34907,0.)} node[below, pos=1]{\Large$\censorshipsignalInverse$};

\addplot[mark=*,only marks, fill=white] coordinates {(-2.34798,0.838022)} node[above, pos=1]{};
\addplot[gray, thick] coordinates {(-2.34798,0.01)
(-2.34798,-0.01)};

\addplot[gray, dotted] coordinates {(-2.34798,0.838022)
(-2.34798,0.)};
\addplot[] coordinates {(-2.34798,0.)} node[below, pos=1]{\Large$\censorshipsignal$};

\addplot[mark=*,only marks, fill=white] coordinates {(-8,0.99631576)} node[above, pos=1]{};
\addplot[gray, thick] coordinates {(-8,0.01)
(-8,-0.01)};
\addplot[] coordinates {(-8,-0.015)} node[below, pos=1]{\Large$\scaledNoiseDiff$};
\addplot[gray, dotted] coordinates {(-8,0.99631576)
(-8,0.)};

\addplot[mark=*,only marks, fill=white] coordinates {(4,0.0573242)} node[above, pos=1]{};
\addplot[gray, thick] coordinates {(4,0.01)
(4,-0.01)};
\addplot[] coordinates {(4,-0.015)} node[below, pos=1]{\Large$\scaledNoiseDiff$};
\addplot[gray, dotted] coordinates {(4,0.0573242)
(4,0.)};

\addplot[domain=-2.34798:6.34907, black,line width=0.5mm] (x, {-0.09502 * x + 0.614897});

\addplot[domain=-8:-2.34798, dotted, black,line width=0.5mm] (x, {-0.0280066 * x + 0.772263});

\addplot[domain=-2.34798:4, dash pattern={on 8pt off 2pt}, black,line width=0.5mm] (x, {-0.122984 * x + 0.549259});

\end{axis}

\end{tikzpicture}
\label{fig:case a}
}
\quad
\subfloat[]{
\begin{tikzpicture}[scale=0.55, transform shape]
\begin{axis}[
axis line style=gray,
axis lines=middle,
xtick style={draw=none},
ytick style={draw=none},
xticklabels=\empty,
yticklabels=\empty,
xmin=-6,xmax=12,ymin=-0.15,ymax=1.3,
width=0.9\textwidth,
height=0.5\textwidth,
samples=50]

\addplot[domain=-6:12, gray!40!white, line width=2.mm] (x, {1/(1+e^(0.7*x))});

\addplot[mark=*,only marks, fill=white] coordinates {(6.34907,0.0116084)} node[above, pos=1]{};
\addplot[gray, thick] coordinates {(6.34907,0.01)
(6.34907,-0.01)};
\addplot[] coordinates {(6.34907,0.)} node[below, pos=1]{\Large$\censorshipsignalInverse$};

\addplot[mark=*,only marks, fill=white] coordinates {(-2.34798,0.838022)} node[above, pos=1]{};
\addplot[gray, thick] coordinates {(-2.34798,0.01)
(-2.34798,-0.01)};
\addplot[] coordinates {(-2.34798,0.)} node[below, pos=1]{\Large$\censorshipsignal$};

\addplot[gray, dotted] coordinates {(-2.34798,0.838022)
(-2.34798,0.)};

\addplot[mark=*,only marks, fill=white] coordinates {(11,0.000452622)} node[above, pos=1]{};
\addplot[gray, thick] coordinates {(11,0.01)
(11,-0.01)};
\addplot[] coordinates {(11,-0.035)} node[below, pos=1]{\Large$\receiverU_{\censorshipstate+1}$};

\addplot[mark=*,only marks, fill=white] coordinates {(9,0.00183294)} node[above, pos=1]{};
\addplot[gray, thick] coordinates {(9,0.01)
(9,-0.01)};
\addplot[] coordinates {(9,-0.015)} node[below, pos=1]{\Large$\scaledNoiseDiff$};
\addplot[gray, dotted] coordinates {(9,0.00183294)
(9,0.)};

\addplot[mark=*,only marks, fill=white] coordinates {(1.2,0.500873)} node[above, pos=1]{};
\addplot[gray, thick] coordinates {(1.2,0.01)
(1.2,-0.01)};
\addplot[] coordinates {(1.2,-0.035)} node[below, pos=1]{\Large$\receiverU_i$};
\addplot[gray, dotted] coordinates {(1.2,0.500873)
(1.2,0.)};

\addplot[domain=-2.34798:6.34907, black,line width=0.5mm] (x, {-0.09502 * x + 0.614897});

\addplot[domain=6.34907:9, dash pattern={on 8pt off 2pt}, black,line width=0.5mm] (x, {-0.00368756 * x + 0.035021});

\addplot[domain=9:11, dotted, black,line width=0.5mm] (x, {-0.000690159 * x + 0.00804437});

\addplot[domain=1.2:9, dash pattern={on 7pt off 2pt on 1pt off 3pt}, black,line width=0.5mm] (x, {-0.0639795  * x + 0.577648});

\end{axis}

\end{tikzpicture}
\label{fig:case b}
}
\caption{\label{fig:SISU dual feasibility}
Graphical illustration of 
the dual assignment feasibility 
for inequality~\eqref{eq:SISU dual feasibility pooling state}
in the proof of \Cref{thm:SISU opt}.
The gray solid curve is function $\Quant(\cdot)$.
(a):
For $\scaledNoiseDiff\in 
(-\infty,\censorshipsignalInverse\text{]}$, 
the dual assignment $\noiseDualVar(\scaledNoiseDiff)$
is the absolute value of the slope 
of the black solid line.
We prove that inequality~\eqref{eq:SISU dual feasibility pooling state} holds by 
showing that
the slope of the black dotted (resp.\ dashed)
line 
is larger (resp.\ smaller) than 
the slope of the black solid line 
if $\scaledNoiseDiff\in (-\infty,\censorshipsignal\text{]}$
(resp.\ 
$\scaledNoiseDiff\in\text{[}\censorshipsignal,\censorshipsignalInverse\text{]}$).
(b):
For $\scaledNoiseDiff\in 
\text{[}\censorshipsignalInverse,
\receiverU_{\censorshipstate+1}\text{]}$, 
the dual assignment $\noiseDualVar(\scaledNoiseDiff)$
is the absolute value of the slope 
of the black dotted line.
We prove that inequality~\eqref{eq:SISU dual feasibility pooling state} holds by 
showing that
the slope of the black dash-dotted 
line 
is smaller than 
the slope of the black dotted line.
More specifically, 
we rewrite inequality~\eqref{eq:SISU dual feasibility pooling state}
as inequality~\eqref{eq:SISU dual feasibility pooling state case 2}.
The right-hand side of inequality~\eqref{eq:SISU dual feasibility pooling state case 2}
is the absolute value of the slope of the black dash-dotted line.
We lower bound this term
by the absolute value of the slope of the black dashed line,
which is due to convexity of the function $\Quant(\cdot)$,
is larger than the absolute value of the slope of the black dotted line,
i.e., the left-hand side of inequality~\eqref{eq:SISU dual feasibility pooling state case 2}.
}
\end{figure}

\sisudualfeasibility*
\begin{proof}
We first present the analysis for the state $i\in[\censorshipstate]$
for the latter two cases: 
\begin{itemize}[label={-}]
        \item Fix an arbitrary $j\in
        [\censorshipstate + 1:
        m - 1]$
        and an arbitrary $\scaledNoiseDiff\in[\receiverU_j,
        \receiverU_{j + 1}]$.
        Similar to the previous case,
        after rearranging the terms,
        inequality~\eqref{eq:SISU dual feasibility pooling state} becomes 
        \begin{align*}
                   -\frac{
        \Quant(\scaledNoiseDiff) - \Quant(\receiverU_{j+1})}
        {\scaledNoiseDiff - 
        \receiverU_{j+1}}
        \leq 
        - \frac{ 
        \Quant(\scaledNoiseDiff) - 
        \left(\Quant(\censorshipsignal)
        -\Quant'(\censorshipsignal) 
        (\censorshipsignal - \receiverU_i) \right)}
        {\scaledNoiseDiff - \receiverU_i}
        \end{align*}
        Due to the convexity of function $\Quant(\cdot)$
        on $[0, \infty)$,
        \begin{align*}
            -\frac{
        \Quant(\scaledNoiseDiff) - \Quant(\receiverU_{j+1})}
        {\scaledNoiseDiff - 
        \receiverU_{j+1}}
        \leq 
        -\frac{
        \Quant(\scaledNoiseDiff) - \Quant(\receiverU_{
        \censorshipstate+1})}
        {\scaledNoiseDiff - 
        \receiverU_{\censorshipstate+1}}
        \end{align*}
        and thus the analysis in the previous case 
        can be carried over directly.
        
        \item Fix an arbitrary 
        $\scaledNoiseDiff\in[\receiverU_m, \infty)$.
        By construction $\noiseDualVar(\scaledNoiseDiff) 
        = 0$.
        Inequality~\eqref{eq:SISU dual feasibility pooling state}
        becomes 
        \begin{align}
        \label{eq:SISU dual feasibility pooling state case 4}
            \Quant(\scaledNoiseDiff) \leq 
      \Quant(\censorshipsignal) -
      (\censorshipsignal - \receiverU_i)
      \Quant'(\censorshipsignal) 
        \end{align}
        If $\receiverU_i \leq \censorshipsignal$,
        inequality~\eqref{eq:SISU dual feasibility pooling state case 4} holds since function $\Quant(\cdot)$
        is monotone decreasing.
        Otherwise, i.e., 
        if $\censorshipsignal\leq \receiverU_i \leq
        \censorshipsignalInverse$,
        inequality~\eqref{eq:SISU dual feasibility pooling state case 4} holds since
        \begin{align*}
            \Quant(\censorshipsignal) -
      (\censorshipsignal - \receiverU_i)
      \Quant'(\censorshipsignal) 
      &\geq 
            \Quant(\censorshipsignal) -
      (\censorshipsignal - \censorshipsignalInverse)
      \Quant'(\censorshipsignal) 
      \overset{(a)}{=}\Quant(\censorshipsignalInverse)
      \overset{(b)}{\geq} 
      \Quant(\scaledNoiseDiff)
        \end{align*}
        where inequality~(a) holds due to 
        constraint~\dfone\ in program~\ref{eq:SISU opt condition};
        and inequality~(b) holds due to the monotonicity
        of function $\Quant(\cdot)$.
\end{itemize}
Finally, we verify dual constraints
associated with $\optscheme_i(\scaledNoiseDiff)$
for state $i \in [\censorshipstate+1:m]$. 
\begin{itemize}[label={-}]
\item Fix an arbitrary $\scaledNoiseDiff
\in(-\infty,\censorshipsignalInverse]$.
By construction, $\noiseDualVar(\scaledNoiseDiff) 
= - \Quant(\censorshipsignal)$.
By rearranging the terms, inequality~\eqref{eq:SISU dual feasibility revealing state} becomes 
\begin{align*}
    \Quant'(\censorshipsignal) \leq 
    \frac{
    \Quant(\receiverU_i) - \Quant(\scaledNoiseDiff)}{
    \receiverU_i - \scaledNoiseDiff}
\end{align*}
which holds since
\begin{align*}
    \frac{
    \Quant(\receiverU_i) - \Quant(\scaledNoiseDiff)}{
    \receiverU_i - \scaledNoiseDiff}
    \overset{(a)}{\geq}
    \frac{
    \Quant(\censorshipsignalInverse) - \Quant(\scaledNoiseDiff)}{
    \censorshipsignalInverse - \scaledNoiseDiff}
    \overset{(b)}{\geq}
    \frac{
    \Quant(\censorshipsignalInverse) - \Quant(\censorshipsignal)}{
    \censorshipsignalInverse - \censorshipsignal}
    \overset{(c)}{=}
    \Quant'(\censorshipsignal)
\end{align*}
where inequality~(a) holds due to 
the convexity of function $\Quant(\cdot)$ on $[0,\infty)$
and $\receiverU_i > \censorshipsignalInverse$;
and 
inequality~(b) and equality~(c) hold due to 
the concavity of function $\Quant(\cdot)$ on $(-\infty, 0]$
and constraint~\dfone\ in program~\ref{eq:SISU opt condition}.

\item Fix an arbitrary $\scaledNoiseDiff\in
[\censorshipsignalInverse, \receiverU_{\censorshipstate+1}]$.
By construction, $\noiseDualVar(\scaledNoiseDiff) 
= - (\Quant(\scaledNoiseDiff) - \Quant(\receiverU_{\censorshipstate + 1})
/
(\scaledNoiseDiff - \receiverU_{\censorshipstate + 1})$.
By rearranging the terms, inequality~\eqref{eq:SISU dual feasibility revealing state} becomes 
\begin{align*}
    \frac{
    \Quant(\receiverU_{\censorshipstate + 1}) - \Quant(\scaledNoiseDiff)}{
    \receiverU_{\censorshipstate + 1} - \scaledNoiseDiff}
     \leq 
    \frac{
    \Quant(\receiverU_i) - \Quant(\scaledNoiseDiff)}{
    \receiverU_i - \scaledNoiseDiff}
\end{align*}
which holds due to the convexity of function $\Quant(\cdot)$
on $[0,\infty)$ and 
$\scaledNoiseDiff \leq\receiverU_{\censorshipstate + 1}
\leq \receiverU_i$.

\item Fix an arbitrary $j\in
        [\censorshipstate + 1:
        m - 1]$
        and an arbitrary $\scaledNoiseDiff\in[\receiverU_j,
        \receiverU_{j + 1}]$.
By construction, $\noiseDualVar(\scaledNoiseDiff) 
= - (\Quant(\scaledNoiseDiff) - \Quant(\receiverU_{j + 1})
/
(\scaledNoiseDiff - \receiverU_{j + 1})$.
For state $i\in[\censorshipstate + 1:j]$,
by rearranging the terms, 
inequality~\eqref{eq:SISU dual feasibility revealing state} becomes 
\begin{align*}
    \frac{
    \Quant(\receiverU_{j + 1}) - \Quant(\scaledNoiseDiff)}{
    \receiverU_{j + 1} - \scaledNoiseDiff}
     \geq 
    \frac{
    \Quant(\receiverU_i) - \Quant(\scaledNoiseDiff)}{
    \receiverU_i - \scaledNoiseDiff}
\end{align*}
which holds due to the convexity of function $\Quant(\cdot)$
on $[0,\infty)$ and 
$\receiverU_i \leq \scaledNoiseDiff 
\leq \receiverU_{j+1}$.
Similarly, for state $i\in[j+1:m]$,
by rearranging the terms, 
inequality~\eqref{eq:SISU dual feasibility revealing state} becomes 
\begin{align*}
    \frac{
    \Quant(\receiverU_{j + 1}) - \Quant(\scaledNoiseDiff)}{
    \receiverU_{j + 1} - \scaledNoiseDiff}
     \leq 
    \frac{
    \Quant(\receiverU_i) - \Quant(\scaledNoiseDiff)}{
    \receiverU_i - \scaledNoiseDiff}
\end{align*}
which holds due to the convexity of function $\Quant(\cdot)$
on $[0,\infty)$ and 
$  \scaledNoiseDiff 
\leq \receiverU_{j+1}\leq \receiverU_i$.

\item Fix an arbitrary $\scaledNoiseDiff\in[\receiverU_m,\infty)$.
By construction, $\noiseDualVar(\scaledNoiseDiff) = 0$.
Here inequality~\eqref{eq:SISU dual feasibility revealing state} holds 
by the monotonicity of function $\Quant(\cdot)$ straightforwardly.
\qedhere
\end{itemize}
\end{proof}

\subsection{Omitted Proof of \texorpdfstring{\Cref{thm:SISU direct}}{}}
\label{apx:SISU direct proof}
Here we present the omitted proof of \Cref{thm:SISU direct}.

\sisudirectopt*
\begin{example}
\label{example:SISU direct lower bound example}
Given an arbitrary $m\in \naturals_+$,
consider a problem instance as follows:
There are $m$ states.
The receiver has 
bounded rationality level
$\noiseScale$ such that $\noiseScale \geq \exp(m)$.
The sender utility $\{\senderU_i\}$,
the receiver utility difference 
$\{\receiverU_i\}$, and prior $\{\prior_i\}$
over state space $[m]$ are 
\begin{align*}
     i\in[m]:\qquad 
    &\senderU_i = 1,
    \qquad 
    \receiverU_i = i,
    \qquad 
    \prior_i = 
    K\left(
    \exp(\noiseScale i) + 1
    \right)
\end{align*}
where $K  \triangleq 
\sfrac{1}{
(m + \sum_{j\in[m]} \exp(\noiseScale j))}$.
\end{example}
\begin{proof}
First, we lowerbound 
the expected sender utility 
in the optimal signaling scheme $\optscheme$
by computing the expected sender utility 
in the full-information revealing signaling scheme,
\begin{align*}
    \Payoff{\optscheme}
    \geq 
    \sum_{i = 1} \prior_i \Quant(\receiverU_i)
    = 
    m\cdot K
\end{align*}
Next, we upperbound the expected sender utility 
in the optimal direct signaling scheme $\hat\signalscheme$.
Suppose the optimal direct signaling scheme $\hat\signalscheme$
partitions the state space into $\highstates\sqcup\{\censorshipstate\}\sqcup\lowstates$.
Due to the convexity of function $\Quant(\cdot)$
on $[0,\infty)$,
the expected sender utility in the optimal direct signaling scheme $\hat\signalscheme$ is upperbounded by 
the expected sender utility
in signaling scheme $\tilde\signalscheme$
defined as follows,
\begin{align*}
    i\in\highstates:&\qquad
    \tilde\signalscheme_i(\scaledNoiseDiff) =
    \indicator{\scaledNoiseDiff = \frac{\sum_{i\in\highstates}
\prior_i \receiverU_i
}
{\sum_{i\in\highstates}
\prior_i
}}
    \\
    &\qquad
    \tilde\signalscheme_{\censorshipstate}(\scaledNoiseDiff)
    = \indicator{\scaledNoiseDiff = \receiverU_{\censorshipstate}}
    \\
    i\in\lowstates:&\qquad
    \tilde\signalscheme_i(\scaledNoiseDiff) =
    \indicator{\scaledNoiseDiff = \frac{\sum_{i\in\lowstates}
\prior_i \receiverU_i
}
{\sum_{i\in\lowstates}
\prior_i
}}
\end{align*}
Let $k_h = \max \highstates$ and $k_l = \max\lowstates$. We have 
\begin{align*}
    \Payoff{\hat\signalscheme} &\leq 
    \Payoff{\tilde\signalscheme} \\
    &=
    \left(\sum_{i\in\highstates}\prior_i\right) \Quant\left(
    \frac{\sum_{i\in\highstates}
\prior_j \receiverU_j
}
{\sum_{j\in\highstates}
\prior_j
}
    \right)
    +
    \prior_{\censorshipstate}\Quant\left(\receiverU_{\censorshipstate}\right)
    +
    \left(\sum_{i\in\lowstates}\prior_i\right) \Quant\left(
    \frac{\sum_{j\in\lowstates}
\prior_j \receiverU_j
}
{\sum_{j\in\lowstates}
\prior_j
}
    \right)
    \\
    &\leq 
    2\prior_{k_h} \Quant\left(
\receiverU_{k_h} - \frac{1}{\noiseScale}
    \right)
    +
    \prior_{\censorshipstate}\Quant\left(\receiverU_{\censorshipstate}\right)
    +
    2\prior_{k_l} \Quant\left(
\receiverU_{k_l} - \frac{1}{\noiseScale}
    \right)
    \\
    &\leq (4e + 1) \cdot K
\end{align*}
where the second inequality holds since $\noiseScale \geq \exp(m)$.

Finally, combining the lower bound (i.e., $m\cdot K$) of $\Payoff{\optscheme}$ and 
the upper bound (i.e., $(4e + 1)\cdot K$) of $\Payoff{\hat\signalscheme}$ finishes the proof.
\end{proof}

\section{Omitted Proofs in Section \ref{sec:state dependent}}
\label{apx:SDSU proof}

\subsection{Omitted Proof of \texorpdfstring{\Cref{lem:SDSU lower bound example opt}}{}}
\label{apx:SDSU proof lower bound example opt}
\begin{proof}
We prove the lemma statement by 
constructing a feasible signaling scheme 
$\signalscheme$ with $\Payoff{\signalscheme}
= \Theta(K_1 K_2m)$.
In particular, consider the following 
construction of signaling scheme $\signalscheme$:
\begin{align*}
    i\in[m - 1]:\qquad 
    \signalscheme_i
    \left(i + \frac{1}{\noiseScale}\right) 
    = 1;
    \qquad
    \signalscheme_m
    \left(i + \frac{1}{\noiseScale}\right)
    = 
    K_1\exp(\noiseScale i)
\end{align*}
It is straightforward to verify by algebra that 
signaling scheme $\signalscheme$ constructed above
is a feasible solution of program~\ref{eq:opt lp}.\footnote{Specifically, 
in signaling scheme $\signalscheme$,
each state $i\in[m - 1]$
is 
pooled fully (i.e., $\signalscheme_i(i + \sfrac{1}{\noiseScale}) = 1$) on signal $i + \sfrac{1}{\noiseScale}$
with the last state $m$
(with probability $\signalscheme_m(1+\sfrac{1}{\noiseScale}) = \sfrac{\exp(\noiseScale i)}{K_1}$).}
The expected sender utility $\Payoff{\signalscheme}$
of signaling scheme $\signalscheme$ is 
\begin{align*}
    \Payoff{\signalscheme} &= 
    \sum_{i \in [m - 1]} \prior_m \senderU_m 
    \signalscheme_m\left(i + \frac{1}{\noiseScale}\right)
    \Quant\left(i + \frac{1}{\noiseScale}\right)
    \\
    &= 
    \sum_{i\in[m - 1]}
    K_2 K_1 \exp(\noiseScale i) \frac{1}{1 + \exp(\noiseScale(i + \frac{1}{\noiseScale}))} 
    \\
    &=
    ~\Theta(K_1 K_2 m)
    \qedhere
\end{align*}
\end{proof}

\subsection{Omitted Proof of \texorpdfstring{\Cref{lem:single signal payoff lowerbound large}}{}}
\label{apx:SDSU proof single signal payoff lowerbound large}
\singlesignalpayofflowerboundlarge*
\begin{proof}
We first prove the case $\Payoff{\lowerbounddecomposescheme} 
= O({K_1K_ 2})$ for any $\decomposesignal\in
[i, i + \sfrac{m\log(\noiseScale)}{\noiseScale}]$.
Recall that in signaling scheme $\lowerbounddecomposescheme$,
state $i$ and state $m$ are pooled on signal $\decomposesignal$,
and all other states are fully revealed,
i.e.,
\begin{align*}
    \lowerbounddecomposescheme_i(\decomposesignal) = 1;
    \quad
    \lowerbounddecomposescheme_m(\decomposesignal) = 
    \frac{\prior_i}{\prior_m}\frac{
    \decomposesignal - i}{m - \decomposesignal};
    \quad
    \lowerbounddecomposescheme_m(m) = 1 - 
    \frac{\prior_i}{\prior_m}\frac{
    \decomposesignal - i}{m - \decomposesignal};
    \quad
    \lowerbounddecomposescheme_j(j) = 1 \forall j\neq i,m~.
\end{align*}
The expected sender utility $\Payoff{\lowerbounddecomposescheme}$
of signaling scheme $\signalscheme$ is 
\begin{align*}
    \Payoff{\lowerbounddecomposescheme} &=
    \prior_m\senderU_m
    \left(
    \lowerbounddecomposescheme_m(\decomposesignal)
    \Quant(\decomposesignal)
    +
    \lowerbounddecomposescheme_m(m)
    \Quant(m)
    \right)
    \intertext{where}
    \prior_m\senderU_m
    \lowerbounddecomposescheme_m(m)
    \Quant(m)
    &\leq
    K_2 \frac{1}{1 + \exp(\noiseScale m)}
    \overset{(a)}{=} o(K_1 K_2)
    \\
    \prior_m\senderU_m
    \lowerbounddecomposescheme_m(\decomposesignal)
    \Quant(\decomposesignal)
    &=K_2 
    K_1
    \left(m - i - \frac{1}{\noiseScale}\right)
    \noiseScale\exp(\noiseScale i)
    \frac{
    \decomposesignal - i
    }{
    m - \decomposesignal
    }
    \frac{1}{1 + \exp(\noiseScale\decomposesignal)}
    \\
    &=
    K_1 K_2
    \frac{m - i - \frac{1}{\noiseScale}}{m - \decomposesignal}
    \frac{\exp(\noiseScale i)}{1 + \exp(\noiseScale\decomposesignal)}
    \noiseScale(\decomposesignal - i)
    \overset{(b)}{=}
    O(K_1 K_2)
\end{align*}
Here equality~(a) holds since 
$\sfrac{1}{(1 + \exp(\noiseScale m))} = o(K_1)$;
equality~(b) uses two facts that (i)
$\sfrac{(m - i - 1/\noiseScale)}{(m - \decomposesignal)} = O(1)$
since $\sfrac{\noiseScale}{\log(\noiseScale)} \geq 2m$
and thus
$\decomposesignal\leq i + \sfrac{m\log(\noiseScale)}{\noiseScale}
\leq i + \sfrac{1}{2}$;
and (ii) $\frac{\exp(\noiseScale i)}{1 + \exp(\noiseScale\decomposesignal)}
    \noiseScale(\decomposesignal - i) 
    \leq \frac{\noiseScale(\decomposesignal - i)}{\exp(\noiseScale(\decomposesignal - i))} = O(1)$.

We now prove the case $\Payoff{\lowerbounddecomposescheme} 
= o(\sfrac{K_1K_ 2}{m})$ for any $\decomposesignal\in
[i + \sfrac{m\log(\noiseScale)}{\noiseScale}, m]$.
It is clear that for every $\decomposesignal \geq \NOinforSignal$,
the expected sender utility $\Payoff{\lowerbounddecomposescheme}
\leq \Payoff{\lowerbounddecomposeschemeavg}$.
Thus, it is sufficient to show $\Payoff{\lowerbounddecomposescheme} = 
o(\sfrac{K_1 K_2}{m})$ for every 
$\decomposesignal\in [
i + \sfrac{m\log(\noiseScale)}
{\noiseScale}, \NOinforSignal]$.
By definition, 
\begin{align*}
     \Payoff{\lowerbounddecomposescheme} &=
    \prior_m\senderU_m
    \left(
    \lowerbounddecomposescheme_m(\decomposesignal)
    \Quant(\decomposesignal)
    +
    \lowerbounddecomposescheme_m(m)
    \Quant(m)
    \right)
    \intertext{where}
    \prior_m\senderU_m
    \lowerbounddecomposescheme_m(m)
    \Quant(m)
    &\leq
    K_2 \frac{1}{1 + \exp(\noiseScale m)}
    \overset{(a)}{=} o\left(\frac{K_1 K_2}{m}\right)
    \end{align*}
    Here equality~(a) holds since 
$\sfrac{1}{(1 + \exp(\noiseScale m))} = o(\sfrac{K_1}{m})$.
It remains to show
term 
$    \prior_m\senderU_m
    \lowerbounddecomposescheme_m(\decomposesignal)
    \Quant(\decomposesignal) = o(\sfrac{K_1 K_2}{m})$. 
    We show this in two cases based on 
    the value of $\decomposesignal$.
    \begin{itemize}[label={-}]
        \item Fix an arbitrary $\decomposesignal \in
        [i + \sfrac{m\log(\noiseScale)}
{\noiseScale},
        m - \sfrac{1}{2}]$.
        Note that 
        \begin{align*}
        \prior_m\senderU_m
    \lowerbounddecomposescheme_m(\decomposesignal)
    \Quant(\decomposesignal) &=
    K_2 
    K_1
    \left(m - i - \frac{1}{\noiseScale}\right)
    \noiseScale\exp(\noiseScale i)
    \frac{
    \decomposesignal - i
    }{
    m - \decomposesignal
    }
    \frac{1}{1 + \exp(\noiseScale\decomposesignal)}
    \\
    &\overset{(a)}{\leq}
    K_1 K_2 m \noiseScale \frac{m}{2} \exp(\noiseScale( i - \decomposesignal))
    \\
    &\overset{(b)}{\leq}
    K_1 K_2 m \noiseScale \frac{m}{2} \exp\left(\noiseScale\left(i - 
    \left(i + \frac{m\log(\noiseScale)}{\noiseScale}\right)\right)\right)
    \\
    &= o\left(\frac{K_1K_2}{m}\right)
        \end{align*}
    where inequality~(a) holds since 
    $m - i - \sfrac{1}{\noiseScale} \leq m$,
    $\sfrac{(\decomposesignal - i)}{(m - \decomposesignal))} \leq 
    \sfrac{m}{2}$;
    and inequality~(b) holds since 
    $\decomposesignal \geq i + \sfrac{m\log(\noiseScale)}
{\noiseScale}$.

\item Fix an arbitrary $\decomposesignal \in [m - \sfrac{1}{2}, \NOinforSignal]$.
Let $\lastNOinforSignal \triangleq \frac{\prior_{m-1}(m - 1)+\prior_m m}{\prior_{m - 1} + \prior_m}$
be the signal on which state $m - 1$ and state $m$
are fully pooled together.
It is clear that $\Payoff{\lowerbounddecomposescheme}
\leq \Payoff{\lowerbounddecomposeschemeavglast}$
if $\decomposesignal \geq \lastNOinforSignal$.
For $\decomposesignal \leq \lastNOinforSignal$,
note that 
\begin{align*}
            \prior_m\senderU_m
    \lowerbounddecomposescheme_m(\decomposesignal)
    \Quant(\decomposesignal) &=
    K_2 
    K_1
    \left(m - i - \frac{1}{\noiseScale}\right)
    \noiseScale\exp(\noiseScale i)
    \frac{
    \decomposesignal - i
    }{
    m - \decomposesignal
    }
    \frac{1}{1 + \exp(\noiseScale\decomposesignal)}
    \\
    &\overset{(a)}{\leq}
    K_1 K_2 m \noiseScale \frac{m}{m - \decomposesignal} \exp(\noiseScale(m - 1 - \decomposesignal))
    \\
    &= K_1 K_2 m^2 \noiseScale
    \frac{\exp(\noiseScale(m - 1 - \decomposesignal))}{m - \decomposesignal}
\end{align*}
where inequality~(a) holds since 
    $m - i - \sfrac{1}{\noiseScale} \leq m$,
    $\decomposesignal - i\leq 
    m$ and $\sfrac{\exp(\noiseScale i)}{(1 + \exp(\noiseScale \decomposesignal)} \leq \exp(\noiseScale(m - 1 - \decomposesignal))$.
    
    Denote function $f(\decomposesignal) \triangleq
    K_1 K_2 m^2 \noiseScale
    \frac{\exp(\noiseScale(m - 1- \decomposesignal))}{m - \decomposesignal}$.
    Notably, $f(\lastNOinforSignal)$
    also upperbounds $\Payoff{\lowerbounddecomposeschemeavglast}$.
    Hence, it is sufficient to show $f(\decomposesignal) = o(\sfrac{K_1 K_2}{m})$
    for all $\decomposesignal \in [m - \sfrac{1}{2}, \lastNOinforSignal]$.
    Now consider the derivative of function $f(\cdot)$,
    \begin{align*}
        \frac{d f(\decomposesignal)}{d\decomposesignal}
        =
        K_1 K_2 m^2 \noiseScale \frac{
        \exp(\noiseScale(m - 1 - \decomposesignal))
        (1 - \noiseScale(m - \decomposesignal))
        }{(m - \decomposesignal)^2}
    \end{align*}
    whose sign is determined by the term 
    $1 - \noiseScale(m - \decomposesignal))$.
    Since we are considering 
$\decomposesignal\in 
[m - \sfrac{1}{2}, \lastNOinforSignal]$, 
we conclude the proof by showing 
that $f(m - \sfrac{1}{2}) = o(\sfrac{K_1 K_2}{m})$,
and $1 - \noiseScale(m - \lastNOinforSignal) \leq 0$. 
By definition,
\begin{align*}
    f\left(m - \frac{1}{2}\right) &=
    K_1 K_2 m^2 \noiseScale \frac{
    \exp\left(\noiseScale\left(m - 1 - 
    \left(m - \frac{1}{2}\right)\right)\right)
    }{m - (m - \frac{1}{2})}
    =
    o\left(\frac{K_1 K_2}{m} \right)
    \intertext{and}
    \lastNOinforSignal 
    &= \frac{\prior_{m-1}(m - 1)+\prior_m m}{\prior_{m - 1} + \prior_m}
    \\
    &= \frac{
    K_1  K_2 \left(1 - \frac{1}{\noiseScale}\right) 
    \noiseScale\exp(\noiseScale(m - 1)) (m - 1)
    +
    K_2 m
    }{
    K_1  K_2 \left(1 - \frac{1}{\noiseScale}\right) 
    \noiseScale\exp(\noiseScale(m - 1))
    +
    K_2
    }
    \\
    &=\frac{
    \frac{1}{\sum_{j\in[m-1]}
    \exp(\noiseScale j)}
    \left(1 - \frac{1}{\noiseScale}\right) 
    \noiseScale\exp(\noiseScale(m - 1))(m - 1)
    +
    m
    }{
    \frac{1}{\sum_{j\in[m-1]}
    \exp(\noiseScale j)}
    \left(1 - \frac{1}{\noiseScale}\right) 
    \noiseScale\exp(\noiseScale(m - 1))
    +
    1
    }
    \\
    &
    \overset{(a)}{\leq}
    \frac{\frac{1}{2}
    \left(1 - \frac{1}{\noiseScale}\right) 
    (m - 1) + m
    }{
    \frac{1}{2}\left(1 - \frac{1}{\noiseScale}\right) 
     + 1
    }
    \leq m - \frac{1}{3} 
    \overset{(b)}{\leq} m - \frac{1}{\noiseScale}
\end{align*}
where inequalities~(a) and (b) hold for every $m \geq 3$.
\qedhere
    \end{itemize}
\end{proof}


\subsection{Omitted Proof of 
\texorpdfstring{\Cref{coro:SDSU censorship m approx}}{}}
\label{apx:proof SDSU censorship m approx}
\begin{proof}
[Remaining Proof of \Cref{coro:SDSU censorship m approx}]
We now prove that there always exists a direct signaling scheme that is $O(m)$-approximation. 
Similarly, let 
$\signalscheme\primed$
with signal space $\signalspace\primed$
be the signaling scheme
stated in
\Cref{thm:SDSU 4 approx}.
Suppose the signal 
$\signal_{ij}\in \signalspace\primed$ 
is induced from the pair of state $(i, j)$. 
We use $\widetilde{\totalsenderU}_{ij}$
denote the expected sender utility 
induced from the signal $\signal_{ij}$, 
i.e., 
$\widetilde{\totalsenderU}_{ij} 
\triangleq
(\prior_i\senderU_i\signalscheme\primed_i(\signal_{ij})
+ \prior_j\senderU_j\signalscheme\primed_j(\signal_{ij}))\Quant(\signal_{ij})$. 
Let $(\widetilde{i}, \widetilde{j}) \triangleq \argmax_{(i, j)} \widetilde{\totalsenderU}_{ij}$. 
Then, together with the properties (i)-(ii) and 
$|\signalspace\primed| \le 2m$, we know that 
$\Payoff{\signalscheme\primed}
\le 2m\cdot 
\widetilde{\totalsenderU}_{\widetilde{i}\widetilde{j}}$.
Now consider the following direct signaling scheme 
$\widetilde{\signalscheme}$:
\begin{align*}
    \widetilde{\signalscheme}_{\widetilde{i}}(\scaledNoiseDiff)
    & = \signalscheme\primed_i(\signal_{\widetilde{i}\widetilde{j}}) \indicator{\scaledNoiseDiff = \signal_{\widetilde{i}\widetilde{j}}}, \quad
    \widetilde{\signalscheme}_{\widetilde{i}}(\scaledNoiseDiff)
    = 
    \left(1- \signalscheme\primed_i(\signal_{\widetilde{i}\widetilde{j}})\right) 
    \indicator{\scaledNoiseDiff = \widetilde{\scaledNoiseDiff}}; \\
    \widetilde{\signalscheme}_{\widetilde{j}}(\scaledNoiseDiff)
    & = \signalscheme\primed_j(\signal_{\widetilde{i}\widetilde{j}}) \indicator{\scaledNoiseDiff = \signal_{\widetilde{i}\widetilde{j}}}, \quad
    \widetilde{\signalscheme}_{\widetilde{j}}(\scaledNoiseDiff)
    = 
    \left(1- \signalscheme\primed_j(\signal_{\widetilde{i}\widetilde{j}})\right) 
    \indicator{\scaledNoiseDiff = \widetilde{\scaledNoiseDiff}}; \\
    i\in [m]\setminus\{\widetilde{i}, \widetilde{j}\},
    \quad
    \widetilde{\signalscheme}_{i}(\scaledNoiseDiff)
    & = \indicator{\scaledNoiseDiff = \widetilde{\scaledNoiseDiff}}
\end{align*}
where $\widetilde{\scaledNoiseDiff} \triangleq \frac{\prior_{\widetilde{i}}(1 - \signalscheme\primed_i(\signal_{\widetilde{i}\widetilde{j}}))\receiverU_{\widetilde{i}}
+ \prior_{\widetilde{j}}(1 - \signalscheme\primed_j(\signal_{\widetilde{i}\widetilde{j}}))\receiverU_{\widetilde{j}}+
\sum_{i\in [m]\setminus\{\widetilde{i}, \widetilde{j}\}} \prior_i\receiverU_i}{
\prior_{\widetilde{i}}(1 - \signalscheme\primed_i(\signal_{\widetilde{i}\widetilde{j}}))
+ \prior_{\widetilde{j}}(1 - \signalscheme\primed_j(\signal_{\widetilde{i}\widetilde{j}}))+
\sum_{i\in [m]\setminus\{\widetilde{i}, \widetilde{j}\}} \prior_i }$.
Essentially, 
direct signaling scheme 
$\widetilde{\signalscheme}$ has the same signaling structure 
as the signaling scheme $\signalscheme\primed$
on inducing the signal $\signal_{\widetilde{i}\widetilde{j}}$, 
and then pools all remaining states at the same signal $\widetilde{\delta}$. 
By construction, 
it is easy to verify that 
$\Payoff{\widetilde{\signalscheme}} \ge \widetilde{\totalsenderU}_{\widetilde{i}\widetilde{j}}$, 
which gives an $O(m)$-approximation of the signaling scheme 
$\widetilde{\signalscheme}$. 
\end{proof}

\subsection{Omitted Proof of 
\texorpdfstring{\Cref{thm:SDSU 4 approx}}{}}
\label{apx:SDSU proof 4 approx}

We start with the first step --  
a characterization of an optimal signaling scheme
that has the same two properties as 
in \Cref{thm:SDSU 4 approx}. 
\paragraph{Step 1- a characterization of the structure 
of an optimal signaling scheme.}
\begin{lemma}
\label{thm:SDSU opt}
In SDSU environments, for a boundedly rational receiver,
there exists an optimal signaling scheme $\optscheme$
using at most $\sfrac{m(m+1)}{2}$ signals
and this optimal signaling scheme $\optscheme$
satisfies the two properties (i) (ii)
in \Cref{thm:SDSU 4 approx}.
\end{lemma}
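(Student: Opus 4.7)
The plan is to prove the lemma in three steps: establish property~(i), then property~(ii), and finally derive the signal count. The counting step is immediate from (i) and (ii): a scheme satisfying (i) has each signal supported on a singleton or a pair of states, and (ii) then bounds each such support by at most one signal, giving at most $m + \binom{m}{2} = m(m+1)/2$ signals in total.

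For property~(i), I will start from a finite-signal optimum $\optscheme$ of \ref{eq:opt lp} (which exists by standard LP arguments, since the dual has an optimum pinning down the set of tight signals) and perform a Carath\'eodory-type splitting on each over-concentrated signal. Fix a signal $\delta$ whose support $S(\delta)$ has size $s \geq 3$, and let $q_i \triangleq \prior_i\optscheme_i(\delta)$ for $i \in S(\delta)$. The vector $q$ lies in the convex cone $C_\delta \triangleq \{q \in \R_{\geq 0}^{S(\delta)} : \sum_{i\in S(\delta)} q_i(\receiverU_i - \delta) = 0\}$, whose extreme rays are supported on at most two states (singletons $\{i\}$ with $\receiverU_i = \delta$, or pairs $\{i,\ell\}$ with $\receiverU_i \leq \delta \leq \receiverU_\ell$). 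By Carath\'eodory's theorem, $q$ decomposes as a conic combination of at most $s-1$ such rays. I then replace signal $\delta$ by one abstract signal per ray, all inducing the same posterior $\delta$ and hence the same action probability $\Quant(\delta)$. Since the objective is linear in the mass vector at fixed $\delta$, sender utility is preserved, while each new signal has support size $\leq 2$.

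For property~(ii), the plan is to recast the problem (restricted to schemes satisfying (i)) as an LP whose variables are non-negative measures $\{\nu_{ij}\}_{i<j}$ on the admissible pooling intervals $[\receiverU_i,\receiverU_j]$ together with singleton masses $\{r_k\}_{k\in[m]}$. The $m$ state-mass-balance equalities, of the form $r_k + \sum_{(i,j)\ni k}\int s_k^{(i,j)}(\delta)\,d\nu_{ij}(\delta) = \prior_k$, together with the linear objective, define a (semi-infinite) LP. By property~(i) every signaling scheme maps to a feasible point, and an extreme point of this LP has at most $m$ atoms across all $\nu_{ij}$'s and $r_k$'s. To sharpen this to ``at most one atom per pair,'' I will argue that whenever an LP optimum carries two atoms $\delta_1,\delta_2$ on the same pair $(i,j)$, one can move within the optimum face by a specific swap that transfers mass from the pair signal to singletons at $\receiverU_i$ and $\receiverU_j$, concentrating $\nu_{ij}$ onto a single atom while preserving both state-mass balance and the objective value.

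The main obstacle will be this consolidation step for (ii). The naive merge---replacing two pair atoms by a single atom at the weighted-average $\delta$---does not preserve the objective, because the per-pair contribution $F(a,b) \triangleq \Quant\bigl(\tfrac{a\receiverU_i + b\receiverU_j}{a+b}\bigr)(a\senderU_i + b\senderU_j)$ fails to be concave in $(a,b)$: the S-shape of $\Quant$ renders $F$ neither sub- nor superadditive in general, so splitting a pair into two atoms can be strictly better than a single atom with the same total masses. The delicate part is therefore to construct a transfer that simultaneously uses singleton slack at $\receiverU_i$ and $\receiverU_j$ to exactly compensate the merge loss, plausibly via a complementary-slackness argument on the pair-indexed LP showing that any two-atom extreme-point optimum admits a convex deformation into a one-atom-per-pair extreme-point optimum with the same value.
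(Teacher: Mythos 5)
Your treatment of property~(i) is essentially the paper's: the Carath\'eodory-type conic decomposition you describe is exactly what \Cref{lem:multi label signal to binary label signal random variable} (imported from \cite{FTX-22}) packages, and the observation that all split signals keep the same posterior mean $\delta$ (hence the same $\Quant(\delta)$) and preserve a linear objective is the right way to see that utility is unchanged. The signal-counting step from (i) and (ii) is also correct.

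The genuine gap is in property~(ii), and you have actually located it yourself: the ``naive merge'' of two pair-atoms fails because $\Quant$ is S-shaped, so $F(a,b)$ is neither sub- nor superadditive, and you then propose to repair this by a ``swap'' that offloads mass to singletons at $\receiverU_i$ and $\receiverU_j$, ``plausibly via a complementary-slackness argument.'' This is not an argument; it is a description of what would need to be true. Worse, the proposed repair presupposes that singleton slack at $\receiverU_i$ and $\receiverU_j$ is available to absorb the transfer, but nothing in a generic (i)-satisfying optimum guarantees any singleton mass at those states, so the swap may have nowhere to send mass. The paper resolves this with a different and cleaner idea: for each pair $(i,j)$, aggregate all the $(i,j)$-pooling mass, renormalize it into a \emph{binary-state sub-instance}, and replace the collection of $(i,j)$-signals by the optimal scheme for that sub-instance. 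The crucial structural fact is then a standalone lemma (\Cref{lem:SDSU binary opt}) that the optimum of any binary-state instance is a censorship scheme, hence pools on at most one signal; this is proved by a separate primal--dual construction, not by merging atoms. That lemma is the missing ingredient in your plan: without it (or some equivalent global re-optimization argument), the consolidation step does not go through. I'd encourage you to first establish the binary-state censorship-optimality lemma directly, then apply it pairwise as the paper does, rather than trying to perturb an LP extreme point in place.
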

We provide a graphical illustration for the 
structure of optimal signaling schemes
characterized in the above \Cref{thm:SDSU opt}.
\begin{figure}[H]
\centering
\subfloat[
\label{ex: opt censor}
]{
\scalebox{.7}{\begin{tikzpicture}[scale=0.9, transform shape]
    \node (v1) at (0,3) {};
    \node (v2) at (0,1) {};
    \node (v3) at (3,0) {};
    \node (v4) at (3,2) {};
    \node (v5) at (3,4) {};

    \begin{scope}[fill opacity=0.8]
    \filldraw[fill=gray!40!white] 
        ($(v1)+(0,0.6)$) 
        to[out=-40,in=150] ($(v3) + (0, 0.6)$) 
        to[out=-30,in=90] ($(v3) + (0.6, 0)$) 
        to[out=-90,in=10] ($(v3) + (0, -0.5)$)
        to[out=180,in=-40] ($(v2) + (0,-0.7)$)
        to[out=130,in=140] ($(v1)+(0,0.6)$);
    \filldraw[fill=gray!80!white] 
        ($(v4)+(-0.5,0.1)$)
        to[out=90,in=180] ($(v4)+(0,0.5)$)
        to[out=0,in=90] ($(v4)+(0.6,0.3)$)
        to[out=270,in=0] ($(v4)+(0,-0.6)$)
        to[out=180,in=270] ($(v4)+(-0.5,0.1)$);
    \filldraw[fill=gray!80!white] 
        ($(v5)+(-0.5,0.1)$)
        to[out=90,in=180] ($(v5)+(0,0.5)$)
        to[out=0,in=90] ($(v5)+(0.6,0.3)$)
        to[out=270,in=0] ($(v5)+(0,-0.6)$)
        to[out=180,in=270] ($(v5)+(-0.5,0.1)$);
    \end{scope}


    \fill (v1) circle (0.2) 
    node [below, yshift=-0.15 cm] {$\receiverU_1$};
    \fill (v2) circle (0.2) 
    node [below, yshift=-0.15 cm] {$\receiverU_2$};
    \fill (v3) circle (0.2) 
    node [right, xshift=0.1 cm] {$\receiverU_3$};
    \fill (v4) circle (0.2) 
    node [right, xshift=0.1 cm] {$\receiverU_4$};
    \fill (v5) circle (0.2) 
    node [right, xshift=0.1 cm] {$\receiverU_5$};

\end{tikzpicture}}
}\qquad\qquad
\subfloat[
\label{ex: opt binary censor}
]{
\scalebox{.7}{\begin{tikzpicture}[scale=0.9, transform shape]
    \node (v1) at (0,3) {};
    \node (v2) at (0,1) {};
    \node (v3) at (3,0) {};
    \node (v4) at (3,2) {};
    \node (v5) at (3,4) {};

    \begin{scope}[fill opacity=0.8]
    \filldraw[fill=gray!20!white] 
        ($(v1)+(0,0.6)$) 
        to[out=-40,in=10] ($(v2) + (0, -0.6)$) 
        to[out=190,in=140] ($(v1) + (0,0.6)$);
    \filldraw[fill=gray!40!white] 
        ($(v1)+(-0.1,0.6)$) 
        to[out=-20,in=150] ($(v3) + (0, 0.6)$) 
        to[out=-30,in=90] ($(v3) + (0.6, 0)$) 
        to[out=-90,in=0] ($(v3) + (0, -0.6)$)
        to[out=170,in=-60] ($(v1) + (-0.5,-0.5)$)
        to[out=130,in=180] ($(v1)+(-0.1,0.6)$);
    \filldraw[fill=gray!80!white] 
        ($(v4)+(-0.5,0.1)$)
        to[out=90,in=180] ($(v4)+(0,0.5)$)
        to[out=0,in=90] ($(v4)+(0.6,0.3)$)
        to[out=270,in=0] ($(v4)+(0,-0.6)$)
        to[out=180,in=270] ($(v4)+(-0.5,0.1)$);
    \filldraw[fill=gray!80!white] 
        ($(v5)+(-0.5,0.1)$)
        to[out=90,in=180] ($(v5)+(0,0.5)$)
        to[out=0,in=90] ($(v5)+(0.6,0.3)$)
        to[out=270,in=0] ($(v5)+(0,-0.6)$)
        to[out=180,in=270] ($(v5)+(-0.5,0.1)$);
    \end{scope}


    \fill (v1) circle (0.2) 
    node [below, yshift=-0.15 cm] {$\receiverU_1$};
    \fill (v2) circle (0.2) 
    node [below, yshift=-0.15 cm] {$\receiverU_2$};
    \fill (v3) circle (0.2) 
    node [right, xshift=0.1 cm] {$\receiverU_3$};
    \fill (v4) circle (0.2) 
    node [right, xshift=0.1 cm] {$\receiverU_4$};
    \fill (v5) circle (0.2) 
    node [right, xshift=0.1 cm] {$\receiverU_5$};

\end{tikzpicture}}
}
\caption{\label{ex: comparison}
Structure graphical illustration for
optimal signaling schemes characterized in \Cref{thm:SISU opt} (\Cref{ex: opt censor}) and \Cref{thm:SDSU opt} (\Cref{ex: opt binary censor})
in a SISU environment.
Each state is the black dot, and all the states in a gray shaded region
imply that there exists a signal induced from 
these states. 
In both figures, 
the receiver is fully rational, i.e., $\noiseScale = \infty$.
The SISU environment is specified as below:
$\receiverU_1 = -1.5, \receiverU_2 = 0.5,
\receiverU_3 = 1, \receiverU_4 = 1.5,
\receiverU_5 = 2$, 
and $\senderU_i = 1, \prior_i =  0.2, \forall i \in[5]$.
For this problem instance, it can be shown that
a censorship signaling scheme $\optschemeSISUcensor$  
(\Cref{ex: opt censor}) 
is optimal: $\optsignalspaceSISUnew = \{\signal_1, \signal_2, \signal_3\}$ , 
$\optschemeSISUcensor_i(\signal_1)=1, \forall i\in[3];
\optschemeSISUcensor_4(\signal_2) = 1;
\optschemeSISUcensor_5(\signal_3) = 1$.
Meanwhile, 
a signaling scheme $\optscheme$ (\Cref{ex: opt binary censor}) 
that satisfies the two properties in \Cref{thm:SDSU opt}
is also optimal:
$\optsignalspace = \{\signal_1, \signal_2, \signal_3, \signal_4\}$,
$\optscheme_1(\signal_1) = \sfrac{1}{3}, 
\optscheme_2(\signal_1) = 1;
\optscheme_1(\signal_2) = \sfrac{2}{3}, 
\optscheme_3(\signal_2) = 1;
\optscheme_4(\signal_3) = 1;
\optscheme_5(\signal_4) = 1.$
}
\end{figure}
\begin{remark}
We would like to note that 
an application of the
Caratheodory's theorem 
shows that $m$ signals 
are sufficient for optimal signaling scheme.
However, such characterization does not shed much light 
on the structure of optimal signaling scheme. 
To prove \Cref{thm:SDSU 4 approx}, we resort to 
characterizing an optimal signaling scheme that uses
more signals but has more structural properties 
that we can leverage to study 
censorship/direct signaling schemes.
\end{remark}
\paragraph{Proof overview of \Cref{thm:SDSU opt}.}
At a high level, the proof of \Cref{thm:SDSU opt}
proceeds in two steps.
In step 1a, we present a reduction from arbitrary signaling schemes 
to signaling schemes that satisfy property (i).
Specifically,
given an arbitrary signaling scheme $\signalscheme$,
we can construct a new signaling scheme $\signalscheme\primed$
which satisfies property~(i) and 
achieves the same expected sender utility 
as the original signaling scheme.\footnote{In the remaining of 
this subsection,
we use superscript $\dagger$
to denote the constructed signaling schemes
satisfying property~(i),
and superscript $\ddagger$
to denote the constructed signaling schemes
satisfying properties~(i) (ii).}
In step 1b, we provide an approach 
to convert any signaling scheme $\signalscheme\primed$ 
which satisfies property~(i)
to a new signaling scheme $\signalscheme\doubleprimed$ 
which satisfies both properties (i) (ii),
and achieves weakly higher expected sender utility.
Informally,
given a signaling scheme $\signalscheme\primed$
from the first step, 
we can 
obtain the
signaling scheme 
$\signalscheme\doubleprimed$ by 
optimizing the pooling structure 
for each pair of states while holding signals from 
all other pairs fixed. 
Loosely speaking, this reduces our task
to identify optimal signaling scheme when the
state space is binary.
Hence, we introduce
a technical lemma 
(\Cref{lem:SDSU binary opt}),
showing that the optimal signaling schemes
are censorship signaling schemes when 
the sate space is binary,
which may be of independent interest.

Below we provide detailed discussion and related lemmas for the above mentioned two steps. 
In the end of this subsection, we combine all pieces together to conclude the proof of \Cref{thm:SDSU opt}.

\paragraph{Step 1a- reduction to signaling schemes with property~(i).}
In this step, we argue that it is without loss of generality to 
consider signaling schemes that satisfy property~(i) 
in \Cref{thm:SDSU opt}.
\begin{lemma}
\label{lem:SDSU opt step 1}
In SDSU environments, for a boundedly rational receiver,
for an arbitrary signaling scheme~$\signalscheme$,
there exists a signaling scheme $\signalscheme\primed$
with signal space $\signalspace\primed$
such that 
\begin{itemize}[label={-}]
\item each signal $\sigma\in\signalspace\primed$
is induced by at most two states,
\item signaling scheme $\signalscheme\primed$ 
achieves the same expected sender utility
as signaling scheme $\signalscheme$.
\end{itemize}
\end{lemma}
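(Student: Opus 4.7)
\textbf{Proof proposal for Lemma \ref{lem:SDSU opt step 1}.} The plan is to show that any signal $\sigma$ used by $\signalscheme$ which is induced by three or more states can be ``split'' into a collection of new signals, each supported on at most two states, such that (a) every new signal carries the same posterior-induced expected-utility-difference value $\scaledNoiseDiff$ as $\sigma$, and (b) the total probability $\prior_i\signalscheme_i(\sigma)$ contributed by each state is preserved. Property (a) guarantees the receiver's action probability $\Quant(\scaledNoiseDiff)$ is unchanged signal-by-signal, and property (b) together with (a) then implies that the per-state contribution $\prior_i\senderU_i\Quant(\scaledNoiseDiff)\signalscheme_i(\sigma)$ to the sender's payoff is preserved. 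Summing over all split signals yields $\Payoff{\signalscheme\primed}=\Payoff{\signalscheme}$, and by construction every signal of $\signalscheme\primed$ is induced by at most two states.

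The key step is the decomposition of a single signal $\sigma$. Work with the language of feasible solutions to program~\ref{eq:opt lp}: let $m_i\triangleq \prior_i\signalscheme_i(\sigma)$ for $i$ in the support $S=\supp(\posterior(\sigma))$, and recall from the first constraint of \ref{eq:opt lp} that
\begin{equation*}
\sum_{i\in S} m_i\,(\receiverU_i-\scaledNoiseDiff)=0.
\end{equation*}
If every $i\in S$ satisfies $\receiverU_i=\scaledNoiseDiff$, then each state can be peeled off into its own singleton signal at value $\scaledNoiseDiff$. Otherwise the balance equation forces $S$ to contain both a state $i^-$ with $\receiverU_{i^-}<\scaledNoiseDiff$ and a state $i^+$ with $\receiverU_{i^+}>\scaledNoiseDiff$. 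I would then create a new signal $\sigma'$ at the value $\scaledNoiseDiff$ supported only on $\{i^-,i^+\}$, assigning masses $q^-,q^+\geq 0$ with ratio $q^-(\scaledNoiseDiff-\receiverU_{i^-})=q^+(\receiverU_{i^+}-\scaledNoiseDiff)$, scaled so that either $q^-=m_{i^-}$ or $q^+=m_{i^+}$ (whichever is exhausted first under this ratio). The residual masses $m_i-q_i$ on the remaining states still satisfy the same balance equation at value $\scaledNoiseDiff$, so the argument can be applied recursively.

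Since each iteration strictly reduces the cardinality of the support $S$ by at least one, the process terminates in at most $|S|-1$ rounds, producing a finite collection of new signals each supported on at most two states, each inducing the same value $\scaledNoiseDiff$, and jointly preserving the marginal mass $m_i$ of every state $i\in S$. Apply this decomposition to every signal of $\signalscheme$ whose support exceeds two, and assemble the resulting signals into $\signalscheme\primed$ with signal space $\signalspace\primed$. Feasibility of $\signalscheme\primed$ as a solution of program~\ref{eq:opt lp} follows from the preservation of marginals (second constraint) and from the fact that each new signal satisfies $\sum_i \prior_i(\receiverU_i-\scaledNoiseDiff)\signalscheme_i\primed(\scaledNoiseDiff)=0$ (first constraint). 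The payoff equality $\Payoff{\signalscheme\primed}=\Payoff{\signalscheme}$ is then immediate.

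I do not anticipate a serious obstacle: the argument is essentially a greedy ``match a low state with a high state'' decomposition, and the main thing to be careful about is the edge case where $\receiverU_i=\scaledNoiseDiff$ for some $i\in S$, which is handled by peeling such states off as singleton signals.
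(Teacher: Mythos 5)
Your proof is correct and is in essence the same argument as the paper's: both split each signal $\sigma$ into a finite collection of new signals, all carrying the same value $\scaledNoiseDiff(\sigma)$ (so the receiver's action probability $\Quant(\scaledNoiseDiff)$ is unchanged on every piece), each supported on at most two states, and jointly preserving every state's marginal mass $\prior_i\signalscheme_i(\sigma)$; the payoff equality then follows by factoring $\Quant(\scaledNoiseDiff(\sigma))$ out of the per-signal sum. The only real difference is procedural. The paper phrases the decomposition step as a statement about random variables (``Bayesian-plausibility,'' ``binary-support,'' ``consistency'') and invokes it as a black box from \cite{FTX-22}, then wraps it into a signaling scheme and checks payoffs via a coupling. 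You instead prove the same decomposition inline by the greedy pairing ``pick one state with $\receiverU_{i^-}<\scaledNoiseDiff$ and one with $\receiverU_{i^+}>\scaledNoiseDiff$, split off a binary-support signal at $\scaledNoiseDiff$ that saturates one of them, recurse on the residual'' --- which is exactly what the cited lemma packages up, and whose proof is this very argument. Your termination reasoning is sound: the support strictly shrinks each round, and the residual balance equation $\sum_i m_i'(\receiverU_i-\scaledNoiseDiff)=0$ with $m_i'\ge 0$ guarantees that either all surviving states have $\receiverU_i=\scaledNoiseDiff$ (peel them off as singletons) or positive-mass states survive on both sides, so the pairing step always applies. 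Your version is a bit more self-contained; the paper's is a bit shorter by deferring to the citation. Both are valid.
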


Informally, 
we can construct 
the signaling scheme $\signalscheme\primed$
in \Cref{lem:SDSU opt step 1} as follows.
For each signal $\signal$
in the original signaling scheme $\signalscheme$,
we decompose it into multiple signals, each of which 
is induced by at most two states,
and satisfies some other requirements.
The feasibility of this decomposition is guaranteed
by the following lemma.

\begin{lemma}[\citealp{FTX-22}]
\label{lem:multi label signal to binary label signal random variable}
Let $X$ be a random variable 
with discrete support 
$\supp(X)$.
There exists a 
positive integer $K$,
a finite set of $K$
random variables $\{X_k\}_{k\in[K]}$,
and 
convex combination coefficients
$\convexcombinbf\in [0, 1]^K$ 
with $\sum_{k\in[K]} \convexcombin_k = 1$
such that:
\begin{enumerate}
    \item[(i)] \underline{Bayesian-plausibility}: for each $k\in [K]$, $\expect{X_k} = \expect{X}$;
    \item[(ii)] \underline{Binary-support}:
    for each $k\in[K]$, the size of $X_k$'s support is at most 2,
    i.e., $|\supp(X_k)| \leq 2$
    \item[(iii)] \underline{Consistency}:
    for each $x\in \supp(X)$, 
$\prob{X=x} = \sum_{k\in[K]}
\convexcombin_k\cdot 
\prob{X_k = x}$
\end{enumerate}
\end{lemma}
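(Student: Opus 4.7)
The plan is to prove the lemma by induction on the support size $|\supp(X)|$, using a greedy procedure that extracts one binary-supported, mean-preserving factor at a time. In the base case $|\supp(X)| \leq 2$, we simply take $K=1$, $X_1 = X$, and $\convexcombin_1 = 1$, so all three conditions hold immediately.

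For the inductive step, suppose $|\supp(X)| \geq 3$, and let $\mu = \expect{X}$. Since $X$ is not deterministically equal to $\mu$ (otherwise $|\supp(X)| = 1$), there must exist a pair of points $x_- < \mu < x_+$ in $\supp(X)$; indeed, a distribution supported entirely on one side of a value cannot have that value as its mean. Define an auxiliary binary random variable $Y$ supported on $\{x_-, x_+\}$ with
\begin{align*}
\prob{Y = x_-} = \frac{x_+ - \mu}{x_+ - x_-}, \qquad \prob{Y = x_+} = \frac{\mu - x_-}{x_+ - x_-},
\end{align*}
which is constructed precisely so that $\expect{Y} = \mu$, delivering Bayesian-plausibility for this factor.

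I then extract $Y$ from $X$ with the largest weight that keeps the residual non-negative. Set
\begin{align*}
\convexcombin_1 = \min\left\{\frac{\prob{X=x_-}}{\prob{Y=x_-}}, \frac{\prob{X=x_+}}{\prob{Y=x_+}}\right\},
\end{align*}
and define the residual $X'$ by $\prob{X' = x} = (\prob{X=x} - \convexcombin_1 \prob{Y=x})/(1-\convexcombin_1)$ for all $x$ (with $\prob{Y=x} = 0$ for $x \notin \{x_-, x_+\}$). By construction $\convexcombin_1 \in (0,1]$, $X'$ is a valid probability distribution, and $\expect{X'} = \mu$ since the mean is preserved by the extraction of a mean-$\mu$ factor. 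Crucially, the choice of $\convexcombin_1$ forces at least one of $\prob{X'=x_-}, \prob{X'=x_+}$ to zero, so $|\supp(X')| \leq |\supp(X)| - 1$. Applying the inductive hypothesis to $X'$ yields binary-supported $\{X_k\}_{k=2}^K$ with coefficients $\{\tilde\convexcombin_k\}_{k=2}^K$ summing to one; setting $X_1 = Y$ and $\convexcombin_k = (1-\convexcombin_1)\tilde\convexcombin_k$ for $k \geq 2$ produces the desired decomposition.

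The main obstacle is bookkeeping for the consistency condition (iii) after recursive composition: the residual at each level is an affine combination of previously extracted binary factors and the leaves of the recursion, and one needs to verify that assembling these telescopes to give $\prob{X=x} = \sum_k \convexcombin_k \prob{X_k = x}$. This reduces to checking the single-step identity $\prob{X=x} = \convexcombin_1 \prob{Y=x} + (1-\convexcombin_1) \prob{X'=x}$, which holds by the definition of $X'$, and then unrolling the induction. Because $|\supp(X)|$ strictly decreases at each step, the recursion terminates in at most $|\supp(X)| - 1$ rounds, bounding $K$ and completing the proof.
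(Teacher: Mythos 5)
The paper does not prove this lemma; it imports it verbatim from \cite{FTX-22}, so there is no in-paper argument to compare against. Your proposal supplies a self-contained proof, and it is correct for random variables with finite support (which is the case the paper implicitly needs and the case in which the statement can hold at all, since finitely many binary-supported factors can only cover finitely many support points). The argument is the standard greedy mean-preserving decomposition: since $\mu = \expect{X}$ must have support mass strictly on each side whenever $|\supp(X)|\geq 2$, you can always pick $x_- < \mu < x_+$, peel off the unique binary distribution on $\{x_-,x_+\}$ with mean $\mu$ at the largest weight that keeps the residual nonnegative, observe that this zeroes out at least one of the two endpoints so the residual support strictly shrinks, and recurse. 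Your verification that $\convexcombin_1 \in (0,1)$ and that the residual is a bona fide mean-$\mu$ distribution is sound, the telescoping identity for condition (iii) is exactly the single-step extraction equation composed with the inductive hypothesis, and the bound $K \leq |\supp(X)| - 1$ follows from the strict support decrease. One presentational remark: you might state up front that you are assuming $|\supp(X)| < \infty$ (as the statement's promise of a finite $K$ forces), since "discrete support" alone would not suffice.
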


\begin{proof}[Proof of \Cref{lem:SDSU opt step 1}]
Fix an arbitrary signaling scheme $\signalscheme$
with signal space $\signalspace$.
Recall that $\signalscheme_i(\signal)$
is the probability mass (or density) that signal $\signal$
is realized when the realized state is state $i$.

Now we describe the construction of $\signalscheme\primed$
and its signal space $\signalspace\primed$.
Initially, we set $\signalspace\primed \gets \emptyset$.
For each signal $\signal\in \signalspace$, 
let $\posterior_i(\signal) \triangleq
\frac{\prior_i \signalscheme_i(\signal)}
{\sum_{j\in[m]}\prior_j\signalscheme_j(\signal)}$
be its induced posterior belief for each state $i$.
Consider the following random variable $X$
where 
$\prob{X = \receiverU_i} = \posterior_i(\signal)$
for each $i\in[m]$.
Let integer $K$,
random variables $\{X_k\}_{k\in[K]}$,
and 
convex combination coefficients
$\convexcombinbf\in [0, 1]^K$ 
be the elements in \Cref{lem:multi label signal to binary label signal random variable} for the aforementioned random variable $X$. 
Add $K$ signals $\{\signal^{(1)}, \dots, \signal^{(K)}\}$
into the signal space $\signalspace\primed$,
i.e., 
$\signalspace\primed \gets 
\signalspace\primed \cup \{\signal^{(1)}, \dots, \signal^{(K)}\}$.
For each $k\in[K]$,
set $\signalscheme\primed_i(\signal\ked)
\gets 
\frac{1}{\prior_i}
{\convexcombin_k\cdot \prob{X_k = \receiverU_i}\cdot 
{(\sum_{j\in[m]} \prior_j\signalscheme_j(\signal))}}
$.
Note that this construction ensures that 
\begin{align*}
     \sum_{k\in[K]}
\signalscheme\primed_i(\signal\ked)
&=
\sum_{k\in[K]}
\frac{1}{\prior_i}
{\convexcombin_k\cdot \prob{X_k = \receiverU_i}\cdot 
{\left(\sum_{j\in[m]} \prior_j\signalscheme_j(\signal)\right)}}
\overset{(a)}{=}
\frac{1}{\prior_i}
\prob{X = \receiverU_i}\cdot 
{\left(\sum_{j\in[m]} \prior_j\signalscheme_j(\signal)\right)}
= \signalscheme_i(\signal)
\end{align*}
where equality~(a) holds due to the ``consistency'' property 
in \Cref{lem:multi label signal to binary label signal random variable}.
Hence, the constructed signaling scheme $\signalscheme\primed$
is feasible.

Additionally, the ``binary-support'' property in \Cref{lem:multi label signal to binary label signal random variable}
ensures that signaling scheme $\signalscheme\primed$ satisfies 
that each signal from $\signalspace\primed$
is induced by at most two states.

Finally, to see 
that signaling scheme $\signalscheme\primed$
achieves the same expected sender utility as signaling scheme $\signalscheme$,
consider the following coupling between these two signaling schemes:
whenever signal $\signal\in \signalspace$
is realized in signaling scheme $\signalscheme$,
sample
the corresponding signal $\signal\ked$ with probability $\convexcombin_k$
for each $k\in[K]$.
This coupling is well-defined due to the
``consistency'' property in 
\Cref{lem:multi label signal to binary label signal random variable}.
Invoking the ``Bayesian-plausibility'' property
in
\Cref{lem:multi label signal to binary label signal random variable},
from the receiver's perspective, 
her expected utility given the posterior belief 
$\posterior(\signal\ked)$
under signaling scheme $\signalscheme\primed$
is the same as 
her expected utility given the posterior belief 
$\posterior(\signal)$ 
under signaling scheme $\signalscheme$.
Thus the probabilities that the receiver takes 
action 1 
are the same in both signaling schemes,
yielding the same expected utility to the sender.
\end{proof}

\paragraph{Step 1b- reduction to signaling schemes with properties~(i) and (ii).}
In this step, we argue that it is without loss of generality to 
consider signaling schemes which satisfy properties~(i) and (ii) 
in \Cref{thm:SDSU opt}.

\begin{lemma}
\label{lem:SDSU opt step 2}
In SDSU environments, for a boundedly rational receiver,
given any signaling scheme~$\signalscheme\primed$
with signal space $\signalspace\primed$
where each signal $\sigma\in\signalspace\primed$
is induced by at most two states,
there exists a signaling scheme $\signalscheme\doubleprimed$
with signal space $\signalspace\doubleprimed$
such that 
\begin{itemize}[label={-}]
\item each signal $\sigma\in\signalspace\doubleprimed$
is induced by at most two states,
\item each pair of states is pooled at most one signal,
\item signaling scheme $\signalscheme\doubleprimed$ 
achieves weakly higher expected sender utility
as signaling scheme $\signalscheme\primed$.
\end{itemize}
\end{lemma}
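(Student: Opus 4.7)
The plan is to iteratively improve $\signalscheme\primed$ one pair of states at a time by locally re-solving a binary-state persuasion sub-problem. Fix any pair $(i,j)$ with $i<j$ and let
\[
\signalspace\primed_{ij} \triangleq \{\signal\in\signalspace\primed:
\supp(\posterior(\signal)) = \{i,j\}\}
\]
be the set of signals of $\signalscheme\primed$ whose posterior is supported on exactly the pair $\{i,j\}$. Write $q_i \triangleq \sum_{\signal\in\signalspace\primed_{ij}} \signalscheme\primed_i(\signal)$ and $q_j \triangleq \sum_{\signal\in\signalspace\primed_{ij}} \signalscheme\primed_j(\signal)$ for the total probability masses that states $i$ and $j$ allocate to $\signalspace\primed_{ij}$. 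My strategy is to freeze every signal outside $\signalspace\primed_{ij}$ and re-design only the signals in $\signalspace\primed_{ij}$, using the binary-state optimality result.

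The main observation is that designing the signals within $\signalspace\primed_{ij}$ is equivalent to a two-state persuasion instance
\[
\hat m = 2,\quad
\hat\prior_1 = \tfrac{\prior_i q_i}{\prior_i q_i + \prior_j q_j},\quad
\hat\prior_2 = \tfrac{\prior_j q_j}{\prior_i q_i + \prior_j q_j},\quad
\hat\receiverU_1 = \receiverU_i,\quad \hat\receiverU_2 = \receiverU_j,\quad
\hat\senderU_1 = \senderU_i,\quad \hat\senderU_2 = \senderU_j,
\]
under the identifications $\hat\signalscheme_1(\signal) = \signalscheme\primed_i(\signal)/q_i$ and $\hat\signalscheme_2(\signal) = \signalscheme\primed_j(\signal)/q_j$ for $\signal\in\signalspace\primed_{ij}$. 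A direct computation shows that (a)~the Bayesian-plausibility constraints match, because for each $\signal\in\signalspace\primed_{ij}$ the posterior support is already contained in $\{i,j\}$ so the persuasiveness equation reduces to the two-state one, implying that any feasible scheme for the binary sub-instance can be rescaled by $q_i,q_j$ into a feasible modification of the original scheme without disturbing the masses that states $i,j$ allocate elsewhere; and (b)~the expected sender utility contributed by $\signalspace\primed_{ij}$ in the original problem equals exactly $(\prior_i q_i + \prior_j q_j)\cdot \Payoff{\hat\signalscheme}$.

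By \Cref{lem:SDSU binary opt}, the binary sub-instance admits a censorship signaling scheme $\hat\signalscheme^*$ as an optimal solution. Applying \Cref{def:censorship} with $\hat m = 2$, any binary-state censorship scheme uses at most one signal whose posterior is supported on both states (the pooling signal $\censorshipsignal$), plus at most one signal that reveals a single state. Substituting the restriction of $\signalscheme\primed$ on $\signalspace\primed_{ij}$ with $\hat\signalscheme^*$ therefore yields a new signaling scheme that (i)~preserves property~(i) globally, since the replaced signals still have posterior support contained in $\{i,j\}$; (ii)~is now pooled on pair $(i,j)$ by at most one signal; and (iii)~weakly increases the sender's expected utility by the scaling identity in (b).

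We obtain $\signalscheme\doubleprimed$ by iterating this replacement across all $\binom{m}{2}$ unordered pairs. Each iteration only touches signals whose posterior support equals the current pair $(i,j)$, so neither property~(i) nor the single-pooling-signal guarantee established for previously processed pairs is ever disturbed (new reveal-only signals do not pool any pair). The main obstacle in the formal write-up is the bookkeeping in the middle step: one must check carefully that the bijection between feasible $\hat\signalscheme$ for the binary sub-instance and feasible modifications of $\signalspace\primed_{ij}$ preserves Bayesian plausibility and scales sender utility exactly by $\prior_i q_i + \prior_j q_j$, so that optimality in the sub-instance translates into a weak improvement of the full scheme.
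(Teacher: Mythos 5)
Your proof is correct and takes essentially the same approach as the paper: restrict attention to the signals with support exactly $\{i,j\}$, rescale to obtain a binary-state sub-instance, invoke \Cref{lem:SDSU binary opt} to replace the restriction with a censorship scheme (which uses at most one pooling signal per pair), and do this for every pair. Your $q_i, q_j$ are the paper's $\aggregateproba/\prior_i, \aggregateprobb/\prior_j$, so the binary sub-instance and the rescaled solution coincide with the paper's construction; the only cosmetic difference is that you present the replacement as an iteration over pairs while the paper does it in a single pass, but since the signal sets $\signalspace\primed_{ij}$ are disjoint across pairs the two presentations are equivalent.
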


\begin{proof}
Fix an arbitrary signaling scheme
$\signalscheme\primed$
with signal space $\signalspace\primed$
where each signal 
is induced by at most two states.
Below we describe the construction of $\signalscheme\doubleprimed$
and its signal space $\signalspace\doubleprimed$.
Initially, we set $\signalspace\doubleprimed \gets \emptyset$.

For each pair of states $(i, j)$,
let $\signalspace_{ij}\primed \subseteq \signalspace\primed$
be the subset of signals, each of which is induced by state $i$ 
and state $j$,
i.e., $\signalspace_{ij}\primed \triangleq
\{\signal\in \signalspace\primed:\supp(\posterior(\signal)) = 
\{i, j\}\}$.
For ease of presentation,
we introduce auxiliary notations
$\aggregateproba$
(resp.\ 
$\aggregateprobb$)
to denote 
the probability that 
the realized state is $i$
(resp.\ $j$),
and the realized signal is 
from $\signalspace\primed_{ij}$,
i.e., 
$\aggregateproba\triangleq
\int_{\signal\in\signalspace_{ij}\primed}
\prior_i\signalscheme\primed_i(\signal)\,d\signal$
and 
$\aggregateprobb\triangleq
\int_{\signal\in\signalspace_{ij}\primed}
\prior_j\signalscheme\primed_j(\signal)\,d\signal$.
Consider the program~\ref{eq:opt lp}
on the following binary-state instance $\instance_{ij}=
(\hat m, 
\{\hat \prior_k\},
\{\hat \receiverU_k\},
\{\hat \senderU_k\})$:
\begin{align*}
    \hat m \gets 2,\qquad
    \hat \receiverU_1 &\gets \receiverU_i,\qquad
    \hat \receiverU_2 \gets \receiverU_j,\qquad
    \hat \senderU_1 \gets \senderU_i,\qquad
    \hat \senderU_2 \gets \senderU_j,
    \\
    &\hat\prior_1\gets
    \frac{\aggregateproba}
    {\aggregateproba+\aggregateprobb},
    \qquad
    \hat\prior_2 \gets
    \frac{\aggregateproba}
    {\aggregateproba+\aggregateprobb}
\end{align*}
Notably, 
$\left\{
\frac{\prior_i\signalscheme\primed_i(\signal)}{
\aggregateproba
}
\cdot 
\indicator{\signal\in \signalspace_{ij}\primed},
\frac{\prior_j\signalscheme\primed_j(\signal)
}{
\aggregateprobb
}
\cdot \indicator{\signal\in \signalspace_{ij}\primed}
\right\}$
is a feasible solution 
of program~\ref{eq:opt lp}
on the binary-state instance $\instance_{ij}$.
Now, let 
$\{\hat\signalscheme^*_1(\signal),\hat\signalscheme^*_2(\signal)\}$
with signal space $\hat\signalspace_{ij}^*$
be the optimal solution of program~\ref{eq:opt lp}
on the binary-state instance $\instance_{ij}$.
We add signals from $\hat\signalspace_{ij}^*$
into the signal space $\signalspace\doubleprimed$,
i.e., $\signalspace\doubleprimed\gets \signalspace\doubleprimed
\cup \hat\signalspace_{ij}^*$,
and
set $\signalscheme\doubleprimed_i(\signal) \gets 
\frac{\aggregateproba}{\prior_i}
    \cdot
    \hat\signalscheme_1^*(\signal)$,
    $\signalscheme\doubleprimed_j(\signal) \gets 
\frac{\aggregateprobb}{\prior_j}\cdot 
    \hat\signalscheme_2^*(\signal)$
    for each signal $\signal \in \hat\signalspace_{ij}^*$.
It is straightforward
to verify that the constructed signaling scheme 
$\signalscheme\doubleprimed$ 
with signal space $\signalspace\doubleprimed$
is feasible, and 
each signal $\signal\in\signalspace\primed$
is induced by at most two states by construction.

Now we argue that 
each pair of states in signaling scheme $\signalscheme\doubleprimed$
is pooled at most one signal.
By our construction of 
signaling scheme $\signalscheme\doubleprimed$,
it is sufficient to show
that for each pair of states $(i, j)$,
the optimal solution 
$\{\hat \signalscheme_i^*(\signal),
\hat \signalscheme_j^*(\signal)\}$
in program~\ref{eq:opt lp}
on binary-state instance $\instance_{ij}$
is a censorship signaling scheme (and thus
it is pooled at most one signal).
We prove this statement by leveraging 
the following lemma (\Cref{lem:SDSU binary opt})
that characterizes the optimal 
signaling scheme of any binary state instance 
is indeed a censorship signaling scheme. 
The proof, deferred to \Cref{apx:SDSU proof},
is based on a primal-dual 
analysis similar to the one for \Cref{thm:opt state independent}.
\begin{lemma}
\label{lem:SDSU binary opt}
In SDSU environments with binary state space (i.e., $m=2$),
there exists a censorship signaling scheme
that is an optimal signaling scheme.
\end{lemma}
The proof of \Cref{lem:SDSU binary opt}
follows similar primal-dual analysis
of the one for \Cref{thm:SISU opt}, we thus defer to proof
to \Cref{apx:SDSU proof binary opt}.

Finally, we verify that expected sender utility $\Payoff{\signalscheme\doubleprimed}$
is weakly higher than 
the expected sender utility $\Payoff{\signalscheme\primed}$.
Note that 
\begin{align*}
    \Payoff{\signalscheme\primed} 
    &\overset{(a)}{=}
    \sum_{(i,j)}
    \displaystyle\int_{\signal \in \signalspace_{ij}\primed}
    \left(\prior_i\senderU_i\signalscheme\primed_i(\signal)
    +
    \prior_j\senderU_j\signalscheme\primed_j(\signal)
    \right)
    \Quant(\signal)
    \,d\signal
    \\
    &=
    \sum_{(i,j)}
    \left(
    \aggregateproba + \aggregateprobb
    \right)
    \displaystyle\int_{\signal \in \signalspace_{ij}\primed}
    \left(
    \frac{\aggregateproba}
    {\aggregateproba+\aggregateprobb}
    \senderU_i
    \frac{\prior_i\signalscheme\primed_i(\signal)}
    {\aggregateproba}
    +
    \frac{\aggregateprobb}
    {\aggregateproba+\aggregateprobb}
    \senderU_i
    \frac{\prior_j\signalscheme\primed_i(\signal)}
    {\aggregateprobb}
    \right)
    \Quant(\signal)
    \,d\signal
    \\
    &\overset{(b)}{\leq}
    \sum_{(i,j)}
    \left(
    \aggregateproba + \aggregateprobb
    \right)
    \displaystyle\int_{\signal \in \signalspace_{ij}\primed}
    \left(
    \frac{\aggregateproba}
    {\aggregateproba+\aggregateprobb}
    \senderU_i
    \frac{\prior_i\signalscheme\doubleprimed_i(\signal)}
    {\aggregateproba}
    +
    \frac{\aggregateprobb}
    {\aggregateproba+\aggregateprobb}
    \senderU_i
    \frac{\prior_j\signalscheme\doubleprimed_i(\signal)}
    {\aggregateprobb}
    \right)
    \Quant(\signal)
    \,d\signal
    \\
    &\overset{}{=}
    \sum_{(i,j)}
    \displaystyle\int_{\signal \in \signalspace_{ij}\primed}
    \left(\prior_i\senderU_i\signalscheme\doubleprimed_i(\signal)
    +
    \prior_j\senderU_j\signalscheme\doubleprimed_j(\signal)
    \right)
    \Quant(\signal)
    \,d\signal
    \\
    &\overset{(c)}{=}
    \Payoff{\signalscheme\doubleprimed} 
\end{align*}
where equalities~(a) (c) use \Cref{prop:opt lp}.
To see why inequality~(b) holds,
note that the left-hand side of 
inequality~(b) is the objective value 
of solution 
$\left\{
\frac{\prior_i\signalscheme\primed_i(\signal)}{
\aggregateproba
}
\cdot 
\indicator{\signal\in \signalspace_{ij}\primed},
\frac{\prior_j\signalscheme\primed_j(\signal)
}{
\aggregateprobb
}
\cdot \indicator{\signal\in \signalspace_{ij}\primed}
\right\}$
in program~\ref{eq:opt lp}
on binary-state instance $\instance_{ij}$,
while the right-hand side of inequality~(b),
by the construction of $\signalscheme\doubleprimed$,
is the optimal objective value in this program.
\end{proof}

Now we are ready to prove \Cref{thm:SDSU opt}.
\begin{proof}[Proof of \Cref{thm:SDSU opt}]
Invoking \Cref{lem:SDSU opt step 1}
and \Cref{lem:SDSU opt step 2},
we know that 
there exists an optimal signaling scheme 
where 
(i) each signal is induced by at most two states,
and 
(ii) each pair of states pools on at most one signal.
Note that property~(i) and property~(ii) together 
imply that its signal space has $\frac{m(m+1)}{2}$ signals.
\end{proof}

In below, we provide the analysis of the second
step for the proof of \Cref{thm:SDSU 4 approx}.
\paragraph{Step 2- a connection to fractional 
generalized assignment problem.}
Due to properties (i) and (ii) of 
the optimal signaling scheme $\optscheme$
stated in \Cref{thm:SDSU opt},
there is at most one signal realized by 
each pair of states $(i, j)$,
i.e.,  $|\{\signal:
\optscheme_i(\signal) > 0
\land 
\optscheme_j(\signal)>0\}| \leq 1$.
For ease of presentation, we assume 
$|\{\signal:
\optscheme_i(\signal) > 0
\land 
\optscheme_j(\signal)>0\}| = 1$
for each pair $(i, j)$,
and denote it as $\signal_{ij}$.\footnote{The analysis 
in this subsection extends trivially if 
$|\{\signal:
\optscheme_i(\signal) > 0
\land 
\optscheme_j(\signal)>0\}| = 0$ for some pair $(i, j)$.}
Furthermore, we define set of pairs
$\Edge \triangleq \{(i, j):
\optscheme_i(\signal_{ij}) \geq 
\optscheme_j(\signal_{ij})\}$.
Note that
the expected sender utility 
$\Payoff{\optscheme}
=
\sum_{(i,j)\in\Edge}
(\prior_i\senderU_i\optscheme_i(\signal_{ij}) + 
\prior_j\senderU_j\optscheme_j(\signal_{ij})
)\Quant(\signal_{ij})
$
can be upper bounded by the optimal value
of the following linear program,
\begin{align}
    \label{eq:SDSU opt matching lp}
    \arraycolsep=5.4pt\def\arraystretch{1}
    \tag{$\mathcal{P}_\texttt{SDSU-OPT}$}
    &\begin{array}{lll}
     \max\limits_{\xbf\geq \zerobf} ~ &
    \displaystyle\sum_{(i,j)\in\Edge}
       \left(\prior_i \senderU_i + 
       \prior_j \senderU_j \frac{\optscheme_j(\signal_{ij})}{
       \optscheme_i(\signal_{ij})}
       \right)
       \Quant(\signal_{ij})
       \edgealloci
     \quad& \text{s.t.} 
     \vspace{1mm}
     \\
       & 
       \displaystyle\sum_{j:(i,j)\in\Edge}
       \edgealloci
       \leq 1
       & i\in[m] 
     \vspace{1mm}
     \\
       & 
       \displaystyle\sum_{i:(i,j)\in\Edge}
       \frac{\optscheme_j(\signal_{ij})}
       {\optscheme_i(\signal_{ij})}\cdot\edgealloci
       \leq 1
       \quad& j\in[m] 
    \end{array}
\end{align}

\begin{lemma}
\label{lem:SDSU 8 approx step 1}
The expected sender utility $\Payoff{\optscheme}$
of the optimal signaling scheme $\optscheme$
is at most the optimal objective value 
of program~\ref{eq:SDSU opt matching lp}.
\end{lemma}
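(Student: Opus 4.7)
The plan is to exhibit a feasible solution to program~\ref{eq:SDSU opt matching lp} whose objective value equals $\Payoff{\optscheme}$; the lemma then follows by optimality. The natural candidate is the assignment
\[
\edgealloci \;\triangleq\; \optscheme_i(\signal_{ij}) \qquad \text{for every } (i,j)\in\Edge,
\]
i.e., the probability mass that the ``dominant'' state $i$ (the one with the larger signal probability on $\signal_{ij}$) places on the joint signal. Plugging into the objective of \ref{eq:SDSU opt matching lp}, the factors $\optscheme_i(\signal_{ij})$ cancel in the second summand and we recover
\[
\sum_{(i,j)\in\Edge}\Bigl(\prior_i\senderU_i\optscheme_i(\signal_{ij})+\prior_j\senderU_j\optscheme_j(\signal_{ij})\Bigr)\Quant(\signal_{ij})
\;=\;\Payoff{\optscheme},
\]
where the equality uses the decomposition of $\Payoff{\optscheme}$ across pairs enabled by properties~(i)-(ii) of \Cref{thm:SDSU opt} (each signal is supported on at most two states, and each pair is pooled on at most one signal, so the enumeration over $\Edge$ visits each relevant signal exactly once).

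The second step is to verify feasibility. For the first family of constraints, fix a state $i$ and observe that
\[
\sum_{j:(i,j)\in\Edge}\edgealloci
\;=\;\sum_{j:(i,j)\in\Edge}\optscheme_i(\signal_{ij})
\;\leq\;\sum_{\signal:\optscheme_i(\signal)>0}\optscheme_i(\signal)\;=\;1,
\]
since the signals $\{\signal_{ij}:(i,j)\in\Edge\}$ are a subset of the signals on which state $i$ places mass (by property (ii) there is at most one signal per pair, so there is no double counting). Symmetrically, for the second family of constraints, fix a state $j$ and note that the coefficient simplifies as $\frac{\optscheme_j(\signal_{ij})}{\optscheme_i(\signal_{ij})}\cdot\edgealloci=\optscheme_j(\signal_{ij})$, so
\[
\sum_{i:(i,j)\in\Edge}\frac{\optscheme_j(\signal_{ij})}{\optscheme_i(\signal_{ij})}\,\edgealloci
\;=\;\sum_{i:(i,j)\in\Edge}\optscheme_j(\signal_{ij})
\;\leq\;\sum_{\signal:\optscheme_j(\signal)>0}\optscheme_j(\signal)\;=\;1.
\]
Non-negativity is immediate. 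Combining the feasibility with the computation of the objective value completes the argument.

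There is essentially no hard obstacle here: the pair-level structure guaranteed by \Cref{thm:SDSU opt} is precisely what makes the substitution $\edgealloci=\optscheme_i(\signal_{ij})$ turn the bilinear payoff expression into a linear objective that matches \ref{eq:SDSU opt matching lp}. The only subtlety worth being careful about is bookkeeping for self-pairs (fully revealing signals, i.e., pairs $(i,i)$) and for pairs with $\optscheme_i(\signal_{ij})=\optscheme_j(\signal_{ij})$ where both orderings lie in $\Edge$; in both cases one checks that the identity above still holds by attributing each signal to exactly one ordered pair, so no overcounting occurs in either the objective or the feasibility sums.
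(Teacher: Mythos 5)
Your proof is correct and matches the paper's argument exactly: both use the substitution $\edgealloci \gets \optscheme_i(\signal_{ij})$ and verify feasibility from $\sum_{j}\optscheme_i(\signal_{ij})\le 1$ and the cancellation $\frac{\optscheme_j(\signal_{ij})}{\optscheme_i(\signal_{ij})}\edgealloci = \optscheme_j(\signal_{ij})$. Your closing remark about attributing each signal to a single ordered pair is a reasonable extra precaution, but the core argument is identical to the paper's.
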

\begin{proof}
Consider the following assignment $\xbf$
of program~\ref{eq:SDSU opt matching lp},
\begin{align*}
    i\in[m],~~ j\in[m]:\qquad
    \edgealloci\gets \optscheme_i(\signal_{ij})
\end{align*}
By construction, the objective value of the constructed assignment 
equals $\Payoff{\optscheme}$.
Now, we show the feasibility of the constructed assignment.
Note the feasibility of optimal signaling scheme $\optscheme$
implies that for each state $i\in[m]$,
    $\sum_{j\in[m]}\optscheme_i(\signal_{ij}) = 1$.
Thus,
\begin{align*}
    \sum_{j\in[m]} \edgealloci = 
    \sum_{j\in[m]}\optscheme_i(\signal_{ij})
    \leq 1; \quad 
    \sum_{i\in[m]} 
    \frac{\optscheme_j(\signal_{ij})}
    {\optscheme_i(\signal_{ij})}
    \cdot 
    \edgealloci
    =
    \sum_{i\in[m]} 
    \frac{\optscheme_j(\signal_{ij})}
    {\optscheme_i(\signal_{ij})}
    \cdot 
    \optscheme_i(\signal_{ij})
    =
     \sum_{i\in[m]} 
    {\optscheme_j(\signal_{ij}})
    \leq 1
\end{align*}
which finishes the proof.
\end{proof}

We remark that the program~\ref{eq:SDSU opt matching lp}
has the same formulation as the fractional generalized assignment problem: 
there are $m$ items and $m$ bins.
Each bin has a unit budget.
Each pair of item $i$ and bin $j$ such that $(i,j)\in\Edge$
has value 
$ (\prior_i \senderU_i + 
       \sfrac{
       \prior_j \senderU_j
       \optscheme_j(\signal_{ij})}{
       \optscheme_i(\signal_{ij})}
       )
       \Quant(\signal_{ij})$
and cost $\sfrac{\optscheme_j(\signal_{ij})}{
\optscheme_i(\signal_{ij})}$.
With this connection to the generalized assignment problem,
we use 
the following established result 
about the optimal integral solution of program~\ref{eq:SDSU opt matching lp}. 

\begin{lemma}[Theorem 2.1 and its proof in \citealp{ST-93}]
\label{lem:SDSU 8 approx step 2}
The optimal integral solution of program~\ref{eq:SDSU opt matching lp}
is a 2-approximation to the optimal fraction solution 
of program~\ref{eq:SDSU opt matching lp}.
\end{lemma}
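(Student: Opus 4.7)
The plan is to invoke the classical LP-rounding technique of Shmoys and Tardos adapted to the maximization form of program~\ref{eq:SDSU opt matching lp}. I would start by taking an optimal \emph{extreme point} $\xbf^*$ of the LP. Since program~\ref{eq:SDSU opt matching lp} has $m$ ``item'' constraints and $m$ ``bin'' constraints, any extreme point has at most $2m$ strictly positive coordinates, and after removing variables that are already integral (namely $\edgealloci = 0$ or $\edgealloci = 1$), the sub-assignment of fractional variables forms a bipartite graph on items $[m]$ and bins $[m]$ whose edge set is a forest (pseudo-forest). This sparsity is what makes a $2$-approximate rounding possible.

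Next, I would construct the Shmoys--Tardos bipartite rounding graph: for each bin $j$, sort the items $i$ with $\edgealloci > 0$ in a convenient order (say, decreasing value-to-cost), and ``split'' bin $j$ into $\lceil \sum_i (\sfrac{\optscheme_j(\signal_{ij})}{\optscheme_i(\signal_{ij})}) \edgealloci \rceil \leq 2$ copies of unit capacity, allocating consecutive fractional mass of items to consecutive copies. The fractional solution $\xbf^*$ then induces a fractional perfect matching in this new (expanded) bipartite graph with objective value equal to $\mathrm{OPT}_{\mathrm{LP}}$. Because every bin in the original LP has capacity at most $1$ and each edge has cost at most $1$, no bin gets split into more than two copies; this is the only place where the factor of $2$ enters.

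I would then invoke the integrality of the bipartite matching polytope (or use König/Hall to build an integral matching directly) to obtain an integral matching of value at least $\mathrm{OPT}_{\mathrm{LP}}$ in the expanded graph. Collapsing each pair of split copies back to the original bin yields an integral assignment $\hat\xbf$ that might place two items in the same bin. To recover a feasible integral solution of program~\ref{eq:SDSU opt matching lp}, I would decompose $\hat\xbf$ along each over-loaded bin into two feasible matchings (keeping the better one) and similarly handle any over-loaded item constraint. Standard bipartite-graph arguments show that at least one of the two sub-matchings collects at least half of the total value of $\hat\xbf$, giving an integral feasible solution with objective value at least $\tfrac{1}{2}\,\mathrm{OPT}_{\mathrm{LP}}$.

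The main obstacle is the bookkeeping in the rounding step: ensuring simultaneously that (i) the split copies exhaust exactly the fractional mass assigned to each bin, and (ii) the induced fractional matching has the same objective as $\xbf^*$. The cleanest way to handle this is to order items within each bin so that consecutive items land in the same split copy as much as possible, which is essentially the Shmoys--Tardos filtering argument; everything else is a direct application of bipartite matching integrality.
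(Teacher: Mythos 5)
The paper does not give an independent proof of this lemma: it invokes Theorem~2.1 of \cite{ST-93} (the Shmoys--Tardos rounding for the generalized assignment problem) directly as a black box. Your proposal attempts to reconstruct that argument, and the skeleton is right---optimal extreme point, bipartite copy graph, integrality of the matching polytope, keep-the-better-half filtering---but there is a concrete error in the copy-splitting step that breaks the reconstruction as written.

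You split bin~$j$ into $\bigl\lceil \sum_i \bigl(\optscheme_j(\signal_{ij})/\optscheme_i(\signal_{ij})\bigr)\edgealloci \bigr\rceil$ copies. That quantity is exactly the left-hand side of the bin constraint of program~\ref{eq:SDSU opt matching lp}, so it is at most $1$ for every feasible $\xbf$ and the ceiling is $1$: you get one copy per bin, the ``expanded'' graph coincides with the original LP, and no rounding happens. The Shmoys--Tardos construction instead splits bin $j$ into $k_j \triangleq \bigl\lceil \sum_i \edgealloci \bigr\rceil$ copies, the \emph{unweighted} fractional degree of bin $j$, which can be as large as $m$; there is no ``at most two copies'' bound, and the factor of~$2$ does not come from counting copies. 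It comes from the filtering step: within each bin $j$ one sorts the fractionally assigned items in non-increasing cost $c_{ij} = \optscheme_j(\signal_{ij})/\optscheme_i(\signal_{ij})$ and fills copies with one unit of fractional mass in that order, so that in the rounded matching the items landing in copies $2,\dots,k_j$ have total cost at most $\sum_i c_{ij}\edgealloci \le 1$ and hence form a feasible set on their own, while the single item in copy~$1$ has $c_{ij}\le 1$ (this is precisely why the paper defines $\Edge$ so that $\optscheme_i(\signal_{ij})\ge\optscheme_j(\signal_{ij})$) and is also feasible on its own; keeping the more profitable of these two halves in each bin loses at most a factor of~$2$. Finally, your proposed step of ``handling any over-loaded item constraint'' is superfluous: item nodes are never split into copies, so the integral matching already satisfies $\sum_j \edgealloci \le 1$ for every $i$ automatically.
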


Now we are ready to prove \Cref{thm:SDSU 4 approx}.
\begin{proof}[Proof of \Cref{thm:SDSU 4 approx}]
Let $\xbf\primed$ be the optimal integral solution of 
program~\ref{eq:SDSU opt matching lp}.
Consider a signaling scheme $\signalscheme\primed$
constructed as follows.
First, initialize the signal space
$\signalspace\primed \gets \emptyset$.
Second, for each pair of state $(i, j)\in\Edge$,
if $\edgealloci\primed > 0$,
update $\signalspace\primed\gets 
\signalspace\primed \cup \{\signal_{ij}\}$,
$\signalscheme\primed_i(\signal_{ij}) \gets 
\sfrac{\edgealloci\primed}{2}$,
and 
$\signalscheme\primed_j(\signal_{ij}) \gets 
\sfrac{(\edgealloci\primed\optscheme_j(\signal_{ij}))}
{(2\optscheme_i(\signal_{ij}))}$.
Third, for each state $i\in[m]$, 
if $\sum_{\signal\in\signalspace\primed}\signalscheme_i\primed(\signal) < 1$,
update $\signalspace\primed\gets 
\signalspace\primed \cup \{\receiverU_i\}$,
$\signalscheme\primed_i(\receiverU_i) \gets 
1 - \sum_{\signal\in\signalspace\primed}\signalscheme_i\primed(\signal)$.

Now we verify that the constructed signaling scheme
$\signalscheme\primed$ is feasible,
i.e., for each state $i\in[m]$,
$\sum_{\signal\in\signalspace\primed} \signalscheme\primed_i(\signal) = 1$.
By construction, the feasibility is guaranteed since
that 
for each state $i\in[m]$,
\begin{align*}
    \sum_{j:(i, j)\in \Edge} \signalscheme\primed_i(\signal_{ij})
    + \sum_{j:(j, i)\in \Edge} \signalscheme\primed_i(\signal_{ji})
     =
    \sum_{j:(i,j)\in\Edge} 
    \frac{1}{2}\edgealloci\primed
    +
    \sum_{j:(j,i)\in\Edge}
    \frac{1}{2}\frac{\optscheme_i(\signal_{ji})}
    {\optscheme_j(\signal_{ji})}
    \edgealloc_{ji}\primed
    \leq \frac{1}{2} + \frac{1}{2} = 1
\end{align*}
where the inequality holds due to the feasibility of solution 
$\xbf\primed$.

Next, we verify that the constructed signaling scheme
$\signalscheme\primed$
satisfies properties 
stated in \Cref{thm:SDSU 4 approx}.
Note the two properties same as in \Cref{thm:SDSU opt} 
are guaranteed by construction
straightforwardly. 
By construction, the expected sender utility 
$\Payoff{\signalscheme\primed}$ is a 2-approximation
to the objective value of the optimal integral solution $\xbf\primed$.
Invoking \Cref{lem:SDSU 8 approx step 1}
and \Cref{lem:SDSU 8 approx step 2},
we conclude that 
signaling scheme $\signalscheme\primed$
is 4-approximation to the optimal signaling scheme.

Finally, since the optimal integral solution $\xbf\primed$
has at most $m$ non-zero entries, i.e., 
$|\{\edgealloci\primed:\edgealloci\primed > 0\}| \leq m$,
the constructed signal space $\signalspace\primed$
has at most $|\{\edgealloci\primed:\edgealloci\primed > 0\}| + m \leq 2m$ signals.
\end{proof}

\subsection{Proof of 
\texorpdfstring{\Cref{lem:SDSU binary opt}}{}}
\label{apx:SDSU proof binary opt}

We now
present a more detailed statement 
for \Cref{lem:SDSU binary opt} and then present its
associated proof.
\begin{lemma}
\label{lem:SDSU binary opt tmp}
In SDSU environments with binary state space (i.e., $m=2$),
there exists an optimal signaling scheme $\optscheme$
for a boundedly rational receiver
that
is a censorship signaling scheme.
In particular, 
define $\gamma(\scaledNoiseDiff) \triangleq \frac{\receiverU_1-\scaledNoiseDiff}{ \receiverU_2-\scaledNoiseDiff} + \frac{\Quant(\receiverU_2) - \Quant(\scaledNoiseDiff)}{ \receiverU_2 - \scaledNoiseDiff}\cdot\frac{1}{\Quant'(\scaledNoiseDiff)} \cdot\left(1 - \frac{\receiverU_1-\scaledNoiseDiff}{\receiverU_2-\scaledNoiseDiff} \right)$. 
Let $\widehat{\scaledNoiseDiff}$ satisfy $\gamma(\widehat{\scaledNoiseDiff}) = \sfrac{\senderU_1}{\senderU_2}$, 
and define 
$\censorshipsignal \triangleq \min\left\{\max\left\{\receiverU_1, \widehat{\scaledNoiseDiff}\right\}, \prior_1 \receiverU_1 + \prior_2 \receiverU_2\right\} ; ~~
\censorshipprob \triangleq \frac{\prior_1  \left(\censorshipsignal - \receiverU_1 \right)}{\prior_2\left(\receiverU_2 - \censorshipsignal\right)}~.$
Then the optimal signaling $\optscheme$ is
\begin{equation}
\begin{aligned}
    \label{eq: opt signaling SDSU binary}
    \optscheme_1\left(\censorshipsignal \right) & = 1; ~ 
    \optscheme_2\left(\censorshipsignal \right) 
    = \censorshipprob, ~
    \optscheme_2\left(\receiverU_2\right) 
    = 1 - \censorshipprob~.
\end{aligned}
\end{equation}
\end{lemma}
A few useful observations of the above result are
as follows. 
First, by inspecting the first-order derivative, 
we know that the function $\gamma(\cdot)$ is monotone decreasing. 
Second, we always have 
$\censorshipsignal \in \left[\receiverU_1, \prior_1\receiverU_1 + \prior_2\receiverU_2\right]$ and thus
$\censorshipprob \in [0, 1]$. 
Third, 
(a) when $\widehat{\scaledNoiseDiff} \le  \receiverU_1$, we
have $\censorshipsignal = \receiverU_1$ and $\censorshipprob = 0$, 
and thus full information revealing is optimal;
(b) when $\receiverU_1 < \widehat{\scaledNoiseDiff} < \prior_1  \receiverU_1 + \prior_2\receiverU_2$, 
we have $\censorshipsignal = \widehat{\scaledNoiseDiff}$ and $\censorshipprob \in (0, 1)$,
and thus partial information revealing is optimal;
(c) when $\widehat{\scaledNoiseDiff} > \prior_1 \receiverU_1 + \prior_2  \receiverU_2$, 
we have $\censorshipsignal = \prior_1\receiverU_1 + \prior_2\receiverU_2$
and $\censorshipprob = 1$, and thus no information revealing is optimal.

\begin{proof}[Proof of \Cref{lem:SDSU binary opt tmp}]
We prove the optimality of the signaling scheme 
\eqref{eq: opt signaling SDSU binary} 
by constructing a feasible dual solution to 
the dual program \ref{eq:opt lp dual} that satisfies 
the complementary slackness.

Based on the signaling scheme \eqref{eq: opt signaling SDSU binary}, 
we give our dual solution to the program \ref{eq:opt lp dual}
as follows:
\begin{equation}
\begin{aligned}
    \label{dual solution SDSU binary}
    \scaledNoiseDiff\in(\infty, \receiverU_1]:
    & ~
    \noiseDualVar(\scaledNoiseDiff)
    =\max_{i\in[2]} 
    - \frac{\senderU_i\cdot(\Quant(\scaledNoiseDiff) - \Quant(\censorshipsignal)) + \noiseDualVar(\censorshipsignal) \cdot(\censorshipsignal-\receiverU_i)}{\scaledNoiseDiff - \receiverU_i} \\
    \scaledNoiseDiff\in(\receiverU_1, \receiverU_2]:
    & ~
    \noiseDualVar(\scaledNoiseDiff)
    =\frac{-\senderU_2(\Quant(\scaledNoiseDiff) - \Quant(\censorshipsignal))
    + \max\left\{\frac{-\senderU_2(\Quant(\receiverU_2)-\Quant(\censorshipsignal))}{\receiverU_2-\censorshipsignal},
    -(\prior_1\senderU_1+\prior_2\senderU_2) \Quant'(\censorshipsignal)\right\} (\censorshipsignal-\receiverU_2)}{\scaledNoiseDiff - \receiverU_2}  \\
    \scaledNoiseDiff\in(\receiverU_2, \infty]:
    & ~ 
    \noiseDualVar(\scaledNoiseDiff)=0\\
    i = 1:
    & ~ 
    \distDualVar(1) = \prior_1 \senderU_1 \cdot
    \left(\Quant(\censorshipsignal) + \noiseDualVar(\censorshipsignal)\cdot \frac{\censorshipsignal -\receiverU_1 }{\senderU_1}\right) \\
    i = 2:
    & ~ 
    \distDualVar(2) = \prior_2 \senderU_2 \cdot \left(\Quant(\censorshipsignal) + \noiseDualVar(\censorshipsignal)\cdot \frac{\censorshipsignal -\receiverU_2 }{\senderU_2}\right);
\end{aligned}
\end{equation}
Given the above constructed dual assignment, 
we first argue that when No information revealing is optimal,
namely, $\censorshipprob = 1$, we have 
$\noiseDualVar(\censorshipsignal) 
= -(\prior_1\senderU_1 + \prior_2\senderU_2) \cdot \Quant'(\censorshipsignal)$
where $\censorshipsignal = \prior_1\receiverU_2+\prior_2\receiverU_2$, 
otherwise we have $\noiseDualVar(\censorshipsignal) 
= -\frac{\senderU_2\cdot (\Quant(\censorshipsignal) - \Quant(\receiverU_2))}{\censorshipsignal - \receiverU_2}$ where 
$\censorshipsignal \in [\receiverU_1, \prior_1\receiverU_2+\prior_2\receiverU_2)$. 
For notation simplicity, let $\NOinforSignal \triangleq \prior_1\receiverU_2+\prior_2\receiverU_2$. 
To see this, note that when $\censorshipprob = 1$, 
it must be the case $\gamma(\censorshipsignal)
= \gamma(\NOinforSignal) \ge \frac{\senderU_1}{\senderU_2}$.
Recall that 
\begin{align*}
    \gamma(\NOinforSignal)
    & = \frac{\receiverU_1 - \NOinforSignal}{\receiverU_2 - \NOinforSignal}
    + \frac{\Quant(\receiverU_2) - \Quant(\NOinforSignal)}{\receiverU_2 - \NOinforSignal}\cdot\frac{1}{\Quant'(\NOinforSignal)} \cdot\left(1 - \frac{\receiverU_1 - \NOinforSignal}{\receiverU_2 - \NOinforSignal}\right) \nonumber\\
    & = -\frac{\prior_2}{\prior_1} + \frac{\Quant(\receiverU_2) - \Quant(\NOinforSignal)}{\receiverU_2 - \NOinforSignal}\cdot\frac{1}{\Quant'(\NOinforSignal)} \cdot\left(1 + \frac{\prior_2}{\prior_1}\right)
\end{align*}
Hence, we have 
\begin{align*}
    -\frac{\prior_2}{\prior_1} + \frac{\Quant(\receiverU_2) - \Quant(\NOinforSignal)}{\receiverU_2 - \NOinforSignal}\cdot\frac{1}{\Quant'(\NOinforSignal)} \cdot\left(1 + \frac{\prior_2}{\prior_1}\right) \ge \frac{\senderU_1}{\senderU_2}~.
\end{align*}
Rearranging the above inequality gives us
\begin{align*}
    - (\prior_1 \senderU_1+\prior_2 \senderU_2) \cdot \Quant'(\NOinforSignal) \ge 
    - \frac{\senderU_2\cdot (\Quant(\NOinforSignal) - \Quant(\receiverU_2) )}{\NOinforSignal - \receiverU_2}~,
\end{align*}
which implies the dual assignment of $\noiseDualVar(\NOinforSignal)$
when No information revealing is optimal.
As a consequence, we have 
$\distDualVar(2) = \prior_2\senderU_2 \Quant(\receiverU_2)$
when No information revealing is not optimal, and 
$\distDualVar(2) = \prior_2 \senderU_2 \left(\Quant(\NOinforSignal) 
-(\prior_1\senderU_1 + \prior_2\senderU_2) \cdot \Quant'(\NOinforSignal) \cdot \frac{\NOinforSignal -\receiverU_2 }{\senderU_2}\right)$
when No information revealing is optimal.

In below, we show that the above constructed dual solution 
\eqref{dual solution SDSU binary}
is indeed a feasible solution to the dual program \ref{eq:opt lp dual}
(i.e., the following constraint \eqref{dual feasibility SDSU binary} holds), 
and also, complementary slackness holds between 
\eqref{eq: opt signaling SDSU binary} and \eqref{dual solution SDSU binary} 
(i.e., the following constraint \eqref{CL non-zero primal SDSU binary} holds).
\begin{align}
    \Quant(\scaledNoiseDiff) + \noiseDualVar(\scaledNoiseDiff) \cdot \frac{\scaledNoiseDiff - \receiverU_i}{\senderU_i}
    & = \frac{\distDualVar(i)}{\prior_i \senderU_i} , 
    ~~ \text{ if } \optscheme_i(\scaledNoiseDiff) > 0, ~ \forall i\in[2],
    \quad\quad(\text{complementary-slackness})
    \label{CL non-zero primal SDSU binary}\\
    \Quant(\scaledNoiseDiff) + \noiseDualVar(\scaledNoiseDiff) \cdot \frac{\scaledNoiseDiff - \receiverU_i}{\senderU_i}
    & \le \frac{\distDualVar(i)}{\prior_i \senderU_i} , 
    ~~ \text{ if } \optscheme_i(\scaledNoiseDiff) = 0, ~ \forall i\in[2],
    \quad\quad(\text{dual-feasibility})
    \label{dual feasibility SDSU binary}
\end{align}

\paragraph{Complementary slackness.} 
We now argue the complementary slackness 
of the constructed assignment.
Namely, for each state $i\in[2]$
and $\scaledNoiseDiff\in(-\infty,\infty)$
such that $\optscheme_i(\scaledNoiseDiff) > 0$,
its corresponding dual constraint holds with equality,
i.e., the above equality \eqref{CL non-zero primal SDSU binary}.
We verify this for each state $i\in[2]$ separately.
\begin{itemize}[label={-}]
    \item Fix state $1$,
    note that $\optscheme_1(\scaledNoiseDiff) > 0$
    for $\scaledNoiseDiff = \censorshipsignal$ only.
    Here equality~\eqref{CL non-zero primal SDSU binary} holds 
    by construction. 
    
    \item Fix state $2$, 
    note that $\optscheme_2(\scaledNoiseDiff) > 0$
    for $\scaledNoiseDiff = \censorshipsignal$ and
    $\scaledNoiseDiff = \receiverU_2$ only.
    When $\scaledNoiseDiff = \censorshipsignal$, the
    equality~\eqref{CL non-zero primal SDSU binary}
    holds for $\optscheme_{2}(\censorshipsignal) > 0$.
    To see this, 
    when No information revealing is not optimal, 
    we have 
    \begin{align*}
        \Quant(\censorshipsignal) + \noiseDualVar(\censorshipsignal)
        \cdot \frac{\censorshipsignal - \senderU_2}{\senderU_2}
        \overset{(a)}{=} 
        \Quant(\censorshipsignal) -\frac{\senderU_2\cdot (\Quant(\receiverU_2) - \Quant(\censorshipsignal))}{\receiverU_2 - \censorshipsignal}
        \cdot \frac{\censorshipsignal - \receiverU_2}{\senderU_2}
        = \Quant(\receiverU_2)
        \overset{(b)}{=}  \frac{\distDualVar(2)}{\prior_2 \senderU_2}~,
    \end{align*}
    where the equality (a) holds 
    due to the assignment $\noiseDualVar(\censorshipsignal)$,
    and the equality (b) holds due to the assignment $\distDualVar(2)$.
    When No information revealing is optimal, 
    we have 
    \begin{align*}
        \Quant(\NOinforSignal) + \noiseDualVar(\NOinforSignal)
        \cdot \frac{\NOinforSignal - \senderU_2}{\senderU_2}
        & \overset{(a)}{=}  \frac{\distDualVar(2)}{\prior_2 \senderU_2}~,
    \end{align*}
    where the equality (a) directly follows from the assignment of $\distDualVar(2)$.
    Now it is remaining verify equality \eqref{CL non-zero primal SDSU binary} for $\optscheme_2(\receiverU_2)$.
    To see this, note that this must be the case where $\distDualVar(2) = \prior_2 \senderU_2 \Quant(\receiverU_2)$: 
    \begin{align*}
        \Quant(\receiverU_2) + \noiseDualVar(\receiverU_2)
        \cdot \frac{\receiverU_2 - \senderU_2}{\senderU_2}
        \overset{(a)}{=} 
        \Quant(\receiverU_2)
        \overset{(b)}{=}  \frac{\distDualVar(2)}{\prior_2 \senderU_2}~,
    \end{align*}
    where the equalities (a) (b) hold 
    due to the assignment $\noiseDualVar(\censorshipsignal)$
    and the assignment $\distDualVar(2)$.
\end{itemize}
\paragraph{Dual feasibility.}
when No information revealing is not optimal.
Note that 
\begin{align*}
    \frac{\distDualVar(1)}{\prior_1\senderU_1}
    \overset{(a)}{=}
    \Quant(\censorshipsignal) + \noiseDualVar(\censorshipsignal)\cdot \frac{\censorshipsignal -\receiverU_1 }{\senderU_1},
    ~~ 
    \frac{\distDualVar(2)}{\prior_2\senderU_2}
    = \Quant(\censorshipsignal) + \noiseDualVar(\censorshipsignal)\cdot \frac{\censorshipsignal -\receiverU_2 }{\senderU_2}~.
\end{align*}
Thus, we can rewrite those dual constraints associated with $\optscheme_1(\scaledNoiseDiff)$ for state $1$
and those dual constraints associated with $\optscheme_2(\scaledNoiseDiff)$ for state $2$
as follows
\begin{align}
    \label{eq:SDSU dual feasibility state 1}
    \Quant(\scaledNoiseDiff) + \noiseDualVar(\scaledNoiseDiff) \cdot \frac{\scaledNoiseDiff - \receiverU_1}{\senderU_1}
    & \le
    \Quant(\censorshipsignal) + \noiseDualVar(\censorshipsignal)\cdot \frac{\censorshipsignal -\receiverU_1 }{\senderU_1}; \\
    \Quant(\scaledNoiseDiff) + \noiseDualVar(\scaledNoiseDiff) \cdot \frac{\scaledNoiseDiff - \receiverU_2}{\senderU_2}
    & \le
    \Quant(\censorshipsignal) + \noiseDualVar(\censorshipsignal)\cdot \frac{\censorshipsignal -\receiverU_2 }{\senderU_2}~.
    \label{eq:SDSU dual feasibility state 2}
\end{align}
We verify the above inequalities for different values of $\scaledNoiseDiff$ in three cases separately.
\begin{itemize}[label={-}]
    \item 
    Fix an arbitrary $\scaledNoiseDiff \in (-\infty, \receiverU_1]$. 
    Note by the construction
    of the dual assignment in \eqref{dual solution SDSU binary}, we have 
    \begin{align*}
        \noiseDualVar(\scaledNoiseDiff)
        \ge - \frac{\senderU_1(\Quant(\scaledNoiseDiff) - \Quant(\censorshipsignal))}{\scaledNoiseDiff - \receiverU_1}
        + \noiseDualVar(\censorshipsignal)  \frac{\censorshipsignal-\receiverU_1}{\scaledNoiseDiff - \receiverU_1}, ~~
        \noiseDualVar(\scaledNoiseDiff)
        \ge - \frac{\senderU_2(\Quant(\scaledNoiseDiff) - \Quant(\censorshipsignal))}{\scaledNoiseDiff - \receiverU_2}
        + \noiseDualVar(\censorshipsignal)  \frac{\censorshipsignal-\receiverU_2}{\scaledNoiseDiff - \receiverU_2}
    \end{align*}
    after rearranging the terms, the above inequalities 
    directly imply the inequality \eqref{eq:SDSU dual feasibility state 1} and 
    and the inequality \eqref{eq:SDSU dual feasibility state 2}.

    \item 
    Fix an arbitrary $\scaledNoiseDiff \in (\receiverU_1, \receiverU_2)$. 
    
    In this case, 
    we first argue the feasibility of 
    dual assignment \eqref{eq: opt signaling SDSU binary}
    when No information revealing is not optimal. 
    By construction, we have
    $ \noiseDualVar(\scaledNoiseDiff)
        = -\frac{\senderU_2\cdot(\Quant(\scaledNoiseDiff) - \Quant(\receiverU_2))}{\scaledNoiseDiff - \receiverU_2}~,$
    which directly implies the inequality \eqref{eq:SDSU dual feasibility state 2}. 
    To ensure the inequality \eqref{eq:SDSU dual feasibility state 1}, 
    it is remaining to show that 
    the following holds for all $\scaledNoiseDiff \in (\receiverU_1, \receiverU_2)$
    \begin{align*}
        -\frac{\senderU_2\cdot(\Quant(\scaledNoiseDiff) - \Quant(\receiverU_2))}{\scaledNoiseDiff - \receiverU_2}
        & \le 
        \frac{-\senderU_1 \cdot(\Quant(\scaledNoiseDiff) - \Quant(\censorshipsignal)) + \noiseDualVar(\censorshipsignal)\cdot(\censorshipsignal-\receiverU_1)}{\scaledNoiseDiff - \receiverU_1}~,
    \end{align*}
    which is equivalent to show
    the following holds for all $\scaledNoiseDiff \in (\receiverU_1, \receiverU_2)$
    \begin{align}
        \label{ineq: SDSU dual feasibility middle}
        \frac{\senderU_2}{\senderU_1} 
        \cdot 
        \left(
        -\frac{\Quant(\scaledNoiseDiff) - \Quant(\receiverU_2)}{\scaledNoiseDiff - \receiverU_2}
        + \frac{\Quant(\censorshipsignal) - \Quant(\receiverU_2)}{\censorshipsignal-\receiverU_2}
        \cdot \frac{\censorshipsignal-\receiverU_1}{\scaledNoiseDiff - \receiverU_1}
        \right)
        \le 
        \frac{\Quant(\censorshipsignal) - \Quant(\scaledNoiseDiff)}{\scaledNoiseDiff - \receiverU_1}~.
    \end{align}
    We define following function 
    $f(\scaledNoiseDiff) \triangleq
    \left(
    -\frac{\Quant(\scaledNoiseDiff) - \Quant(\receiverU_2)}{\scaledNoiseDiff - \receiverU_2}
    + \frac{\Quant(\censorshipsignal) - \Quant(\receiverU_2)}{\censorshipsignal-\receiverU_2}
    \cdot \frac{\censorshipsignal-\receiverU_1}{\scaledNoiseDiff - \receiverU_1}
    \right) \cdot \frac{\scaledNoiseDiff - \receiverU_1}{\Quant(\censorshipsignal) - \Quant(\scaledNoiseDiff)}~.$
    Then the inequality \eqref{ineq: SDSU dual feasibility middle}
    is equivalent to show that 
    \begin{align}
        \forall \scaledNoiseDiff\in(\receiverU_1, \censorshipsignal], 
        f(\scaledNoiseDiff) \ge \frac{\senderU_1}{\senderU_2}; 
        \text{ and }
        \forall \scaledNoiseDiff\in(\censorshipsignal, \receiverU_2), 
        f(\scaledNoiseDiff) \le \frac{\senderU_1}{\senderU_2}~.
        \label{ineq: SDSU dual feasibility middl 2}
    \end{align}
    Inspecting the first-order derivative of the function
    $f(\cdot)$, we know that 
    $f'(\scaledNoiseDiff) < 0, \forall \scaledNoiseDiff\in(\receiverU_1, \receiverU_2)$.
    Moreover, observe that 
    \begin{align*}
        \lim_{\scaledNoiseDiff \rightarrow (\censorshipsignal)^+}
        f(\scaledNoiseDiff)
        & = \frac{ \receiverU_1-\censorshipsignal}{\receiverU_2-\censorshipsignal} + \frac{\Quant(\receiverU_2) - \Quant(\censorshipsignal)}{\receiverU_2 - \censorshipsignal}\cdot\frac{1}{\Quant'(\censorshipsignal)} \cdot\left(1 - \frac{ \receiverU_1-\censorshipsignal}{\receiverU_2-\censorshipsignal} \right)
        \overset{(a)}{=}
        \gamma(\censorshipsignal)~,
    \end{align*}
    where the equality (a) follows from the definition of 
    the function $\gamma(\cdot)$.
    Recall that by definition, 
    when $\censorshipsignal \in 
    (\receiverU_1, \prior_1\receiverU_1 + \prior_2\receiverU_2)$,
    we must have $\gamma(\censorshipsignal) = \frac{\senderU_1}{\senderU_2}$, 
    which proves the inequality \eqref{ineq: SDSU dual feasibility middl 2}.
    When $\censorshipsignal = \receiverU_1$, 
    it suffices to argue that 
    $\forall \scaledNoiseDiff\in(\receiverU_1, \receiverU_2), 
    f(\scaledNoiseDiff) \le \frac{\senderU_1}{\senderU_2}$.
    To see this, note that 
    for all $\scaledNoiseDiff \in(\receiverU_1, \receiverU_2)$, 
    we have $f(\scaledNoiseDiff) \le f(\scaledNoiseDiff_1) = \lim_{\scaledNoiseDiff \rightarrow (\censorshipsignal)^+}f(\scaledNoiseDiff) = \gamma(\censorshipsignal)
    \le \gamma(\widehat{\scaledNoiseDiff}) = \frac{\senderU_1}{\senderU_2}$,
    where we have used the definition of $\widehat{\scaledNoiseDiff}$,
    and the function $\gamma(\cdot)$ is decreasing. 
    
    We now argue the feasibility of dual assignment 
    \eqref{eq: opt signaling SDSU binary}
    when No information revealing is optimal. 
    Note that to ensure that
    the inequalities \eqref{eq:SDSU dual feasibility state 1} and 
    and \eqref{eq:SDSU dual feasibility state 2} 
    hold for dual assignment 
    \eqref{eq: opt signaling SDSU binary},
    it is remaining to show that 
    \begin{align*}
        \frac{\senderU_2 (\Quant(\NOinforSignal) - \Quant(\scaledNoiseDiff))
        + \noiseDualVar(\NOinforSignal) (\NOinforSignal - \receiverU_2)}{\scaledNoiseDiff - \receiverU_2}
        \le 
        \frac{\senderU_1 (\Quant(\NOinforSignal) - \Quant(\scaledNoiseDiff) )
        + \noiseDualVar(\NOinforSignal) (\NOinforSignal - \receiverU_1)}{\scaledNoiseDiff - \receiverU_1}
    \end{align*}
    Rearranging the terms, it suffices to show that
    \begin{align*}
        \noiseDualVar(\NOinforSignal)
        & \ge 
        \frac{\senderU_2(\scaledNoiseDiff - \receiverU_1) - \senderU_1(\scaledNoiseDiff - \receiverU_2)}{(\receiverU_2-\receiverU_1)(\scaledNoiseDiff - \NOinforSignal)}\cdot (\Quant(\NOinforSignal) - \Quant(\scaledNoiseDiff)), ~ \forall \scaledNoiseDiff \in(\receiverU_1, \NOinforSignal]; \\ 
        \noiseDualVar(\NOinforSignal)
        & \le 
        \frac{\senderU_2(\scaledNoiseDiff - \receiverU_1) - \senderU_1(\scaledNoiseDiff - \receiverU_2)}{(\receiverU_2-\receiverU_1)(\scaledNoiseDiff - \NOinforSignal)}\cdot (\Quant(\NOinforSignal) - \Quant(\scaledNoiseDiff)), ~ \forall \scaledNoiseDiff \in(\NOinforSignal, \receiverU_2]~.
    \end{align*}
    Consider the function 
    $f(\scaledNoiseDiff) \triangleq \frac{\senderU_2(\scaledNoiseDiff - \receiverU_1) - \senderU_1(\scaledNoiseDiff - \receiverU_2)}{(\receiverU_2-\receiverU_1)(\scaledNoiseDiff - \NOinforSignal)}\cdot (\Quant(\NOinforSignal) - \Quant(\scaledNoiseDiff))$, 
    then we have 
    $\lim_{\scaledNoiseDiff \rightarrow \NOinforSignal} f(\scaledNoiseDiff)= -(\prior_1\senderU_1 + \prior_2\senderU_2) \cdot \Quant'(\NOinforSignal) = \noiseDualVar(\NOinforSignal)$.
    Furthermore, it can be shown that 
    $f(\scaledNoiseDiff) \le \noiseDualVar(\NOinforSignal), \forall \scaledNoiseDiff \in(\receiverU_1, \NOinforSignal]$, 
    and 
    $f(\scaledNoiseDiff) \ge \noiseDualVar(\NOinforSignal), \forall \scaledNoiseDiff \in(\NOinforSignal, \receiverU_2]$.

    \item 
    Fix an arbitrary $\scaledNoiseDiff \ge \receiverU_2$.
    Then by construction, 
    we have 
    \begin{align*}
        \noiseDualVar(\scaledNoiseDiff) = 0 
        \le
        \frac{\senderU_1(\Quant(\censorshipsignal)- \Quant(\scaledNoiseDiff)) + \noiseDualVar(\censorshipsignal)(\censorshipsignal -\receiverU_1) }{\scaledNoiseDiff - \receiverU_1}; ~
        \noiseDualVar(\scaledNoiseDiff) = 0 
        \le \frac{\senderU_2(\Quant(\censorshipsignal)- \Quant(\scaledNoiseDiff)) + \noiseDualVar(\censorshipsignal)(\censorshipsignal -\receiverU_2) }{\scaledNoiseDiff - \receiverU_2}~,
    \end{align*}
    which directly imply the inequalities \eqref{eq:SDSU dual feasibility state 1} and 
    \eqref{eq:SDSU dual feasibility state 2}.
\end{itemize}

\end{proof}

\section{Omitted Proofs in Section \ref{sec:robust}}
\label{apx:robustness proof}


\subsection{Omitted Proof of 
\texorpdfstring{\Cref{prop:2 robust approx SISU lower bound}}{}}
\label{apx:proof 2 robust approx SISU lower bound}
\begin{proof}
Consider following problem instance with binary state (i.e., $m = 2$),
\begin{align*}
    \prior_1 = 1 - \frac{\varepsilon}{4-\varepsilon}, \qquad
    \prior_2 = \frac{\varepsilon}{4-\varepsilon}, \qquad
    \receiverU_1 = \log \left(\frac{\varepsilon}{4-\varepsilon}\right), \qquad
    \receiverU_2 = -\frac{\prior_1}{\prior_2}\receiverU_1.
\end{align*}
In this problem instance, the optimal censorship
$\optschemeInfty$ for a fully rational receiver is the
no-information revealing signaling scheme, 
in which both states are pooled at $\censorshipstateInfty = 0$.
Now, consider a receiver with bounded rationality level $\noiseScale = 1$.
Note that 
\begin{align*}
    \Payoff[1]{\optschemeInfty}
    = \Quant(0) = \frac{1}{2}~.
\end{align*}
On the other hand, the optimal expected sender utility 
$\Payoff[1]{\texttt{OPT}(1)}$
can be lower bounded by the full-information revealing signaling scheme,
i.e.,
\begin{align*}
    \Payoff[1]{\texttt{OPT}(1)}
    \ge \prior_1 \Quant(\receiverU_1) + \prior_2 \Quant(\receiverU_2)
    > \prior_1 \Quant(\receiverU_1)
    = \left(1 - \frac{\varepsilon}{4-\varepsilon}\right) \frac{1}{1+\frac{\varepsilon}{4-\varepsilon}}
    = 1-\frac{\varepsilon}{2}
\end{align*}
which completes the proof.
\end{proof}

\subsection{Omitted Proof of 
\texorpdfstring{\Cref{prop:unbounded robust direct approx SISU lower bound}}{}}
\label{apx:unbounded robust proof direct approx SISU lower bound}
\begin{proof}
Given a sufficiently small $\varepsilon > 0$,
consider following problem instance 
with three states (i.e., $m = 3$),
\begin{align*}
    \prior_1 = \varepsilon, \qquad
    \prior_2 = \prior_3 = \frac{1 - \varepsilon}{2}, \qquad
    \receiverU_1 = -0.01, \qquad
    \receiverU_2 = 0.01, \qquad
    \receiverU_3 = 3~.
\end{align*}
In this problem instance, the optimal 
direct signaling scheme
$\optdirectInfy$ for a fully rational receiver is 
characterized as follows
\begin{align*}
    \optdirectInfy_1(\delta) 
    & = \indicator{\delta = 0} \\
    \optdirectInfy_2(\delta)
    & = \frac{2\varepsilon}{1-\varepsilon}
    \indicator{\delta = 0}, \quad
    \optdirectInfy_2(\delta)
    = \left(1 - \frac{2\varepsilon}{1-\varepsilon}\right)
    \indicator{\delta = \frac{ \frac{0.01(1-3\varepsilon)}{2} + 
    \frac{3(1-\varepsilon)}{2}}{1-2\varepsilon}} \\
    \optdirectInfy_3(\delta)
    & = \indicator{\delta = \frac{ \frac{0.01(1-3\varepsilon)}{2} + 
    \frac{3(1-\varepsilon)}{2}}{1-2\varepsilon}}~.
\end{align*}
On the other hand, 
for a sufficiently small $\varepsilon$,
it can be shown that from \Cref{thm:SISU opt},
the optimal signaling scheme $\texttt{OPT}(\noiseScale)$ 
for any bounded rationality level $\noiseScale < \infty$
is full-information revealing, 
yielding 
$\Payoff[\noiseScale]{\texttt{OPT}(\noiseScale)}
= \varepsilon \Quant(-0.01) + \sfrac{\Quant(0.01)(1-\varepsilon)}{2}  + \sfrac{\Quant(3)(1-\varepsilon)}{2}$. 
Numerically, 
it can be verified that 
$\lim_{\noiseScale \rightarrow \infty} \lim_{\varepsilon \rightarrow 0}
\frac{\Payoff[\noiseScale]{\texttt{OPT}(\noiseScale)}}{\Payoff[\noiseScale]{\optdirectInfy}}
= \infty$,
which completes the proof.
\end{proof}

\subsection{Omitted Proof of 
\texorpdfstring{\Cref{thm:imposs robust approx SDSU}}{}}
\label{apx:imposs robust proof CHECK}
\impossrobustapproxSDSU*

\begin{proof}[Proof of \Cref{thm:imposs robust approx SDSU}]
For the ease of the presentation, 
let $\robustRatio' \triangleq \frac{1}{\robustRatio}$.
To analyze the rationality-robust approximation ratio $\robustRatio$ of a certain signaling scheme $\signalscheme$ over all possible $\noiseScale\in\rationalityClass$, 
we consider following factor-revealing program
\begin{align*}
    \label{eq:SDSU approx lp CHECK}
    \arraycolsep=5.4pt\def\arraystretch{1}
    \tag{$\mathcal{P}_\texttt{Factor-Revealing}$}
    &\begin{array}{lll}
     \max\limits_{\boldsymbol{\signalscheme} \geq \zerobf, \robustRatio' \geq 0} & \robustRatio'
       & \text{s.t.} 
     \\
       & 
       \prior_1\signalscheme_1(\scaledNoiseDiff)\cdot \left(\scaledNoiseDiff - \receiverU_1\right)
        +\prior_2 \signalscheme_2(\scaledNoiseDiff)\cdot \left(\scaledNoiseDiff - \receiverU_2\right) \ge 0
       & \scaledNoiseDiff\in(-\infty, \infty)
     \\
       & 
       \displaystyle\int_{-\infty}^{\infty} \signalscheme_i(\scaledNoiseDiff)d\scaledNoiseDiff = 1
        &  i\in[2]
     \\
       & 
       \signalscheme_i(\scaledNoiseDiff) \ge 0
         & \scaledNoiseDiff\in(-\infty, \infty), ~  i\in[2]
     \\
       & 
        \Payoff[\noiseScale_\ell]{\signalscheme}
        \ge \robustRatio' \Payoff[\noiseScale_\ell]{\optschemebetaell}, 
          &  \ell\in[L]
    \end{array}
\end{align*}


We first lower bound the optimal expected sender 
utility under the bounded rationality level $\noiseScale_\ell$:
$\Payoff[\noiseScale_\ell]{\optschemebetaell} = \Omega\left(\frac{1}{\noiseScale_\ell \cdot \exp(\noiseScale_\ell)}\right)$.
To see this, consider
following signaling scheme $\signalscheme'$:
\begin{align*}
    \signalscheme_1'\left(\frac{\noiseScale_\ell+2}{\noiseScale_\ell+1}\right) = 1; 
    \quad 
    \signalscheme_2'\left(\frac{\noiseScale_\ell+2}{\noiseScale_\ell+1}\right) = \frac{1}{\noiseScale_\ell},
    ~
    \signalscheme_2'\left(2\right) = 1 - \frac{1}{\noiseScale_\ell}~.
\end{align*}
Clearly, the above signaling scheme is a feasible solution
to the program~\ref{eq:opt lp} 
with the above constructed problem instance $\instance$.
Thus, we have
\begin{align*}
    \Payoff[\noiseScale_\ell]{\optschemebetaell} 
    \ge 
    \Payoff[\noiseScale_\ell]{\signalscheme'} 
    \ge \prior_2 \cdot \frac{1}{\noiseScale_\ell} \Quant\left(\frac{\noiseScale_\ell+2}{\noiseScale_\ell+1}\right)
    = \Omega\left(\frac{1}{\noiseScale_\ell\exp(\noiseScale_\ell)}\right)~.
\end{align*}

Recall that $\senderU_1 = 0$, and thus
$\Payoff[\noiseScale_\ell]{\signalscheme} = 
\prior_2\senderU_2 \int_{-\infty}^\infty \signalscheme_i(\scaledNoiseDiff) \Quant^{(\noiseScale_\ell)}(\scaledNoiseDiff)d\scaledNoiseDiff$.
To analyze the objective of the program \ref{eq:SDSU approx lp CHECK},
we consider following dual program~\ref{eq:SDSU approx dual lp CHECK}
of the program \ref{eq:SDSU approx lp CHECK}
with relaxing its fourth constraint to 
$\Payoff[\noiseScale_\ell]{\signalscheme}
\ge \robustRatio' \frac{1}{\noiseScale_\ell\exp(\noiseScale_\ell)}, 
\forall\ell\in[L]$:\footnote{The the duality of our infinite-dimensional LP can be obtained formally from Theorem 3.12 in \cite{AN-87}.}
\begin{align}
    \label{eq:SDSU approx dual lp CHECK}
    \arraycolsep=5.4pt\def\arraystretch{1}
    \tag{$\mathcal{P}_\texttt{FR-Dual}$}
    &\begin{array}{lll}
     \min\limits_{\boldsymbol\noiseDualVar,
     \boldsymbol\distDualVar,\boldsymbol\approxDualVar} & \distDualVar(1) + \distDualVar(2)
       & \text{s.t.} 
     \\
       & 
       \noiseDualVar(\scaledNoiseDiff) \cdot(1 - \scaledNoiseDiff) + 2\distDualVar(1) \ge 0
       &  \scaledNoiseDiff\in(-\infty, \infty)
     \\
       & 
       \noiseDualVar(\scaledNoiseDiff) \cdot(2 - \scaledNoiseDiff ) + 2\distDualVar(2) \ge \displaystyle\sum_{\ell=1}^L \approxDualVar(\ell) \cdot  \frac{1}{1+\exp(\noiseScale_\ell \scaledNoiseDiff)}
        &  \scaledNoiseDiff\in(-\infty, \infty)
     \\
       & 
       \displaystyle\sum_{\ell=1}^L \approxDualVar(\ell) \cdot  \frac{1}{\noiseScale_\ell\exp(\noiseScale_\ell)} \ge 1
         & 
     \\
       & 
       \noiseDualVar(\scaledNoiseDiff) \ge 0, 
          & \scaledNoiseDiff\in(-\infty, \infty)
     \\
       & 
       \approxDualVar(\ell)\ge 0
        &  \ell\in[L]
    \end{array}
\end{align}

Below we construct an assignment 
for the dual variables $\distDualVar(1), \distDualVar(2)$
and $\{\approxDualVar(\ell)\}$
for the dual program \ref{eq:SDSU approx dual lp CHECK} and show that
together with an assignment 
of $\{\noiseDualVar(\scaledNoiseDiff)\}$,
our constructed assignment is feasible
for sufficiently large $L$.
Consider the following dual assignment
of $\distDualVar(1), \distDualVar(2)$
and $\{\approxDualVar(\ell)\}$,
\begin{align*}
    \distDualVar(1)  \gets \frac{3}{L}
    \quad
    \distDualVar(2)  \gets \frac{2}{L}; 
    \qquad
    \approxDualVar(\ell) \gets \frac{1}{L}
    (\noiseScale_\ell \exp(\noiseScale_\ell))
    \quad\forall \ell\in[L]
\end{align*}
Note that
the dual constraint for primal variable $\robustRatio$
is satisfied by construction.
Next, we discuss
how to construct
the assignment for $\{\noiseDualVar(\scaledNoiseDiff)\}$.
We consider the three cases separately:
$\scaledNoiseDiff \leq 1$, $\scaledNoiseDiff \geq 2$
and $\scaledNoiseDiff\in(1, 2)$.
\begin{itemize}[label={-}]

    \item 
    For every $\scaledNoiseDiff \leq 1$, 
    let $\noiseDualVar(\scaledNoiseDiff) = \infty$
    is sufficient to satisfies the dual constraints
    for $\signalProb_1(\scaledNoiseDiff)$
    and $\signalProb_2(\scaledNoiseDiff)$.
    
    \item 
    For every $\scaledNoiseDiff \geq 2$,
    let $\noiseDualVar(\scaledNoiseDiff) = 0$
    is sufficient to satisfies the dual constraints
    for $\signalProb_1(\scaledNoiseDiff)$
    and $\signalProb_2(\scaledNoiseDiff)$.
    To see this, note that the dual constraints for $\signalProb_1(\scaledNoiseDiff)$ holds straightforwardly
    as $\distDualVar(1) = \sfrac{3}{L} \ge 0$.
    To satisfy the dual constraints for $\signalProb_2(\scaledNoiseDiff)$,
    it is sufficient to show 
    $\distDualVar(2) \ge  \frac{1}{2}\cdot\sum_{\ell=1}^L  \frac{\approxDualVar(\ell)  }{1+\exp(\noiseScale_\ell \scaledNoiseDiff)}$,
    which holds for sufficiently large $L$.
    To see this, note that
    \begin{align*}
      \frac{1}{2}\cdot \sum_{\ell=1}^L  \frac{\approxDualVar(\ell) }{1+\exp(\noiseScale_\ell \scaledNoiseDiff)}
      =
      \frac{1}{2}\cdot \sum_{\ell=1}^L  
      \frac{1}{L}\frac{
      \noiseScale_\ell \exp(\noiseScale_\ell) 
      }{1+\exp(\noiseScale_\ell \scaledNoiseDiff)}
      & \overset{(a)}{\leq}
      \frac{1}{2}\cdot \sum_{\ell=1}^L  
      \frac{1}{L}
      \frac{
      \noiseScale_\ell \exp(\noiseScale_\ell) 
      }{1+\exp(2\noiseScale_\ell)}
      \overset{(b)}{\leq} 
      \frac{1}{2}\cdot \sum_{\ell=1}^L  
      \frac{1}{L}
      \frac{
      1
      }{L}
      \\
      &
      = \frac{1}{2L}
      \leq \distDualVar(2)
    \end{align*}
    
    \item 
Now we consider $\scaledNoiseDiff \in (1, 2)$.
To satisfies the dual constraints 
for $\signalProb_1(\scaledNoiseDiff)$
and $\signalProb_2(\scaledNoiseDiff)$,
it is sufficient to show 
    $\frac{2\distDualVar(1)}{\scaledNoiseDiff-1} \ge 0$,
which holds straightforwardly as $\scaledNoiseDiff > 1$
as $\distDualVar(1) 
=\sfrac{3}{L} > 0$,
and 
\begin{align*}
    \frac{1}{2-\scaledNoiseDiff} \cdot \left(\displaystyle\sum_{\ell=1}^L \approxDualVar(\ell) \cdot  \frac{1}{1+ \exp(\noiseScale_\ell\scaledNoiseDiff)} -2\distDualVar(2)\right)
    \le \frac{2\distDualVar(1)}{\scaledNoiseDiff-1}~.
\end{align*}
Rearranging the above inequality, we have 
\begin{align}
\label{eq:existence feasible alpha between 1 2 CHECK}
    \distDualVar(1) + \frac{\scaledNoiseDiff-1}{2-\scaledNoiseDiff }\cdot \distDualVar(2)  
    & \ge \frac{1}{2}\cdot  \frac{\scaledNoiseDiff-1}{2-\scaledNoiseDiff} \cdot \sum_{\ell=1}^L   \frac{\approxDualVar(\ell) }{1+\exp(\noiseScale_\ell \scaledNoiseDiff)}~.
\end{align}
To see why the above inequality \eqref{eq:existence feasible alpha between 1 2 CHECK} 
holds true, we consider two subcases
$\scaledNoiseDiff\in[\sfrac{3}{2}, 2)$ 
and $\scaledNoiseDiff\in(1, \sfrac{3}{2}]$ separately.

Fix an arbitrary $\scaledNoiseDiff \in [\sfrac{3}{2},2)$.
Note that
for sufficiently large $L$
\begin{align*}
    \text{Left-hand side of \eqref{eq:existence feasible alpha between 1 2 CHECK}}
    &\geq \frac{\scaledNoiseDiff - 1}{2-\scaledNoiseDiff}\distDualVar(2) 
    \geq \frac{1}{2-\scaledNoiseDiff}\frac{1}{L}
    \\
    \text{Right-hand side of \eqref{eq:existence feasible alpha between 1 2 CHECK}}
    &
    \overset{(a)}{\leq} 
    \frac{1}{2}
    \frac{1}{2-\scaledNoiseDiff}
    \sum_{\ell=1}^L
    \frac{1}{L}
    \frac{\noiseScale_\ell\exp(\noiseScale_\ell)}{1 + \exp(\frac{3}{2}\noiseScale_\ell)}
    \\
    &
    \overset{(b)}{\leq} 
    \frac{1}{2}
    \frac{1}{2-\scaledNoiseDiff}
    \sum_{\ell=1}^L
    \frac{1}{L}
    \frac{1}{L}
    =
    \frac{1}{2-\scaledNoiseDiff}\frac{1}{2L}
\end{align*}
where
inequality~(a) holds since $\scaledNoiseDiff \in [\sfrac{3}{2}, 2)$;
and inequality~(b) holds since
$\frac{\noiseScale_\ell\exp(\noiseScale_\ell)}{1 + \exp(\frac{3}{2}\noiseScale_\ell)} \leq \frac{1}{L}$
for sufficiently large $L$.

Fix an arbitrary $\scaledNoiseDiff \in (1, \sfrac{3}{2}]$.
Let $k\in \naturals$ be the 
index such that $\scaledNoiseDiff \in 
\left[1 + \frac{1}{L^{k + \frac{1}{2}}},
1+ \frac{1}{L^{k - \frac{1}{2}}}\right]$.
Note that
for sufficiently large $L$,
\begin{align*}
    \text{Left-hand side of \eqref{eq:existence feasible alpha between 1 2 CHECK}}
    &\geq \distDualVar(1)
    = \frac{3}{L}
    \\
    \text{Right-hand side of \eqref{eq:existence feasible alpha between 1 2 CHECK}}
    &
    \overset{(a)}{\leq} 
    (\scaledNoiseDiff - 1)
    \sum_{\ell=1}^L
    \frac{1}{L}
    \frac{\noiseScale_\ell\exp(\noiseScale_\ell)}{1 + \exp(\scaledNoiseDiff\noiseScale_\ell)}
    \\
    & = 
    (\scaledNoiseDiff - 1)
    \sum_{\ell\in[L]: \ell\not= k}
    \frac{1}{L}
    \frac{\noiseScale_\ell\exp(\noiseScale_\ell)}{1 + \exp(\scaledNoiseDiff\noiseScale_\ell)}
    +
    (\scaledNoiseDiff - 1)
    \frac{1}{L}
    \frac{L^k\exp(L^k)}{1 + \exp(\scaledNoiseDiff L^k)}
    \\
    &
    \overset{(b)}{\leq}
      \sum_{\ell\in[L]: \ell\not= k}
    \frac{1}{L}
    \frac{1}{L}
    +
    \frac{1}{e}
    \frac{1}{L}
    \leq \frac{2}{L}
\end{align*}
where inequality~(a) holds since
$\scaledNoiseDiff \in (1, \sfrac{3}{2}]$,
and 
inequality~(b) holds 
since
$(\scaledNoiseDiff - 1)
\frac{\noiseScale_\ell\exp(\noiseScale_\ell)}{1 + \exp(\scaledNoiseDiff\noiseScale_\ell)}
\leq \frac{1}{L}$
for every $\ell\not= k$
when $L$ is sufficiently large,
and 
$(\scaledNoiseDiff - 1)
\frac{L^k\exp(L^k)}{1 + \exp(\scaledNoiseDiff L^k)}
\leq \frac{1}{e}$.
\end{itemize}
Hence, we have the optimal objective value of the program \ref{eq:SDSU approx lp CHECK}
is at most $\distDualVar(1) + \distDualVar(2) = \sfrac{5}{L}$.
As a result,
the rationality-robust approximation ratio $\robustRatio(\signalscheme, \rationalityClass)$ is at least $\sfrac{L}{5}$ for any signaling scheme $\signalscheme$.
The proof now completes. 
\end{proof}

\ifEC
\subsection{Positive Results for Binary-state Instances}
\label{apx:SDSU binary robust proof}

\begin{proposition}
\label{thm:SDSU binary robust}
In SDSU environments, for any 
problem instance $\instance 
= (m, \{\prior_i\}, \{\receiverU_i\}, \{\senderU_i\})$
with binary state (i.e., $m = 2$),
for any $K \geq 1$ and $\noiseScaleLB \geq \frac{\prior_2}{\prior_1} \cdot \frac{1}{\receiverU_2 - \max\{\receiverU_1, 0\}}
\cdot \indicator{\receiverU_2 > 0}$,
there exists a signaling scheme $\signalscheme$
such that 
$\robustRatio(\signalscheme, [\noiseScaleLB, K\noiseScaleLB])   
\leq 
\left(4\sqrt{eK} + 1\right)^2$.
\end{proposition}

Here we sketch the high-level idea for the proof 
of \Cref{thm:SDSU binary robust}.
Note that by definition, a rationality-robust signaling scheme 
with rationality-robust approximation ratio $\robustRatio$
must imply that it is also $\robustRatio$-approximately optimal
to the optimal signaling scheme $\texttt{OPT}(\noiseScale)$ under every possible 
rationality level $\noiseScale\in\rationalityClass$.
Hence, to identify a robust signaling scheme,
ideally, one needs to understand
how does the optimal sender expected utility 
(or a non-trivial upper bound of it)
change when receiver's bounded rationality level changes. 
However, it is difficult to characterize the optimal 
sender expected utility\footnote{Although in 
\Cref{lem:SDSU binary opt}, we show that optimal signaling scheme $\texttt{OPT}(\noiseScale)$ for binary-state instances 
in SDSU environments is a censorship signaling scheme, 
the pooling signal no longer admits a simple structure as in SISU environments (see \Cref{lem:SDSU binary opt tmp}). 
Moreover, unlike in SISU environments where the pooling signal  is monotone with respect to the rationality level, 
here, the pooling signal of optimal signaling scheme $\texttt{OPT}(\noiseScale)$ is no longer monotone.}, 
let alone to understand its behavior over different rationality levels.
We tackle this challenge 
by first showing that a censorship signaling scheme 
whose pooling signal (i.e., roughly at $\receiverU_1 + \Theta(\sfrac{1}{\noiseScale})$)
depends on the value of receiver's rationality level  
is approximately optimal to the optimal signaling scheme $\texttt{OPT}(\noiseScale)$.
With this structure of censorship signaling scheme,
a robust signaling scheme can be constructed 
by fine-tuning the location of the pooling signal.



\begin{restatable}{lemma}{SDSUbinaryapproxwithK}
\label{lem:SDSU binary approx with K}
In SDSU environments,
for any problem instance $\instance 
= (m, \{\prior_i\}, \{\receiverU_i\}, \{\senderU_i\})$
with binary state (i.e., $m = 2$),
for any receiver with bounded rationality level 
$\noiseScale$,
if $\noiseScale\geq \frac{\prior_2}{\prior_1} \cdot \frac{1}{\receiverU_2 - \max\{\receiverU_1, 0\}}
\cdot \indicator{\receiverU_2 > 0}$,
for any $\adjustedConst \geq 1$,
the optimal expected sender utility $\Payoff{\texttt{OPT}(\noiseScale)}$
is at most 
\begin{align*}
    \Payoff{\texttt{OPT}(\noiseScale)}
    \leq 
    16e\adjustedConst
    \Payoff{\signalscheme\primed} + 
    \Payoff{\signalscheme\doubleprimed}~,
\end{align*}
where 
signaling scheme
$\signalscheme\primed$ is a censorship signaling scheme
with threshold state $\censorshipstate = 2$
and pooling signal $\censorshipsignal = 
\min\{
\prior_1 \receiverU_1 + \prior_2 \receiverU_2, \max\{\receiverU_1, 0\}+ \sfrac{1}{(\adjustedConst\noiseScale})\}$,
and signaling scheme $\signalscheme\doubleprimed$ is 
the full-information revealing signaling scheme
(i.e., also a censorship signaling scheme).
\end{restatable}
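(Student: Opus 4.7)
The plan is to invoke \Cref{lem:SDSU binary opt} to reduce to the case where the optimal signaling scheme $\optschemebeta$ is a censorship scheme. In the binary-state setting, this censorship either has threshold state $i^* = 1$, in which case $\optschemebeta$ coincides with the full-information revealing scheme $\signalscheme\doubleprimed$ and the bound $\Payoff{\optschemebeta} \leq \Payoff{\signalscheme\doubleprimed}$ is immediate, or has threshold state $i^* = 2$, pooling state $1$ together with a $p^*$-fraction of state $2$ into a pooling signal $\delta^* \in [\receiverU_1, \receiverU_2]$. I will focus on the latter, non-trivial case.

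For this case I expand
\[
\Payoff{\optschemebeta} = (\prior_1 \senderU_1 + p^* \prior_2 \senderU_2) \Quant(\delta^*) + (1-p^*) \prior_2 \senderU_2 \Quant(\receiverU_2),
\]
and observe that the second term is bounded above by $\prior_2 \senderU_2 \Quant(\receiverU_2) \leq \Payoff{\signalscheme\doubleprimed}$. It therefore suffices to show that the pooling term $(\prior_1 \senderU_1 + p^* \prior_2 \senderU_2) \Quant(\delta^*)$ is at most $16eK \, \Payoff{\signalscheme\primed}$.

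Let $\delta_0 := \max\{\receiverU_1, 0\} + \frac{1}{K\noiseScale}$ and let $p_0$ denote the corresponding pooling probability of $\signalscheme\primed$ in the unclipped regime; the clipped regime, where $\signalscheme\primed$ degenerates to the no-information revealing scheme, is handled analogously by comparing $\optschemebeta$ against that no-information payoff directly. I split the analysis on whether $\delta^* \leq \delta_0$ or $\delta^* > \delta_0$. When $\delta^* \leq \delta_0$, monotonicity of the map $\delta \mapsto p$ (derived from the linear relation $\delta = (\prior_1 \receiverU_1 + p \prior_2 \receiverU_2)/(\prior_1 + p \prior_2)$) gives $p^* \leq p_0$, and a direct estimate on the logistic $\Quant$ yields $\Quant(\delta^*)/\Quant(\delta_0) \leq 1 + e$ for $K \geq 1$; combining these, the pooling term is at most $(1+e) \, \Payoff{\signalscheme\primed}$, comfortably within the $16eK$ budget.

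The main obstacle is the regime $\delta^* > \delta_0$, where $\Quant(\delta^*)$ is exponentially smaller than $\Quant(\delta_0)$ but $p^*$ can be considerably larger than $p_0$. Here I will use (i) the exponential decay $\Quant(\delta^*)/\Quant(\delta_0) \leq 2 \exp\bigl(-\noiseScale(\delta^* - \delta_0)\bigr)$, (ii) the explicit formula $p^* = \frac{\prior_1(\delta^* - \receiverU_1)}{\prior_2(\receiverU_2 - \delta^*)}$ together with the trivial bound $p^* \leq 1$, and (iii) the hypothesis $\noiseScale \geq \frac{\prior_2}{\prior_1 (\receiverU_2 - \max\{\receiverU_1, 0\})} \indicator{\receiverU_2 > 0}$, which ensures that $\delta_0$ sits safely inside $[\receiverU_1, \receiverU_2]$ and that $p_0 = \Omega(1/K)$. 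Bounding $p^* \Quant(\delta^*)$ by an $O(K)$ multiple of $p_0 \Quant(\delta_0)$, and tracking constants carefully across the regimes $\receiverU_1 < 0$ and $\receiverU_1 \geq 0$, yields the claimed factor of $16eK$ and completes the argument.
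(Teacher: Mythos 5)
Your high-level plan is a genuine alternative to the paper's argument: the paper proves this bound by weak LP duality, explicitly constructing a dual assignment with objective $\distDualVar(1)+\distDualVar(2)\approx 16eK\Payoff{\signalscheme\primed}$ and verifying its feasibility across ranges of $\scaledNoiseDiff$, whereas you propose a direct primal comparison, invoking \Cref{lem:SDSU binary opt} to identify $\texttt{OPT}(\noiseScale)$ as a censorship scheme and then comparing its pooling signal $\delta^*$ to the pooling signal $\delta_0$ of $\signalscheme\primed$. Your easy regime ($\delta^* \le \delta_0$, where monotonicity gives $p^* \le p_0$ and $\Quant(\delta^*)/\Quant(\delta_0)\le 1+e$) is fine, and the state-$1$ bookkeeping via $\Payoff{\signalscheme\doubleprimed}$ works.

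However, the hard regime ($\delta^* > \delta_0$) is where the proposal has a concrete gap. Your claim that the hypothesis on $\noiseScale$ "ensures $p_0 = \Omega(1/K)$" is false. Taking $\receiverU_1 = 0$, $\receiverU_2 = 1$, $\prior_1 = \prior_2 = \tfrac12$, the hypothesis is $\noiseScale \ge 1$, and $\delta_0 = \tfrac{1}{K\noiseScale}$ gives $p_0 = \tfrac{\prior_1\delta_0}{\prior_2(\receiverU_2-\delta_0)} \approx \tfrac{1}{K\noiseScale}$, which tends to $0$ as $\noiseScale \to \infty$ for any fixed $K$. Since this lower bound on $p_0$ is precisely the lever you use to control $p^*\Quant(\delta^*)/\bigl(p_0\Quant(\delta_0)\bigr)$ with "the trivial bound $p^*\le 1$" and "$\Quant(\delta^*)/\Quant(\delta_0)\le 2\exp(-\noiseScale(\delta^*-\delta_0))$," the sketch does not close: for $\delta^*$ slightly above $\delta_0$ the exponential factor is $\approx 1$ and $p^*\le 1$ gives you only $\Quant(\delta_0)$, which is a factor $\noiseScale$ too large relative to $CKp_0\Quant(\delta_0)$. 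What actually works is a further split of $\delta^* > \delta_0$ at the midpoint $\tfrac{\max\{\receiverU_1,0\}+\receiverU_2}{2}$ (as the paper's dual-feasibility verification implicitly does): for $\delta^*$ below the midpoint one must use the exact expression $p^*=\tfrac{\prior_1(\delta^*-\receiverU_1)}{\prior_2(\receiverU_2-\delta^*)}$ rather than $p^*\le 1$, whereupon the key cancellation is of the form $\noiseScale(\delta^*-\max\{\receiverU_1,0\})\exp(-\noiseScale(\delta^*-\max\{\receiverU_1,0\}))\le e^{-1}$; for $\delta^*$ above the midpoint the $\noiseScale$-hypothesis enters not through "$p_0=\Omega(1/K)$" but through $\noiseScale(\receiverU_2-\max\{\receiverU_1,0\})\ge\prior_2/\prior_1$, which together with a bound of the form $x^2e^{-x/2}=O(1)$ cancels the residual $\noiseScale$-dependence. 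Your proposal also leaves the clipped regime ($\delta_0 = \prior_1\receiverU_1+\prior_2\receiverU_2$) as "analogous," whereas this case needs its own short argument (it corresponds to $p_0 = 1$, so the relevant comparison is against no-information revealing). The primal route is viable, but as written the proposal does not establish the $O(K)$ bound in the regime where it matters.
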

\Cref{thm:SDSU binary robust} 
is a direct implication of \Cref{lem:SDSU binary approx with K},
whose proof is deferred to 
\Cref{apx:SDSU binary approx with K proof}.
\begin{proof}[Proof of \Cref{thm:SDSU binary robust}]
Let $\signalscheme\primed$ be the signaling scheme
with threshold state 
$\censorshipstate = 2$
and pooling signal $\censorshipsignal = 
\min\{\prior_1 \receiverU_1 + \prior_2 \receiverU_2, \max\{\receiverU_1, 0\}+ \sfrac{1}{(K\noiseScaleLB)}  \}$, 
and $\signalscheme\doubleprimed$ be the full-information 
revealing signaling scheme. 
Now construct 
signaling scheme 
$\signalscheme\triangleq
q\signalscheme\primed + 
(1- q)\signalscheme\doubleprimed$ 
as the convex combination 
of signaling scheme $\signalscheme\primed$
and $\signalscheme\doubleprimed$.
We specify the convex combination factor $q\in(0,1)$ in the end of the analysis.
By construction,
for any bounded rationality level $\noiseScale$,
$\Payoff[\noiseScale]{\signalscheme}
=
{q\Payoff[\noiseScale]{\signalscheme\primed}} + 
{(1-q)\Payoff[\noiseScale]{\signalscheme\doubleprimed}}$.
To see the rationality-robustness of signaling scheme $\signalscheme$, 
consider a receiver with an arbitrary bounded
rationality level $\noiseScale \in [\noiseScaleLB, K\noiseScaleLB]$, 
and note that 
\begin{align*}
    \Payoff[\noiseScale]{\texttt{OPT}(\noiseScale)}
    & \overset{(a)}{\le} 
    16e\frac{K\noiseScaleLB}{\noiseScale}\Payoff[\noiseScale]{\signalscheme\primed} +
    \Payoff[\noiseScale]{\signalscheme\doubleprimed}
    \\
    & \overset{(b)}{\le} 
    16eK\Payoff[\noiseScale]{\signalscheme\primed} +
    \Payoff[\noiseScale]{\signalscheme\doubleprimed}
    \leq 
    \left(\frac{16eK}{q} + \frac{1}{1-q}
    \right)
    \Payoff[\noiseScale]{\signalscheme}
\end{align*}
where inequality (a) holds from \Cref{lem:SDSU binary approx with K}, 
and inequality (b) holds since $\sfrac{\noiseScaleLB}{\noiseScale} \leq 1$.
We finishes the proof by
letting the convex combination factor $q$
minimizes $\sfrac{(16eK)}{q} + \sfrac{1}{(1-q)}$.
\end{proof}

\fi

\subsection{Omitted Proof of 
\texorpdfstring{\Cref{lem:SDSU binary approx with K}}{}}
\label{apx:SDSU binary approx with K proof}
\SDSUbinaryapproxwithK*
\begin{proof}
Since the expected sender utility generated from state $1$
in the optimal signaling scheme $\optscheme$
is at most $\Payoff{\signalscheme\doubleprimed}$,
it is sufficient to show 
that the expected sender utility generated from state $2$
in the optimal signaling scheme $\optscheme$
is at most $16eK\Payoff{\signalscheme\primed}$.
This can be further reformulated as showing 
$\Payoff{\optscheme} \leq 16eK \Payoff{\signalscheme\primed}$
for all problem instances with sender utility $\senderU_1 = 0$.
We further assume $\receiverU_2 > 0$.\footnote{When
sender utility $\senderU_1 = 0$
and receiver utility difference $\receiverU_2 \leq 0$,
signaling scheme $\signalscheme\primed$
becomes the no-information revealing signaling scheme,
which is indeed optimal due to the concavity
of function $\Quant(\cdot)$ on $(-\infty, 0]$.}
We show this inequality using the weakly duality of linear program~\ref{eq:opt lp}
with its dual program~\ref{eq:opt lp dual}.

First, consider following assignment 
for dual variables $\{\distDualVar(1), \distDualVar(2)\}$,
\begin{align*}
    \distDualVar(1) 
    = 
    \distDualVar(2) 
    =  
    8eK\,
    \Quant(\censorshipsignal)
\end{align*}
where $\censorshipprob = 
\frac{\prior_1(\censorshipsignal -
\receiverU_1)}{\prior_2(\receiverU_2 - \censorshipsignal)}$
is the threshold state probability 
in signaling scheme $\signalscheme\primed$.

Below we argue that there exists an assignment 
for dual variables 
$\{\noiseDualVar(\scaledNoiseDiff)\}$,
which together with the constructed dual assignment 
for $\{\distDualVar(1),\distDualVar(2)\}$ above
is feasible.
For every $\scaledNoiseDiff\in(-\infty,\infty)$,
there are two dual constraints related to 
$\{\noiseDualVar(\scaledNoiseDiff)\}$,
\begin{align*}
    \prior_1
      \left(\receiverU_1 - \scaledNoiseDiff \right)
      \noiseDualVar(\scaledNoiseDiff)
      +
      \distDualVar(1)
      \geq 
      \prior_1 \senderU_1 \Quant(\scaledNoiseDiff);
      \quad 
    \prior_2
      \left(\receiverU_2 - \scaledNoiseDiff \right)
      \noiseDualVar(\scaledNoiseDiff)
      +
      \distDualVar(2)
      \geq 
      \prior_2 \senderU_2 \Quant(\scaledNoiseDiff)
\end{align*}
If $\scaledNoiseDiff\in(-\infty,\receiverU_1]$
(resp.\ $\scaledNoiseDiff\in[\receiverU_2,\infty)$),
setting $\noiseDualVar(\scaledNoiseDiff) = -\infty$
(resp.\ $\noiseDualVar(\scaledNoiseDiff) = \infty$)
satisfies the dual constraints.
If $\scaledNoiseDiff\in(\receiverU_1,\receiverU_2)$,
plugging the assignment 
for dual variables 
$\{\distDualVar(1), \distDualVar(2)\}$
constructed above as well as $\senderU_1 = 0$,
the two dual constraints are equivalent to
\begin{align}
    \label{eq: approx verify with K}
    \frac{\prior_2 (\receiverU_2 - \scaledNoiseDiff)}
    {\prior_1(\scaledNoiseDiff - \receiverU_1)}
    \censorshipprob
    \Quant(\censorshipsignal)
    + \censorshipprob\Quant(\censorshipsignal)
    \geq
    \frac{1}{8eK}
    \Quant(\scaledNoiseDiff)
\end{align}
Let $\NOinforSignal \triangleq (\prior_1 \receiverU_1 + \prior_2 \receiverU_2)$.
To establish the above inequality \eqref{eq: approx verify with K}
with the dual assignment of $\{\distDualVar(1), \distDualVar(2)\}$, 
we analyze two cases 
(i) $\NOinforSignal >
\max\{\receiverU_1, 0\} + \sfrac{1}{(K\noiseScale)}$
and 
(ii) $\NOinforSignal \leq
\max\{\receiverU_1, 0\} + \sfrac{1}{(K\noiseScale)}$
separately.

Suppose $\NOinforSignal > 
\max\{\receiverU_1, 0\} + \sfrac{1}{(K\noiseScale)}$,
and thus $\censorshipsignal = \max\{\receiverU_1, 0\} + \sfrac{1}{(K\noiseScale)}$.
Here we consider
following three subcases based on different values of $\delta$:
\begin{itemize}[label={-}]
    \item 
    Fix an arbitrary
    $\scaledNoiseDiff \in(\receiverU_1, \censorshipsignal]$.
    Note that 
    \begin{align*}
    \frac{\prior_2 (\receiverU_2 - \scaledNoiseDiff)}
    {\prior_1(\scaledNoiseDiff - \receiverU_1)}
    \censorshipprob
    \Quant(\censorshipsignal)
    + \censorshipprob\Quant(\censorshipsignal)
    &\geq 
    \frac{\prior_2 (\receiverU_2 - \scaledNoiseDiff)}
    {\prior_1(\scaledNoiseDiff - \receiverU_1)}
    \censorshipprob
    \Quant(\censorshipsignal)
    \overset{(a)}{\geq}
    \frac{1}{\censorshipprob}\censorshipprob\Quant(\censorshipsignal)
    \overset{(b)}{\geq}
    \frac{1}{1+e}\Quant(\scaledNoiseDiff)
    \end{align*}
    where inequality~(a) holds since $\sfrac{\prior_2 (\receiverU_2 - \scaledNoiseDiff)}
    {\prior_1(\scaledNoiseDiff - \receiverU_1)} 
    \geq \sfrac{1}{\censorshipprob}$
    due to the construction of $\censorshipprob$;
    and inequality~(b) holds since $\Quant(\censorshipsignal)\geq 
    \sfrac{\Quant(\scaledNoiseDiff)}{(1+\exp(\adjustedConst))}$
    due to the construction of $\censorshipsignal$
    and $K\geq 1$

    \item 
    Fix an arbitrary $\scaledNoiseDiff \in(\censorshipsignal, 
    \sfrac{(\max\{\receiverU_1, 0\}  + \receiverU_2)}
    {2}]$.
    Note that 
    \begin{align*}
    \frac{\prior_2 (\receiverU_2 - \scaledNoiseDiff)}
    {\prior_1(\scaledNoiseDiff - \receiverU_1)}
    \censorshipprob
    \Quant(\censorshipsignal)
    + \censorshipprob\Quant(\censorshipsignal)
    \geq&
    \frac{\prior_2 (\receiverU_2 - \scaledNoiseDiff)}
    {\prior_1(\scaledNoiseDiff - \receiverU_1)}
    \censorshipprob
    \Quant(\censorshipsignal) 
    \\
    =&
    \frac{\prior_2 (\receiverU_2 - \scaledNoiseDiff)}
    {\prior_1(\scaledNoiseDiff - \receiverU_1)}
    \frac{\prior_1(\censorshipsignal - \receiverU_1)}
    {\prior_2 (\receiverU_2 - \censorshipsignal)}
    \Quant(\censorshipsignal)
    \\
    =&
    \frac{(\receiverU_2 - \scaledNoiseDiff)}
    {(\receiverU_2 - \censorshipsignal)}
    \frac{(\censorshipsignal - \receiverU_1)}
    {(\scaledNoiseDiff - \receiverU_1)}
    \Quant(\censorshipsignal)
    \\
    \overset{(a)}{\geq}&
    \frac{1}{2}
    \frac{1}{2e}
    \frac{1}{K\noiseScale}
    \frac{1}{\scaledNoiseDiff - \receiverU_1}
    \frac{1}{\exp(\noiseScale\receiverU_1)}
    \\
    =&
    \frac{\exp(\noiseScale\scaledNoiseDiff)}
    {2\noiseScale(\scaledNoiseDiff - \receiverU_1)\exp(\noiseScale\receiverU_1)
    }
    \cdot 
    \frac{1}{2eK} \frac{1}{\exp(\noiseScale\scaledNoiseDiff)}
    \overset{(b)}{\geq}
    \frac{1}{2eK} \Quant(\scaledNoiseDiff)
    \end{align*}
    where inequality~(a) holds since
    $\sfrac{(\receiverU_2 - \scaledNoiseDiff)}{(\receiverU_2 - \censorshipsignal)} \geq \sfrac{1}{2}$,
    $\censorshipsignal-\receiverU_1 \geq  \sfrac{1}{(K\noiseScale)}$,
    and
    $2\exp(\adjustedConst)\Quant(\censorshipsignal) \geq  \sfrac{1}{\exp(\noiseScale\receiverU_1)}$.
    To see why inequality~(b) holds,
    first note that 
    $\sfrac{1}{\exp(\noiseScale\scaledNoiseDiff)}
    \geq \Quant(\scaledNoiseDiff)$.
    Additionally, 
    $\noiseScale(\scaledNoiseDiff-\receiverU_1) \geq 1$,
    and
    thus
    $\sfrac{\exp(\noiseScale\scaledNoiseDiff)}
    {(2\noiseScale(\scaledNoiseDiff - \receiverU_1)\exp(\noiseScale\receiverU_1))
    } \geq \sfrac{\exp(1)}{2} \geq 1$.
    
    \item Fix an arbitrary $\scaledNoiseDiff\in
    (\sfrac{(\max\{\receiverU_1, 0\}  + \receiverU_2)}{2}, \receiverU_2)$.
    Note that 
    \begin{align*}
    \frac{
    \frac{\prior_2 (\receiverU_2 - \scaledNoiseDiff)}
    {\prior_1(\scaledNoiseDiff - \receiverU_1)}
    \censorshipprob
    \Quant(\censorshipsignal)
    + \censorshipprob\Quant(\censorshipsignal)}
    {
    \frac{1}{8eK}\Quant(\scaledNoiseDiff)}
    & \geq 
    8eK\censorshipprob
    \frac{\Quant(\censorshipsignal)}
    {\Quant(\scaledNoiseDiff)}\\
    & \overset{(a)}{\geq}
    4eK\,
    \frac{\prior_1(\censorshipsignal - \receiverU_1)}
    {\prior_2 (\receiverU_2 - \censorshipsignal)}
    \frac{\exp(\noiseScale(\scaledNoiseDiff-\max\{\receiverU_1, 0\} ))}{ \exp(1/K)}
    \\
    &\overset{(b)}{\geq}
    4K
    \frac{\prior_1\frac{1}{K\noiseScale}}{\prior_2
    (\receiverU_2 - \max\{\receiverU_1, 0\} )}
    {\left(\noiseScale(\scaledNoiseDiff-\max\{\receiverU_1, 0\} )\right)^2}
    \\
    &
    \overset{(c)}{\geq}
    \frac{\prior_1}{\prior_2}
    \noiseScale{(\receiverU_2 - \max\{\receiverU_1, 0\} )}
    \overset{(d)}{\geq} 1
    \end{align*}
    where inequality~(a) holds since 
    $2\sfrac{\Quant(\censorshipsignal)}
    {\Quant(\scaledNoiseDiff)} \geq \exp(\noiseScale(\scaledNoiseDiff-
    \max\{\receiverU_1, 0\} - \sfrac{1}{(K\noiseScale)}))$
    and the definition of $\censorshipprob$;
    inequality~(b) holds since
    $\censorshipsignal-\receiverU_1 \geq \sfrac{1}{(K\noiseScale)}$,
    $\receiverU_2 - \censorshipsignal \leq 
    \receiverU_2 - \max\{\receiverU_1, 0\} $,
    $\exp(\noiseScale(\scaledNoiseDiff-\max\{\receiverU_1, 0\} )) \geq 
    \left(\noiseScale(\scaledNoiseDiff-\max\{\receiverU_1, 0\} )\right)^2
    $;
    inequality~(c) holds since $\scaledNoiseDiff \geq 
    \sfrac{(\max\{\receiverU_1, 0\}  + \receiverU_2)}{2}$;
    and inequality~(d) holds since we assume that 
    $\noiseScale \geq \sfrac{\prior_2}{(\prior_1(\receiverU_2 - \max\{\receiverU_1, 0\} ))}$
    and $\sfrac{\prior_2}{(\prior_1(\receiverU_2 - \max\{\receiverU_1, 0\} ))}\geq 0$
    due to $\receiverU_2 > 0$.
\end{itemize}

Suppose $\NOinforSignal \leq  
\max\{\receiverU_1, 0\} + \sfrac{1}{(K\noiseScale)}$,
and thus $\censorshipsignal = \NOinforSignal$.
Note that in this case, the threshold state probability
$\censorshipprob = 1$, and 
inequality~\eqref{eq: approx verify with K}
can be further simplified and relaxed as 
$
\Quant(\censorshipsignal) \geq
\sfrac{\Quant(\scaledNoiseDiff)}{(1+e)} $
which holds for every $\scaledNoiseDiff\in(\receiverU_1,\censorshipsignal]$
due to the same argument as the previous case;
and for every 
$\scaledNoiseDiff\in[\censorshipsignal,\receiverU_2)$
due to the monotonicity of function $\Quant(\cdot)$.

Finally, recall that the expected sender utility 
in signaling scheme $\Payoff{\signalscheme\primed}$ is $
\prior_2\senderU_2(\censorshipprob\Quant(\censorshipsignal)
+
(1 - \censorshipprob)\Quant(\receiverU_2))$,
which is a 
$(16eK)$-approximation
to the objective value of the constructed dual assignment,
$i.e., \distDualVar(1) + \distDualVar(2) = 
16eK
\prior_2\senderU_2\censorshipprob\Quant(\censorshipsignal)$.
Invoking the weak duality of linear program finishes the proof.
\end{proof}

\section{The Complexity on Computing Approximately Optimal 
Signaling Schemes}
\label{apx:complexity}

\newcommand{\poly}{\cc{poly}}
\newcommand{\discretizedSupp}{\mathcal{S}}
\newcommand{\discreOptScheme}{\widehat{\pi}^*}
\newcommand{\newscheme}{\widehat{\pi}}
\newcommand{\maxSenderU}{\bar{u}}

\newcommand{\leftDelta}{\scaledNoiseDiff_L}
\newcommand{\rightDelta}{\scaledNoiseDiff_R}

\newcommand{\approxFactor}{\varepsilon}
\newcommand{\discrePara}{\varepsilon}

In this section, we discuss the complexity
on computing an approximately optimal signaling scheme 
in both SISU and SDSU environments. 

\begin{proposition}
\label{prop:complexity SISU opt}
In SISU environments, there exists a $\poly(m)$ time
algorithm that can find the optimal signaling scheme.
\end{proposition}
\begin{proof}
Recall that by \Cref{thm:opt state independent}, the optimal 
signaling scheme in SISU environments is a censorship
signaling scheme. Thus, to find the optimal signaling scheme
in SISU environments, it suffices to find the 
threshold state $\censorshipstate$ and the threshold state probability 
$\censorshipprob$.
To identify $\censorshipstate, \censorshipprob$, consider the following procedure:
for every state $i\in[m]$ where $\receiverU_i \ge 0$, compute the corresponding 
$p_i$ where $p_i$ is defined as in \eqref{eq: pooled prob defn}. 
Then the threshold state $\censorshipstate = \argmax_{i:\receiverU_i > 0}\{p_i\}$.
If $p_{\censorshipstate} > 1$, then 
the threshold state probability 
$\censorshipprob = 1$, otherwise $\censorshipprob = p_{\censorshipstate}$.
It is easy to see that the above procedure has 
the complexity at most $O(m)$. 
\end{proof}
\wtr{
Unlike in SISU environments, determining the computational complexity of finding the optimal signaling scheme in SDSU environments is much more challenging. One reason for this is the lack of a clear structure for the optimal signaling scheme in SDSU environments. 
Nonetheless, we present two complexity characterizations for finding the optimal signaling scheme in SDSU environments. The first one applies to special instances with binary states, 
while the second one applies to general problem instances.
}
\begin{corollary}
In SDSU environments with binary states, 
there exists a $O(1)$ time algorithm that can find 
the optimal signaling scheme.
\end{corollary}
\begin{proof}
By \Cref{lem:SDSU binary opt}, we know that 
the optimal signaling scheme in SDSU environments with binary states
is also a censorship signaling scheme, then the above 
result immediately follows by the similar analysis
of \Cref{prop:complexity SISU opt}.
\end{proof}

\begin{proposition}
\label{prop:compelxity of optimal SDSU}
In SDSU environments, there exists a 
$\poly(m, \sfrac{\noiseScale(\receiverU_m - \receiverU_1)}{\varepsilon})$ time
algorithm that can find a $(1+\varepsilon)$-approximate signaling scheme.
\end{proposition}
\begin{proof}
Given an arbitrary small $\varepsilon_0 > 0$, we discuss how to solve a $(1+\varepsilon_0)$-approximate signaling scheme with running time 
$\poly(m, \sfrac{\noiseScale(\receiverU_m - \receiverU_1)}{\varepsilon_0})$.

Let $\discrePara = \sfrac{\approxFactor_0}{\noiseScale}$.
Define the set $\discretizedSupp \triangleq \{\receiverU_0, \receiverU_0 + \discrePara, \receiverU_0 + 2\discrePara, \ldots, \receiverU_m\} \cup\{\receiverU_i\}_{i\in[m]}$.
We consider the following program and its optimal solution $\discreOptScheme$:
\begin{align}
    \label{eq:opt lp - approx}
    \arraycolsep=5.4pt\def\arraystretch{1}
    \tag{$\mathcal{P}_\texttt{OPT-Primal}(\discrePara)$}
    &\begin{array}{lll}
     \discreOptScheme = \argmax\limits_{\boldsymbol\signalscheme\geq \zerobf} ~ &
     \displaystyle\sum\nolimits_{i\in[m]} \prior_i \senderU_i 
     \displaystyle \sum_{\scaledNoiseDiff\in\discretizedSupp} \signalscheme_i(\scaledNoiseDiff)
     \Quant(\scaledNoiseDiff) 
     \quad& \text{s.t.} 
     \vspace{1mm}
     \\
       & 
       \displaystyle\sum\nolimits_{i\in[m]}  \prior_i 
      \left(\receiverU_i - \scaledNoiseDiff \right)
       \signalscheme_i(\scaledNoiseDiff)
       = 0
       & \scaledNoiseDiff\in \discretizedSupp
     \vspace{1mm}
       \\
       &
       \displaystyle\sum_{\scaledNoiseDiff\in\discretizedSupp} \signalscheme_i(\scaledNoiseDiff)  = 1
       &
       i\in[m]
     \vspace{1mm}
     \\
    \end{array}
\end{align}
Essentially, the above program restricts the support of each 
conditional distribution $\pi_i, i\in[m]$ to be the subset of the set $\discretizedSupp$. 
Recall that from \Cref{thm:SDSU 4 approx}, we know that 
there exists an optimal signaling scheme $\optscheme$ such that 
for every $\scaledNoiseDiff\in (-\infty, \infty)$, we have 
$|\{i: \optscheme_i(\scaledNoiseDiff) > 0, i\in [m]\}| \le 2$. 
Based on the signaling scheme $\optscheme$, we 
below construct a new signaling scheme $\newscheme$
that is also a feasible solution to the program \ref{eq:opt lp - approx}. In particular, for every $\scaledNoiseDiff$
where $|\{i: \optscheme_i(\scaledNoiseDiff) > 0, i\in [m]\}| \ge 1$: 
\begin{enumerate}
    \item if 
    $|\{i': \optscheme_{i'}(\scaledNoiseDiff) > 0, i'\in [m]\}| = \{i\}$, 
    let $\newscheme_{i}(\scaledNoiseDiff) = \optscheme_i(\scaledNoiseDiff)$;

    \item if 
    $|\{i': \optscheme_{i'}(\scaledNoiseDiff) > 0, i'\in [m]\}| = \{i, j\}$ where $i < j$, 
    let $\leftDelta \triangleq \max\{x \in \discretizedSupp: x\le \scaledNoiseDiff\}$, and let 
    $\rightDelta \triangleq \min\{x \in \discretizedSupp: x\ge \scaledNoiseDiff\}$.
    Let 
    $\newscheme_i(\leftDelta)
    = 
    \frac{\receiverU_j - \leftDelta}{\receiverU_j - \receiverU_i}  \frac{1}{\prior_i}  \frac{\optscheme(\scaledNoiseDiff) (\rightDelta-\scaledNoiseDiff)}{\rightDelta-\leftDelta}$ and 
    $\newscheme_i(\rightDelta) 
    = 
    \optscheme_i(\scaledNoiseDiff) - \newscheme_i(\leftDelta)$;
    $\newscheme_j(\leftDelta) 
    = 
    \frac{\leftDelta- \receiverU_i}{\receiverU_j - \receiverU_i}  \frac{1}{\prior_j}  \frac{\optscheme(\scaledNoiseDiff) (\rightDelta-\scaledNoiseDiff)}{\rightDelta-\leftDelta}$ 
    and 
    $\newscheme_j(\rightDelta) 
    = 
    \optscheme_j(\scaledNoiseDiff) - \newscheme_j(\leftDelta)$
    where $\optscheme(\scaledNoiseDiff) = \prior_i \optscheme_i(\scaledNoiseDiff) + \prior_j\optscheme_j(\scaledNoiseDiff)$.
\end{enumerate}
By construction, it is easy to verify that the signaling scheme $\newscheme$ is a feasible solution to the program \ref{eq:opt lp - approx}. 
Furthermore, when $|\{i': \optscheme_{i'}(\scaledNoiseDiff) > 0, i'\in [m]\}| = \{i\}$, the expected payoff contributed from 
the induced $\scaledNoiseDiff$ in both signaling scheme 
$\newscheme, \optscheme$ equals to 
$\prior_i\senderU_i\newscheme_i(\scaledNoiseDiff) \Quant(\scaledNoiseDiff)$; 
when $|\{i': \optscheme_{i'}(\scaledNoiseDiff) > 0, i'\in [m]\}| = \{i, j\}$, the expected payoff contributed from 
the induced $\scaledNoiseDiff$ in both signaling scheme 
$\newscheme, \optscheme$ satisfy that
\begin{align*}
    & \prior_i\senderU_i\newscheme_i(\leftDelta)\Quant(\leftDelta)
    + \prior_j\senderU_j\newscheme_j(\leftDelta)\Quant(\leftDelta)
    + 
    \prior_i\senderU_i\newscheme_i(\rightDelta)\Quant(\rightDelta)
    + \prior_j\senderU_j\newscheme_j(\rightDelta)\Quant(\rightDelta) \\
    \overset{(a)}{\ge} ~~ &  
    \prior_i\senderU_i\optscheme_i(\scaledNoiseDiff)\cdot \Quant(\scaledNoiseDiff + \discrePara) 
    + 
    \prior_j\senderU_j\optscheme_j(\scaledNoiseDiff)\cdot \Quant(\scaledNoiseDiff + \discrePara) j
\end{align*}
where in inequality (a), we have used the fact that 
$\optscheme_i(\scaledNoiseDiff) = \newscheme_i(\leftDelta)+\newscheme_i(\rightDelta), 
\optscheme_j(\scaledNoiseDiff) = \newscheme_j(\leftDelta)+\newscheme_j(\rightDelta)$, 
$\leftDelta\le \scaledNoiseDiff\le\rightDelta$, 
$\rightDelta \le \scaledNoiseDiff+\discrePara$, and the fact that 
$\Quant(\cdot)$ is monotone non-increasing.
Summing over all $\scaledNoiseDiff$ and rearranging the terms, we know that
\begin{align*}
    \sum_{i\in[m]} \prior_i\senderU_i \int_\scaledNoiseDiff \optscheme_i(\scaledNoiseDiff) \Quant(\scaledNoiseDiff+\discrePara) d\scaledNoiseDiff
    \le \Payoff[\noiseScale]{\newscheme}
\end{align*}
Now observe that for any $\scaledNoiseDiff$, we have 
$\frac{\Quant(\scaledNoiseDiff)}{\Quant(\scaledNoiseDiff+\discrePara)} 
\le \exp(\noiseScale\discrePara)$.
Thus, with the above inequality, we have 
$\Payoff[\noiseScale]{\optscheme} 
\le\exp(\noiseScale\discrePara) \Payoff[\noiseScale]{\newscheme}
\le (1+ \noiseScale\discrePara)  \Payoff[\noiseScale]{\newscheme}
= (1+ \varepsilon_0)  \Payoff[\noiseScale]{\newscheme}$.
\end{proof}








\end{document}